\begin{document}

\markboth{Graille Magin Massot}{Kinetic Theory of  Plasmas}

\title{KINETIC THEORY OF PLASMAS: TRANSLATIONAL ENERGY}
\date{\today}

\author{BENJAMIN GRAILLE}
\address{Laboratoire de Math\'ematiques d'Orsay -
UMR 8628 CNRS, Universit\'e Paris-Sud\\
91405, Orsay Cedex, France}
\email{benjamin.graille@math.u-psud.fr}

\author{THIERRY E. MAGIN$^{\star}$}
\address{
Center for Turbulence Research, Stanford University\\
488 Escondido Mall,  Stanford, California 94305, USA\\
also at Reacting Flow Environments Branch, NASA Ames Research Center, Moffett Field, California 94035, USA}
\email{magin@stanford.edu}

\author{MARC MASSOT}
\address{Laboratoire EM2C - UPR CNRS 288, Ecole Centrale Paris\\
Grande Voie des Vignes, 92295 Ch\^{a}tenay-Malabry Cedex, France}
\email{marc.massot@em2c.ecp.fr}

\begin{abstract}
In the present contribution, we derive from kinetic theory a unified fluid model for multicomponent plasmas by accounting for the electromagnetic field influence. We deal with a possible  thermal nonequilibrium of the translational energy of the particles, neglecting their internal energy and the reactive collisions. 
Given the strong disparity of mass between the electrons and heavy particles,  such as molecules, atoms, and ions, we conduct a dimensional analysis of the Boltzmann equation and introduce a scaling based on 
the square root of the ratio of the electron mass to a characteristic heavy-particle mass.
We then generalize the Chapman-Enskog method,  emphasizing the  role of a multiscale perturbation parameter on the collisional operator, the streaming operator, and the collisional invariants  of the Boltzmann equation. 
The system is examined at successive orders of approximation, each of which corresponding to a physical time scale. 
The multicomponent Navier-Stokes regime is reached for the heavy particles, which 
follow a hyperbolic scaling, and is coupled to first order drift-diffusion equations 
for the electrons, which follow a parabolic scaling.
The transport coefficients are then calculated in terms of bracket operators whose mathematical structure allows for positivity properties to be determined. They exhibit an anisotropic behavior when the magnetic field is strong enough.  We also give a complete description of the Kolesnikov effect, $i.e.$, the crossed contributions to the mass and energy transport fluxes coupling the electrons and heavy particles. Finally, the first and second principles of thermodynamics are proved to be satisfied  by deriving a total energy equation and an entropy equation. 
Moreover, the system of equations is shown to be conservative and the purely convective system hyperbolic, thus leading to a well defined structure.
\end{abstract}

\keywords{Kinetic theory; plasmas in thermal nonequilibrium; conservation equations; multicomponent transport properties.}

\subjclass[2000]{82C40, 76X05, 41A60}

\maketitle

\renewcommand{\thefootnote}{\fnsymbol{footnote}}
\footnotetext[1]{Corresponding author}

\section{Introduction}

Plasmas are ionized gas mixtures, either magnetized or not, that have many practical applications.
For instance, lightning is a well-known natural plasma and has been 
studied for many years~\cite{Bazelyan}.
A second application is encountered in hypersonic flows; when a spacecraft enters into a planetary atmosphere at hypervelocity, the gas temperature and pressure strongly rise through a shock wave, consequently, dissociation and ionization of the gas particles occur in the shock layer. Hypersonic flow conditions are reproduced in dedicated wind-tunnels such as plasmatrons, arc-jet facilities, and shock-tubes~\cite{park,gabitta,tirsky}. A third example was found about two decades ago, when large-scale electrical discharges were discovered in the mesosphere and lower ionosphere above large thunderstorms; these plasmas are now commonly referred to as sprites~\cite{anne,pasko}. Fourth, discharges at atmospheric pressure have received renewed attention in recent years due to their ability to enhance the reactivity of a variety of gas flows for applications ranging from surface treatment to flame stabilization and ignition (see~\cite{sergey,raizer,starikovskaia,vanveldhuizen} and references cited therein). Fifth, Hall thrusters are being developed to replace chemical systems for many on-orbit propulsion tasks on communications and exploration spacecraft~\cite{gerjan,iain}. Finally, two important applications of magnetized plasmas are the laboratory thermonuclear fusion~\cite{bobrova,schnack} and the magnetic reconnection phenomenom in astrophysics~\cite{yamada}.

Depending on the magnitude of the ratio of the reference particle mean free path to the system characteristic length (Knudsen number), two different approaches are generally followed to describe the transport of mass, momentum, and energy in a plasma~\cite{bird}: either a particle approach at high values of the Knudsen number  (solution to the Boltzmann equation using Monte Carlo methods), or a fluid approach at  low values (solution to macroscopic conservation equations by means of computational fluid dynamics methods).  In this work, we study plasmas that can be described by a fluid approach, thus encompassing  most of the above-mentioned applications. In this case, kinetic theory can be used to obtain the governing conservation equations and  transport fluxes of plasmas. Hence, closure of the problem is realized at the microscopic level by determining from experimental measurements either the potentials of interaction between the gas particles, or the cross-sections. 

A complete model of plasmas shall allow for the following physical phenomena to be described
 \begin{itemize}
\item Thermal non equilibrium of the translational energy,
\item Influence of the electromagnetic field,
\item Occurrence of reactive collisions,
\item Excitation of internal degrees of freedom.
\end{itemize}
So far, no such unified model has been derived  by means of kinetic theory. Besides,  a derivation of the mathematical structure of the conservation equations also appears to be crucial in the design of the  associated numerical methods.  Based on our previous work, we investigate in the present study the thermal nonequilibrium of the translational energy~\cite{magin1} and the  influence of the magnetic field~\cite{graille1}. We generalize the Chapman-Enskog method within the context of a dimensional analysis of the Boltzmann equation,  emphasizing the  role of a multiscale perturbation parameter on the collisional operator, the streaming operator, and the collisional invariants  of the Boltzmann equation. Then, we obtain macroscopic equations eventually leading to a sound entropy structure. Moreover, the system of equations is shown to be conservative and the purely convective system hyperbolic. Let us now describe in more detail how these issues are currently addressed  in the literature.

First, a multiscale analysis  is essential to resolve the Boltzmann equation governing the velocity distribution functions. We recall that a fluid can be described in the continuum limit provided that the Knudsen number is small. Besides, in the case of plasmas,  a thermal nonequilibrium may occur between the velocity distribution functions of the electrons and heavy particles (atoms, molecules, and ions), given the strong disparity of mass between both types of species. Ergo, the square root of the ratio of the electron mass to a characteristic heavy-particle mass represents an additional small parameter to be accounted for in the derivation of an asymptotic solution to the Boltzmann equation.  Literature abounds with expressions of the scaling for the perturbative solution method. For instance, significant results  are given in references~\cite{chmieleski,daybelge,devoto,anatoliy2,zhdanov}. Yet, Petit and Darrozes~\cite{petit} have suggested that the only sound scaling is obtained by means of a dimensional analysis of the Boltzmann equation. Subsequently, Degond and Lucquin~\cite{degond1,degond2} have established a formal theory of epochal relaxation based on such a scaling. 
In their study, the mean velocity of the electrons is shown to vanish in an inertial frame. Moreover,  the heavy-particle diffusive fluxes were scarcely dealt with since their work is restricted to a single type of heavy particles, and thus 
no multicomponent diffusion was to be found; in such a simplified context, the 
details of the interaction between the heavy particles and electrons 
degenerate
and  the  positivity of the entropy production is straightforward. 
We will establish a theory based on a multiscale analysis for a multicomponent plasma (which includes the single heavy-particle case)
where the mean electron velocity is the mean heavy-particle velocity in absence of external forces. As an alternative, Magin and Degrez~\cite{magin1} have also proposed a model
for a multicomponent plasma  based on a hydrodynamic referential. They have applied a multiscale analysis to the derivation of the multicomponent 
transport fluxes and coefficients. However, the proposed treatment of the collision operators is heuristic. Moreover, since the hydrodynamic velocity is used to define the referential instead of the mean heavy-particle velocity, the Chapman-Enskog method requires a transfer of  lower order terms in the integral equation for the electron perturbation function to ensure mass conservation. Finally, we also emphasize that the development of models for plasmas in thermal equilibrium shall always be obtained as a particular case of the general theory.

Second, the magnetic field  induces anisotropic transport fluxes when the electron collision frequency  is lower than the electron cyclotron frequency of gyration around the magnetic lines. Braginskii~\cite{braginskii} has studied the case of fully ionized plasmas composed of one single ion species. Recently, Bobrova $et~al.$ have generalized the previous work to multicomponent plasmas. However, the scaling used in both contributions does not comply with a dimensional analysis of the Boltzmann equation. Lucquin~\cite{lucquin1,lucquin2} has investigated magnetized plasmas in the latter framework.  Nevertheless, the same limitation is found for  the diffusive fluxes as in reference~\cite{degond1,degond2}. Finally, Giovangigli and Graille~\cite{graille1} have studied the Enskog expansion of magnetized plasmas and obtained macroscopic equations together with expressions of transport fluxes and coefficients. Unfortunately, the difference of mass between the  electrons and heavy particles is not accounted for in their work. 

Third, plasmas are strongly reactive gas mixtures. The kinetic mechanism comprises numerous reactions~\cite{capitelli}: dissociation of molecules by electron and heavy-particle impact, three body recombination, ionization by  electron and heavy-particle impact, associative ionization, dissociative recombination, radical reactions, charge exchange$\ldots$ Giovangigli and Massot~\cite{massot} have  derived a formal theory of chemically reacting flows for the case of neutral gases. Subsequently, Giovangigli and Graille~\cite{graille1} have studied the case of ionized gases. We recall that their scaling does not take into account the mass disparity between electrons and  heavy particles.  Besides, in chemical equilibrium situations, a long-standing theoretical debate in the literature deals with nonuniqueness of the two-temperature Saha equation. Recently, Giordano and Capitelli~\cite{mimmo} have emphasized the importance of the physical constraints imposed on the system by using a thermodynamic approach. A derivation based on kinetic theory should further improve the understanding of the problem.  Choquet and Lucquin~\cite{choquet} have already studied the case of ionization reactions by electron impact.

Fourth, molecules rotate and vibrate, and moreover, the electronic energy levels of atoms and molecules are excited. Generally, the rotational energy mode is considered to be fully excited above a few Kelvins.  In a plasma environment, the vibrational and electronic energy modes are also significantly excited. The detailed treatment of the internal degrees of freedom is however beyond the scope of the present contribution where we will only tackle the translational energy in the context of thermal nonequilibrium. The reader is thus referred to the specialized literature~\cite{brun,maccourt1,nagnibeda}. 

Fifth, the development of numerical methods to solve conservation equations relies on the identification of their intrinsic mathematical structure. For instance, the system of conservation equations of mass, momentum, and energy is found to be nonconservative for thermal nonequilibrium ionized gases. Therefore, this formulation is not suitable for numerical approximations of discontinuous solutions. Coquel and Marmignon~\cite{coquel} have derived  a well-posed conservative formulation based on a phenomenological approach. Nevertheless, their derivation is not consistent with a scaling issued from a dimensional analysis. We will show that kinetic theory, based on first principles, naturally allows for an adequate mathematical structure to be obtained, as opposed to the phenomenological approach.

In this work, we propose to derive the multicomponent plasma conservation equations of mass, momentum, and energy,  as well as the expressions of the associated multicomponent transport fluxes and coefficients. 
The multicomponent Navier-Stokes regime is reached for the heavy particles, which 
follow a hyperbolic scaling, and is coupled to first order drift-diffusion equations 
for the electrons, which follow a parabolic scaling. We deal here with first-order equations, thus one order beyond the expansion commonly investigated for the electrons in the literature. The derivation relies on kinetic theory and is based on the ansatz that the particles constitutive of the plasma are inert and only possess translational degrees of freedom.  The electromagnetic field influence is accounted for.  
In Section~\ref{sec:secbol}, we express the Boltzmann equation in a noninertial reference frame. 
We show that the mean heavy-particle velocity is a suitable choice for the referential velocity. This step is essential to establish a formalism where the electrons follow the bulk movement of the plasma. Then, we define the reference quantities of the system in order to derive the scaling of the Boltzmann equation from a dimensional analysis.  The multiscale aspect occurs in both the streaming operator and collision operator of the Boltzmann equation. Consequently, Section~\ref{sec:secpremi} is devoted to the scaling of the partial collision operators between unlike particles. Besides, we determine the space of collisional invariants associated with respectively the electrons and the heavy particles. In Section~\ref{sec:sec4}, we resort to a Chapman-Enskog method to derive macroscopic conservation equations. The system is examined at successive orders of approximation, each of which corresponding to a physical time scale. For that purpose, scalar products and linearized collision operators are introduced. 
The global expressions of the macroscopic fluid equations are then provided up to Navier-Stokes equations for the heavy particles and first-order drift-diffusion equations for the electrons.
We also prove that our choice of referential is essential in order to reach this expansion level. In Section~\ref{sec:sectransport},  we establish the formal existence and uniqueness of a solution to the Boltzmann equation. The multicomponent transport coefficients are then calculated in terms of bracket operators whose mathematical structure allows for the sign of the transport coefficients to be determined; in particular, the Kolesnikov effect, or the crossed contributions to the mass and energy transport fluxes coupling the electrons and heavy particles. The explicit expressions of the transport coefficients can be obtained by means of a Galerkin spectral method~\cite{chapman}  disregarded in the present contribution. Finally in Section~\ref{sec:secentropy}, the first and second principles of thermodynamics are proved to be satisfied  by deriving a total energy equation and an entropy equation. Then, we establish, from a fluid standpoint, a conservative formulation of the system of macroscopic equations. We also identify  the mathematical structure of the purely convective system. Hence, we demonstrate that kinetic theory shall be used as a guideline in the derivation of the macroscopic conservation equations as well as in the design of the associated numerical methods. 

Beyond the obvious interest of such a study from the point of view of the applications and design of numerical schemes, the present contribution 
also yields a formal kinetic theory of mixtures of separate masses, where the light species obey a parabolic scaling whereas the heavy species obey a hyperbolic scaling. The 
original treatment of the two different scalings for fluid flows was first provided by Bardos $et~al.$~\cite{bgl}. In their study, the purely hyperbolic scaling yields the compressible gas dynamics equations, whereas the purely 
parabolic scaling leads to the low Mach number limit. These scalings are quite classical and both of them can be used for various asymptotics such as the Vlasov-Navier-Stokes equations 
in different regimes investigated by Goudon $et~al.$~\cite{goudon1,goudon2}. Yet, a rigourous derivation of a set of macroscopic
equations in the situation where the hyperbolic and parabolic scalings are entangled 
in the same problem is an original result obtained in the present work.
 
\section{Boltzmann equation}\label{sec:secbol}

\subsection{Assumptions}

\begin{enumerate}
\item {\label{hyp1}}Our plasma is described by the kinetic theory of gases based on classical mechanics, provided that: a) The mean distance between the gas particles $1/(n^0)^{1/3}$ is larger than the thermal de Broglie wavelength, where $n^0$ is a reference number density, b) The square of the ratio of the electron thermal speed  $V_\elec^0$ to the speed of light is small.
\item {\label{hyp1bis}}The reactive collisions and particle internal energy are not accounted for.  
\item {\label{hyp1ter}}The particle interactions are modeled as binary encounters by means of a Boltzmann collision operator, provided that: a) The gas is sufficiently dilute, $i.e.$, the mean distance between the gas particles $1/(n^0)^{1/3}$ is larger than the particle interaction distance $(\sigma^0)^{1/2}$, where $\sigma^0$ is a reference differential cross-section common to all species, b) The plasma parameter, quantity proportional to the number of electrons in a sphere of radius equal to the Debye length, is supposed to be large. Hence, multiple charged particle interactions are treated as equivalent binary collisions by means of a Coulomb potential screened at the Debye length~\cite{balescu,delcroix}. 
\item {\label{hyp2}} A plasma is composed of electrons and a multicomponent mixture of 
heavy particles (atoms, molecules, and ions). The ratio of the electron mass $m_{\elec}^0$  to a characteristic heavy-particle mass $m_h^0$ is such that the nondimensional number $\varepsilon=\sqrt{\me^0/\mh^0}$ is small.  
\item {\label{hyp3}} The number of Mach, defined as a reference hydrodynamic velocity divided by the heavy-particle thermal speed $\Mh=v^0/V_h^0$, is supposed to be of the order of one.
\item {\label{hyp4}} The macroscopic time scale $t^0$ is assumed to be comparable with the heavy-particle kinetic time scale $t_h^0$ divided by $\epsilon$. The macroscopic length scale is based on a reference convective length $L^0=v^0t^0$. 
\item {\label{hyp6}} The reference electrical and thermal energies of the system are of the same order of magnitude.
\end{enumerate}
The mean free path $l^0$ and macroscopic length scale $L^0$ allow for the Knudsen number to be defined $Kn=l^0/L^0$.  It will be shown that this quantity  is small, provided that assumptions~\eqref{hyp2}-\eqref{hyp4} are satisfied. Therefore, a continuous description of the system is deemed to be possible.

\subsection{Inertial frame}
The choice of an adequate referential will prove to be essential in the following 
multiscale analysis. Two referentials are commonly used in the literature.
Degond and Lucquin~\cite{degond1,degond2} work in the inertial frame, as Ferziger and Kaper~\cite{ferziger}. The second referential is presented in the following section.
Considering assumptions~\eqref{hyp1}-\eqref{hyp1ter}, the temporal evolution of the velocity distribution function $\fHi^\star$ of the plasma particles $i$ is governed  in the phase space $(\x^\star,\ci^\star)$ by the Boltzmann equation~\cite{cercignani1,ferziger}
\begin{equation}\label{eqbolbol}
\Di[i]^\star(\fHi^\star)=\Ji^\star, \quad i\in\espece,
\end{equation}
where symbol $\espece$ is the set of indices of the gas species. Dimensional quantities are denoted by the superscript $^\star$. The streaming operator reads
\begin{equation}\label{eqinertialstream}
\Di[i] ^\star(\fHi^\star)=\dtstar \fHi^\star +\ci^\star\dscal\dxstar\fHi^\star+\frac{\qi^\star}{\mi^\star}\left(\E^\star+\ci^\star\pvect\B^\star\right)\!\dscal\dcistar\fHi^\star, \quad i\in\espece,
\end{equation}
in an inertial frame. Symbol $t^\star$ stands for time, $\E^\star$, the electric field, $\B^\star$, the magnetic field, $\mi^\star$, the mass of the particle $i$, and $\qi^\star$, its charge. The collision operator is given by
\begin{align}\label{eqcolop}
\Ji^\star&=\sume[j]\Jij^\star\left(\fHi[i]^\star,\fHi[j]^\star\right),\quad i\in\espece,
\end{align}
with the partial collision operator of particle $j$ impacting on particle $i$
\begin{equation}
\Jij^\star\left(\fHi[i]^\star,\fHi[j]^\star\right)=\int \left(\fHi[i]^{\star\prime}\fHi[j]^{\star\prime}-\fHi[i]^\star\fHi[j]^\star\right)|\ci^\star-\ci[j]^\star|\seceff^\star \d\ovec \d\ci[j]^\star, \quad i,j\in\espece.\label{eqpartcolop}
\end{equation}
After collision, quantities are denoted  by the superscript $^\prime$. 
The differential cross-section $\seceff^\star=$$\seceff^\star\left[\mred^\star|\ci^\star-\ci[j]^\star|^2/(\boltz T^0),\ovec\dscal\evec\right]$   depends on the relative kinetic energy of the colliding particles and the cosine of the angle between the unit vectors  of relative velocities $\ovec=(\ci^{\star\prime}-\ci[j]^{\star\prime})/|\ci^{\star\prime}-\ci[j]^{\star\prime}|$ and $\evec=(\ci^\star-\ci[j]^\star)/|\ci^\star-\ci[j]^\star|$. Quantity $\mred^\star=\mi^\star\mi[j]^\star/(\mi^\star+\mi[j]^\star)$ is  the reduced mass of the particle pair, $T^0$, a reference temperature, and $\boltz$, Boltzmann's constant. Therefore, the differential cross-sections are symmetric with respect to their indices $i,j\in\espece$, $i.e.$, $\seceff^\star=\seceff[ji]^\star$.

\subsection{Noninertial frame}
Sutton and Sherman~\cite{sutton}, as Chapman and Cowling~\cite{chapman}, 
have proposed a noninertial frame based on the hydrodynamic velocity 
\begin{equation}\label{eqhydro1}
\rho^\star\speed^\star=\sume[j]\int\mi[j]^\star\ci[j]^\star\fHi[j]^\star \d\ci[j]^\star,
\end{equation}
where the mixture mass density is defined as $\rho^\star=\sum_{j\in\espece}\rhoi[j]^\star$. Symbol $\rhoi^\star=\ni^\star\mi^\star$ stands for the partial mass density, and $\ni^\star=\int \fHi^\star \d\ci^\star$, the partial number density.
It is a convenient choice 
since it is the referential associated with the definition 
of the peculiar velocities 
\begin{equation}\label{eqpecuhydro}
\Ci^{\speed\star}=\ci^\star-\speed^\star , \quad i\in\espece,
\end{equation}
induced from the usual momentum constraint. We infer from definition \eqref{eqhydro1} that the global diffusion flux vanishes
\begin{equation}\label{eqmassconst}
\sume[j]\int\mi[j]^\star\Ci[j]^{\speed\star}\fHi[j]^\star \d\ci[j]^\star=0,
\end{equation}
that is, the standard momentum constraint. 

Given the strong disparity of mass between the electrons and heavy particles, a frame linked with the heavy particles appears to be a rather natural choice for plasmas, as fully justified in the following detailed multiscale analysis. 
Thus, we define the mean electron velocity and mean heavy-particle velocity
\begin{equation}\label{eqhydro}
\rhoe^\star\speed_\elec^\star=\int\me^\star\ce^\star\fe^\star \d\ce^\star, \quad\rhoi[\heavy]^\star\vitesse^\star=\sumi[j]\int\mi[j]^\star\ci[j]^\star\fHi[j]^\star \d\ci[j]^\star,
\end{equation}
where the heavy-particle mass density reads $\rhoi[\heavy]^\star=\sum_{j\in\lourd}\rhoi[j]^\star$. In the $\vitesse^\star$  referential,
the free electrons interact with heavy particles whose global movement is frozen in space. A similar viewpoint is commonly adopted in the quantum theory of molecules when the Born-Oppenheimer approximation is used to study the motion of the bound electrons about the nuclei~\cite{born}. 
Based on the following definition of peculiar velocities
\begin{equation}\label{eqpecu}
\Ci^{\star}=\ci^\star-\vitesse^\star , \quad i\in\espece,
\end{equation}
 the heavy-particle diffusion flux is shown to vanish
\begin{equation}\label{eqmassconst2}
\sumi[j]\int\mi[j]^\star\Ci[j]^{\star}\fHi[j]^\star \d\ci[j]^\star=0.
\end{equation}

For now, we defer the choice of the referential velocity. Therefore, we use the symbol $\velocity^\star$ to define the peculiar velocities $\Ci^{\velocity\star}=\ci^\star-\velocity^\star$, $i\in\espece$. Then, the Boltzmann equation is expressed in a frame moving at $\velocity^\star$ velocity by means  of the latter change of variables. Hence, the streaming operator \eqref{eqinertialstream} is transformed into the expression
\begin{multline}
\Di[i]^\star(\fHi^\star)=\dtstar \fHi^\star +\left(\Ci^{\velocity\star}+\velocity^\star\right)\dscal\dxstar\fHi^\star+\frac{\qi^\star}{\mi^\star}\left[\E^\star+\left(\Ci^{\velocity\star}+\velocity^\star\right)\pvect\B^\star\right]\dscal\dCistar\fHi^\star\\
-\frac{\D \velocity^\star}{\D\temps^\star}\dscal\dCistar\fHi^\star
-\dCistar^\star\fHi^\star\ptens\Ci^{\velocity\star}\pmat\dxstar\velocity^\star,
\end{multline}
where ${\D}/{\D\temps^\star}=\dtstar+\velocity^\star\dscal\dxstar$. 
The partial collision operator~\eqref{eqpartcolop} is found to be
\begin{equation}
\label{eqpartcolopC}
\Jij^\star\left(\fHi[i]^\star,\fHi[j]^\star\right)=\int \left(\fHi[i]^{\prime\star}\fHi[j]^{\prime\star}-\fHi[i]^\star\fHi[j]^\star\right)|\Ci^{\velocity\star}-\Ci[j]^{\velocity\star}|\seceff^\star \d\ovec \d\Ci[j]^{\velocity\star},\quad i,j\in\espece.
\end{equation}
In a noninertial frame, the velocity distribution function $f_i^\star$, the differential cross-section $\seceff^\star=\seceff^\star\left[\mred^\star|\Ci^{\velocity\star}-\Ci[j]^{\velocity\star}|^2/(\boltz T^0),~\ovec\dscal\evec\right]$, as well as both the unit vectors $\ovec=(\Ci^{\velocity\prime\star}-\Ci[j]^{\velocity\prime\star})/|\Ci^{\velocity\prime\star}-\Ci[j]^{\velocity\prime\star}|$ and $\evec=(\Ci^{\velocity\star}-\Ci[j]^{\velocity\star})/|\Ci^{\velocity\star}-\Ci[j]^{\velocity\star}|$ depend on the peculiar velocities. Nevertheless, for simplicity reasons, notations are unchanged with respect to the inertial frame, where the previous quantities depend on the absolute velocities.  Moreover, we define collisional invariants.
\begin{definition}
The space of scalar collisional invariants $\invspace^{\velocity\star}$ is spanned by the following elements
\begin{equation}
\label{eqdimcolinv}
\left\{
\vcenter{\halign{
$#\hfil$&$\;\,=\; \bigl(#\bigr)_{i\in\espece},\hfil$&$\qquad#,\hfil$
\cr
\invstar[\velocity,j]&\mi^\star\delta_{ij}&j\in\espece
\cr\noalign{\vskip4pt}
\invstar[\velocity,\ns+\nu]&\mi^\star\Cinorme[i\nu]^{\velocity\star}&\nu\in\{1,2,3\}
\cr\noalign{\vskip4pt}
\invstar[\velocity,\ns+4]&\tfrac{1}{2}\mi^\star\Ci^{\velocity\star}\dscal\Ci^{\velocity\star}
\cr}} \right.
\end{equation}
where symbol $\ns$ denotes the cardinality of the set of species $\espece$.
\end{definition} 
Besides, a scalar product  is introduced
\begin{equation}
\label{eqppscal}
\ppscal{\xi^\star,\zeta^\star}^{\velocity\star} =\sume[j]\int\xi_j^\star\pcont\bar{\zeta_j}^\star\;\d\Ci[j]^{\velocity\star},
\end{equation}
for families $\xi^\star={(\xi_i^\star)}_{i\in\espece}$ and $\zeta^\star={(\zeta_i^\star)}_{i\in\espece}$. Symbol $\pcont$ stands for the maximum contracted product in space and symbol $~\bar{}~$ for the conjugate transpose operation. Ergo, the collision operator 
$\Ji[]^\star={(\Ji^\star)}_{i\in\espece}$ defined in eq.~\eqref{eqcolop} obeys the following property.

\begin{property}
The collision operator $\Ji[]^\star$ is orthogonal to the space of collisional invariants $\invspace^{\velocity\star}$, $i.e.$, $\ppscal{\invstar[\velocity,l],\Ji[]^\star}^{\velocity\star}=0$,  for all $l\in\{1,\ldots,\ns{+}4\}$.\label{propo2.1}
\end{property}

\begin{proof}
The projection of the collision operator $\Ji[]^\star$ onto $\invstar[\velocity,l]$,  $l\in\{1,\ldots,\ns{+}4\}$,
is shown to be
\begin{equation*}
\tfrac{1}{4}\sum_{i,j\in\espece}\int 
(\fHi[i]^{\star\prime}\fHi[j]^{\star\prime}-\fHi[i]^\star\fHi[j]^\star)
(\invstari[\velocity,l]+\invstarj[\velocity,l]-\inv[\velocity,l'\star]_i-\inv[\velocity,l'\star]_j)
|\Ci^{\velocity\star}-\Ci[j]^{\velocity\star}|\sigma_{ij}^\star 
\d\ovec \d\Ci^{\velocity\star} \d\Ci[j]^{\velocity\star},
\end{equation*}
see for instance Chapman and Cowling~\cite{chapman}. 
The latter expression vanishes for all $l\in\{1,\ldots,\ns{+}4\}$.
\end{proof}

Finally, the macroscopic properties can be expressed by means of the scalar product of the distribution functions and the collisional invariants
\begin{equation*}
\left\{
\vcenter{\halign{
$\ppscal{\fHi[]^\star,#}{}^\star\hfil$&$\;\,=\; #,\hfil$&$\qquad#,\hfil$
\cr
\invstar[i]&\rhoi^\star&i\in\espece
\cr\noalign{\vskip4pt}
\invstar[\ns+\nu]&\rho^\star(v_\nu^\star-u_\nu^\star)
&\nu\in\{1,2,3\}
\cr\noalign{\vskip4pt}
\invstar[\ns+4]&\rho^\star e^\star+\tfrac{1}{2}\rho^\star(\speed^\star-\velocity^\star)\dscal(\speed^\star-\velocity^\star)
\cr}} \right.
\end{equation*}
where quantity $e^\star$ stands for  the gas thermal energy per unit mass.

\subsection{Dimensional analysis}\label{sec:secdiman}

A sound scaling of the Boltzmann equation is deduced from a dimensional analysis inspired from Petit and Darrozes~\cite{petit}. First, reference quantities are introduced in Table~\ref{tab1}.
\begin{table}[ht] 
\caption{Reference quantitities.\label{tab1}}
{\small\begin{tabular}{@{}ccc@{}} \toprule
&{\bf Common to all species}&\\
Temperature&$T^0$&\\
Number density&$n^0$&\\
Differential cross-section&$\sigma^0$&\\
Mean free path&$l^0$&\\
Macroscopic time scale &$t^0$&\\
Hydrodynamic velocity &$v^0$&\\
Macroscopic length &$L^0$&\\
Electric field&$E^0$&\\
Magnetic field&$B^0$&\\
&&\\
&{\bf Electrons}&{\bf Heavy particles}\\
Mass&$\me^0$&$m_h^0$\\
Thermal speed&$V_{\elec}^0$&$V_h^0$\\
Kinetic time scale&$t_{\elec}^0$&$t_h^0$\\
 \botrule
\end{tabular}}
\end{table} 
The characteristic temperature, number density, differential cross-section, mean free path, macroscopic time scale, hydrodynamic velocity, macroscopic length, and electric and magnetic fields are assumed to be common to all species. The nondimensional number
\begin{equation}
\epsilon=\sqrt{\frac{\me^0}{\mh^0}}
\end{equation}
quantifies the ratio of the electron mass to a reference heavy-particle mass. 
According to assumption~\eqref{hyp2},  the value of $\epsilon$ is small. Consequently, electrons exhibit a larger thermal speed than that of  heavy particles
\begin{equation}\label{eqthermalspeed}
V_{\elec}^0=\sqrt{\frac{\boltz \temp^0}{\me^0}}, \qquad V_\heavy^0=\sqrt{\frac{\boltz \temp^0}{\mh^0}}=\epsilon V_\elec^0.
\end{equation}
Moreover, the electron and heavy particle temperatures may be distinct, provided that 
eq.~\eqref{eqthermalspeed} does not fail to describe the reference thermal speeds.  The differential cross-sections  are of the same order of magnitude $\sigma^0$. Hence, the characteristic mean free path 
$l^0={1}/{(n^0\sigma^0)}$ is found to be identical for all species. As a result, the kinetic time scale, or relaxation time of a distribution function towards its respective quasi-equilibrium state,  is lower for electrons than for heavy particles
\begin{equation}
t_{\elec}^0=\frac{l^0}{V_{\elec}^0},\qquad
t_\heavy^0=\frac{l^0}{V_\heavy^0}=\frac{t_{\elec}^0}{\epsilon}.
\end{equation}
Assumption~\eqref{hyp4} states that the macroscopic time scale reads
\begin{equation}
t^0=\frac{t_\heavy^0}{\epsilon}.
\end{equation}
It is shown in Section~\ref{sec:sec4} that this quantity corresponds to the averaged time during which electrons and heavy particles exchange their energy  through encounters. In addition, the macroscopic temporal and spatial scales are linked by the expression
\begin{equation}
L^0=v^0t^0,
\end{equation}
where the hydrodynamic velocity  is determined by the Mach number $\Mh={v^0}/{V_\heavy^0}$. Given assumption~\eqref{hyp3}, the Mach number is of the order of one.  Hence, the Knudsen number
\begin{equation}
Kn=\frac{l^0}{L^0}=\frac{\epsilon}{\Mh},
\end{equation}
is small. Finally, following assumption~\eqref{hyp6}, the reference electric field is such that
\begin{equation}
q^0E^0L^0=\boltz T^0.
\end{equation}
The intensity of the  magnetic field is governed by the  Hall numbers of the electrons and heavy particles
\begin{equation}
\beta_{\elec}=\frac{q^0B^0}{\me^0}t_{\elec}^0=\epsilon^{1-b},\qquad
\beta_{\heavy}=\frac{q^0B^0}{\mh^0}t_{\heavy}^0=\epsilon\beta_{\elec},
\end{equation}
defined as the Larmor frequencies, ${q^0B^0}/{\me}$ for the electrons and ${q^0B^0}/{\mh^0}$ for the heavy particles, multiplied by their corresponding kinetic time scale. The magnetic field is assumed to be proportional to a power of $\epsilon$ by means of an integer $b\le1$. The physical interpretation of the $b$ parameter appears at the end of Section~\ref{sec:secentropy}. 

The dimensional analysis can be summarized as follows: a) Two spatial scales were introduced, one spatial scale at the microscopic level and one spatial scale at the macroscopic level; b) Whereas three temporal scales were defined, two time scales at the microscopic level, respectively for the electrons and for the heavy particles, and one time scale at the macroscopic level, common to all species. 
  
Nondimensional variables are based on the reference quantities. They are denoted by dropping the superscript $\null^\star$. In particular, one has the following expressions for  the particle velocities 
\begin{equation}
\ce^\star={V_{\elec}^0}\ce,\qquad
\ci^\star = {V_{\heavy}^0}\ci,\quad i\in\lourd,
\end{equation}
where symbol $\lourd$ stands for the set of indices of heavy particles. Both the reference hydrodynamic velocity and  mean heavy-particle velocity are equal to $v^0$. Indeed, the hydrodynamic velocity defined in eq.~\eqref{eqhydro1} is found to be
\begin{equation} \label{pfff}
(\rhoi[\heavy]+\varepsilon^2\rhoe)\Mh\speed=\rhoi[\heavy]\Mh\vitesse+\varepsilon^2\rhoe\speed_\elec,
\end{equation}
whereas the mean electron and heavy-particle velocities given in eq.~\eqref{eqhydro} read
\begin{equation}
\rhoe\speed_\elec=\frac{1}{\varepsilon}\int\ce\fe\d\ce, \quad \rhoi[\heavy]\Mh\vitesse=\sumi[j]\int\mi[j]\ci[j]\fHi[j]\d\ci[j].
\end{equation}
The peculiar velocities are given by the relations
\begin{equation}\label{eqpaculiar}
\Ce^\velocity = \ce - \epsilon\Mh\velocity,\qquad
\Ci^\velocity = \ci - \Mh\velocity,\quad i\in\lourd.
\end{equation}
Usually, they are associated with the momentum constraints of
the mixture, so that the natural choice is $\velocity=\speed$. In such a case, we 
get the following relation
\begin{equation*}
\sumi[j]\int\mi[j]\Ci[j]^\speed\fHi[j]\d\Ci[j]^\speed + \varepsilon\int\Ce^\speed\fe\d\Ce^\speed =0.
\end{equation*}
However, the hydrodynamic velocity of the mixture,  electrons included,  
can also be expanded in the $\varepsilon$ parameter and thus receives contributions at various $\varepsilon$ orders in the Chapman-Enskog method. 
Since the change of referential should not differ 
depending on the expansion order, we could mimic the approach of Lucquin and Degond \cite{degond1,degond2,lucquin1} and take $\velocity=0$, which means working in the inertial framework.
However, we follow a different path not only by choosing the mean heavy-particle velocity as referential velocity, $\velocity=\vitesse$, but also by defining the peculiar velocities based on this quantity, as opposed to Petit and Darrozes \cite{petit}. 
The rationale for such a choice is threefold: a) This 
quantity does not depend on $\varepsilon$
while still being a perturbation 
of the hydrodynamic velocity of the complete mixture up to second order in $\varepsilon$
\begin{equation}\label{eqrelation}
(\rhoi[\heavy]+\varepsilon^2\rhoe)\Mh(\speed-\vitesse)
=\varepsilon\int\Ce^{\vitesse}\fe\d\Ce^{\vitesse},
\end{equation}
since quantity $\int\Ce^{\vitesse}\fe\d\Ce^{\vitesse}$ taken on a perturbation of a Maxwell-Boltzmann distribution will be of $\ordre(\varepsilon)$ in the framework of the Chapman-Enskog expansion presented in Section~\ref{sec:sec4}; 
b) It will prove to be the natural referential in which the heavy particles thermalize in the context 
of the proposed multiscale analysis;  c) It will also prove to be the only available 
choice for electron thermalization and successive order resolubility, thus 
making the proposed change of referential optimal and 
leading to a rigourous framework as well as a simplified algebra.
In the sequel, since there is no ambiguity, we will drop the $\vitesse$ superscript in the use of the peculiar velocities $\Ce^{\vitesse}$ and $\Ci[j]^{\vitesse}$, $i\in\lourd$.

Consequently, the heavy-particle diffusion flux vanishes, as shown in eq.~\eqref{eqmassconst2} 
\begin{equation}
\sumi[j]\int\mi[j]\Ci[j]\fHi[j]\d\Ci[j]=0.
\end{equation}

Thus, the Boltzmann equation~\eqref{eqbolbol} can be expressed in nondimensional form, respectively for the electrons and heavy particles, as
\begin{multline}\label{eqbol1}
\dt \fe +\tfrac{1}{\epsilon \Mh}(\Ce+\epsilon\Mh\vitesse)\dscal\dx\fe
+\epsilon^{-(1+b)}\qe\bigl[(\Ce+\epsilon\Mh\vitesse)\pvect\B\bigr]\dscal\dCe\fe \\
+\left(\tfrac{1}{\epsilon \Mh}\qe\E-\epsilon\Mh\tfrac{\D\vitesse}{\D\temps}\right)\dscal\dCe\fe
-\dCe\fe\ptens\Ce\pmat\dx\vitesse=\tfrac{1}{\epsilon^2}\Je,
\end{multline}
\begin{multline}\label{eqbol2}
\dt \fHi +\tfrac{1}{ \Mh}(\Ci+\Mh\vitesse)\dscal\dx\fHi
+\epsilon^{1-b}\tfrac{\qi}{\mi}\bigl[(\Ci+\Mh\vitesse)\pvect\B\bigr]\dscal\dCi\fHi \\
+\left(\tfrac{1}{ \Mh}\tfrac{\qi}{\mi}\E-\Mh\tfrac{\D\vitesse}{\D\temps}\right)\dscal\dCi\fHi
-\dCi\fHi\ptens\Ci\pmat\dx\vitesse=\tfrac{1}{\epsilon}\Ji,
\quad i\in\lourd,
\end{multline}
where the collision operators read
\begin{align}
\Je&=\Jee\left(\fe,\fe\right)+\sumi[j]\Jej\left(\fe,\fHi[j]\right),\\
\Ji&= \tfrac{1}{\epsilon}\Jie(\fHi,\fe) +\sumi[j]\Jij(\fHi,\fHi[j] ), &&i\in\lourd.
\end{align}
The collisional invariants \eqref{eqdimcolinv} depend on the mass ratio as well, as shown in their  nondimensional form.

\begin{definition}
The space of scalar collisional invariants $\invspace$ is spanned by the following elements
$\inveps = (\invepse,\invepsh)$, $l\in\{1,\ldots,\ns{+}4\}$, with
\begin{equation}
\label{eqcolinveps}
\left\{
\vcenter{\halign{
$#\hfil$&$\;\,=\;\,#,\hfil$&$\qquad#\hfil$&$\;\,=\;\,#$&$\bigl(#$&$#\hfil$
\cr
\invepse[j]&\varepsilon^2\delta_{\elec j}&
\invepsh[j]&&\mi\delta_{ij}\bigr)_{i\in\lourd},&\qquad j\in\espece,
\cr\noalign{\vskip4pt}
\invepse[\ns+\nu]&\varepsilon\Cinorme[\elec \nu]&
\invepsh[\ns+\nu]&&\mi\Cinorme[i\nu]\bigr)_{i\in\lourd},&\qquad\nu\in\{1,2,3\},
\cr\noalign{\vskip4pt}
\invepse[\ns+4]&\tfrac{1}{2}\Ce\dscal\Ce&
\invepsh[\ns+4]&&\tfrac{1}{2}\mi\Ci\dscal\Ci\bigr)_{i\in\lourd}.&
\cr}} \right.
\end{equation}
\end{definition} 
It is worth noticing the influence of the hierarchy of scales; 
whereas the scaling does not
introduce any structural change in the mass and energy collisional invariants, the electron contribution disappears from the momentum collisional invariant vector in the limit of $\varepsilon$ tends to zero. 
A similar behavior can be observed for the total mass; however, the single species collisional invariants are not affected. 

Let us underline that eq. \eqref{eqbol1} for the light species is typical of a parabolic scaling, which corresponds to the low Mach number limit for the electron gas, whereas eq. \eqref{eqbol2} for the heavy species
is typical of a hyperbolic scaling, which corresponds to the compressible gas dynamics for the heavy-species gas mixture \cite{bgl}. The present scaling is thus intermediate between the usual cases and  the mathematical structure of the resulting system of macroscopic equations has to be identified.

For a family $\xi={(\xi_i)}_{i\in\espece}$, we introduce two separate contributions: $\xi_\elec$, concerning the electrons, and $\xi_\heavy={(\xi_i)}_{i\in\lourd}$, concerning the heavy particles.
Hence, the scalar product  
between the families $\xi=(\xi_i)_{i\in\espece}$ and $\zeta=(\zeta_i)_{i\in\espece}$ defined in eq.~\eqref{eqppscal} is decomposed into a sum of partial scalar products with different scales
\begin{equation}
\ppscal{\xi,\zeta} =\ppscale{\xi_\elec,\zeta_\elec}+\varepsilon^3\ppscalh{\xi_\heavy,\zeta_\heavy},
\end{equation}
given by the expressions
\begin{equation}
\ppscale{\xi_\elec,\zeta_\elec}=\int\xi_\elec\pcont\bar{\zeta_\elec}\;\d\Ce,\qquad
\ppscalh{\xi_\heavy,\zeta_\heavy}=\sumi[j]\int\xi_j\pcont\bar{\zeta_j}\;\d\Ci[j].
\end{equation}
Finally, we introduce the collision  operator $\Jepsi[]=(\varepsilon \Je, \tfrac{1}{\epsilon} \Jh)$, where 
eq.~\eqref{eqbol1} has been multiplied by a factor
$\varepsilon^3$ corresponding to a coherent scaling of the two Boltzmann equations. Then, we derive the following property.

\begin{property}
\label{th:corcolinv}
The collision operator $\Jepsi[]$ is orthogonal to the space of collisional invariants $\invspace$, $i.e.$, $\ppscal{\inveps,\Jepsi[]}=0$,  for all $l\in\{1,\ldots,\ns{+}4\}$. 
Furthermore, the terms of $\ppscal{\inveps,\Jepsi[]}$ cancel by pair of interaction, $i.e.$,
\begin{gather}
\ppscale{\invepse,\Jee}=0,\label{eqcolinvee}\\
\smash{\sumi[j]}\ppscale{\invepse,\Jej}+\ppscalh{\invepsh,\Jhe}=0,\label{eqcolinveh}\\
\sumi[j]\ppscalh{\invepsh,\Jhj}=0,
\end{gather} 
respectively for the electron, electron heavy-particle, and heavy-particle interactions.
\end{property}

\begin{proof}
The projection of the collision operator $\Jepsi[]$ onto $\inveps$,  $l\in\{1,\ldots,\ns{+}4\}$, is given by the expression
\begin{align*}
\ppscal{\inveps,\Jepsi[]}&=
\varepsilon\ppscale{\invepse,\Jee}+
\varepsilon\sumi[j]\ppscale{\invepse,\Jej} +
\varepsilon\ppscalh{\invepsh,\Jhe} +
\varepsilon^2\sumi[j]\ppscalh{\invepsh,\Jhj}.
\end{align*}
The terms of this sum are examined by interaction pairs
\begin{equation*}
\ppscale{\invepse,\Jee}=\frac{1}{4}\int
(\fe^{\prime}\feu^{\prime}-\fe\feu)
(\invepse+\invepseu-\invepse[l \prime]-\invepseu[l \prime])
|\Ce-\Ci[\elec 1]|\sigma_{\elec \elec 1} \d\ovec \d\Ce\d\Ci[\elec 1],
\end{equation*}
\begin{multline*}
\smash{\sumi[j]}\;\ppscale{\invepse,\Jej}+\ppscalh{\invepsh,\Jhe}\\
=\frac{1}{2}\sum_{j\in\lourd}\int
(\fe^{\prime}\fHi[j]^{\prime}-\fe\fHi[j])
(\invepse+\invepshj-\invepse[l \prime]-\invepshj[l\prime])
|\Ce-\varepsilon\Ci[j]|\sigma_{\elec j} \d\ovec \d\Ce\d\Ci[j],
\end{multline*}
\begin{multline*}
\smash{\sumi[j]}\;\ppscalh{\invepsh,\Jhj}\\
=\frac{1}{4}\sum_{i,j\in\lourd}\int \left(\fHi[i]^{\prime}\fHi[j]^{\prime}-\fHi[i]\fHi[j]\right)\left( \invepshi[l]+\invepshj[l]-\invepshi[l\prime]-\invepshj[l\prime]\right)|\Ci-\Ci[j]|\sigma_{ij} \d\ovec \d\Ci\d\Ci[j].
\end{multline*}
These expressions vanish and thus the sum $\ppscal{\inveps,\Jepsi[]}=0$.
\end{proof}

The multiscale analysis occurs at three levels: a) In the kinetic equations~\eqref{eqbol1} and 
\eqref{eqbol2}; b) In the collisional invariants \eqref{eqcolinveps} and thus in the conservation of the associated  macroscopic quantities; c) In the collision operators. Actually, encounters between particles of the same type are dealt with as usual, whereas the collision operators between unlike particles depend on the $\varepsilon$ parameter via their relative kinetic energy and velocity, and  the vectors $\ovec$ and $\evec$. The scaling of these operators is investigated in the following section.

\section{Preliminary results}\label{sec:secpremi}

\subsection{Electron heavy-particle collision dynamics}

The study of the electron heavy-particle collision dynamics yields the dependence of the peculiar velocities on the 
$\epsilon$ parameter. First, we express the momentum conservation in terms of the peculiar velocities.
Considering a collision of a heavy species, $i\in\lourd$, against an electron, the peculiar velocities after collision $\Ci'$ and $\Ce'$ are related to their counterpart before collision $\Ci$ and $\Ce$
\begin{equation} \label{vitaprescollision1}
\left| \;\,
\vcenter{\halign{
$#\hfil=$&$\,\displaystyle\frac{#}{\mi+\epsilon^2}\,\Ce\hfil\,+$&
$\,\displaystyle\frac{#\mi}{\mi+\epsilon^2}\;\Ci\hfil\,$
&$#$&$\,\displaystyle\frac{#}{\mi+\epsilon^2}\, |\epsilon\,\Ci-\Ce| \,\ovec ,\hfil$&
$\qquad#\in\lourd,\hfil$
\cr
\Ci'&\epsilon&&+s&\epsilon&i
\cr\noalign{\vskip4pt}
\Ce'&\epsilon^2&\epsilon&-s&\mi
\cr}} \right.
\end{equation}
where the direction of the relative velocities after collision  is defined in their center of mass by 
$$\ovec=s\frac{\epsilon\Ci'-\Ce'}{|\epsilon\Ci'-\Ce'|}.$$
Symbol $s$ stands for an integer either equal to $+1$ for the collision operator $\Jie$, $i\in\lourd$, or $-1$ for $\Jej[i]$, $i\in\lourd$. This notation allows for the interaction considered in eq.~\eqref{eqpartcolopC} to be distinguished. We are now able to expand the crossed-collision operators.

\subsection{Expansion of the collision operator $\Jie$}

The dimensional analysis yields the following expression of the nondimensional collision operator
$\Jie$, $i\in\lourd$,
\begin{multline}
\label{Jie}
\Jie(\fHi,\fe)(\Ci) = \int \seceff[ie]\!\left( |\adimg|^2,\ovec\dscal \tfrac{\adimg}{|\adimg |} \right)
|\epsilon\Ci-\Ce| \\
\Bigl[ \fHi(\Ci')\fe(\Ce')-\fHi(\Ci)\fe(\Ce) \Bigr] \;\d\ovec \;\d\Ce, 
\end{multline}
where the relative kinetic energy and the vector $\evec$ are expressed by means of the vector $\adimg=s(\varepsilon\Ci-\Ce)/(1+\varepsilon^2/\mi)^{1/2}$. 

We then introduce the generalized momentum cross-section~\cite{chapman} in a thermal nonequilibrium context
\begin{equation}
\Qijl i\elec l(|\adimg|^2) = 
2\pi \int_0^\pi \seceff[i\elec](|\adimg|^2,\cos\theta)(1-\cos^l\theta)\sin\theta\;\d\theta, 
\quad i\in\lourd,
\end{equation}
where symbol $\theta$ stands for the angle between the vectors $\ovec$ and $\evec$. For $l=1$, this cross-section represents the average momentum transfered in encounters between $i$ heavy particles and electrons for a given value of the relative kinetic energy.

\begin{theorem}\label{th:thmJie}
The collision operator $\Jie$, $i\in\lourd$, can be expanded in the form
\begin{equation}
\Jie(\fHi,\fe)(\Ci)=\epsilon\Jieu(\fHi,\fe)(\Ci)+\epsilon^2\Jied(\fHi,\fe)(\Ci)
+\epsilon^3\Jiet(\fHi,\fe)(\Ci)+\ordre(\epsilon^4).
\end{equation}
The zero-order collision operator $\Jieo(\fHi,\fe)(\Ci)$, $i\in\lourd$, vanishes.  The first-order term $\Jieu$, $i\in\lourd$, reads
\begin{equation}
\Jieu(\fHi,\fe)(\Ci) = -
\frac{1}{\mi}\partial_{\Ci}\fHi(\Ci) \dscal \int  \Qijl i\elec 1(|\adimg|^2) 
\;|\adimg| \adimg \;
\fe(\adimg) \;\d\adimg,\quad i\in\lourd.\label{Jieu}
\end{equation}
The second order term $\Jied$, $i\in\lourd$, is found to be
\begin{multline}
\Jied(\fHi,\fe)(\Ci) = - \frac{1}{\mi} \partial_{\Ci}(\fHi(\Ci)\Ci) \pmat
\int \Qijl i\elec 1(|\adimg|^2) 
|\adimg| \partial_{\Ce}\fe(\adimg )\ptens \adimg \;\d\adimg\\
+ \frac{1}{4\mi^2}\partial_{\Ci\Ci}^2\fHi(\Ci)\pmat
\int \Qijl i\elec 2 (|\adimg |^2)
|\adimg| (|\adimg|^2\identite-3\adimg\ptens\adimg) \fe(\adimg)\;\d\adimg\\
+ \frac{1}{\mi^2}\partial_{\Ci\Ci}^2\fHi(\Ci)\pmat
\int \Qijl i\elec 1(|\adimg|^2)
|\adimg| \adimg\ptens\adimg \fe(\adimg)\;\d\adimg.
\label{Jied}
\end{multline}
Finally, the third-order term $\Jiet$, $i\in\lourd$, is given by
\begin{multline}
\Jiet(\fHi,\fe)(\Ci) =
 \frac{1}{\mi}\partial_{\Ci}(\tfrac{1}{2}\fHi\Ci\ptens\Ci) \pcont \int  
 \Qijl{i}{\elec}{1}(|\adimg|^2) 
\;|\adimg| \partial_{\Ce\Ce}^2 \fe(\adimg)\ptens\adimg 
\;\d\adimg \\
+ \frac{1}{2\mi^2}\partial_{\Ci}\fHi(\Ci)\dscal\int  \Qijl{i}{\elec}{1}(|\adimg|^2) 
|\adimg| \adimg\ptens\adimg\dscal\partial_{\Ce}\fe(\adimg)\;\d\adimg\\
\displaybreak[0]
+ \frac{1}{\mi^2}\partial_{\Ci}(\partial_{\Ci}\fHi(\Ci)\ptens\Ci)
\pcont \int  \Qijl{i}{\elec}{1}(|\adimg|^2) 
|\adimg| \adimg\ptens\adimg\ptens\partial_{\Ce}\fe(\adimg)\;\d\adimg\\
\displaybreak[0]
+ \frac{1}{4\mi^2}\partial_{\Ci}(\partial_{\Ci}\fHi(\Ci)\ptens\Ci)
\pcont \int  \Qijl{i}{\elec}{2}(|\adimg|^2) 
|\adimg| (|\adimg|^2\identite-3\adimg\ptens\adimg)\ptens
\partial_{\Ce}\fe(\adimg)\;\d\adimg\\
- \frac{1}{4\mi^2}\partial^3_{\Ci\Ci\Ci}\fHi(\Ci)
\pcont \int  \Qijl{i}{\elec}{1}(|\adimg|^2) 
|\adimg| (|\adimg|^2\identite+\adimg\ptens\adimg)\ptens\adimg
\fe(\adimg)\;\d\adimg\\
\displaybreak[0]
- \frac{1}{4\mi^2}\partial^3_{\Ci\Ci\Ci}\fHi(\Ci)
\pcont \int  \Qijl{i}{\elec}{2}(|\adimg|^2) 
|\adimg| (|\adimg|^2\identite-3\adimg\ptens\adimg)\ptens\adimg
\fe(\adimg)\;\d\adimg\\
+ \frac{1}{12\mi^2}\partial^3_{\Ci\Ci\Ci}\fHi(\Ci)
\pcont \int  \Qijl{i}{\elec}{3}(|\adimg|^2) 
|\adimg| (3|\adimg|^2\identite-5\adimg\ptens\adimg)\ptens\adimg
\fe(\adimg)\;\d\adimg\\
+ \frac{3}{2\mi} \Jieu(\fHi,\fe)(\Ci).
\label{Jiet}
\end{multline}
\end{theorem}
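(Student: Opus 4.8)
The plan is to read \eqref{Jie} as a Lorentz-type operator and expand it in $\epsilon$, exploiting the fact that in a collision with an electron the heavy particle is hardly deflected. From the kinematics \eqref{vitaprescollision1} one first extracts the displacement of the heavy peculiar velocity,
\begin{equation*}
\Ci'-\Ci = \frac{\epsilon}{\mi+\epsilon^2}\bigl(\Ce + s\,|\epsilon\Ci-\Ce|\,\ovec\bigr) - \frac{\epsilon^2}{\mi+\epsilon^2}\,\Ci,
\end{equation*}
which is $\ordre(\epsilon)$. Since this displacement is small, I would Taylor expand the gain factor $\fHi(\Ci')$ about $\Ci$ up to third order in $\Ci'-\Ci$, thereby reaching $\ordre(\epsilon^3)$. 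The electron factor, by contrast, changes at $\ordre(1)$ and cannot be expanded naively; instead I would pass to the scaled relative velocity $\adimg$ as integration variable, writing $|\epsilon\Ci-\Ce|=(1+\epsilon^2/\mi)^{1/2}|\adimg|$ and $\d\Ce=(1+\epsilon^2/\mi)^{3/2}\d\adimg$, and expand these geometric factors together with the cross-section argument $|\adimg|^2$ in $\epsilon$.

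The vanishing of the zeroth order is the organizing principle. At $\epsilon=0$ the heavy velocity is frozen, the relative speed reduces to $|\adimg|$, the electron scatters elastically ($|\Ce'|=|\Ce|=|\adimg|$), and the cross-section $\seceff[ie](|\adimg|^2,\ovec\dscal\evec)$, with $\evec=\adimg/|\adimg|$, is symmetric under exchanging the incoming and outgoing directions $\evec\leftrightarrow\ovec$. In the gain--loss difference the electron factor is evaluated at $-s|\adimg|\ovec$ (gain) and at $-s|\adimg|\evec$ (loss), so the joint directional integral is antisymmetric under $\evec\leftrightarrow\ovec$ and vanishes: this is precisely $\Jieo=0$, and it is what shifts the whole expansion up by one power of $\epsilon$. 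The same directional integrals, carried out by axial symmetry about $\evec$, reduce the moments $\int\seceff[ie]\,\ovec\,\d\ovec$, $\int\seceff[ie]\,\ovec\ptens\ovec\,\d\ovec$ and $\int\seceff[ie]\,\ovec\ptens\ovec\ptens\ovec\,\d\ovec$ to the generalized momentum cross-sections $\Qijl i\elec 1$, $\Qijl i\elec 2$, $\Qijl i\elec 3$ after projection onto the isotropic tensors built from $\identite$ and $\evec$; this is the origin of the structures $|\adimg|^2\identite-3\adimg\ptens\adimg$ and $3|\adimg|^2\identite-5\adimg\ptens\adimg$.

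Collecting terms order by order then gives the result. At first order the surviving contribution is the $\ordre(\epsilon)$ part of $\Ci'-\Ci$, namely $\tfrac{\epsilon}{\mi}(\Ce+s|\adimg|\ovec)$, acting on $\partial_{\Ci}\fHi$; the accompanying $\ordre(\epsilon)$ corrections to the zeroth-order gain--loss difference vanish by the same antisymmetry, and the momentum-transfer factor $(1-\cos\theta)$ produces $\Qijl i\elec 1$, giving \eqref{Jieu}. The electron derivatives $\partial_{\Ce}\fe(\adimg)$ and $\partial^2_{\Ce\Ce}\fe(\adimg)$ in \eqref{Jied} and \eqref{Jiet} arise from the $\epsilon$-expansion of the shift relating $\Ce$ (and the post-collision $\Ce'$) to $\adimg$, of the form $\Ce=-s\adimg+\epsilon\Ci+\ordre(\epsilon^2)$; I would then check that the $\ordre(\epsilon^2)$ contributions assemble into the three terms of \eqref{Jied}.

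The main obstacle is the third-order term \eqref{Jiet}: one must combine consistently the third-order Taylor coefficient of $\fHi$, the cross terms coupling $\partial_{\Ci}\fHi$ and $\partial^2_{\Ci\Ci}\fHi$ with $\partial_{\Ce}\fe$, and the $\ordre(\epsilon^2)$ and $\ordre(\epsilon^3)$ corrections to the kinematic coefficients, the relative speed, the Jacobian and the cross-section argument, before performing the rank-three angular integrals that generate $\Qijl i\elec 3$. This bookkeeping is heavy, and the natural consistency check is the closing term $\tfrac{3}{2\mi}\Jieu$: I expect it to collect precisely those third-order contributions that reproduce the first-order velocity and angular structure, with the coefficient $\tfrac{3}{2}\tfrac{1}{\mi}$ traceable to the expansion $(1+\epsilon^2/\mi)^{3/2}\simeq 1+\tfrac{3}{2}\epsilon^2/\mi$ of the change-of-variables Jacobian.
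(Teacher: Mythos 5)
Your proposal follows essentially the same route as the paper's proof: the change of variables from $\Ce$ to $\adimg$, the Taylor expansion of $\fHi(\Ci')$, $\fe(\Ce')$, and $\fe(\Ce)$ in powers of $\epsilon$, the vanishing of the zeroth order by intertwining $\evec$ with $\ovec$, and the reduction of the angular integrals to the cross-sections $\Qijl{i}{\elec}{l}$. One correction to your consistency check: the closing term $\tfrac{3}{2\mi}\Jieu$ in \eqref{Jiet} does not come from the Jacobian alone, since the full prefactor is $|\epsilon\Ci-\Ce|\,\d\Ce=(1+\epsilon^2/\mi)^{2}|\adimg|\,\d\adimg$, contributing $+\tfrac{2}{\mi}$ at order $\epsilon^2$, while the remaining $-\tfrac{1}{2\mi}$ comes from the $\ordre(\epsilon^3)$ correction $-\epsilon^3\tfrac{1}{2\mi^2}(|\adimg|\ovec-\adimg)$ in the expansion of $\Ci'$; that the sum equals the Jacobian exponent $\tfrac{3}{2}$ is a numerical coincidence.
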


\begin{proof}
The change of variable $\d \Ce=-(1+\varepsilon^2/\mi)^{3/2}\d\adimg$ allows for the  differential cross-section  dependance on $\varepsilon$ to be eliminated
\begin{multline*}
\Jie(\fHi,\fe)(\Ci) = 
\int \seceff[ie]\!\left(|\adimg |^2,\ovec\dscal\tfrac{\adimg}{|\adimg |}\right)
|\adimg|{(1+\epsilon^2/\mi)}^{2}\\
\Bigl[ \fHi(\Ci')\fe(\Ce')-\fHi(\Ci)\fe(\Ce) \Bigr] \;\d\ovec \;\d\adimg, \quad i \in \lourd.
\end{multline*}
Then, the peculiar velocities are expanded in a power series of $\varepsilon$
\begin{align*}
\Ci'&=\Ci+\varepsilon\tfrac{1}{\mi}\avec-\varepsilon^3\tfrac{1}{2\mi^2}\avec+\ordre(\epsilon^4), &\avec &=-\adimg+|\adimg |\ovec, \quad i\in\lourd,\\
\Ce'&=-|\adimg | \ovec +  \epsilon\Ci+\varepsilon^2\tfrac{1}{\mi}\bvec
+\ordre(\epsilon^4),&\bvec&=-\adimg+\tfrac{1}{2}|\adimg |\ovec,\\
\Ce&=-\adimg+ \epsilon\Ci -\varepsilon^2\tfrac{1}{2\mi}\adimg
+\ordre(\epsilon^4).
\end{align*}
Hence, the distribution functions are found to be
\begin{multline*}
\fHi(\Ci')=\fHi(\Ci)+\varepsilon\tfrac{1}{\mi}\partial_{\Ci}\fHi(\Ci)\dscal\avec  +\varepsilon^2\tfrac{1}{2\mi^2}\partial_{\Ci\Ci}^2\fHi(\Ci)\pmat \avec\ptens\avec\\
\null+\varepsilon^3 \tfrac{1}{6\mi^3}\partial_{\Ci\Ci\Ci}^3\fHi(\Ci)\pcont \avec\ptens\avec\ptens\avec
-\varepsilon^3 \tfrac{1}{2\mi^2}\partial_{\Ci}\fHi(\Ci)\dscal\avec
+\ordre(\epsilon^4), \quad i\in\lourd,
\end{multline*}
\begin{multline*}
\fe(\Ce')=\fe(-|\adimg | \ovec )+\varepsilon\partial_{\Ce}  \fe(-|\adimg | \ovec )\dscal\Ci
+\varepsilon^2 \tfrac{1}{2}\partial_{\Ce\Ce}^2\fe(-|\adimg | \ovec )\pmat\Ci\ptens\Ci \\
\null+\varepsilon^2 \tfrac{1}{\mi}\partial_{\Ce}  \fe(-|\adimg | \ovec )\dscal\bvec 
+\varepsilon^3\tfrac{1}{6}\partial_{\Ce\Ce\Ce}^3 \fe(-|\adimg | \ovec )\pcont \Ci\ptens\Ci\ptens\Ci\\
\null+\varepsilon^3 \tfrac{1}{\mi}\partial_{\Ce\Ce}^2 \fe(-|\adimg | \ovec )\pmat
\Ci\ptens\bvec+\ordre(\epsilon^4),
\end{multline*}
\begin{multline*}
\fe(\Ce)=\fe(-\adimg )+\varepsilon\partial_{\Ce}  \fe(-\adimg)\dscal\Ci+
\varepsilon^2\tfrac{1}{2}\partial_{\Ce\Ce}^2\fe(-\adimg)\pmat\Ci\ptens\Ci\\
\null-\varepsilon^2\tfrac{1}{2\mi}\partial_{\Ce}\fe(-\adimg)\dscal\adimg
+\varepsilon^3\tfrac{1}{6}\partial_{\Ce\Ce\Ce}^3 \fe(-\adimg)\pcont \Ci\ptens\Ci\ptens\Ci\\
\null-\varepsilon^3\tfrac{1}{2\mi}\partial_{\Ce\Ce}^2 \fe(-\adimg)\pmat\Ci\ptens\adimg
+\ordre(\epsilon^4).
\end{multline*}
Combining these equations, the zero-order term $\Jieo$, $i\in\lourd$, is thus given by 
\begin{multline*}
\Jieo(\fHi,\fe)(\Ci) = \fHi(\Ci)\int \seceff[ie]\!\left(|\adimg |^2,\ovec\dscal\evec\right)\\
|\adimg|^3 \Bigl[ \fe(-|\adimg | \ovec )-\fe(-|\adimg|\evec) \Bigr] \;\d\ovec \;\d\evec\;\d|\adimg|.
\end{multline*}
Intertwining $\evec$ with $\ovec$, the integral is shown to vanish. Then, eqs.~\eqref{Jieu}-\eqref{Jiet} are obtained after some lengthy calculation.
\end{proof}

Theorem~\ref{th:thmJie} admits three corollaries.

\begin{corollary}\label{th:Jieuiso}
The first-order collision operator $\Jieu(\fHi,\fe)$, $i\in\lourd$, vanishes when $\fe$ is an isotropic function of the velocity $\Ce$.
\end{corollary}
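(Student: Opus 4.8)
The plan is to read the structure of $\Jieu$ directly off its explicit expression \eqref{Jieu} and then to exploit a parity symmetry in the integration variable $\adimg$. Writing
\[
\Jieu(\fHi,\fe)(\Ci) = -\frac{1}{\mi}\,\partial_{\Ci}\fHi(\Ci)\dscal \mathcal{I},
\qquad
\mathcal{I} = \int \Qijl i\elec 1(|\adimg|^2)\,|\adimg|\,\adimg\,\fe(\adimg)\,\d\adimg,
\]
we see that $\Jieu$ is merely the contraction of the fixed vector $\partial_{\Ci}\fHi(\Ci)$ with the velocity-space integral $\mathcal{I}$, where (as in the proof of Theorem~\ref{th:thmJie}) $\adimg$ ranges over the whole velocity space. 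It therefore suffices to show that $\mathcal{I}$ vanishes whenever $\fe$ is isotropic; the conclusion then follows for every $\fHi$ and every $i\in\lourd$.

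First I would isolate the directional dependence of the integrand of $\mathcal{I}$. The generalized momentum cross-section $\Qijl i\elec 1(|\adimg|^2)$ is by construction a function of $|\adimg|^2$ alone, and the prefactor $|\adimg|$ is radial as well. The isotropy hypothesis states precisely that $\fe$, regarded as a function of the velocity, depends only on the modulus of its argument, so $\fe(\adimg)$ is a function of $|\adimg|$. Hence every scalar factor multiplying the vector $\adimg$ in the integrand is radial, and the integrand takes the form $g(|\adimg|)\,\adimg$ for a single scalar function $g$.

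Then I would invoke the parity argument. Under the substitution $\adimg\mapsto-\adimg$, which preserves the measure $\d\adimg$ and leaves the integration domain invariant, the scalar factor $g(|\adimg|)$ is unchanged while $\adimg$ changes sign. The integrand is thus odd, whence $\mathcal{I}=-\mathcal{I}$ and therefore $\mathcal{I}=0$. Consequently $\Jieu(\fHi,\fe)(\Ci)=0$ for all $i\in\lourd$, which is the assertion.

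No substantive obstacle arises here: the only point deserving care is the bookkeeping verification that all scalar prefactors are genuinely radial, which is immediate from the definition of $\Qijl i\elec 1$ and from isotropy. The physical content---that an isotropic electron distribution produces no net first-order momentum transfer onto the heavy particles---is captured entirely by this elementary odd-integrand cancellation.
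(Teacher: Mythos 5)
Your argument is correct and is essentially the paper's own proof: the authors likewise observe that, since $\Qijl i\elec 1(|\adimg|^2)$ and $|\adimg|$ are radial and $\fe$ is isotropic, the integrand in eq.~\eqref{Jieu} is an odd function of $\adimg$, so the integral over velocity space vanishes. Your write-up merely spells out the bookkeeping that the paper declares ``immediate.''
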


\begin{proof}
Expression \eqref{Jieu} immediately yields that the integrand is an odd function of $\adimg$ if $\fe$ is isotropic in the mean heavy-particle frame, so that the first-order collision operator vanishes.
\end{proof}

\begin{remark}\label{th:rem1}
So far, we note that this property is strongly related to our choice of referential. For example, such a property is not satisfied when $\velocity=0$.
Thus, the structure of the expansion of collisional integrals depends on the initial choice of  referential.
We will come back to this point in Section \ref{sec:secsubjustif}.
\end{remark}

A collision frequency is defined as a Maxwell-Boltzmann averaged momentum cross-section
\begin{equation*}
\colfreqiez = \frac{1}{\tempe}\int \Qijl i\elec 1(|\adimg|^2) 
|\adimg |^3 \feo(\adimg) \;\d\adimg, \quad i\in\lourd,
\end{equation*}
where $\feo(\adimg)=\ne\exp\left[-\adimg\dscal\adimg/(2\tempe)\right]/ \left(2\pi\tempe\right)^{3/2}$.

\begin{corollary}\label{th:Jiediso}
If $\feo=\ne\exp\left[-\Ce\dscal\Ce/(2\tempe)\right]/ \left(2\pi\tempe\right)^{3/2}$, the second-order collision operator reads
\begin{equation}
\Jied(\fHi,\feo)(\Ci) = \frac{\colfreqiez}{3\mi}\Bigl(\dCi \dscal (\fHi\Ci)+\frac{\tempe}{\mi}\laplaceCi\fHi\Bigr), \quad i\in\lourd.
\label{Jiedfeo}
\end{equation}
\end{corollary}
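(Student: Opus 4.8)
The plan is to substitute the explicit Maxwell-Boltzmann distribution $\feo$ into the second-order operator $\Jied$ from Theorem~\ref{th:thmJie}, equation~\eqref{Jied}, and collapse the three integral terms into the compact divergence-plus-Laplacian form. The key structural fact I would exploit is that $\feo(\adimg)$ is isotropic, so every integral in \eqref{Jied} reduces to a scalar multiple of an isotropic tensor built from $\adimg$, which can then be re-expressed through the collision frequency $\colfreqiez$.

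First I would compute the gradient $\partial_{\Ce}\feo(\adimg)=-\tfrac{1}{\tempe}\adimg\feo(\adimg)$. Substituting this into the first integral of \eqref{Jied} turns $\partial_{\Ce}\feo(\adimg)\ptens\adimg$ into $-\tfrac{1}{\tempe}\adimg\ptens\adimg\,\feo(\adimg)$, so the first and third terms of \eqref{Jied} become proportional to the \emph{same} tensor integral $\int \Qijl i\elec 1(|\adimg|^2)|\adimg|\,\adimg\ptens\adimg\,\feo(\adimg)\,\d\adimg$. By isotropy this integral equals $\tfrac{1}{3}\identite\int \Qijl i\elec 1(|\adimg|^2)|\adimg|\,|\adimg|^2\feo(\adimg)\,\d\adimg=\tfrac{\tempe}{3}\colfreqiez\,\identite$, using the very definition of $\colfreqiez$. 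The second integral of \eqref{Jied}, carrying $\Qijl i\elec 2$ and the traceless tensor $|\adimg|^2\identite-3\adimg\ptens\adimg$, has vanishing contraction against any isotropic tensor — its angular integral against $\identite$ is zero since $\int(\identite-3\hat{g}\ptens\hat{g})\,\d\Omega=0$ — so that term drops out entirely. This is the cancellation that removes all dependence on the $l=2$ cross-section and leaves only $\colfreqiez$.

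I would then assemble the surviving pieces: the term in $\partial_{\Ci}(\fHi\Ci)$ contracted with $\tfrac{\tempe}{3}\colfreqiez\identite$ yields $\tfrac{\colfreqiez}{3\mi}\partial_{\Ci}\dscal(\fHi\Ci)$, and the term in $\partial^2_{\Ci\Ci}\fHi$ contracted with the same isotropic tensor yields $\tfrac{\colfreqiez}{3\mi}\tfrac{\tempe}{\mi}\laplaceCi\fHi$ after noting $\partial^2_{\Ci\Ci}\fHi\pmat\identite=\laplaceCi\fHi$. Collecting the two contributions reproduces \eqref{Jiedfeo} with the stated prefactor $\colfreqiez/(3\mi)$.

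The main obstacle I anticipate is purely bookkeeping rather than conceptual: one must keep careful track of the contraction conventions ($\pmat$ for double contraction, the placement of $\identite$), of the sign from $\partial_{\Ce}\feo=-\tfrac{1}{\tempe}\adimg\feo$, and of the combinatorial factors $\tfrac{1}{4\mi^2}$ versus $\tfrac{1}{\mi^2}$ multiplying the two $\Qijl i\elec 1$-integrals, so that they combine with the correct relative weight. The cleanest route is to reduce each tensor integral to its scalar trace via the isotropy identities $\int \adimg\ptens\adimg\,\phi(|\adimg|)\,\d\adimg=\tfrac{1}{3}\identite\int|\adimg|^2\phi\,\d\adimg$ and $\int(|\adimg|^2\identite-3\adimg\ptens\adimg)\phi\,\d\adimg=0$ before substituting, which makes the emergence of $\colfreqiez$ transparent and confirms that only the $l=1$ cross-section survives.
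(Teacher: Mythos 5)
Your proposal is correct and is exactly the ``direct calculation'' the paper invokes without spelling out: substitute $\partial_{\Ce}\feo(\adimg)=-\tfrac{1}{\tempe}\adimg\feo(\adimg)$, use the isotropy identities to reduce the two $\Qijl i\elec 1$ tensor integrals to $\tfrac{\tempe}{3}\colfreqiez\identite$ and to annihilate the $\Qijl i\elec 2$ term, and collect the divergence and Laplacian contributions. The details (signs, the $\tfrac{1}{\mi}$ versus $\tfrac{1}{\mi^2}$ prefactors, the emergence of $\colfreqiez$ from its definition) all check out against \eqref{Jied} and \eqref{Jiedfeo}.
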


\begin{proof}
A direct calculation of $\Jied(\fHi,\feo)(\Ci)$ given in \eqref{Jied} immediately yields expression \eqref{Jiedfeo} if 
$\feo=\ne\exp\left[-\Ce\dscal\Ce/(2\tempe)\right]/ \left(2\pi\tempe\right)^{3/2}$.
\end{proof}

\begin{corollary}\label{th:Jietiso}
The third-order collision operator $\Jiet(\fHi,\fe)$, $i\in\lourd$, vanishes when $\fe$ is an isotropic function of the velocity $\Ce$.
\end{corollary}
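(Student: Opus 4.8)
The plan is to prove the statement by a parity argument in the integration variable $\adimg$, exactly paralleling the proof of Corollary~\ref{th:Jieuiso}. The key observation is that when $\fe$ is isotropic, it depends on its argument only through its norm and is therefore an even function of $\adimg$; consequently its successive velocity derivatives $\partial_{\Ce}\fe$, $\partial^2_{\Ce\Ce}\fe$, and $\partial^3_{\Ce\Ce\Ce}\fe$ are respectively odd, even, and odd under $\adimg\mapsto-\adimg$. First I would record the parities of the remaining factors appearing in the integrands of \eqref{Jiet}: the scalar weights $|\adimg|$ and $\Qijl{i}{\elec}{l}(|\adimg|^2)$ are even, the tensor $\identite$ is even, and a tensor product of $k$ copies of $\adimg$ has parity $(-1)^k$.

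Then I would inspect each of the eight terms of $\Jiet$ in turn and compute the parity of its $\adimg$-integrand as the product of the parity of the tensorial $\adimg$-prefactor and the parity of the $\fe$-derivative it carries. For the first term the prefactor contributes a single $\adimg$ (odd) while the derivative is of second order (even), so the integrand is odd; the second and third terms combine the even factor $\adimg\ptens\adimg$ with a first-order derivative and are again odd; the fourth term combines the even tensor $|\adimg|^2\identite-3\adimg\ptens\adimg$ with a first-order derivative and is odd; and the fifth, sixth, and seventh terms multiply an even combination of $|\adimg|^2\identite$ and $\adimg\ptens\adimg$ by one extra trailing $\adimg$ and by $\fe$ itself, so the lone $\adimg$ makes the integrand odd. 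In every case the integrand is odd in $\adimg$, and since the integration is carried over all of $\adimg$-space, which is symmetric under $\adimg\mapsto-\adimg$, each integral vanishes.

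Finally, the eighth term $\tfrac{3}{2\mi}\Jieu(\fHi,\fe)(\Ci)$ vanishes directly by Corollary~\ref{th:Jieuiso}, since $\fe$ is isotropic. Assembling the contributions then yields $\Jiet(\fHi,\fe)(\Ci)=0$.

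As for difficulty, there is no genuine analytic obstacle: the argument is purely one of symmetry, and no cancellation between terms is needed, each vanishing on its own. The only point requiring care is the bookkeeping of parities, in particular noting that $|\adimg|^2$ and $\identite$ are even so that a combination such as $|\adimg|^2\identite-3\adimg\ptens\adimg$ is even, while the trailing $\adimg$ in the fifth through seventh terms is what supplies the odd parity. Verifying that all eight terms have been accounted for, and that each indeed contains an odd total number of $\adimg$-factors once the derivative order is included, is the main place where an oversight could occur.
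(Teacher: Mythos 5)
Your proof is correct and follows the same route as the paper, which simply observes that each integrand in \eqref{Jiet} is an odd function of $\adimg$ when $\fe$ is isotropic (so that each integral vanishes) and that the last term vanishes by Corollary~\ref{th:Jieuiso}. Your term-by-term parity bookkeeping merely makes explicit what the paper states in one line.
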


\begin{proof}
Expression \eqref{Jiet} immediately yields that the integrand is an odd function of $\adimg$ if $\fe$ is isotropic in the mean heavy-particle frame, so that the third-order collision operator vanishes.
\end{proof}

\subsection{Expansion of the collision operator $\Jej[i]$}

The dimensional analysis yields the following expression of the nondimensional collision operator $\Jej[i]$, $i\in\lourd$,
\begin{multline}
\label{Jei}
\Jej[i](\fe,\fHi)(\Ce) = \int \seceff[ei]\Bigl( \tfrac{\mi|\Ce-\epsilon\Ci|^2}{\mi+\epsilon^2},\ovec\dscal\evec \Bigr)
|\Ce-\epsilon\Ci| \\
\Bigl[ \fe(\Ce')\fHi(\Ci')-\fe(\Ce)\fHi(\Ci) \Bigr] \;\d\ovec \;\d\Ci.
\end{multline}
The original set of variables $\{\Ce, \Ci, \ovec\}$ is retained. We introduce the momentum cross-section
\begin{equation}
\Qijl \elec i 1 (|\Ce|^2) = 2\pi \int_0^{\pi} \seceff[\elec i](|\Ce|^2,\cos\theta)
(1-\cos\theta)\sin\theta \;\d\theta,\quad i\in\lourd,
\end{equation}
representing the average momentum transfered in encounters between electrons and heavy particles  $i\in\lourd$.  It is equal to the cross-section  $\Qijl i\elec 1$.

\begin{theorem}\label{th:theoJei}
The collision operator $\Jej[i]$, $i\in\lourd$, can be expanded in the form
\begin{multline}
\label{Jeiexpanded}
\Jej[i](\fe,\fHi)(\Ce)=\Jejo[i](\fe,\fHi)(\Ce)+\epsilon\Jeju[i](\fe,\fHi)(\Ce)+\epsilon^2\Jejd[i](\fe,\fHi)(\Ce)\\+\epsilon^3\Jejt[i](\fe,\fHi)(\Ce)+\ordre(\epsilon^4).
\end{multline}
The zero-order term $\Jejo[i]$, $i\in\lourd$, is given by the expression
\begin{equation}\label{Jeio}
\Jejo[i](\fe,\fHi)(\Ce)=\int \fHi(\Ci)\;\d\Ci
\int\seceff[\elec i] \Bigl(|\Ce|^2,\ovec\dscal\tfrac{\Ce}{|\Ce|}\Bigr) |\Ce|
\Bigl[ \fe(|\Ce|\ovec) - \fe(\Ce) \Bigr] \d\ovec.
\end{equation}
The first-order term $\Jieu$, $i\in\lourd$, reads
\begin{multline}\label{Jeiu}
\Jeju[i](\fe,\fHi)(\Ce)= \Bigl(\int \fHi(\Ci)\Ci \;\d\Ci\Bigr) \dscal \\
\bigg\{ \partial_{\Ce} 
\int 
\seceff[ei](|\Ce|^2,\tfrac{\Ce}{|\Ce|}\dscal\ovec)
\bigl[\fe(\Ce) - \fe(|\Ce|\ovec)\bigr] |\Ce| \;\d\ovec \\
+ \int
\seceff[\elec i](|\Ce|^2,\tfrac{\Ce}{|\Ce|}\dscal\ovec) |\Ce|
\bigl[\partial_{\Ce} \fe(|\Ce|\ovec) - \partial_{\Ce}\fe(\Ce)\bigr]
\;\d\ovec \bigg\}.
\end{multline}
The second order term $\Jied$, $i\in\lourd$,  is found to be
\begin{equation}\label{Jeid}
\Jejd[i](\fe,\fHi)(\Ce) =
\tfrac{1}{\mi} K_{\elec i}^{2,1}(\Ce) \int\fHi(\Ci)\;\d\Ci 
+ \tfrac{1}{2} {\boldsymbol K}_{\elec i}^{2,2}(\Ce) \;
\pmat \int\fHi(\Ci)\Ci\ptens\Ci\;\d\Ci ,
\end{equation}
with
\begin{multline*}
K_{\elec i}^{2,1} (\Ce) = 
\partial_{\Ce} \dscal\int 
\seceff[\elec i](|\Ce|^2,\tfrac{\Ce}{|\Ce|}\dscal\ovec)
(\Ce - |\Ce|\ovec) |\Ce| \fe(|\Ce|\ovec) \;\d\ovec \\
- |\Ce|\Ce \dscal \int
\partial_{\Ce} \seceff[\elec i](|\Ce|^2,\tfrac{\Ce}{|\Ce|}\dscal\ovec)
\bigl[ \fe(|\Ce|\ovec)-\fe(\Ce)\bigr] \;\d\ovec,
\end{multline*}
and
\begin{multline*}
{\boldsymbol K}_{\elec i}^{2,2} (\Ce) = 
\partial_{\Ce\Ce}^2 \int 
\seceff[\elec i](|\Ce|^2,\tfrac{\Ce}{|\Ce|}\dscal\ovec)
|\Ce| \bigl[\fe(\Ce|\ovec)-\fe(\Ce)\bigr] \;\d\ovec \\
+ 2 \int
\partial_{\Ce} \bigl(\seceff[\elec i](|\Ce|^2,\tfrac{\Ce}{|\Ce|}\dscal\ovec)|\Ce|\bigr)
\ptens
\bigl[ \partial_{\Ce}\fe(\Ce) - \partial_{\Ce}\fe(|\Ce|\ovec)\bigr] \;\d\ovec \\
+ |\Ce| \int 
\seceff[\elec i](|\Ce|^2,\tfrac{\Ce}{|\Ce|}\dscal\ovec)
\bigl[\partial_{\Ce\Ce}^2\fe(\Ce) - \partial_{\Ce\Ce}^2\fe(|\Ce|\ovec)\bigr] \;\d\ovec \\
+ 2 |\Ce| \int 
\seceff[\elec i](|\Ce|^2,\tfrac{\Ce}{|\Ce|}\dscal\ovec)
\tfrac{\Ce}{|\Ce|}\ptens\ovec \partial_{\Ce\Ce}^2\fe(|\Ce|\ovec) \;\d\ovec.
\end{multline*}
\end{theorem}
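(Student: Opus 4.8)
The plan is to follow the route of the proof of Theorem~\ref{th:thmJie}: substitute the collision dynamics~\eqref{vitaprescollision1}, now with $s=-1$, into the integral~\eqref{Jei}, Taylor-expand every $\epsilon$-dependent quantity in powers of $\epsilon$, and collect the contributions order by order. The essential structural difference with the $\Jie$ case is that here the variables $\{\Ce,\Ci,\ovec\}$ are retained and the integration runs over $\Ci$ and $\ovec$, while the cross-section depends on the \emph{external} variable $\Ce$ through the relative velocity $\Ce-\epsilon\Ci$. Hence no change of variable can absorb the $\epsilon$-dependence of the cross-section, and this dependence must be expanded explicitly; this is exactly what generates the derivative terms $\partial_{\Ce}\seceff[\elec i]$ entering $K_{\elec i}^{2,1}$ and ${\boldsymbol K}_{\elec i}^{2,2}$.

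First I would expand the post-collisional velocities. Using the geometric series for $1/(\mi+\epsilon^2)$ together with $|\Ce-\epsilon\Ci|=|\Ce|-\epsilon\,(\Ce\dscal\Ci)/|\Ce|+\ordre(\epsilon^2)$, the dynamics~\eqref{vitaprescollision1} gives $\Ci'=\Ci+\tfrac{\epsilon}{\mi}(\Ce-|\Ce|\ovec)+\ordre(\epsilon^2)$ and $\Ce'=|\Ce|\ovec+\epsilon\bigl(\Ci-(\Ce\dscal\Ci)\ovec/|\Ce|\bigr)+\ordre(\epsilon^2)$. At leading order the heavy particle is a frozen scatterer ($\Ci'\to\Ci$) while the electron reflects with conserved speed ($\Ce'\to|\Ce|\ovec$); this is why, unlike in Theorem~\ref{th:thmJie}, the zero-order term~\eqref{Jeio} does not vanish. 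I would then Taylor-expand $\fHi(\Ci')$ about $\fHi(\Ci)$ and $\fe(\Ce')$ about $\fe(|\Ce|\ovec)$ through $\ordre(\epsilon^3)$. The emergent $\partial_{\Ci}\fHi$ contributions integrate to zero over $\Ci$ by parts, whereas the surviving terms reorganize into the velocity moments $\int\fHi(\Ci)\,\d\Ci$, $\int\fHi(\Ci)\,\Ci\,\d\Ci$, and $\int\fHi(\Ci)\,\Ci\ptens\Ci\,\d\Ci$ of the heavy distribution.

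The more delicate ingredient is the expansion of the collisional weight. The energy argument $\mi|\Ce-\epsilon\Ci|^2/(\mi+\epsilon^2)=|\Ce|^2-2\epsilon\,(\Ce\dscal\Ci)+\ordre(\epsilon^2)$, the deflection cosine $\ovec\dscal\evec$ through $\evec=(\Ce-\epsilon\Ci)/|\Ce-\epsilon\Ci|\to\Ce/|\Ce|$, and the prefactor $|\Ce-\epsilon\Ci|$ must all be expanded. Writing $\seceff[\elec i]$ as a function of reduced energy and deflection cosine and applying the chain rule, the leading evaluation reproduces $\seceff[\elec i](|\Ce|^2,\Ce\dscal\ovec/|\Ce|)$, while the first-order energy shift feeds the gradient $\partial_{\Ce}\seceff[\elec i]$. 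Grouping by powers of $\epsilon$ then yields the zero-order operator~\eqref{Jeio}; the first-order operator~\eqref{Jeiu}, in which the first moment $\int\fHi(\Ci)\,\Ci\,\d\Ci$ is contracted with a vector that is itself a total $\Ce$-gradient $\partial_{\Ce}$ of the zero-order angular integral plus a gradient of $\fe$; and the second-order operator~\eqref{Jeid}, where the scalar kernel $K_{\elec i}^{2,1}$ and the tensor kernel ${\boldsymbol K}_{\elec i}^{2,2}$ are contracted with $\int\fHi(\Ci)\,\d\Ci$ and $\int\fHi(\Ci)\,\Ci\ptens\Ci\,\d\Ci$ respectively.

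The hard part is the bookkeeping of the cross-section expansion. At each order one must consistently separate the contributions arising from the variation of the reduced energy, of the deflection cosine $\ovec\dscal\evec$, and of the relative-speed prefactor, and then recognize which of these reassemble into a single total velocity-gradient $\partial_{\Ce}$ acting on the zero-order integrand (so that they may be written compactly as in~\eqref{Jeiu} and through $\partial_{\Ce}\seceff[\elec i]$ in $K_{\elec i}^{2,1}$ and ${\boldsymbol K}_{\elec i}^{2,2}$) and which remain undifferentiated. Once this reassembly is performed, the remainder is the lengthy but routine collection of like powers of $\epsilon$, as in the proof of Theorem~\ref{th:thmJie}.
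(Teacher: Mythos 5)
Your proposal follows the same route as the paper's proof: expand the post-collision velocities $\Ci'$, $\Ce'$ and the relative speed $|\Ce-\epsilon\Ci|$ in powers of $\epsilon$ (your leading corrections $\Ce-|\Ce|\ovec$ and $\Ci-\tfrac{\Ce}{|\Ce|}\dscal\Ci\,\ovec$ are exactly the quantities $\cvec$ and $\dvec$ of the paper), Taylor-expand $\fHi$ about $\Ci$ and $\fe$ about $|\Ce|\ovec$, expand the $\epsilon$-dependence of the cross-section's arguments, and collect by order, so the approach is correct and essentially identical. One small caution on your bookkeeping: only the $\partial_{\Ci}\fHi$ terms whose coefficients are independent of $\Ci$ (those built from $\cvec$) vanish under integration by parts, whereas the second-order term $\partial_{\Ci}\fHi\dscal\dvec$ has a $\Ci$-dependent coefficient and integrates by parts into a contribution proportional to $\int\fHi\,\d\Ci$ that feeds $K_{\elec i}^{2,1}$, so it must not be discarded.
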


\begin{proof}
The relative velocity and  peculiar velocities after collision are expanded in a power series of $\varepsilon$. For $i\in\lourd$, we have
\begin{gather*}
|\Ce-\epsilon\Ci|=|\Ce|-\varepsilon\tfrac{\Ce}{|\Ce|}\dscal\Ci
+\varepsilon^2\bscal+\ordre(\epsilon^3),\\
\Ci'=\Ci+\varepsilon\tfrac{1}{\mi}\cvec-\varepsilon^2\tfrac{1}{\mi}\dvec
+\ordre(\epsilon^3),\\
\Ce'=|\Ce| \ovec +  \epsilon\dvec+\varepsilon^2
\bigl(\tfrac{1}{\mi}\cvec+\aavec\bigr)+\ordre(\epsilon^3),
\end{gather*}
with
$\bscal=\tfrac{1}{2|\Ce|}\Bigl[|\Ci|^2-\bigl(\tfrac{\Ce}{|\Ce|}\dscal\Ci\bigr)^2\Bigr]$,
$\cvec =\Ce-|\Ce|\ovec$,
$\dvec=\Ci-\tfrac{\Ce}{|\Ce|}\dscal\Ci\ovec$,
$\aavec=\bscal\ovec$.
Hence, the distribution functions are found to be
\begin{multline*}
\fHi(\Ci')=\fHi(\Ci)+\varepsilon\tfrac{1}{\mi}\partial_{\Ci}\fHi(\Ci)\dscal\cvec  +\varepsilon^2
\tfrac{1}{2\mi^2}\partial_{\Ci\Ci}^2\fHi(\Ci)\pmat \cvec\ptens\cvec\\
-\varepsilon^2\tfrac{1}{2\mi}\partial_{\Ci}\fHi(\Ci)\dscal\dvec +\ordre(\epsilon^3), \quad i\in\lourd,
\end{multline*}
\begin{multline*}
\fe(\Ce')=\fe(|\Ce| \ovec )+\varepsilon\partial_{\Ce}  \fe(|\Ce| \ovec )\dscal\dvec
+\frac{1}{2}\varepsilon^2\partial_{\Ce\Ce}^2\fe(|\Ce|\ovec)\pmat\dvec\ptens\dvec\\
+\varepsilon^2 \partial_{\Ce} \fe(|\Ce|\ovec)\dscal\left(\tfrac{1}{\mi}\cvec+\aavec\right)
+\ordre(\epsilon^3).
\end{multline*}
Combining these equations, we obtain eqs.~\eqref{Jeio}-\eqref{Jeid} after some lengthy calculation.
\end{proof}

Theorem~\ref{th:theoJei} admits three corollaries. First, we define the entropy produced  at order $\varepsilon^0$ in collisions between electrons and $i$ heavy particles 
\begin{equation*}
\prodenteio = -\int \Jejo[i](\fe,\fHi)(\Ce) \;\ln\left[\frac{(2\pi)^{3/2}n^0}{\partitione}\fe(\Ce)\right]\;\d\Ce, \quad i\in\lourd,
\end{equation*}
where $\partitione=({2\pi \me^0\boltz T^0}/{\planck^2})^{3/2}$ is the translational partition function of electrons. Symbol $\planck$ stands for Planck's constant. 
The zero-order operator describes the relaxation of the electron populations towards an isotropic distribution function in the mean heavy-particle frame. 

\begin{corollary}\label{th:prodenteio}
The zero-order collision operator $\Jejo[i](\fe,\fHi)$, $i\in\lourd$, vanishes when $\fe$ is an isotropic function of the velocity $\Ce$.
Moreover, the zero-order entropy is nonnegative, that writes $\prodenteio  \geq 0$, $i\in\lourd$, and the inequality is an equality if and only if $\fe$ is an isotropic function of the velocity $\Ce$.
\end{corollary}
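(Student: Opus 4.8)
The plan is to handle the two assertions separately: the isotropic vanishing is immediate, while the sign of $\prodenteio$ rests on a symmetrization of Boltzmann's $H$-theorem restricted to each sphere of constant speed. For the first claim, I would observe that in \eqref{Jeio} the arguments $|\Ce|\ovec$ and $\Ce$ have the same modulus, since $\ovec$ is a unit vector. Hence, if $\fe$ depends on $\Ce$ only through $|\Ce|$, the bracket $\fe(|\Ce|\ovec)-\fe(\Ce)$ vanishes pointwise and $\Jejo[i](\fe,\fHi)=0$. This reflects the physical content of the zero-order operator: as $\varepsilon\to0$ the heavy particle behaves as an infinitely heavy, fixed scattering center that merely rotates the electron velocity on the sphere $|\Ce|=\mathrm{const}$ while conserving its modulus. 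Since $\int\fHi(\Ci)\,\d\Ci=\ni$ is a positive constant in $\Ce$, the operator \eqref{Jeio} is $\ni$ times a linear elastic scattering operator acting on $\fe$.

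For the inequality I would first discard the constant factor inside the logarithm of $\prodenteio$. Writing that logarithm as an additive constant plus $\ln\fe(\Ce)$, the constant multiplies $\int\Jejo[i](\fe,\fHi)(\Ce)\,\d\Ce$, which vanishes: in spherical coordinates $\Ce=|\Ce|\evec$, the kernel $\seceff[\elec i](|\Ce|^2,\ovec\dscal\evec)$ is symmetric in the two unit directions, so a relabeling $\evec\leftrightarrow\ovec$ cancels the two terms of the bracket. One is then left with $\prodenteio=-\int\Jejo[i](\fe,\fHi)(\Ce)\,\ln\fe(\Ce)\,\d\Ce$, assuming $\fe>0$ and $\fe\ln\fe$ integrable.

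Abbreviating $g(\evec)=\fe(|\Ce|\evec)$, the angular part of $\prodenteio$ is, up to the positive weight $\ni|\Ce|^3$, the integral $-\int\!\!\int\seceff[\elec i](|\Ce|^2,\ovec\dscal\evec)\,[g(\ovec)-g(\evec)]\,\ln g(\evec)\,\d\ovec\,\d\evec$. Exchanging the dummy directions $\evec\leftrightarrow\ovec$ and averaging the two resulting expressions gives
\begin{equation*}
\tfrac12\int\!\!\int\seceff[\elec i](|\Ce|^2,\ovec\dscal\evec)\,[g(\ovec)-g(\evec)]\,[\ln g(\ovec)-\ln g(\evec)]\,\d\ovec\,\d\evec .
\end{equation*}
Since $\seceff[\elec i]\geq0$ and $(a-b)(\ln a-\ln b)\geq0$ for positive $a,b$ by monotonicity of the logarithm, this quantity is nonnegative for every speed; integrating against $\ni|\Ce|^3\geq0$ then yields $\prodenteio\geq0$.

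For the equality case, nonnegativity of the integrand makes $\prodenteio=0$ equivalent to $[g(\ovec)-g(\evec)][\ln g(\ovec)-\ln g(\evec)]=0$ almost everywhere on the support of $\seceff[\elec i]$, and strict monotonicity of the logarithm forces $g(\ovec)=g(\evec)$ there. The converse is exactly the first assertion. I expect the main obstacle to be precisely this last step: turning the pointwise-a.e. identity $\fe(|\Ce|\ovec)=\fe(|\Ce|\evec)$ into genuine isotropy requires the cross-section to be positive on enough scattering angles to connect any two directions on each speed sphere, an assumption I would state explicitly on $\seceff[\elec i]$; the averaging and Fubini steps, and the dropping of the logarithmic constant, I would justify within the admissible functional setting rather than carry out in detail here.
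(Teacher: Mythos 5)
Your proof is correct and follows essentially the same route as the paper: the isotropic vanishing from $|\,|\Ce|\ovec\,|=|\Ce|$, then the classical $H$-theorem symmetrization intertwining $\ovec$ with $\Ce/|\Ce|$ to produce the nonnegative combination $(x-y)\ln(x/y)$ on each speed sphere (the paper's function $\Omega$). Your extra steps --- peeling off the constant in the logarithm beforehand (the paper's symmetrization absorbs it automatically) and flagging that the equality case needs the cross-section to connect directions on each sphere --- are refinements of, not departures from, the paper's argument.
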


\begin{proof}
If $\fe$ is an isotropic function of $\Ce$, we have
$ \fe\left(|\Ce|\ovec\right) = \fe\left(\Ce\right) $ for any $\ovec$ in the unit sphere,
so that expression~\eqref{Jeio} implies that $\Jejo[i](\fe,\fHi)=0$.
The zero-order production entropy production reads
\begin{multline*}
\prodenteio  =
- \ni\int \seceff[\elec i] \bigl(|\Ce|^2,\ovec\dscal\tfrac{\Ce}{|\Ce|}\bigr) \; |\Ce|^3
\Bigl[ \fe(|\Ce|\ovec) - \fe(\Ce) \Bigr] \\
\ln\left[\frac{(2\pi)^{3/2}n^0}{\partitione} \fe(\Ce) \right]\d|\Ce|\d\tfrac{\Ce}{|\Ce|}\d\ovec,
\end{multline*}
and intertwining $\tfrac{\Ce}{|\Ce|}$ with $\ovec$,
\begin{equation*}
\prodenteio= 
\frac{\ni}{2} \int \seceff[\elec i] \bigl(|\Ce|^2,\ovec\dscal\tfrac{\Ce}{|\Ce|}\bigr) \; |\Ce|^3
\Omega\Bigl( \fe(|\Ce|\ovec),\fe(\Ce) \Bigr) \d|\Ce|\d\tfrac{\Ce}{|\Ce|}\d\ovec.
\end{equation*}
where $\Omega(x,y)=(x-y)\ln(x/y)$ is a nonnegative function.
We then obtain that $\prodenteio$, $i\in\lourd$, is nonnegative and equal to 0 if and only if $\fe$ is isotropic in the mean heavy-particle frame.
\end{proof}

\begin{corollary}\label{th:Jeju}
The first-order collision operator $\Jeju[i](\fe,\fHi)$, $i\in\lourd$, vanishes when $\fHi$ is an isotropic function of the velocity $\Ci$. 
\end{corollary}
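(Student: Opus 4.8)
The plan is to read the result directly off the product structure of the first-order operator displayed in \eqref{Jeiu}. The essential observation is that $\Jeju[i](\fe,\fHi)(\Ce)$ is a scalar product of two vectors: the first, $\int \fHi(\Ci)\Ci\,\d\Ci$, carries the \emph{entire} dependence on $\fHi$, whereas the second factor (the brace) is built only from $\fe$, the cross-section $\seceff[\elec i]$, and $\Ce$. Since $\fHi$ enters \eqref{Jeiu} solely through this single first-order moment, it suffices to show that this moment vanishes under the isotropy hypothesis, and the vanishing of the whole operator follows.

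To carry this out, I would observe that if $\fHi$ is an isotropic function of $\Ci$, then $\fHi(\Ci)$ depends on $\Ci$ only through its modulus, so that the integrand $\fHi(\Ci)\Ci$ is an odd function of $\Ci$. The substitution $\Ci\mapsto-\Ci$ then yields $\int \fHi(\Ci)\Ci\,\d\Ci=0$, whence the leading factor in \eqref{Jeiu}, and with it $\Jeju[i](\fe,\fHi)(\Ce)$, vanishes for every $i\in\lourd$.

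I do not expect any genuine obstacle here: this is the heavy-particle counterpart of the parity cancellation already used for $\Jieu$ in Corollary~\ref{th:Jieuiso}, the only difference being that the vanishing odd moment is now carried by $\fHi$ rather than by $\fe$. The entire content of the argument is the confinement, in \eqref{Jeiu}, of the $\fHi$-dependence to the drift moment $\int \fHi(\Ci)\Ci\,\d\Ci$, which is precisely what makes the cancellation immediate.
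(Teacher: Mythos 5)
Your proposal is correct and coincides with the paper's own argument: the proof in the text likewise rests on the fact that in \eqref{Jeiu} the $\fHi$-dependence sits entirely in the moment $\int \fHi(\Ci)\Ci\,\d\Ci$, whose integrand is an odd function of $\Ci$ once $\fHi$ is isotropic, so the operator vanishes by parity. Nothing is missing; your write-up merely makes explicit the factorization that the paper's one-line proof leaves implicit.
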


\begin{proof}
Expression \eqref{Jeiu} immediately yields that the integrand is an odd function of $\Ci$, $i \in\lourd$, if $\fHi$ is isotropic in the mean heavy-particle frame, so that the first-order collision operator vanishes.
\end{proof}

\begin{corollary}\label{th:Jeidiso}
Considering $\feo=\ne\exp\left[-\Ce\dscal\Ce/(2\tempe)\right]/ \left(2\pi\tempe\right)^{3/2}$ and
$\fHio=\ni\mi^{3/2} \exp\left[-\mi\Ci\dscal\Ci/(2\temph)\right]/ \left(2\pi\temph\right)^{3/2}$, $ i\in\lourd$, 
 the second-order collision operator  $\Jejd[i](\feo,\fHio)(\Ce)$, $i\in\lourd$, reads
\begin{multline}
\label{Jeidfeofio}
\Jejd[i](\feo,\fHio)(\Ce) = 
(\temph-\tempe)\frac{\ni}{\mi}\frac{1}{\tempe} \feo(\Ce)|\Ce| \\
\Bigl[ \partial_{\Ce}\dscal 
\bigl( \Qijl \elec i 1 (\Ce) \Ce \bigr)
 + \bigl(1- \tfrac{|\Ce|^2}{\tempe} \bigr)  \Qijl \elec i 1 (\Ce) \Bigr].
\end{multline}
\end{corollary}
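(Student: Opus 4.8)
The plan is to substitute the two Maxwellians into the explicit second-order formula~\eqref{Jeid} and to reduce every velocity integral to the momentum cross-section $\Qijl \elec i 1$. The first and cleanest step is to notice that only two moments of $\fHio$ survive: $\int\fHio(\Ci)\,\d\Ci=\ni$, while, $\fHio$ being an isotropic Gaussian of temperature $\temph$ and mass $\mi$, $\int\fHio(\Ci)\,\Ci\ptens\Ci\,\d\Ci=\ni\tfrac{\temph}{\mi}\identite$. Hence~\eqref{Jeid} collapses to $\Jejd[i](\feo,\fHio)=\tfrac{\ni}{\mi}K_{\elec i}^{2,1}(\Ce)+\tfrac{\ni\temph}{2\mi}\,{\boldsymbol K}_{\elec i}^{2,2}(\Ce)\pmat\identite$. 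Since $K_{\elec i}^{2,1}$ and ${\boldsymbol K}_{\elec i}^{2,2}$ involve $\feo$ alone, this is an affine function of $\temph$, so that the whole assertion amounts to showing that this affine function is a multiple of $(\temph-\tempe)$ --- equivalently that $\Jejd[i]$ vanishes at thermal equilibrium $\temph=\tempe$, consistently with $\Jejo[i]$ and $\Jeju[i]$ already vanishing on these isotropic inputs by Corollaries~\ref{th:prodenteio} and~\ref{th:Jeju}.

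Next I would use the isotropy of $\feo$ throughout. Because $\bigl||\Ce|\ovec\bigr|=|\Ce|$, one has $\feo(|\Ce|\ovec)=\feo(\Ce)$ for every $\ovec$; this annihilates the second line of $K_{\elec i}^{2,1}$ and the first term of ${\boldsymbol K}_{\elec i}^{2,2}$ outright, and lets me factor $\feo(\Ce)$ out of the remaining angular integrals. The gradients that occur are then read off from the Gaussian identities $\partial_{\Ce}\feo=-\tfrac{1}{\tempe}\Ce\feo$ and $\partial_{\Ce\Ce}^2\feo=\feo\bigl(\tfrac{1}{\tempe^2}\Ce\ptens\Ce-\tfrac{1}{\tempe}\identite\bigr)$, evaluated both at $\Ce$ and at the leading post-collision argument $|\Ce|\ovec$, the latter simplifying since $\feo(|\Ce|\ovec)=\feo(\Ce)$.

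The core of the work is the angular integration against $\seceff[\elec i]$. In each $\ovec$-integral the only distinguished direction is $\Ce/|\Ce|$, so the vector and tensor integrands reduce, by symmetry, to their components along $\Ce/|\Ce|$ and along $\identite$, the surviving scalar weights being $\int\seceff[\elec i]\d\ovec$ and $\int\seceff[\elec i]\cos\theta\,\d\ovec$; their difference is exactly $\Qijl \elec i 1$. For the terms carrying $\partial_{\Ce}\seceff[\elec i]$ I would integrate by parts in $\cos\theta$ over $[-1,1]$, the endpoint contributions vanishing, which again rebuilds the weight $1-\cos\theta$ and hence $\Qijl \elec i 1$. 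Carrying this out on $K_{\elec i}^{2,1}$ yields $\feo|\Ce|$ times the bracket $\partial_{\Ce}\dscal(\Qijl \elec i 1\Ce)+(1-\tfrac{|\Ce|^2}{\tempe})\Qijl \elec i 1$, i.e. exactly the radial profile appearing in the statement.

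The remaining and most delicate task is to evaluate the trace ${\boldsymbol K}_{\elec i}^{2,2}\pmat\identite$ and to show that it collapses onto this same profile, so that the two contributions combine into the announced $(\temph-\tempe)$ prefactor. This is where the bookkeeping is heaviest: ${\boldsymbol K}_{\elec i}^{2,2}$ has four terms, it mixes first and second velocity derivatives of $\feo$ taken at the two arguments $\Ce$ and $|\Ce|\ovec$, and only after the integration by parts in the scattering angle does its trace reorganize into the combination $\partial_{\Ce}\dscal(\Qijl \elec i 1\Ce)+(1-\tfrac{|\Ce|^2}{\tempe})\Qijl \elec i 1$ up to the factor $1/\tempe$. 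I expect the main obstacle to be fixing the relative weights of $|\Ce|^2/\tempe$ and the signs in this contraction; throughout, the requirement that the sum vanish at $\temph=\tempe$ is the invariant I would use to catch algebraic slips.
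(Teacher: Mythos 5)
Your plan coincides with the paper's proof, which consists precisely of the statement that a direct calculation from \eqref{Jeid} with the two Maxwellians yields \eqref{Jeidfeofio}; the reductions you identify (the moments $\int\fHio\,\d\Ci=\ni$ and $\int\fHio\,\Ci\ptens\Ci\,\d\Ci=\ni\tfrac{\temph}{\mi}\identite$, the isotropy identity $\feo(|\Ce|\ovec)=\feo(\Ce)$, the Gaussian derivative formulas, and the angular collapse onto $\Qijl \elec i 1$) are exactly the right ones. One bookkeeping point deserves attention before you declare victory: since ${\boldsymbol K}_{\elec i}^{2,2}$ is built from $\feo$ alone, the contribution $\tfrac{\ni\temph}{2\mi}\,{\boldsymbol K}_{\elec i}^{2,2}\pmat\identite$ is strictly proportional to $\temph$, so the $\temph$-independent part of \eqref{Jeidfeofio}, namely $-\tfrac{\ni}{\mi}\feo|\Ce|\bigl[\cdots\bigr]$, must come entirely from $\tfrac{\ni}{\mi}K_{\elec i}^{2,1}$; your reduction of $K_{\elec i}^{2,1}$ to ``exactly the radial profile'' with a plus sign is therefore inconsistent with the announced $(\temph-\tempe)$ factorization, and the equilibrium check you propose (vanishing at $\temph=\tempe$) is exactly the device that will locate the stray sign, whether it sits in your angular integration or in the printed intermediate formulas of Theorem~\ref{th:theoJei}.
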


\begin{proof}
A direct calculation of $\Jejd[i](\feo,\fHi)(\Ce)$ given in \eqref{Jeid} immediately yields expression \eqref{Jeidfeofio} if 
$\feo$ and $\fHio$ are the Maxwell-Boltzmann distribution functions given in the assumptions of corollary~\ref{th:Jeidiso}.
\end{proof}

\subsection{Electron and heavy-particle collisional invariants}

In the asymptotic limit $\varepsilon\rightarrow 0$, the space of collisional
invariants $\invspace$ defined in eq.~\eqref{eqcolinveps} splits into 
two subspaces naturally associated with our choice of scaling.
\begin{definition}
The space of scalar electron collisional invariants $\invspace_\elec$ is spanned by the following elements
\begin{equation}
\left\{
\vcenter{\halign{
$#\hfil$&$\;\,=\; #\hfil$
\cr
\inve[1]&1,
\cr\noalign{\vskip4pt}
\inve[2]&\tfrac{1}{2}\Ce\dscal\Ce.
\cr}} \right.
\end{equation}
\end{definition}

\begin{definition}
The space of scalar heavy-particle collisional invariants $\invspace_\heavy$ is spanned by the following elements
\begin{equation}
\left\{
\vcenter{\halign{
$#\hfil$&$\;\,=\; \bigl(#\bigr)_{i\in\lourd},\hfil$&$\qquad#,\hfil$
\cr
\invH[j]&\mi\delta_{ij}&j\in\lourd
\cr\noalign{\vskip4pt}
\invH[\nh+\nu]&\mi\Cinorme[i\nu]&\nu\in\{1,2,3\}
\cr\noalign{\vskip4pt}
\invH[\nh+4]&\tfrac{1}{2}\mi\Ci\dscal\Ci
\cr}} \right.
\end{equation}
where symbol $\nh$ denotes the cardinality of the set $\lourd$. 
\end{definition} 
The decoupling of the collision invariants is clearly identified in the proposed scaling. 
More precisely,  the definition of the electron linearized collision operator (given in Section~\ref{sec:sec4}) will involve the electron partial collision operator $\Jee$ and the mixed partial collision operators $\Jejo[i],~i\in\lourd$, satisfying the following important property.

\begin{property}
\label{th:corJeo}
The partial collision operators $\Jejo[i],~i\in\lourd$, are orthogonal to the space of collisional invariants $\invspace_\elec$, $i.e.$, $\ppscale{\inve[l],\Jejo[i]}=0$  for all $l\in\{1,2\}$. 
\end{property}

\begin{proof}
The projection of the collision operator $\Jejo[i]$, $i\in\lourd$, onto $\inve[l]$, $l\in\{1,2\}$  reads
\begin{equation*}
\ppscale{\inve[l],\Jejo[i]}=\ni\int\seceff[\elec i] \Bigl(|\Ce |^2,\ovec\dscal\tfrac{\Ce}{|\Ce|}\Bigr) \; |\Ce|^3
\Bigl[ \fe\left(|\Ce|\ovec\right) - \fe\left(\Ce\right) \Bigr]\inve[l]\;\d|\Ce|\d\ovec\d\tfrac{\Ce}{|\Ce|}.
\end{equation*}
Intertwining $\ovec$ with $\tfrac{\Ce}{|\Ce|}$, the projection $\ppscale{\inve[l],\Jejo[i]}$ is shown to vanish for all  $l\in\{1,2\}$. 
\end{proof}
 We underline that the partial collision operators $\Jejo[i],~i\in\lourd$, are not orthogonal for the scalar product $\ppscale{\cdot,\cdot}$ to the space spanned by the electron momentum.  It is the reason why 
the  vector $\Ce$ does not belong to $\invspace_\elec$. 
In contrast,   the definition of the heavy-particle linearized collision operator (given in Section~\ref{sec:sec4}) will only involve the heavy-particle partial collision operators $\Jij$, $i,j\in\lourd$. 

Subsequently, using the newly defined collisional invariants, the orthogonality property~\ref{th:corcolinv} of the cross-collision operators can be rewritten
\begin{equation}\label{eqcrossedcol1}
\sumi[j]\ppscale{\inve[1],\Jej}=0,\quad\ppscalh{\invH[i],\Jhe}=0,\quad i \in\lourd,
\end{equation}
for mass conservation,
\begin{equation}\label{eqcrossedcol2}
\varepsilon{\sumi[j]}\ppscale{C_{\elec\nu},\Jej}+\ppscalh{\invH[\nh+\nu],\Jhe}=0,\quad\nu\in\{1,2,3\},
\end{equation}
for momentum conservation, and
\begin{equation}\label{eqcrossedcol3}
\sumi[j]\ppscale{\inve[2],\Jej}+\ppscalh{\invH[\nh+4],\Jhe}=0,
\end{equation}
for energy conservation.
This set of relations is essential since it corresponds to the conservation of mass, momentum, and energy  in the
electron heavy-particle interactions through the various orders in $\varepsilon$ 
of the Chapman-Enskog expansion.

Then, the macroscopic properties are expressed  as partial scalar products of the distribution functions and the new collisional invariants
\begin{equation}\label{eqcontelec}
\left\{
\vcenter{\halign{
$\ppscale{\fe,#}{}\hfil$&$\;\,=\; #,\hfil$
\cr
\inve[1]&\rhoe
\cr\noalign{\vskip4pt}
\inve[2]&\rhoe \energiee
\cr}} \right.
\end{equation}
 and 
\begin{equation}\label{eqconsvelocity}
\left\{
\vcenter{\halign{
$\ppscalh{\fH,#}{}\hfil$&$\;\,=\; #\hfil$&$\qquad#,\hfil$
\cr
\invH[i]&\rhoi,&i\in\lourd
\cr\noalign{\vskip4pt}
\invH[\nh+\nu]&0,&\nu\in\{1,2,3\}
\cr\noalign{\vskip4pt}
\invH[\nh+4]&\rhoi[\heavy]\energiei[\heavy].
\cr}} \right.
\end{equation}
Symbol $\energiee$ stands for the electron thermal energy per unit mass  and $\energiei[\heavy]$, the heavy-particle thermal energy per unit mass.  It is important to mention that the later quantities are defined in the mean heavy-particle frame. Furthermore, the decoupling of the collisional invariants is also consistent with the expression of the macroscopic properties. In particular, the electron momentum is not a collision invariant in the proposed asymptotic limit,  the electron mass flux is not constrained in the mean heavy-particle velocity referential.

Moreover, translational temperatures are introduced as averaged thermal energies in the mean heavy-particle frame.
\begin{definition}
The electron and heavy-particle translational temperatures are given by
\begin{align}
\tempe&=\frac{2}{3\ne}\ppscale{\fe,\inve[2]},\\
\temph&=\frac{2}{3\nH}\ppscalh{\fH,\invH[\nh+4]},
\end{align}
where the heavy-particle number density reads $\nH=\sum_{j\in\lourd}\ni[j]$. 
\end{definition} 
Consequently, the energy can be rewritten
\begin{equation*}
\ppscale{\fe,\inve[2]}
=\frac{3}{2} \ne\tempe,
\end{equation*}
for the electrons, and
\begin{equation*}
\ppscalh{\fH,\invH[\nh+4]}=\frac{3}{2}\nH\temph
\end{equation*}
for the heavy particles. Moreover, it will be shown in Section~\ref{sec:sec4} that both temperatures are generally different.

\section{Chapman-Enskog method}\label{sec:sec4}

We resort to an Enskog expansion to derive an approximate solution to the Boltzmann equations~\eqref{eqbol1}-\eqref{eqbol2} by expanding the species distribution functions as
\begin{align}
\label{chapenske}
\fe&=\feo(1+\epsilon\phie+\epsilon^2\phide+\epsilon^3\phite) + \ordre(\epsilon^4),\\
\label{chapenski}
\fHi&=\fHio(1+\epsilon\phii+\epsilon^2\phidi) + \ordre(\epsilon^3),\qquad i\in\lourd,
\end{align}
and by imposing that the zero-order contributions $\feo$ and $\fHo$  yield the local macroscopic properties
\begin{align} 
\label{conste}
\ppscale{\feo,\inve[l]} &= \ppscale{\fe,\inve[l]}, &&l\in\{1,2\}, \\
\label{constH}
\ppscalh{\fHo,\invH[l]} &= \ppscalh{\fH,\invH[l]}, &&l\in\{1,\ldots,\nh{+}4\}.
\end{align}

Hence, based upon the dimensional analysis of section~\ref{sec:secdiman}, the electron Boltzmann equation~\eqref{eqbol1} appears to be
\begin{multline}
\label{eqbolbole}
\epsilon^{-2}\Demd(\feo) + \epsilon^{-1}\Demu(\feo,\phie) + \Dez(\feo,\phie,\phide)
+ \epsilon \Depu(\feo,\phie,\phide,\phite) \\
= \epsilon^{-2} \Jemd + \epsilon^{-1} \Jemu + \Jez + \epsilon \Jepu + \ordre(\epsilon^2),
\end{multline}
where the electron streaming operators read at successive orders 
\begin{align*}
\displaybreak[0]
\Demd(\feo) &= \delta_{b1}\qe (\Ce\pvect\B)\dscal\dCe\feo,
\displaybreak[0] \\
\Demu(\feo,\phie) &= \Demuhat(\feo)
+ \qe (\delta_{b1}\Ce\pvect\B)\dscal\dCe(\feo\phie),
\displaybreak[0] \\
\Demuhat(\feo) &= 
\tfrac{1}{\Mh}\Ce\dscal\dx\feo
+ \qe \bigl( \tfrac{1}{\Mh}\Ep + \delta_{b0}\Ce\pvect\B \bigr) \dscal\dCe\feo,
\displaybreak[0]  \\
\Dez(\feo,\phie,\phide) &=\Dezhat(\feo,\phie) + \qe (\delta_{b1}\Ce\pvect\B)\dscal\dCe(\feo\phide),
\displaybreak[0]  \\
\Dezhat(\feo,\phie) & =
\dt\feo + \tfrac{1}{\Mh}\Ce\dscal\dx(\feo\phie) + \vitesse\dscal\dx\feo  - \bigl(\dCe\feo\ptens\Ce\bigr)\pmat\dx\vitesse \\
& \phantom{ = } + \qe\bigl(\delta_{b0}\Mh\vitesse\pvect\B+\delta_{b(-1)}\Ce\pvect\B\bigr)\dscal\dCe\feo \\
& \phantom{ = } + \qe \bigl( \tfrac{1}{\Mh}\Ep + \delta_{b0}\Ce\pvect\B \bigr) \dscal\dCe(\feo\phie),
\displaybreak[0] \\
\Depu(\feo,\phie,\phide,\phite) &= 
\Depuhat(\feo,\phie,\phide) + \qe (\delta_{b1}\Ce\pvect\B)\dscal\dCe(\feo\phite), 
\displaybreak[0] \\
\Depuhat(\feo,\phie,\phide) &= 
\dt(\feo\phie) + \tfrac{1}{\Mh}\Ce\dscal\dx(\feo\phide) + \vitesse\dscal\dx(\feo\phie) \\
& \phantom{ = }
- \Mh\tfrac{\D\vitesse}{\D\temps}\dscal\dCe\feo
- \bigl(\dCe(\feo\phie)\ptens\Ce\bigr)\pmat\dx\vitesse\\
&\phantom{ = }
+\qe\bigl(\delta_{b(-1)}\Mh\vitesse\pvect\B+\delta_{b(-2)}\Ce\pvect\B\bigr)\dscal\dCe\feo\\
& \phantom{ = } + \qe\bigl(\delta_{b0}\Mh\vitesse\pvect\B+\delta_{b(-1)}\Ce\pvect\B\bigr)\dscal\dCe(\feo\phie)\\
& \phantom{ = }+ \qe \bigl( \tfrac{1}{\Mh}\Ep + \delta_{b0}\Ce\pvect\B \bigr) \dscal\dCe(\feo\phide),
\end{align*}
with the electric field expressed into the mean heavy-particle frame 
$\Ep=\E+\delta_{b1}\Mh^2\vitesse\pvect\B$. 
The electron collision operators are given   by 
\begin{align*}
\Jemd &= \cancel{\Jee(\feo,\feo)}+ \sumi[j] \cancel{\Jejo(\feo,\fHio[j])},
\displaybreak[0]
\intertext{for ease of readability in Sections~\ref{sec:secsub1}-\ref{sec:secsub2}, we strike through the collision operators that vanish when $\feo$ and $\fHio$, $i\in\lourd$, are isotropic functions,}
\Jemu &= \Jee(\feo\phie,\feo) + \Jee(\feo,\feo\phie) \\
& \phantom{=} 
+ {\sumi[j]}\; \Jejo(\feo\phie,\fHio[j]) + \cancel{\Jejo(\feo,\fHio[j]\phii[j])} + \cancel{\Jeju(\feo,\fHio[j])} \displaybreak[0] \\
\Jez &= \Jee(\feo\phide,\feo) + \Jee(\feo\phie,\feo\phie) + \Jee(\feo,\feo\phide)\\
& \phantom{ = }+ {\sumi[j]}\; \Jejo(\feo\phide,\fHio[j]) + \cancel{\Jejo(\feo,\fHio[j]\phidi[j])}+\Jezhat, \displaybreak[0] \\
\Jezhat&= {\sumi[j]}\; \Jejo(\feo\phie,\fHio[j]\phii[j])  + \cancel{\Jeju(\feo\phie,\fHio[j])} + \Jeju(\feo,\fHio[j]\phii[j]) + \Jejd(\feo,\fHio[j]), \displaybreak[0] \\
\Jepu &= \Jee(\feo\phite,\feo) 
+ \Jee(\feo\phide,\feo\phie) + \Jee(\feo\phie,\feo\phide) + \Jee(\feo,\feo\phite) \\
&\phantom{ = }+ {\sumi[j]}\; \Jejo(\feo\phite,\fHio[j]) + \cancel{\Jejo(\feo,\fHio[j]\phiti[j])}+\Jepuhat, \displaybreak[0] \\
\Jepuhat&=\smash{\sumi[j]}\; \Big\{
 \Jejo(\feo\phide,\fHio[j]\phii[j]) + \Jejo(\feo\phie,\fHio[j]\phidi[j]) 
+\cancel{\Jeju(\feo\phide,\fHio[j])} + \Jeju(\feo\phie,\fHio[j]\phii[j]) \\
& \qquad\qquad
\Jeju(\feo,\fHio[j]\phidi[j]) +\Jejd(\feo\phie,\fHio[j]) + \Jejd(\feo,\fHio[j]\phii[j]) + {\Jejt(\feo,\fHio[j])} \Big\}.
\end{align*}
Likewise, the heavy-particle Boltzmann equation~\eqref{eqbol2} is found to be 
\begin{equation}
\label{eqbolbolh}
\Diz(\fHio) + \epsilon\Diu(\fHio\!,\phii) = 
\epsilon^{-1} \Jimu + \Jiz + \epsilon \Jipu + \ordre(\epsilon^2), \qquad i\in\lourd,
\end{equation}
where the heavy-particle streaming operators read at successive orders 
\begin{multline*}
\Diz(\fHio) =
\dt\fHio + \bigl(\tfrac{1}{\Mh}\Ci{+}\vitesse\bigr)\dscal\dx\fHio 
+ \tfrac{\qi}{\mi} \bigl( \tfrac{1}{\Mh}\Ep + \delta_{b1} \Ci\pvect\B \bigr) \dscal\dCi\fHio \\
- \Mh \tfrac{\D\vitesse}{\D\temps} \dscal\dCi\fHio
- \bigl(\dCi\fHio\ptens\Ci\bigr)\pmat\dx\vitesse,
\end{multline*}
\begin{multline*}
\Diu(\fHio\!,\phii) = 
\dt(\fHio\!\phii) + \bigl(\tfrac{1}{\Mh}\Ci{+}\vitesse\bigr)\dscal\dx(\fHio\!\phii) \\
+ \tfrac{\qi}{\mi} \delta_{b0} \bigl[(\Ci{+}\Mh\vitesse)\pvect\B\bigr]\dscal\dCi\fHio 
+ \tfrac{\qi}{\mi} \bigl( \tfrac{1}{\Mh}\Ep
+ \delta_{b1} \Ci\pvect\B \bigr) \dscal\dCi(\fHio\!\phii) \\
- \Mh \tfrac{\D\vitesse}{\D\temps} \dscal\dCi(\fHio\!\phii) 
- \bigl(\dCi(\fHio\!\phii)\ptens\Ci\bigr)\pmat\dx\vitesse.
\end{multline*}
The heavy-particle collision operators are given by
\begin{align*}
\displaybreak[0]
\Jimu &= {\sumi[j]} \; \cancel{\Jij(\fHio[i]\!,\fHio[j])}+\cancel{\Jieu(\fHio[i]\!,\feo)},\\
\displaybreak[0]
\Jiz &= {\sumi[j]}\; \Jij(\fHio\phii,\fHio[j]) + \Jij(\fHio\!,\fHio[j]\phii[j])+\cancel{ \Jieu(\fHio[i]\phii,\feo)} +\Jizhat,\\
\displaybreak[0]
\Jizhat &= \Jieu(\fHio[i]\!,\feo\phie) + \Jied(\fHio[i]\!,\feo),\\
\displaybreak[0]
\Jipu &= {\sumi[j]}\; \Jij(\fHio\phidi,\fHio[j]) + \Jij(\fHio\phii,\fHio[j]\phii[j]) + \Jij(\fHio\!,\fHio[j]\phidi[j])
+\cancel{ \Jieu(\fHio[i]\phidi,\feo)} +\Jipuhat,\\
\Jipuhat&=\Jieu(\fHio[i]\phii,\feo\phie) 
+ \Jieu(\fHio[i]\!,\feo\phide)
+ \Jied(\fHio[i]\phii,\feo) + \Jied(\fHio[i]\!,\feo\phie) +\cancel{\Jiet(\fHio[i]\!,\feo)}.
\end{align*}
In the Chapman-Enskog method, the plasma is observed at successive orders of the $\varepsilon$ parameter equivalent to as many time scales. 
The micro- and macroscopic equations derived at each order are reviewed in Table~\ref{tab12}. 
\begin{table}[th]
\caption{Chapman-Enskog steps. \label{tab12}}
{\small \begin{tabular}{@{}llll@{}} \toprule 
{\bf Order} &{\bf Time}& {\bf Heavy particles}&{\bf Electrons}\\
&&&\\
$\epsilon^{-2}$ &$t_\elec^0$&&Expression of $\feo$\\ 
&&&Thermalization ($\tempe$)\\
&&&\\
$\epsilon^{-1}$ &$t_\heavy^0$&Expression of $\fHio$, $i\in\lourd$&Equation for $\phie$\\
&&Thermalization ($\temph$) &Zero-order momentum relation \\
&&&\\
$\epsilon^{0}$&$t^0$ &Equation for $\phii$, $i\in\lourd$&Equation for $\phi_\elec^2$\\
&&Euler equations&Zero-order drift-diffusion equations\\
&&&First-order momentum relation\\
&&&\\
$\epsilon$&$\frac{t^0}{\varepsilon}$ &Navier-Stokes equations&First-order drift-diffusion equations\\
 \botrule
\end{tabular}}
\end{table}

\subsection{Order $\epsilon^{-2}$: electron thermalization}

We resolve the electron Boltzmann equation~\eqref{eqbolbole} at order $\epsilon^{-2}$ corresponding to the kinetic time scale $t_\elec^0$. The electron population is shown to thermalize  in the mean heavy-particle frame to a quasi-equilibrium state described by a Maxwell-Boltzmann distribution function  at temperature $\tempe$.  In contrast, heavy particles do not exhibit any ensemble property at this order.

\begin{proposition}
\label{th:propoelec}
The zero-order electron distribution function $\feo$, 
solution to eq.~\eqref{eqbolbole} at order $\epsilon^{-2}$, 
$i.e.$, $\Demd(\feo)=\Jemd$,
that satisfies the scalar constraints \eqref{conste}
is a Maxwell-Boltzmann distribution function at the electron temperature
\begin{equation}\label{eqbolte}
\feo = \ne \left(\frac{1}{2\pi\tempe}\right)^{3/2} 
\exp\left( - \frac{1}{2\tempe}\Ce\dscal\Ce\right).
\end{equation}
\end{proposition}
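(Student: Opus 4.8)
The plan is to identify $\feo$ by a combined entropy argument, pairing the classical $H$-theorem for the self-collision operator $\Jee$ with the entropy estimate of Corollary~\ref{th:prodenteio} for the cross operators $\Jejo$, after checking that the magnetic streaming term plays no role. The easy direction first: if $\feo$ is the centered isotropic Maxwellian \eqref{eqbolte}, then $\Jee(\feo,\feo)=0$, every $\Jejo(\feo,\fHio[j])$ vanishes by the isotropy part of Corollary~\ref{th:prodenteio}, and $\Demd(\feo)=\delta_{b1}\qe(\Ce\pvect\B)\dscal\dCe\feo=0$ because $\dCe\feo$ is parallel to $\Ce$ whereas $\Ce\pvect\B$ is orthogonal to $\Ce$; hence \eqref{eqbolte} solves $\Demd(\feo)=\Jemd$ and meets \eqref{conste} once its two constants are fixed. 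The real content is uniqueness.

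For uniqueness I would test the equation $\Demd(\feo)=\Jemd$ against $\psi=\ln\bigl[(2\pi)^{3/2}n^0\feo/\partitione\bigr]$, that is, integrate both sides times $\psi$ over $\Ce$. The streaming term contributes nothing: writing $(\Ce\pvect\B)\dscal\dCe\feo\,\ln\feo=(\Ce\pvect\B)\dscal\dCe(\feo\ln\feo-\feo)$ and integrating by parts, the Gaussian boundary terms vanish and $\dCe\dscal(\Ce\pvect\B)=0$, so $\int\Demd(\feo)\ln\feo\,\d\Ce=0$; the additive constant in $\psi$ is harmless because $\int\Demd(\feo)\,\d\Ce=0$ by the same divergence-free identity. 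Hence $\int\Demd(\feo)\,\psi\,\d\Ce=0$.

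On the collision side, $\int\Jejo(\feo,\fHio[j])\,\psi\,\d\Ce$ is, by the very definition of the zero-order entropy production, the negative of the nonnegative quantity of Corollary~\ref{th:prodenteio}, hence nonpositive for each $j\in\lourd$. For the self-collision term, splitting $\psi=\ln\feo+\mathrm{const}$ and using electron mass conservation $\int\Jee(\feo,\feo)\,\d\Ce=0$ leaves the standard single-species production $\int\Jee(\feo,\feo)\ln\feo\,\d\Ce\le0$, nonpositive with equality if and only if $\feo$ is a drifting Maxwellian (Chapman and Cowling~\cite{chapman}, via the inequality $(x-y)\ln(x/y)\ge0$ already used for $\Omega$). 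The tested equation thus reads
\[
0=\int\Jee(\feo,\feo)\ln\feo\,\d\Ce+\sumi[j]\int\Jejo(\feo,\fHio[j])\,\psi\,\d\Ce,
\]
a sum of nonpositive terms, so each vanishes. Vanishing of the self-collision term forces $\feo$ to be Maxwellian up to an arbitrary mean electron velocity; vanishing of every cross term forces, by the equality case of Corollary~\ref{th:prodenteio}, $\feo$ to be isotropic in the mean heavy-particle frame, which pins that mean velocity to zero. The scalar constraints \eqref{conste} then identify the remaining two parameters as $\ne$ and $\tempe$, yielding \eqref{eqbolte}.

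The step I expect to be the crux is the joint exploitation of the two estimates: the self-collision operator alone leaves $\feo$ free up to a drift, and it is exactly the isotropy supplied by the cross operators---available here only because the chosen referential is the mean heavy-particle velocity, cf.\ Corollary~\ref{th:prodenteio} and Remark~\ref{th:rem1}---that removes this degree of freedom. Some care is also required to confirm that the magnetic term drops from the entropy balance and that the constant relating $\ln\feo$ and $\psi$ cancels against both collision operators, so that the two inequalities may legitimately be added.
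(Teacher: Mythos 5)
Your proof is correct and follows essentially the same route as the paper's: multiply $\Demd(\feo)=\Jemd$ by $\ln\bigl[(2\pi)^{3/2}n^0\feo/\partitione\bigr]$, integrate by parts to eliminate the magnetic streaming term, and use the nonnegativity of both the electron--electron entropy production and the cross-term productions of Corollary~\ref{th:prodenteio} to force each to vanish, with isotropy from the cross terms removing the drift left free by the self-collision $H$-theorem. The only differences are cosmetic (you verify the easy direction explicitly and invoke the two equality cases in the opposite order).
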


\begin{proof}
Multiplying the equation $\Demd(\feo)=\Jemd$ by $\ln\left[{(2\pi)^{3/2}n^0}\feo/{\partitione}\right]$ and integrating over $\d\Ce$ yields the entropy production 
\begin{equation*}
\prodentee^0+\sumi[j]\prodenteio[j] +\delta_{b1}\qe\!\!\int\!(\Ce\pvect\B)\dscal\dCe\feo \ln\left[{(2\pi)^{3/2}n^0}\feo/{\partitione}\right] \d\Ce=0,
\end{equation*}
with  $\prodentee^0=-\int \Jee(\feo,\feo)(\Ce) \;\ln\left[{(2\pi)^{3/2}n^0}\feo/{\partitione}\right]\;\d\Ce$. Using the equality $\dCe\feo \ln\left[{(2\pi)^{3/2}n^0}\feo/{\partitione}\right]=\dCe\left\{\feo \ln\left[{(2\pi)^{3/2}n^0}\feo/{\partitione}\right]-\feo\right\} $
and integrating by parts, the entropy production is found to be $\prodentee^0+\sum_{j\in\lourd}\prodenteio[j] =0$.
Moreover, a well-established derivation yields
\begin{equation*}
\prodentee^0= \frac{1}{4}\int \Omega(\feo\feuo,\feop\feuop)
|\Ce-\Ceu|\sigma_{\elec\elec 1} \d\ovec \d\Ce \d\Ceu \geq 0.
\end{equation*}
Using corollary \ref{th:prodenteio}, we first obtain that $\prodenteio\ge 0$, $i\in\lourd$, so that both terms $\prodentee^0=0$ and $\prodenteio=0$, $i\in\lourd$. Then,  corollary \ref{th:prodenteio} implies that $\feo$ is isotropic in the mean heavy-particle frame. Seeing that $\prodentee^0=0$, $\ln\feo$ is thus a collisional invariant, $i.e.$ is in the space $\invspace_\elec$. By using the macroscopic constraints, expression~\eqref{eqbolte} is readily obtained.
\end{proof}

The choice of the referential velocity in which electrons thermalize  
will turns out to be crucial for the rest of the resolution.  
In the $\velocity=\vitesse$ frame, the quasi-equilibrium electron velocity distribution function is isotropic and the electrons follow the bulk movement associated with the hydrodynamic velocity of the mixture, following a physically relevant scenario.
As already mentioned, the mean heavy-particle velocity $\vitesse$
does not depend on the
small $\varepsilon$ parameter while still being close to the actual hydrodynamic velocity $\speed$ of the entire mixture; this property is essential in order to conduct a 
rigourous multiscale analysis in the framework of the present Chapman-Enskog expansion.
The relevance of such a choice of referential will be thoroughly 
investigated in section \ref{sec:secsubjustif}.

\subsection{Order $\varepsilon^{-1}$: heavy-particle thermalization}
\label{sec:secsub1}

We resolve the heavy-particle Boltzmann equation~\eqref{eqbolbolh} at order $\epsilon^{-1}$ corresponding to the kinetic time scale $t_\heavy^0$. The heavy-particle population is shown to thermalize  in the mean heavy-particle frame to a quasi-equilibrium state described by a Maxwell-Boltzmann distribution function  at temperature $\temph$.  

\begin{proposition}
\label{th:propoheavy}
Considering $\feo$ given by eq.~\eqref{eqbolte},
the zero-order family of heavy-particle distribution functions $\fHo$
solution to eq.~\eqref{eqbolbolh} at order~$\epsilon^{-1}$, 
$i.e.$, $\Jimu=0$, $i\in\lourd$,
that satisfies the scalar constraints~\eqref{constH}
is a family of Maxwell-Boltzmann distribution functions at the heavy-particle temperature
\begin{equation}\label{eqbolti}
\fHio= \ni \left(\frac{\mi}{2\pi\temph}\right)^{3/2} 
\exp\left( - \frac{\mi}{2\temph}\Ci\dscal\Ci\right),\quad i\in\lourd.
\end{equation}
\end{proposition}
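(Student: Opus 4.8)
The plan is to reproduce, for the heavy species, the thermalisation argument already carried out for the electrons in Proposition~\ref{th:propoelec}, the situation here being in fact cleaner: at order $\varepsilon^{-1}$ the heavy-particle streaming operator $\Diz$ does not yet contribute, so eq.~\eqref{eqbolbolh} reduces to the pure collisional balance $\Jimu=0$, $i\in\lourd$, with $\Jimu=\sumi[j]\Jij(\fHio[i],\fHio[j])+\Jieu(\fHio[i],\feo)$. First I would discard the crossed term: by Proposition~\ref{th:propoelec} the function $\feo$ given in eq.~\eqref{eqbolte} is a Maxwell-Boltzmann distribution, hence isotropic in $\Ce$, so Corollary~\ref{th:Jieuiso} yields $\Jieu(\fHio[i],\feo)=0$ for every $i\in\lourd$. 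The system therefore collapses to the classical balance $\sumi[j]\Jij(\fHio[i],\fHio[j])=0$, $i\in\lourd$, involving only like and unlike heavy-particle encounters.

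Next I would run the $H$-theorem argument. Multiplying the equation for species $i$ by $\ln\fHio[i]$ (up to species-dependent additive normalisation constants, which integrate to zero since the heavy number densities are conserved in these collisions), integrating over $\d\Ci$ and summing over $i\in\lourd$, one obtains the heavy-particle entropy production. Using the same pre-/post-collision symmetrisation that underlies the projection formula of Property~\ref{th:corcolinv}, this production is recast in the manifestly nonnegative form
\[
\tfrac{1}{4}\sum_{i,j\in\lourd}\int \Omega\bigl(\fHio[i](\Ci)\fHio[j](\Ci[j]),\fHio[i](\Ci')\fHio[j](\Ci[j]')\bigr)\,|\Ci-\Ci[j]|\,\sigma_{ij}\,\d\ovec\,\d\Ci\,\d\Ci[j],
\]
with $\Omega(x,y)=(x-y)\ln(x/y)\ge0$ the nonnegative function already used in Corollary~\ref{th:prodenteio}. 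Since this expression equals zero, the integrand must vanish identically, which forces $\fHio[i](\Ci)\fHio[j](\Ci[j])=\fHio[i](\Ci')\fHio[j](\Ci[j]')$ for all admissible heavy-particle collisions.

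From this multiplicative relation I would conclude, by Boltzmann's theorem, that $\ln\fHio[i]$ is a summational invariant and hence lies in $\invspace_\heavy$: there exist species-dependent scalars $\alpha_i$ together with a common vector $\boldsymbol{\beta}$ and a common scalar $\gamma$ such that $\ln\fHio[i]=\alpha_i+\mi\,\boldsymbol{\beta}\dscal\Ci+\tfrac{1}{2}\gamma\,\mi\Ci\dscal\Ci$, $i\in\lourd$. Finally I would fix these constants through the scalar constraints~\eqref{constH}: the vanishing momentum constraint $\ppscalh{\fHo,\invH[\nh+\nu]}=0$ imposes $\boldsymbol{\beta}=0$, so the Maxwellian is centred in the mean heavy-particle frame; the number-density constraints fix each $\alpha_i$ in terms of $\ni$; and the energy constraint, together with the definition of $\temph$, fixes $\gamma$, yielding precisely eq.~\eqref{eqbolti}.

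The decoupling from the electron term is immediate once the isotropy of $\feo$ is invoked, so the main obstacle is the standard but delicate part of any zeroth-order Chapman-Enskog step: justifying carefully the symmetrisation that converts the entropy production into the $\Omega$-form, and then arguing rigorously that the vanishing of the integral of a nonnegative integrand forces $\ln\fHio[i]$ to be a collisional invariant whose only form compatible with the present constraints is the centred Maxwellian at temperature $\temph$.
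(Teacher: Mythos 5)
Your proof is correct and follows essentially the same route as the paper: you first eliminate the cross term $\Jieu(\fHio[i],\feo)$ via the isotropy of $\feo$ and Corollary~\ref{th:Jieuiso}, reducing the problem to the pure heavy-heavy balance $\sumi[j]\Jij(\fHio[i],\fHio[j])=0$, which is exactly the paper's first step. The remainder (the $H$-theorem symmetrisation, the identification of $\ln\fHio[i]$ as a collisional invariant in $\invspace_\heavy$, and the determination of the constants from the constraints~\eqref{constH}) is precisely the ``classical algebra'' that the paper delegates to the reference to Chapman and Cowling, so you have simply written out what the paper leaves implicit.
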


\begin{proof}
As the zero-order electron distribution function $\feo$ is isotropic in the mean heavy-particle frame, corollary~\ref{th:Jieuiso} yields that the heavy-particle Boltzmann equation~\eqref{eqbolbolh} reads at order $\epsilon^{-1}$
\begin{equation*}
\sumi[j] \Jij(\fHio[i]\!,\fHio[j]) = 0, \quad i\in\lourd.
\end{equation*}
After some classical algebra~\cite{chapman}, we obtain expression~\eqref{eqbolti} for the zero-order heavy-particle distribution functions. 
\end{proof}

Thus, propositions~\ref{th:propoelec} and \ref{th:propoheavy} allow for electron and heavy particles  quasi-equilibrium states to be obtained at different temperatures.

\subsection{Order $\varepsilon^{-1}$: electron momentum relation}
\label{sec:secsub3}

We conduct the resolution and 
derive a momentum relation based on the electron Boltzmann equation~\eqref{eqbolbole} at order $\epsilon^{-1}$ corresponding to the kinetic time scale $t_\heavy^0$. 
We then emphasize an original property of the Chapman-Enskog expansion at this order associated with both the absence of momentum constraint in eq. \ref{eqcontelec} and the multiscale analysis. 

With the previously obtained Maxwell-Boltzmann electron distribution function,  we first define
the electron linearized collision operator in the case.
\begin{definition}
The electron linearized collision operator $\boltFe$ reads
\begin{equation*}
\boltFe(\phie)=-\frac{1}{\feo}\Bigl[\Jee(\feo\phie,\feo)+\Jee(\feo,\feo\phie)+\sumi[j]\Jejo(\feo\phie,\fHio[j])\Bigr],
\end{equation*}
where $\feo$ is given by eq.~\eqref{eqbolte} and $\fHio$ by eq.~\eqref{eqbolti}.
\end{definition}

The kernel of this operator is given in the following property.

\begin{property}\label{th:kerIe}
The kernel of the linearized collision operator $\boltFe$ is the space of scalar electron collisional invariants 
$\invspace_\elec$.
\end{property}

\begin{proof}
The linearized collision operator $\boltFe$ is rewritten in the form
\begin{multline*}
\boltFe(\phie) = - \int \feuo
\bigl( \phiep+\phieup-\phie-\phieu \bigr) |\Ce-\Ceu| \seceff[\elec\elec 1] \;\d\ovec \d\Ceu \\
- \sumi[j]
\ni[j]\int\seceff[\elec j]\Bigl(|\Ce|^2,\ovec\dscal\tfrac{\Ce}{|\Ce|}\Bigr) |\Ce| \bigl( \phie(|\Ce|\ovec) - \phie(\Ce) \bigr) \d\ovec.
\end{multline*}
We then obtain that the space $\invspace_\elec$ is in the kernel of $\boltFe$. Reciprocally, if $\boltFe(\phie)=0$, multiplying $\boltFe(\phie)$ by $\feo\phie$ and integrating over $\d\Ce$ yields 
\begin{multline*}
 \frac{1}{4} \int \feo\feuo
\bigl( \phiep+\phieup-\phie-\phieu \bigr)^2 |\Ce-\Ceu| \seceff[\elec\elec 1] \;\d\ovec \d\Ce\d\Ceu \\
+ \frac{1}{2} \sumi[j]
\ni[j]\int\seceff[\elec j] \Bigl(|\Ce|^2,\ovec\dscal\tfrac{\Ce}{|\Ce|}\Bigr) |\Ce| \feo\bigl( \phie(|\Ce|\ovec) - \phie(\Ce) \bigr)^2 \d\ovec \d\Ce= 0,
\end{multline*}
so that $\phie$ is in the space $\invspace_\elec$.
\end{proof}

Based on corollaries \ref{th:prodenteio} and \ref{th:Jeju}, the electron Boltzmann equation~\eqref{eqbolbole} is found to be   at order $\epsilon^{-1}$
\begin{equation}\label{eqepsilon-1}
\feo\boltFe(\phie)+\delta_{b1}\qe\dCe(\feo\phie)\dscal\Ce\pvect\B=-\Demuhat(\feo),
\end{equation}
with the constraints
\begin{equation}\label{eqconsepsilon-1}
\ppscale{\feo\phie,\inve[l]} = 0, \qquad l\in\{1,2\}.
\end{equation}
The terms $\dCe(\feo\phie)\dscal\Ce\pvect\B$ and $\Demuhat(\feo)$ 
are orthogonal to the kernel of $\boltFe$ for the scalar product $\ppscale{\cdot,\cdot}$. Consequently, no macroscopic conservation equations of mass and energy are derived at this order.

In fact, for \emph{any} value of $\decall$, defining the shifted Maxwell-Boltzmann distribution
\begin{equation}
\label{distMBtranslate}
\fe^{\decall 0} = \ne \left(\frac{1}{2\pi\tempe}\right)^{3/2}
\exp\left( -\frac{1}{2\tempe}(\Ce-\varepsilon\Mh\decall)^2\right),
\end{equation}
we can expand it as a function of $\varepsilon$
\begin{equation}
\label{distMBtranslate_dev}
\fe^{\decall0} 
= 
\feo
\left(
1+\varepsilon \frac{\Mh}{\tempe}\Ce\dscal\decall
+ \varepsilon^2 \frac{\Mh^2}{2\tempe}\left[-\decall\dscal\decall + \frac{(\Ce\dscal\decall)^2}{\tempe}\right]   
\right)+
\ordre(\varepsilon^3),
\end{equation}
which still yields, at leading order,  the same distribution as defined in eq. \eqref{eqbolte}. 
We then realize that 
the Chapman-Enskog expansion can be rewritten in a different way at this order~:
\begin{equation}
\label{CEexp_decal}
\feo(1+\epsilon\phie+\epsilon^2\phide)
= 
\fe^{\decall 0}(1+\epsilon\phie^\decall+\epsilon^2\phie^{\decall 0})+\ordre(\epsilon^3), 
\end{equation}
with 
\begin{equation}
\label{phi_decal}
\phie = \phie^\decall + \frac{\Mh}{\tempe}\Ce\dscal\decall, \quad
\phide =\phie^{\decall 2} + \frac{\Mh}{\tempe}(\Ce\dscal\decall)\phie^\decall  + \frac{\Mh^2}{2\tempe}\left[-\decall\dscal\decall + \frac{(\Ce\dscal\decall)^2}{\tempe}\right].
\end{equation}

It is interesting to notice that, whatever  the choice of $\decall$, 
the part of the hydrodynamic velocity of the full mixture
\begin{equation*}
(\rhoi[\heavy]+\varepsilon^2\rhoe)\Mh\speed= \rhoi[\heavy]\Mh\vitesse + \varepsilon^2\rhoe\ve,
\end{equation*}
associated with the electrons $\rhoe\ve$ will be splitted into two parts 
at the same order of the multiscale expansion 
\begin{equation*}
\ve = \Mh\vitesse + \Ve+\ordre(\varepsilon) = \Mh(\vitesse + \decall) + {\Ve}^{\decall}+\ordre(\varepsilon), \quad 
{\Ve}^{\decall} = \Ve - \Mh\decall,
\end{equation*}
with $\rhoe{\Ve}^{\decall} = 
\int\Ce\fe^{\decall 0}\phie^{\decall}\d\Ce$. 
Thus,
as opposed to the classical expansion, since no momentum constraint is to be found for the electrons,
the definition of the mixture hydrodynamic velocity does not allow to uniquely define the 
electron diffusion velocities. 
In any case, the hydrodynamic velocity of the mixture is  $\vitesse$ at order 
$\varepsilon^{-1}$.
It is then necessary to properly delineate the possible choices for the $\decall$  velocity, which should not be confused with a change of referential, since it only influences the electron Chapman-Enskog expansion.

\begin{lemma}
\label{th:uniquefe}
In the chosen frame of reference, any velocity $\decall$ leads to a new definition of $\phie^\decall$
for which 
property \ref{th:kerIe} 
is preserved and thus leads to an equivalent resolubility condition for $\phie^\decall$
as for $\phie$.
Moreover, the resolution of $\phie^\decall$ is completely equivalent to the resolution of $\phie$.
\end{lemma}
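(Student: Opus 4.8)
The plan is to show that the relation \eqref{phi_decal} turns the order-$\varepsilon^{-1}$ problem \eqref{eqepsilon-1}--\eqref{eqconsepsilon-1} into a problem of exactly the same type, governed by the same linearized operator $\boltFe$, so that solvability and uniqueness carry over verbatim. First I would note that $\fe^{\decall 0}$ agrees with $\feo$ at leading order in $\varepsilon$, as \eqref{distMBtranslate_dev} shows; hence the linearized collision operator appearing at order $\varepsilon^{-1}$ in the regrouped expansion \eqref{CEexp_decal} is still $\boltFe$ up to higher-order corrections, and Property \ref{th:kerIe} --- kernel equal to $\invspace_\elec$ --- holds for it unchanged.

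Next I would substitute $\phie=\phie^\decall+\tfrac{\Mh}{\tempe}\Ce\dscal\decall$ into \eqref{eqepsilon-1}. Since both $\boltFe$ and the magnetic term are linear, the equation for $\phie^\decall$ keeps the same left-hand side $\feo\boltFe(\phie^\decall)+\delta_{b1}\qe\dCe(\feo\phie^\decall)\dscal(\Ce\pvect\B)$, its right-hand side being modified only by the two source terms $-\feo\boltFe(\tfrac{\Mh}{\tempe}\Ce\dscal\decall)$ and $-\delta_{b1}\qe\dCe(\feo\tfrac{\Mh}{\tempe}\Ce\dscal\decall)\dscal(\Ce\pvect\B)$. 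The whole content of the lemma then reduces to checking that these two terms are orthogonal to $\invspace_\elec$ for $\ppscale{\cdot,\cdot}$, leaving the Fredholm resolubility condition untouched. For the first term I would use the self-adjointness and positivity of $\boltFe$ for the product $(\psi,\chi)\mapsto\ppscale{\feo\psi,\chi}$, already established by the quadratic-form computation in the proof of Property \ref{th:kerIe}: self-adjointness gives $\ppscale{\feo\boltFe(\psi),g}=\ppscale{\feo\psi,\boltFe(g)}=0$ for every $g\in\invspace_\elec$, so the term lies in the range of $\boltFe$ and is orthogonal to the kernel. For the magnetic term I would integrate by parts in $\Ce$ and invoke $\dCe\dscal(\Ce\pvect\B)=0$ together with $(\Ce\pvect\B)\dscal\Ce=0$, which annihilate its projections onto $\inve[1]$ and $\inve[2]$ --- the same mechanism quoted just before the lemma. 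Finally, by parity $\ppscale{\feo(\Ce\dscal\decall),\inve[l]}=0$ for $l\in\{1,2\}$, since $\feo$ is even and $\Ce\dscal\decall$ odd in $\Ce$, so \eqref{eqconsepsilon-1} becomes the identical constraint $\ppscale{\feo\phie^\decall,\inve[l]}=0$. Same operator, same kernel, same resolubility condition and same constraints make $\phie\leftrightarrow\phie^\decall$ a bijection between the solution sets, which is the asserted equivalence.

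The one genuinely structural step is the orthogonality of $\feo\boltFe(\tfrac{\Mh}{\tempe}\Ce\dscal\decall)$ to $\invspace_\elec$. Here I would stress that $\tfrac{\Mh}{\tempe}\Ce\dscal\decall$ is \emph{not} itself in $\ker\boltFe$ --- the electron momentum is not a collisional invariant in this scaling, exactly the point emphasized after Property \ref{th:corJeo} --- so $\boltFe$ does not annihilate the shift and produces a genuine drag-type source in the electron momentum relation. What rescues the argument is not that this source vanishes but that self-adjointness forces it into the range of $\boltFe$, hence orthogonal to the mass and energy invariants that furnish the only resolubility conditions available at order $\varepsilon^{-1}$.
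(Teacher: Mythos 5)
Your proof is correct and its decisive step --- the orthogonality of the shift $\delta\phie^\decall=-\Mh\Ce\dscal\decall/\tempe$ to $\inve[1]$ and $\inve[2]$ by parity --- is precisely the paper's entire proof, which simply notes that $\ppscale{\feo\delta\phie^\decall,\inve[l]}=0$ for $l\in\{1,2\}$ and declares this sufficient. The extra work you do (using the hermiticity of $\crochete{\cdot,\cdot}$ to check that the induced source $\feo\boltFe\bigl(\tfrac{\Mh}{\tempe}\Ce\dscal\decall\bigr)$ and the shifted magnetic term stay orthogonal to $\invspace_\elec$, so the Fredholm condition is unaffected even though $\Ce\dscal\decall\notin\ker\boltFe$) is exactly the verification the paper leaves implicit, so the two arguments coincide in substance.
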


\begin{proof}
It is sufficient to note that the difference $\delta\phie^\decall = \phie^\decall-\phie=-\Mh\Ce\dscal\decall/\tempe$
  is orthogonal to the collisional invariants $\ppscale{\feo\delta\phie^\decall,\inve[l]}=0$, $l\in\{1,2\}$.
\end{proof}

It is interesting to note that for our choice of moving frame $\velocity=\vitesse$, the electron thermalization naturally occurs in the ``appropriate'' referential in close connection to the physics of the problem and there is no 
need to use the abovementioned property in order to conduct the resolution at order $\varepsilon^{-1}$, therefore, we take $\decall=0$ in the following. We will also have to check the validity of such a strategy at higher orders; 
we will come back to this point in Section~\ref{sec:secordzeroelec}.

As mentioned earlier, the partial collision operators $\Jejo[i],~i\in\lourd$, are not orthogonal to the space spanned by the vector $\Ce$.
However, an electron momentum relation is obtained  by projecting eq.~\eqref{eqepsilon-1} onto this space. First, the electron pressure, diffusion velocity, mean velocity,  conduction current density in the mean heavy-particle velocity frame, and conduction current density in the inertial frame are defined as
\begin{gather}
 \pre=\ne\tempe,\\
\Ve=\frac{1}{\ne}\int\Ce\feo\phie\;\d\Ce,\qquad \ve = \Mh\vitesse+{\Ve},\label{defVe}\\
\JJe=\ne\qe\Ve,\qquad \je=\ne\qe\ve.
\end{gather}

Then, we have the following proposition.

\begin{proposition}
\label{th:corelmom}
Considering $\feo$ given by eq.~\eqref{eqbolte} and $\fHio$,  $i\in\lourd$, by eq.~\eqref{eqbolti},  the zero-order momentum exchanged between electrons and heavy particles reads
  \begin{equation}\label{eqelmom}
\sumi[j]\ppscale{\Jejo(\feo\phie,\fHio[j]),\Ce}
=\frac{1}{\Mh}\dx\pre -\frac{\ne\qe}{\Mh}\E-\delta_{b1}\je\pvect\B.
\end{equation}
\end{proposition}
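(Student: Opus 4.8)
The plan is to prove \eqref{eqelmom} exactly as announced in the text, namely by projecting the order $\epsilon^{-1}$ electron equation \eqref{eqepsilon-1} onto $\Ce$ with the scalar product $\ppscale{\cdot,\cdot}$. First I would apply $\ppscale{\,\cdot\,,\Ce}$ to both sides of \eqref{eqepsilon-1}. Writing out the definition of $\boltFe$, the electron self-collision part projects to zero: since the electron velocity $\Ce$ is a collisional invariant of the like-particle operator $\Jee$, the standard symmetrization of the weak form (Chapman and Cowling~\cite{chapman}) gives $\ppscale{\Jee(\feo\phie,\feo)+\Jee(\feo,\feo\phie),\Ce}=0$, the same collisional-invariant mechanism as in Property~\ref{propo2.1}. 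Hence $\ppscale{\feo\boltFe(\phie),\Ce}=-\sumi[j]\ppscale{\Jejo(\feo\phie,\fHio[j]),\Ce}$, and substituting the right-hand side of \eqref{eqepsilon-1} reduces the claim to the moment identity
\[
\sumi[j]\ppscale{\Jejo(\feo\phie,\fHio[j]),\Ce}
=\ppscale{\Demuhat(\feo),\Ce}
+\delta_{b1}\qe\ppscale{(\Ce\pvect\B)\dscal\dCe(\feo\phie),\Ce}.
\]
The whole proof then comes down to evaluating the two velocity moments on the right.

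For the streaming moment $\ppscale{\Demuhat(\feo),\Ce}$ I would treat the three pieces of $\Demuhat(\feo)$ in turn, using that $\feo$ is the isotropic Maxwellian \eqref{eqbolte}. The transport piece $\tfrac{1}{\Mh}\Ce\dscal\dx\feo$, after pulling $\dx$ outside the velocity integral, produces the second moment $\int\Ce\ptens\Ce\,\feo\,\d\Ce=\pre\identite$ and hence the term $\tfrac{1}{\Mh}\dx\pre$. The electric piece $\tfrac{\qe}{\Mh}\Ep\dscal\dCe\feo$ is handled by one integration by parts in $\Ce$, which gives $-\tfrac{\ne\qe}{\Mh}\Ep$. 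The $\delta_{b0}$ magnetic piece vanishes: integrating by parts and using that $\Ce\pvect\B$ is divergence-free in velocity space leaves $\int\feo\,(\Ce\pvect\B)\,\d\Ce$, which is zero because $\feo$ is even in $\Ce$.

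For the remaining magnetic moment I would again integrate by parts, exploiting the same divergence-free property of the Lorentz acceleration, obtaining $-\delta_{b1}\qe\bigl(\int\Ce\feo\phie\,\d\Ce\bigr)\pvect\B=-\delta_{b1}\ne\qe\Ve\pvect\B$ with $\Ve=\tfrac{1}{\ne}\int\Ce\feo\phie\,\d\Ce$. Finally I would assemble everything and substitute $\Ep=\E+\delta_{b1}\Mh^2\vitesse\pvect\B$: the $\vitesse\pvect\B$ contribution hidden in $\Ep$ combines with the $\Ve\pvect\B$ contribution of the field term to yield $\delta_{b1}\ne\qe(\Mh\vitesse+\Ve)\pvect\B=\delta_{b1}\ne\qe\ve\pvect\B=\delta_{b1}\je\pvect\B$, using $\ve=\Mh\vitesse+\Ve$ and $\je=\ne\qe\ve$ from \eqref{defVe}. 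This reproduces \eqref{eqelmom} term by term.

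The step I expect to be most delicate is the bookkeeping of the magnetic contributions rather than the Gaussian integrals, which are routine. One must carry both the $\delta_{b0}$ and $\delta_{b1}$ terms correctly and recognize that the frame-dependent piece $\delta_{b1}\Mh^2\vitesse\pvect\B$ carried inside $\Ep$ is precisely what upgrades the electron diffusion velocity $\Ve$ into the full mean velocity $\ve$, and hence $\JJe$ into $\je$. The sign conventions in the two integrations by parts and the identification $\ve=\Mh\vitesse+\Ve$ are where an error is most likely to slip in.
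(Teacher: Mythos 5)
Your proof is correct and follows the same route as the paper: project eq.~\eqref{eqepsilon-1} onto $\Ce$, kill the $\Jee$ contribution via the momentum collisional invariance \eqref{eqcolinvee} at order $\varepsilon$, and integrate the streaming and magnetic terms by parts; your assembly of the $\delta_{b1}$ pieces through $\Ep=\E+\delta_{b1}\Mh^2\vitesse\pvect\B$ to turn $\JJe$ into $\je$ is exactly the bookkeeping the paper leaves implicit. No gaps.
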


\begin{proof}
Equation~\eqref{eqepsilon-1} is projected onto the space spanned by the vector $\Ce$
$$-\ppscale{\feo\boltFe(\phie),\Ce}=\ppscale{\Demuhat,\Ce}+\delta_{b1}\qe\ppscale{\dCe(\feo\phie)\dscal(\Ce\pvect\B),\Ce}. $$ Then,  eq.~\eqref{eqelmom} is readily established by simplifying the left-hand-side by means of eq.~\eqref{eqcolinvee}, $\ppscale{\Ce,\Jee}=0$, at order $\varepsilon$   and by integrating by parts the right-hand-side.$\quad$
\end{proof}

The zero-order momentum exchanged between electrons and heavy particles is thus expressed in terms of the electron pressure and electric force. In addition, the following lemma allows for the momentum exchanged between heavy particles and electrons  to be calculated at order zero.

\begin{lemma}\label{th:lemelmom}
Considering $\feo$ given by eq.~\eqref{eqbolte} and $\fHio$,  $i\in\lourd$, by eq.~\eqref{eqbolti},  the net zero-order momentum exchanged between electrons and heavy particles vanishes, $i.e.$
  \begin{eqnarray}\label{eqelmomrel1}
\ppscalh{\Jieu[\heavy]\left(\fHo,\feo\phie\right),\invH[\nH+\nu]}+
\smash{\sumi[j]}\ppscale{\Jejo\left(\feo\phie,\fHio[j]\right),C_{\elec\nu}}=0,
\end{eqnarray}
for $\nu\in\{1,2,3\}$.
\end{lemma}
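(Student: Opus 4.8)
The plan is to obtain Lemma~\ref{th:lemelmom} as the order-$\varepsilon^2$ content of the exact cross-collision momentum conservation relation~\eqref{eqcrossedcol2},
\[
\varepsilon\sumi[j]\ppscale{C_{\elec\nu},\Jej}+\ppscalh{\invH[\nh+\nu],\Jhe}=0,
\]
which holds to all orders in $\varepsilon$ since it encodes momentum conservation in each individual electron heavy-particle encounter. I would substitute the Chapman--Enskog expansions~\eqref{chapenske}--\eqref{chapenski} of $\fe$ and $\fHi$ together with the operator expansions of Theorems~\ref{th:thmJie} and~\ref{th:theoJei}, collect the coefficient of $\varepsilon^2$, and divide by $\varepsilon^2$. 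The isotropy corollaries are what make the low-order terms drop out cleanly.

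For the electron sum, which carries an explicit factor $\varepsilon$, I only need $\Jej[j](\fe,\fHi[j])$ to order $\varepsilon$. Its $\varepsilon^0$ term $\Jejo[j](\feo,\fHio[j])$ vanishes by Corollary~\ref{th:prodenteio} (isotropy of $\feo$); at order $\varepsilon$ the term $\Jejo[j](\feo,\fHio[j]\phii[j])$ vanishes for the same reason, while $\Jeju[j](\feo,\fHio[j])$ vanishes by Corollary~\ref{th:Jeju} (isotropy of $\fHio[j]$), leaving exactly $\Jejo[j](\feo\phie,\fHio[j])$. This reproduces the second term of the lemma.

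For the heavy-particle term I need $\Jie(\fHi,\fe)$ to order $\varepsilon^2$. Corollary~\ref{th:Jieuiso} eliminates $\Jieu(\fHio,\feo)$ and $\Jieu(\fHio\phii,\feo)$, so the surviving order-$\varepsilon^2$ pieces are $\Jieu(\fHio,\feo\phie)$ (from $\varepsilon\Jieu$) and $\Jied(\fHio,\feo)$ (from $\varepsilon^2\Jied$); the $\Jiet$ and $\Jied$-of-perturbed-argument contributions are pushed to order $\varepsilon^3$. The first piece gives the vector $\Jieu[\heavy](\fHo,\feo\phie)$ of the lemma. It then remains to show the momentum moment of the second vanishes: using the explicit form~\eqref{Jiedfeo} from Corollary~\ref{th:Jiediso}, $\Jied(\fHio,\feo)$ is proportional to $\dCi\dscal(\fHio\Ci)+(\tempe/\mi)\laplaceCi\fHio$, and integration by parts against $\mi C_{i\nu}$ (using $\laplaceCi C_{i\nu}=0$ and that the centered Maxwellian $\fHio$ has zero mean) gives $\ppscalh{\invH[\nh+\nu],(\Jied(\fHio,\feo))_{i\in\lourd}}=0$. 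Assembling the three order-$\varepsilon^2$ contributions yields precisely~\eqref{eqelmomrel1}.

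I expect the last step to be the only non-routine point: among the order-$\varepsilon^2$ contributions to the heavy side there appears the energy-relaxation operator $\Jied(\fHio,\feo)$, and one must verify that it transports no net momentum so that the balance collapses to the two stated terms. This is the precise statement that the isotropic order-$\varepsilon^2$ operator exchanges energy but not momentum between the two zero-drift Maxwellians; everything else is order-bookkeeping governed by Corollaries~\ref{th:Jieuiso}, \ref{th:prodenteio}, and~\ref{th:Jeju}.
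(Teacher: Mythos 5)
Your proposal is correct and follows essentially the same route as the paper, whose one-line proof derives the identity from eq.~\eqref{eqcrossedcol2} at order $\varepsilon^2$ using exactly Corollaries~\ref{th:Jieuiso}, \ref{th:Jiediso}, \ref{th:prodenteio}, and~\ref{th:Jeju}. Your explicit check that the momentum moment of $\Jied(\fHio,\feo)$ vanishes (via the form~\eqref{Jiedfeo} and integration by parts against $\mi C_{i\nu}$) is precisely the role Corollary~\ref{th:Jiediso} plays in the paper's citation list; you have simply made that implicit step explicit.
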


\begin{proof}
Equation~\eqref{eqelmomrel1}  is derived from eq.~\eqref{eqcrossedcol2}  at order $\epsilon^2$  based on corollaries \ref{th:Jieuiso}, \ref{th:Jiediso}, \ref{th:prodenteio}, and \ref{th:Jeju}.
\end{proof}

Moreover, the zero-order momentum exchanged between heavy particles and electrons can be directly calculated after introducing the average force of an electron acting on a heavy particle $i$ given by
\begin{equation}\label{eqdefFie}
\Aie= \int  \Qijl i\elec 1 (|\adimg|^2) \;|\adimg| \adimg \; \feo(\adimg)\phie(\adimg) \;\d\adimg, \quad i\in\lourd.
\end{equation}

\begin{lemma}\label{th:lemelmomFie}
Considering $\feo$ given by eq.~\eqref{eqbolte} and $\fHio$,  $i\in\lourd$, by eq.~\eqref{eqbolti},  the zero-order momentum exchanged between heavy particles and electrons reads
\begin{eqnarray}\label{eqelmomrel1Fie}
\ppscalh{\Jieu[\heavy]\left(\fHo,\feo\phie\right),\invH[\nH+\nu]}
= \sum_{i\in\lourd} \ni\Aienu,
\end{eqnarray}
for $\nu\in\{1,2,3\}$.
\end{lemma}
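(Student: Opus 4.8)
The plan is to substitute the closed form of the first-order operator from Theorem~\ref{th:thmJie} and then integrate by parts in the heavy-particle peculiar velocity. First I would unfold the partial scalar product together with the momentum invariant $\invH[\nH+\nu]=(\mi\Cinorme[i\nu])_{i\in\lourd}$, so that
\begin{equation*}
\ppscalh{\Jieu[\heavy](\fHo,\feo\phie),\invH[\nH+\nu]}
=\sum_{i\in\lourd}\int \Jieu[i](\fHio,\feo\phie)(\Ci)\,\mi\Cinorme[i\nu]\,\d\Ci .
\end{equation*}
Inserting expression~\eqref{Jieu} for $\Jieu[i]$, I would recognize that the velocity integral occurring there is exactly the average force $\Aie$ defined in~\eqref{eqdefFie}, namely $\Jieu[i](\fHio,\feo\phie)(\Ci)=-\tfrac{1}{\mi}\partial_{\Ci}\fHio(\Ci)\dscal\Aie$. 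The factor $\mi$ then cancels against the one carried by the momentum invariant, and each summand reduces to $-\int(\Aie\dscal\partial_{\Ci}\fHio)\,\Cinorme[i\nu]\,\d\Ci$.

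The key point is that $\Aie$ is a constant with respect to $\Ci$, since it is an integral over the dummy variable $\adimg$; it therefore factors out of the $\Ci$-integration and commutes with $\partial_{\Ci}$. Integrating by parts in $\Ci$ and using that $\fHio$, being the Maxwell--Boltzmann distribution~\eqref{eqbolti}, decays faster than any polynomial so that all boundary terms vanish, I obtain $\int\partial_{\Ci}\fHio\,\Cinorme[i\nu]\,\d\Ci=-\ni$ in the $\nu$-th direction, the remaining components vanishing by virtue of $\int\fHio\,\d\Ci=\ni$. Contracting with $\Aie$ turns the $i$-th summand into $\ni\Aienu$, and summing over $i\in\lourd$ yields the announced identity.

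I do not expect any genuine analytic obstacle: the statement is essentially a bookkeeping consequence of the explicit expansion~\eqref{Jieu} already established in Theorem~\ref{th:thmJie}. The only steps demanding care are the cancellation of $\mi$ between the prefactor of $\Jieu[i]$ and the momentum invariant, and the factorization that lets $\Aie$ pass outside the integration by parts without interfering with $\partial_{\Ci}$. Combined with Lemma~\ref{th:lemelmom}, this result furnishes the explicit friction force that the electrons exert on the heavy particles and that will enter the heavy-particle momentum balance.
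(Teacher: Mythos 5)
Your proof is correct, and it is precisely the verification the authors leave implicit: the paper states Lemma~\ref{th:lemelmomFie} without proof, treating it as an immediate consequence of expression~\eqref{Jieu} and definition~\eqref{eqdefFie}, which is exactly the substitution--factorization--integration-by-parts computation you carry out. One minor slip in attribution: after integrating by parts the integrand becomes $-\fHio\,\delta_{\mu\nu}$, so the off-$\nu$ components vanish because of the Kronecker delta (not because of the normalization $\int\fHio\,\d\Ci=\ni$, which is instead what gives the $\nu$-th component its value $-\ni$); the final identity is unaffected.
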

We will see that the average forces $\Aie$, $i\in\lourd$, contribute to the heavy-particle diffusion driving forces and, in particular, yielding to anisotropic diffusion velocities for the heavy particles in the case $b=1$.

\subsection{Order $\epsilon^{0}$: heavy-particle Euler equations}

We derive  Euler equations based on the heavy-particle Boltzmann equation~\eqref{eqbolbolh} at order $\epsilon^{0}$ corresponding to the macroscopic time scale $t^0$. First, a linearized collision operator is introduced for heavy-particles.

\begin{definition}
The linearized collision operator $\boltFi[\heavy]= (\boltFi)_{i\in\lourd}$ reads
\begin{equation*}
\boltFi(\phi_\heavy)=-\frac{1}{\fHio}\sumi[j]
\Bigl[ \Jij(\fHio\phii,\fHio[j])+\Jij(\fHio,\fHio[j]\phii[j])\Bigr],
\quad i\in\lourd,
\end{equation*}
where $\fHio$,  $i\in\lourd$, is given by eq.~\eqref{eqbolti}, for a family $\phi_\heavy=(\phi_i)_{i\in\lourd}$.
\end{definition}

 The  first non-vanishing term  of the partial collision operator $\Jie[\heavy]$ is not included in the linearized collision operator since it does not exhibit any property of orthogonality to $\invspace_\heavy$ for the scalar product $\ppscalh{\cdot,\cdot}$. The kernel of $\boltFi[\heavy]$   is given in the following property, the proof of which is omitted since it is a well-established result~\cite{ferziger}.
 
\begin{property}
\label{th:propboltFi}
The kernel of the linearized collision operator $\boltFi[\heavy]$ is the space of scalar collisional invariants $\invspace_\heavy$.
\end{property}

Furthermore, we define  the heavy-particle  pressure, $\pri[\heavy]=\nH\temph$, the mixture pressure, $\pression=\pre+\pri[\heavy]$, the heavy-particle charge, $\nH q_\heavy=\sum_{j\in\lourd}\ni[j]\qi[j]$, the mixture charge, $nq=\ne\qe+\nH q_\heavy$, and the total current density $\courantel_0=\nH q_\heavy\,\vitesse+ \ne\qe\ve/\Mh$.
The energy exchanged between heavy particles and electrons reads at order zero 
\begin{equation}
\deltaEho=\ppscalh{\Jied[\heavy] (\fHo,\feo),\invH[\nH+4]}. 
\end{equation}
This quantity is of the order of the thermal energy divided by the macroscopic time scale, $n^0\boltz T^0/t^0$.  A more accurate expression is calculated by means of corollary~\ref{th:Jiediso} 
\begin{equation}
\deltaEho= (\tempe-\temph)\sumi[j]{\ni[j]}\frac{\colfreqiez[j]}{\mi[j]}.\label{eqdeltaEho}
\end{equation}
Then, the heavy-particle Euler equations are derived in the following proposition. 

\begin{proposition}
If $\phi_{\heavy}$ is a solution to  eq.~\eqref{eqbolbolh} at order $\epsilon^{0}$, $i.e.$
\begin{equation}
\fHio\boltFi(\phi_\heavy)=-\Diz(\fHio)+\Jizhat,
\quad i\in\lourd\label{eqiepsilon-1},
\end{equation}
where $\feo$ is given by eq.~\eqref{eqbolte},  $\fHio$, $i\in\lourd$, by eq.~\eqref{eqbolti}, and $\phie$ by eqs.~\eqref{eqepsilon-1}-\eqref{eqconsepsilon-1}, and if $\fHo\phi_\heavy=(\fHio\phii)_{i\in\lourd}$ satisfies the constraints
\begin{equation}
\ppscalh{\fHo\phi_\heavy,\invH[l]}=0,\quad l\in\{1,\ldots,\nh+4\}\label{eqiconsepsilon-1},
\end{equation}
then, the zero-order  conservation equations of heavy-particle  mass, momentum, and energy   read
\begin{gather}
\dt\rhoi + \dx\dscal(\rhoi\vitesse) = 0,\quad i\in\lourd,\label{eqeuler1}\\
\dt(\rhoH\vitesse)+\dx\dscal(\rhoH\vitesse\ptens\vitesse+\tfrac{1}{\Mh^2} p\identite)
=\tfrac{1}{\Mh^2}nq\E+\delta_{b1}\courantel_0\pvect\B,\label{eqeuler2}\\
\dt(\rhoH\energiei[\heavy])+\dx\dscal\left(\rhoH\energiei[\heavy]\vitesse\right)=-\pri[\heavy]\dx\dscal\vitesse +\deltaEho.\label{eqeuler3}
\end{gather}
\end{proposition}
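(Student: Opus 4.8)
The plan is to obtain each of the three balance laws by taking the partial scalar product $\ppscalh{\cdot,\invH[l]}$ of the order-$\epsilon^0$ equation~\eqref{eqiepsilon-1} against the heavy-particle collisional invariants, $l\in\{1,\dots,\nh+4\}$, and to exploit the fact that these invariants span $\ker\boltFi[\heavy]$. First I would show that the left-hand side contributes nothing: by the self-adjointness of the linearized operator for the weighted product one has $\ppscalh{\fHo\boltFi[\heavy](\phi_\heavy),\invH[l]}=\ppscalh{\fHo\phi_\heavy,\boltFi[\heavy](\invH[l])}=0$ since $\invH[l]\in\invspace_\heavy=\ker\boltFi[\heavy]$ by Property~\ref{th:propboltFi}; equivalently this is the classical symmetrization of the heavy-heavy collision term, which uses that $\invH[l]$ is conserved in each binary heavy-particle encounter. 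Hence \eqref{eqiepsilon-1} reduces to the moment identity $\ppscalh{\Diz(\fHio),\invH[l]}=\ppscalh{\Jizhat,\invH[l]}$, and the proof amounts to evaluating both sides on the Maxwellians~\eqref{eqbolti}.

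For mass, $\invH[j]=(\mi\delta_{ij})_{i\in\lourd}$, I would use that $\fHio[j]$ is even in $\Ci$: odd moments annihilate the $\tfrac1\Mh\Ci\dscal\dx$ term, integrations by parts kill the electric- and magnetic-force contributions, the frame term $-(\dCi\fHio[j]\ptens\Ci)\pmat\dx\vitesse$ yields $\ni[j]\dx\dscal\vitesse$, and together with $\dt\ni[j]+\vitesse\dscal\dx\ni[j]$ this produces $\dt\rhoi+\dx\dscal(\rhoi\vitesse)$. On the right, $\Jieu$ and $\Jied$ are $\Ci$-divergences, so their integrals vanish and $\ppscalh{\Jizhat,\invH[j]}=0$, giving \eqref{eqeuler1}.

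For momentum, $\invH[\nh+\nu]=(\mi\Cinorme[i\nu])_{i\in\lourd}$, the crucial point is that the mean peculiar velocity vanishes, so the inertial term $\dt(\rhoH\vitesse)+\dx\dscal(\rhoH\vitesse\ptens\vitesse)=\rhoH\tfrac{\D\vitesse}{\D\temps}$ is generated entirely by the frame correction $-\Mh\tfrac{\D\vitesse}{\D\temps}\dscal\dCi\fHio$; the second moment $\int\Ci\ptens\Ci\fHio$ produces $\tfrac1\Mh\dx\pri[\heavy]$, and the force term, with $\Ep=\E+\delta_{b1}\Mh^2\vitesse\pvect\B$, produces $-\tfrac1\Mh\nH q_\heavy\E-\delta_{b1}\Mh\nH q_\heavy(\vitesse\pvect\B)$. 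On the right, the $\Jied$ momentum moment vanishes, so $\ppscalh{\Jizhat,\invH[\nh+\nu]}=\ppscalh{\Jieu[\heavy](\fHo,\feo\phie),\invH[\nh+\nu]}$, which Lemma~\ref{th:lemelmom} together with Proposition~\ref{th:corelmom} evaluates as $-\tfrac1\Mh\dx\pre+\tfrac{\ne\qe}{\Mh}\E+\delta_{b1}\je\pvect\B$. Collecting and dividing by $\Mh$, the two pressures combine into $p=\pre+\pri[\heavy]$, the electric contributions into $\tfrac1{\Mh^2}nq\E$, and the magnetic ones, using $\je=\ne\qe\ve$ and $\ve=\Mh\vitesse+\Ve$, into $\delta_{b1}\courantel_0\pvect\B$ with $\courantel_0=\nH q_\heavy\vitesse+\ne\qe\ve/\Mh$; this is \eqref{eqeuler2}.

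For energy, $\invH[\nh+4]=(\tfrac12\mi\Ci\dscal\Ci)_{i\in\lourd}$, the computation gives $\ppscalh{\Diz(\fHio),\invH[\nh+4]}=\dt(\rhoH\energiei[\heavy])+\vitesse\dscal\dx(\rhoH\energiei[\heavy])+\tfrac52\pri[\heavy]\dx\dscal\vitesse$, the force and $\tfrac{\D\vitesse}{\D\temps}$ terms dropping by parity; on the right $\Jieu$ does no work and Corollary~\ref{th:Jiediso} gives $\ppscalh{\Jizhat,\invH[\nh+4]}=\deltaEho$ as in~\eqref{eqdeltaEho}. Using $\rhoH\energiei[\heavy]=\tfrac32\pri[\heavy]$ to split $\tfrac52\pri[\heavy]\dx\dscal\vitesse=\rhoH\energiei[\heavy]\dx\dscal\vitesse+\pri[\heavy]\dx\dscal\vitesse$ and folding the first piece into the transport divergence yields \eqref{eqeuler3}. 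I expect the main obstacle to be the non-inertial bookkeeping in the momentum and energy moments: correctly tracking how the frame corrections and the shifted field $\Ep$ conspire with the collisional exchange furnished by Lemma~\ref{th:lemelmom} and Proposition~\ref{th:corelmom} to assemble exactly the mixture pressure $p$, mixture charge $nq$, and total current $\courantel_0$, whereas the collisional side itself is essentially handed to us by those earlier results.
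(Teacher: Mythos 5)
Your proposal is correct and follows essentially the same route as the paper: the projection onto $\invspace_\heavy$ using self-adjointness of $\boltFi[\heavy]$ is exactly the Fredholm solvability condition the paper invokes, and the subsequent evaluation of the streaming and collisional moments (via Theorem~\ref{th:thmJie}, Corollary~\ref{th:Jiediso}, Lemma~\ref{th:lemelmom}, and Proposition~\ref{th:corelmom}), including the reduction of the momentum balance through eq.~\eqref{eqeuler2bis} and the mass equation, matches the paper's argument step for step. The only blemish is a trivial bookkeeping slip in the mass moment, where the frame term yields $\rhoi[j]\dx\dscal\vitesse$ rather than $\ni[j]\dx\dscal\vitesse$.
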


\begin{proof}
Fredholm's alternative~\cite{grad} represents the solvability condition of eq.~\eqref{eqiepsilon-1}
\begin{equation*}
\ppscalh{\Diz[\heavy],\invH[l]}=\ppscalh{\Jizhat[\heavy],\invH[l]},
\end{equation*}
$l\in\{1,\ldots,\nh+4\}$. Integrating by parts the left-hand-side and simplifying the right-hand-side based on theorem~\ref{th:thmJie} and corollary~\ref{th:Jiediso}, one obtains eqs.~\eqref{eqeuler1}, \eqref{eqeuler3}, and the following momentum conservation equation
\begin{equation}
-\Mh\rhoH\frac{\D\vitesse}{\D t}-\frac{1}{\Mh}\dx\pri[\heavy]+\frac{1}{\Mh}\nH q_{\heavy}\Ep+\ppscalh{\Jieu[\heavy]\left(\fHo,\feo\phie\right),(\invH[\nH+\nu])_{\nu\in\{1,2,3\}}}=0.\label{eqeuler2bis}
\end{equation}
Simplifying the latter equation by means of the heavy-particle mass conservation equation $\dt\rhoH + \dx\dscal(\rhoH\vitesse) = 0$
and lemma~\ref{th:lemelmom}, yields eq.~\eqref{eqeuler2}.
\end{proof}

\subsection{Order $\epsilon^{0}$: zero-order electron drift-diffusion equations}\label{sec:secordzeroelec}

We derive zero-order electron drift-diffusion  equations and a momentum relation based on the electron Boltzmann equation~\eqref{eqbolbole} at order $\epsilon^{0}$ corresponding to the macroscopic time scale $t^0$. 
We also prove, at this order of the resolution, that any nonzero shift introduced at the previous 
order leads to a series of difficulties at the present order.
It thus demonstrates that the initial choice of referential leads to a quite natural
resolution at sucessive orders.

With the previously obtained Maxwell-Boltzmann electron distribution function in eq. \eqref{eqbolte}
we introduce the electron heat flux
\begin{equation}
\heate=\int\frac{1}{2}\Ce\dscal\Ce\Ce\feo\phie \d\Ce.\label{defheate}
\end{equation}
The energy exchanged between electrons and heavy particles reads at order zero
\begin{equation}
\deltaEeo=\sumi[j]\ppscale{\Jejd[j] (\feo,\fHio[j]),\inve[2]}.
\end{equation}
The latter expression is calculated by means of eq.~\eqref{eqcrossedcol3} at order $\varepsilon^2$
\begin{equation}
\deltaEeo+\deltaEho=0,
\end{equation}
where $\deltaEho$ is given by eq.~\eqref{eqdeltaEho}.
Then, the zero-order electron drift-diffusion equations are derived in the following proposition.

\begin{proposition}
If  $\phide$ is a solution to eq.~\eqref{eqbolbole} at order $\epsilon^{0}$, $i.e.$
\begin{equation}
\feo\boltFe(\phide)+\delta_{b1}\qe\dCe(\feo\phidi[\elec])\dscal\Ce\pvect\B=-\Dezhat(\feo,\phie)
+\Jee(\feo\phie,\feo\phie)+\Jezhat,\label{eqepsilono}
\end{equation}
where $\feo$ is given by eq.~\eqref{eqbolte}, $\fHio$, $i\in\lourd$, by eq.~\eqref{eqbolti}, $\phie$ by eqs.~\eqref{eqepsilon-1}-\eqref{eqconsepsilon-1},  and $\phii$, $i\in\lourd$ by eqs.~\eqref{eqiepsilon-1}-\eqref{eqiconsepsilon-1}, and if $\feo\phide$ satisfies the constraints
\begin{equation}
\ppscale{\feo\phide,\inve[l]}=0,\quad l\in\{1,2\},\label{eqconsepsilono}
\end{equation}
then, the  zero-order conservation equations of electron mass and energy   read 
\begin{gather}
\dt\rhoe + \dx\dscal\bigl(\rhoe\vitesse+\tfrac{1}{\Mh}\rhoe\Ve\bigr) = 0,\label{eqdrift1}\\
\dt(\rhoe\energiee)+\dx\dscal\left(\rhoe\energiee\vitesse\right)=-\pre\dx\dscal\vitesse-\tfrac{1}{\Mh}\dx\dscal\heate+\tfrac{1}{\Mh}\JJe\dscal\Ep+\deltaEeo.\label{eqdrift2}
\end{gather}
\end{proposition}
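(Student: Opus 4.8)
The plan is to obtain \eqref{eqdrift1} and \eqref{eqdrift2} as the solvability (Fredholm) conditions for eq.~\eqref{eqepsilono}, exactly as the heavy-particle Euler equations were derived. I would project eq.~\eqref{eqepsilono} onto each electron collisional invariant $\inve[l]$, $l\in\{1,2\}$, with the partial scalar product $\ppscale{\cdot,\cdot}$. On the left-hand side the term $\ppscale{\feo\boltFe(\phide),\inve[l]}$ vanishes: the operator $\boltFe$ is formally self-adjoint for the weight $\feo$ (as is apparent from the symmetric quadratic form exhibited in the proof of property~\ref{th:kerIe}), so $\ppscale{\feo\boltFe(\phide),\inve[l]}=\ppscale{\feo\phide,\boltFe(\inve[l])}=0$ because $\inve[l]\in\invspace_\elec=\ker\boltFe$. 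The magnetic contribution $\delta_{b1}\qe\ppscale{\dCe(\feo\phide)\dscal(\Ce\pvect\B),\inve[l]}$ also vanishes after one integration by parts in $\Ce$, since $\dCe\dscal(\Ce\pvect\B)=0$ handles $l=1$ and $(\Ce\pvect\B)\dscal\Ce=0$ handles $l=2$; this is the same orthogonality already noted below eq.~\eqref{eqconsepsilon-1}. Hence the solvability condition reduces to $\ppscale{\Dezhat(\feo,\phie),\inve[l]}=\ppscale{\Jee(\feo\phie,\feo\phie)+\Jezhat,\inve[l]}$.

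Next I would evaluate the collisional right-hand side by interaction type. The term $\ppscale{\Jee(\feo\phie,\feo\phie),\inve[l]}$ vanishes because electron--electron encounters conserve electron number and energy. Among the contributions to $\Jezhat$, the zero-order crossed operators $\Jejo(\feo\phie,\fHio[j]\phii[j])$ scatter electrons on a fixed sphere $|\Ce|=\mathrm{const}$, so by the intertwining argument of property~\ref{th:corJeo} their projections onto $\inve[1]$ and $\inve[2]$ both vanish. The first-order crossed operators $\Jeju(\feo,\fHio[j]\phii[j])$ factor, by eq.~\eqref{Jeiu}, into $(\int\fHio[j]\phii[j]\Ci[j]\,\d\Ci[j])\dscal V(\Ce)$; since $\feo$ is isotropic the velocity-space vector $V(\Ce)$ is an odd function of $\Ce$ (its first piece vanishes identically and its second reduces to $\Ce\,h(|\Ce|^2)$), so its projection against the even invariants $\inve[1]=1$ and $\inve[2]=\tfrac12\Ce\dscal\Ce$ also vanishes. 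Finally $\Jejd(\feo,\fHio[j])$ conserves electron number, giving nothing for $l=1$, while for $l=2$ its summed projection is precisely $\sumi[j]\ppscale{\Jejd(\feo,\fHio[j]),\inve[2]}=\deltaEeo$ by definition. The collisional right-hand side is therefore $0$ for $l=1$ and $\deltaEeo$ for $l=2$.

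It then remains to compute the streaming projections $\ppscale{\Dezhat(\feo,\phie),\inve[l]}$ by integrating by parts in $\Ce$, using $\ppscale{\feo,\inve[1]}=\rhoe$, $\ppscale{\feo,\inve[2]}=\rhoe\energiee$, $\Ve=\tfrac{1}{\ne}\int\Ce\feo\phie\,\d\Ce$, $\heate$ from eq.~\eqref{defheate}, $\pre=\ne\tempe$, $\JJe=\ne\qe\Ve$, together with the isotropy identities $\int\Ce\feo\,\d\Ce=0$ and $\int C_\alpha C_\beta\feo\,\d\Ce=\tfrac13\delta_{\alpha\beta}\int|\Ce|^2\feo\,\d\Ce$. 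For $l=1$ every force and magnetic term integrates away and one is left with $\dt\rhoe+\dx\dscal(\rhoe\vitesse)+\tfrac{1}{\Mh}\dx\dscal(\rhoe\Ve)=0$, i.e.\ eq.~\eqref{eqdrift1}. For $l=2$, the magnetic terms drop since $(\Ce\pvect\B)\dscal\Ce=0$ (the field does no work), the electric term yields $\tfrac{1}{\Mh}\JJe\dscal\Ep$, the transport term produces $\tfrac{1}{\Mh}\dx\dscal\heate$, and the velocity-gradient term splits into $\dx\dscal(\rhoe\energiee\vitesse)$ plus the pressure work $\pre\dx\dscal\vitesse$ (using $\rhoe\energiee=\tfrac32\pre$); equating the result to $\deltaEeo$ gives eq.~\eqref{eqdrift2}.

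The main obstacle is the energy projection ($l=2$): one must carry out the several integrations by parts of $\Dezhat$ cleanly and, above all, correctly dispose of the crossed collision terms, recognizing that the first-order operator $\Jeju(\feo,\fHio[j]\phii[j])$ drops out of the energy balance through the parity induced by the isotropy of $\feo$, while only $\Jejd(\feo,\fHio[j])$ survives and reconstructs the exchange source $\deltaEeo$. Isolating this single physically meaningful term among the many formally present crossed contributions is where the bookkeeping is most delicate; the remaining algebra is routine once the scattering-on-the-sphere and parity arguments are in place.
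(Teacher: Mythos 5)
Your proposal is correct and follows essentially the same route as the paper: Fredholm's alternative projected onto the electron collisional invariants, with the left-hand side killed by the kernel property of $\boltFe$ and an integration by parts of the magnetic term, the crossed collision terms disposed of via the sphere-scattering/parity structure of the expanded operators $\Jejo$, $\Jeju$, $\Jejd$ from theorem~\ref{th:theoJei}, and the streaming projection yielding the fluxes. The paper compresses all of this into one sentence; your write-up simply supplies the bookkeeping it leaves implicit, including the correct observation that only $\Jejd(\feo,\fHio[j])$ survives in the energy balance and reconstructs $\deltaEeo$.
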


\begin{proof}
Fredholm's alternative~\cite{grad} represents the solvability condition of eq.~\eqref{eqepsilono}
$$
\ppscale{\Dezhat,\inve[l]}=\ppscale{\Jezhat,\inve[l]},\quad l\in\{1,2\}.$$
Integrating by parts the left-hand-side and simplifying the right-hand-side based on theorem~\ref{th:theoJei} and corollary~\ref{th:Jeju} yields eqs.~\eqref{eqdrift1} and \eqref{eqdrift2}.
\end{proof}

\begin{lemma}
\label{th:uniquefe2}
In the chosen frame of reference, any velocity $\decall$ leads to a new definition of $\phie^{\decall 2}$ in eq.~\eqref{phi_decal},
for which 
property~\ref{th:kerIe} 
is preserved, 
and thus leads to an equivalent resolubility condition for $\phie^{\decall 2}$
as for $\phide$.
However, the resolution of $\phie^{\decall 2}$ is not equivalent to the 
resolution of $\phide$: in particular, the expansion corresponding to $\decall\neq0$ yields a non standard Chapman-Enskog expansion where the second-order distribution perturbation does not satisfy the scalar constraints~\eqref{eqconsepsilono}.
\end{lemma}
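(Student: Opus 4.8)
The plan is to reproduce at second order the two-step argument of Lemma~\ref{th:uniquefe}, and to exhibit the qualitative change responsible for the inequivalence: whereas the first-order difference $\delta\phie^{\decall}=\phie^{\decall}-\phie$ is odd in $\Ce$ and hence orthogonal to the even electron invariants, the second-order difference $\delta\phie^{\decall 2}=\phie^{\decall 2}-\phide$ acquires an even part (through the product of $\Ce\dscal\decall$ with the $\Ce$-linear part of $\phie^{\decall}$) that projects nontrivially onto $\inve[1]=1$.

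First I would settle the preservation of property~\ref{th:kerIe}. The shifted Maxwellian $\fe^{\decall 0}$ of eq.~\eqref{distMBtranslate} satisfies $\fe^{\decall 0}=\feo+\ordre(\varepsilon)$ by eq.~\eqref{distMBtranslate_dev}, so it agrees with $\feo$ at leading order. Since $\boltFe$ is assembled solely from the leading-order distributions $\feo$ and $\fHio$, the shift leaves it untouched, its kernel remains $\invspace_\elec$, and the Fredholm resolubility condition for $\phie^{\decall 2}$ (orthogonality of the right-hand side to $\invspace_\elec$) has exactly the same form as the one for $\phide$.

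For the second, decisive step I would evaluate the scalar constraints. From eq.~\eqref{phi_decal},
\[
\delta\phie^{\decall 2}
=-\frac{\Mh}{\tempe}(\Ce\dscal\decall)\,\phie^{\decall}
-\frac{\Mh^2}{2\tempe}\Bigl[-\decall\dscal\decall+\frac{(\Ce\dscal\decall)^2}{\tempe}\Bigr],
\]
and, since $\phide$ satisfies~\eqref{eqconsepsilono}, linearity gives $\ppscale{\feo\phie^{\decall 2},\inve[l]}=\ppscale{\feo\delta\phie^{\decall 2},\inve[l]}$. Projecting onto $\inve[1]=1$: the purely quadratic term integrates to zero, because $\fe^{\decall 0}$ carries the same number density as $\feo$, which forces $\int\feo\bigl[-\decall\dscal\decall+(\Ce\dscal\decall)^2/\tempe\bigr]\,\d\Ce=0$; and the cross term, after inserting $\phie^{\decall}=\phie-\Mh(\Ce\dscal\decall)/\tempe$ from Lemma~\ref{th:uniquefe} and using $\int\feo(\Ce\dscal\decall)^2\,\d\Ce=\ne\tempe\,\decall\dscal\decall$ together with the definition~\eqref{defVe} of $\Ve$, yields
\[
\ppscale{\feo\phie^{\decall 2},\inve[1]}
=-\frac{\Mh\ne}{\tempe}\,\decall\dscal(\Ve-\Mh\decall)
=-\frac{\Mh\ne}{\tempe}\,\decall\dscal{\Ve}^{\decall}.
\]
This does not vanish for a generic nonzero $\decall$, so the electron mass constraint already fails at $l=1$, proving that $\phie^{\decall 2}$ violates~\eqref{eqconsepsilono} and cannot be resolved as $\phide$ is.

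The argument is conceptually short, so the hard part will be pure bookkeeping: substituting the first-order relation into the quadratic term and computing the elementary Gaussian moments without slips, and in particular recognizing the cancellation of the pure-$\decall$ contribution (a direct consequence of the normalization of $\fe^{\decall 0}$), which is what leaves the clean mass defect proportional to $\decall\dscal{\Ve}^{\decall}$. Once $l=1$ exhibits a violation nothing more is required; the $l=2$ energy projection is superfluous. The only subtlety to keep in view is the distinction stressed in the statement: the resolubility condition is unchanged, but the orthogonality normalization that selects the physical second-order solution is not, which is precisely why the shifted expansion is non-standard and why $\decall=0$ is the natural choice.
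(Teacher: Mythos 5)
Your proposal is correct and follows essentially the same route as the paper: form $\delta\phie^{\decall 2}=\phie^{\decall 2}-\phide$ from eq.~\eqref{phi_decal} and show its projection onto the electron collisional invariants is nonzero for $\decall\neq0$; your mass projection $-\tfrac{\Mh\ne}{\tempe}\,\decall\dscal(\Ve-\Mh\decall)$ agrees exactly with the paper's $\tfrac{\Mh}{\tempe}\ne\,\decall\dscal(\Mh\decall-\Ve)$. The paper is only slightly more thorough on two points you compress: it also records the energy projection $\ppscale{\feo\delta\phie^{\decall 2},\inve[2]}=\Mh\decall\dscal(2\Mh\ne\decall-\tfrac{1}{\tempe}\heate)$, and it supports the ``equivalent resolubility'' claim by checking directly that $\boltFe(\delta\phie^{\decall 2})+\delta_{b1}\qe\dCe(\delta\phie^{\decall 2})\dscal\Ce\pvect\B$ is orthogonal to the invariants (so the two right-hand sides are simultaneously admissible), rather than only observing that the linearized operator and its kernel are unchanged.
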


\begin{proof}
The difference between $\phie^{\decall 2}$ and  $\phide$ reads~\eqref{phi_decal}
\begin{equation*}
\delta\phie^{\decall 2} =
\phie^{\decall 2}-\phide = -\frac{\Mh}{\tempe}(\Ce\dscal\decall)\phie + \frac{\Mh^2}{2\tempe}\left[\decall\dscal\decall + \frac{(\Ce\dscal\decall)^2}{\tempe}\right].
\end{equation*}
The projection of $\delta\phie^{\decall 2}$ onto the collisional invariants is given by
\begin{gather*}
\ppscale{\feo\delta\phie^{\decall 2},\inve[1]} =
\tfrac{\Mh}{\tempe}\ne \decall\dscal(\Mh\decall-\Ve),\\
\ppscale{\feo\delta\phie^{\decall 2},\inve[2]} =
\Mh\decall\dscal(2\Mh\ne\decall-\tfrac{1}{\tempe}\heate).
\end{gather*}
The difference $\delta\phie^{\decall 2}$ is then orthogonal to the collisional invariants if  and only if $\decall=0$. 
To conclude, the resolution of $\phie^{\decall 2}$ yields a linearized Boltzmann equation where the right-member is orthogonal to the collisional invariants---a direct calculation shows that 
$\boltFe(\delta\phie^{\decall 2})+\delta_{b1}\qe\dCe(\delta\phie^{\decall 2})\dscal\Ce\pvect\B$ is orthogonal to the collisional invariants---whereas the scalar constraints on the unknown function $\phie^{\decall 2}$ are not zero.
\end{proof}

Consequently, for the reasons invoked so far, we will not try to shift the center of the Maxwell-Boltzmann distribution for electrons and stick with $\decall=0$ at any order.

Then, we define the electron viscous tensor, second-order electron diffusion velocity,   and second-order current density
\begin{align}
\visqueux[\elec]&=\int\Ce\ptens\Ce\feo\phie\;\d\Ce,\label{eqPie}\\
\Vde&=\frac{1}{\ne}\int\Ce\feo\phide\;\d\Ce, \label{eqdefVede}\\
\Jde&=\ne\qe\Vde.
\end{align}
A first-order electron momentum relation is given in the following proposition.

\begin{proposition}
\label{th:propelmom1}
 Considering $\feo$ given by eq.~\eqref{eqbolte}, $\fHio$,  $i\in\lourd$, by eq.~\eqref{eqbolti},  $\phie$ by eqs.~\eqref{eqepsilon-1}-\eqref{eqconsepsilon-1},  $\phii$,  $i\in\lourd$, by eqs.~\eqref{eqiepsilon-1}-\eqref{eqiconsepsilon-1}, and $\phide$ by eqs.~\eqref{eqepsilono}-\eqref{eqconsepsilono}, the first-order momentum exchanged between electrons and heavy particles reads
\begin{equation}
\sumi[j]\ppscale{\Jejo(\feo\phide,\fHio[j]),\Ce}+\ppscale{\Jezhat,\Ce}
=\tfrac{1}{\Mh}\dx\dscal\visqueux[\elec]-\left(\delta_{b0}\je+\delta_{b1}\Jde\right) \pvect\B.\label{eqepsilon11} 
\end{equation}
\end{proposition}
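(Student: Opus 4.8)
The plan is to project the order~$\epsilon^0$ electron Boltzmann equation \eqref{eqepsilono} onto the vector $\Ce$, mimicking the derivation of Proposition~\ref{th:corelmom} but one order higher in $\epsilon$. Taking $\ppscale{\cdot,\Ce}$ of \eqref{eqepsilono} produces the contributions $\ppscale{\feo\boltFe(\phide),\Ce}$, $\delta_{b1}\qe\ppscale{\dCe(\feo\phide)\dscal\Ce\pvect\B,\Ce}$, $\ppscale{\Dezhat(\feo,\phie),\Ce}$, and $\ppscale{\Jee(\feo\phie,\feo\phie),\Ce}+\ppscale{\Jezhat,\Ce}$, and the whole argument consists in identifying which of these survive and in evaluating the survivors.

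First I would treat the linearized operator. Expanding $\feo\boltFe(\phide)$ through its definition splits it into the self-collision pair $\Jee(\feo\phide,\feo)+\Jee(\feo,\feo\phide)$ and the cross terms $\sumi[j]\Jejo(\feo\phide,\fHio[j])$. Momentum conservation in electron self-collisions, $\ppscale{\Ce,\Jee}=0$ from \eqref{eqcolinvee}, annihilates the self-collision pair, and the same property kills $\ppscale{\Jee(\feo\phie,\feo\phie),\Ce}$. This leaves precisely $\sumi[j]\ppscale{\Jejo(\feo\phide,\fHio[j]),\Ce}+\ppscale{\Jezhat,\Ce}$ on one side, equal to $\ppscale{\Dezhat(\feo,\phie),\Ce}$ together with the left-hand magnetic term, so that the quantity to be computed is already isolated.

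The core of the work is the term-by-term evaluation of $\ppscale{\Dezhat(\feo,\phie),\Ce}$, carried out by integrating by parts in $\Ce$ and invoking that $\feo$ is a centred Maxwellian, hence $\int\Ce\feo\,\d\Ce=0$, together with the mass constraint $\int\feo\phie\,\d\Ce=0$ from \eqref{eqconsepsilon-1} and the definitions \eqref{defVe}, \eqref{eqPie}, \eqref{eqdefVede}. I expect the temporal term $\dt\feo$, the convection $\vitesse\dscal\dx\feo$, and the velocity-gradient term $-(\dCe\feo\ptens\Ce)\pmat\dx\vitesse$ to project to zero because the first velocity moment of $\feo$ vanishes; the electric-field work $\qe\tfrac{1}{\Mh}\Ep\dscal\dCe(\feo\phie)$ to drop out by the mass constraint; and the $\delta_{b(-1)}$ magnetic piece acting on $\feo$ to vanish after integration by parts. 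The only surviving streaming contribution should be the convective flux $\tfrac{1}{\Mh}\Ce\dscal\dx(\feo\phie)$, giving $\tfrac{1}{\Mh}\dx\dscal\visqueux[\elec]$, while the two $b=0$ magnetic terms $\qe\delta_{b0}\Mh(\vitesse\pvect\B)\dscal\dCe\feo$ and $\qe\delta_{b0}(\Ce\pvect\B)\dscal\dCe(\feo\phie)$ reassemble, through $\ve=\Mh\vitesse+\Ve$ and $\je=\ne\qe\ve$, into $-\delta_{b0}\je\pvect\B$. The left-hand magnetic term is computed by the same integration by parts to be $-\delta_{b1}\ne\qe(\Vde\pvect\B)=-\delta_{b1}\Jde\pvect\B$, and adding the pieces produces \eqref{eqepsilon11}.

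The main obstacle is not conceptual but the bookkeeping: one must verify that every streaming contribution in $\Dezhat$ other than the convective flux and the $b=0$ magnetic terms truly projects to zero on $\Ce$, and that the two magnetic integrations by parts correctly reconstruct the current densities $\je$ and $\Jde$ (the antisymmetry of the cross product making the divergence-type contribution vanish and leaving only the first-moment term), including the fact that the electric part of $\Ep$ disappears here, in contrast with the zero-order relation \eqref{eqelmom}. Care is also needed to keep $\ppscale{\Jezhat,\Ce}$ untouched throughout, since it belongs to the quantity being defined rather than something to be evaluated.
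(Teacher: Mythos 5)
Your proposal is correct and follows exactly the paper's route: project eq.~\eqref{eqepsilono} onto $\Ce$, eliminate the electron self-collision terms via $\ppscale{\Ce,\Jee}=0$ from eq.~\eqref{eqcolinvee}, and integrate the streaming and magnetic terms by parts, using the centred Maxwellian, the mass constraint~\eqref{eqconsepsilon-1}, and the definitions of $\visqueux[\elec]$, $\je$, and $\Jde$. The paper states this in two lines; your term-by-term bookkeeping of $\Dezhat$ simply fills in the details, and all the cancellations you anticipate do occur.
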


\begin{proof}
Equation~\eqref{eqepsilono} is projected onto the space spanned by the vector $\Ce$
\begin{equation*}
-\ppscale{\feo\boltFe(\phide),\Ce}+\ppscale{\Jezhat,\Ce}
=\Dezhat(\feo,\phie)+\delta_{b1}\qe\ppscale{\dCe(\feo\phidi[\elec])\dscal\Ce\pvect\B,\Ce}.
\end{equation*}
Then,  eq.~\eqref{eqepsilon11} is readily established by simplifying the left-hand-side by means of eq.~\eqref{eqcolinvee}, $\ppscale{\Ce,\Jee}=0$, at order $\varepsilon^2$ and by integrating by parts the right-hand-side.
\end{proof}

The first-order momentum exchanged between electrons and heavy particles is thus expressed in terms of the electron viscous tensor and electric force. Besides, the following lemma allows for the momentum exchanged between heavy particles and electrons to be calculated  at order $\varepsilon$.

\begin{lemma}\label{th:lemelmom1}
  Considering $\feo$ given by eq.~\eqref{eqbolte}, $\fHio$,  $i\in\lourd$, by eq.~\eqref{eqbolti},  $\phie$ by eqs.~\eqref{eqepsilon-1}-\eqref{eqconsepsilon-1},   $\phii$,  $i\in\lourd$, by eqs.~\eqref{eqiepsilon-1}-\eqref{eqiconsepsilon-1}, and $\phide$ by eqs.~\eqref{eqepsilono}-\eqref{eqconsepsilono}, the net first-order momentum exchanged between electrons and heavy particles vanishes, $i.e.$,
  \begin{equation}\label{eqelmomrel2}
\ppscalh{\Jipuhat[\heavy] ,\invH[\nH+\nu]}+
{\sumi[j]}\ppscale{\Jejo(\feo\phide,\fHio[j]),C_{\elec\nu}}+\ppscale{\Jezhat,C_{\elec\nu}}
=0,\quad\nu\in\{1,2,3\}.
\end{equation}
\end{lemma}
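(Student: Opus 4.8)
The plan is to obtain \eqref{eqelmomrel2} as the order-$\varepsilon^3$ consequence of the \emph{exact} crossed momentum conservation identity~\eqref{eqcrossedcol2}, in exactly the same spirit as Lemma~\ref{th:lemelmom}, which is its order-$\varepsilon^2$ counterpart. The relation~\eqref{eqcrossedcol2} holds for the full distribution functions; substituting the Chapman-Enskog expansions~\eqref{chapenske}-\eqref{chapenski} turns it into an identity that must hold order by order in $\varepsilon$, and the content of the lemma is simply the vanishing of its $\varepsilon^3$ coefficient.

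First I would expand both sides of~\eqref{eqcrossedcol2} using the operator expansions of Theorems~\ref{th:thmJie} and~\ref{th:theoJei} composed with~\eqref{chapenske}-\eqref{chapenski}. On the electron side, $\sum_{j}\Jej[j](\fe,\fHi[j])$ produces, at successive orders, the crossed parts of $\Jemd,\Jemu,\Jez,\Jepu$; because of the extra factor $\varepsilon$ in front of $\sum_j\ppscale{C_{\elec\nu},\Jej}$, the $\varepsilon^3$ contribution of this term is precisely the projection of the crossed part of $\Jez$, namely $\sum_j\ppscale{C_{\elec\nu},\Jejo(\feo\phide,\fHio[j])}+\ppscale{C_{\elec\nu},\Jezhat}$. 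On the heavy side, $(\Jie(\fHi,\fe))_{i\in\lourd}$ carries no $\varepsilon$ prefactor, so its $\varepsilon^3$ contribution is the projection of the crossed part of $\Jipu$, namely $\sum_i\ppscalh{\invH[\nH+\nu],\Jieu(\fHio[i]\phidi,\feo)}+\ppscalh{\invH[\nH+\nu],\Jipuhat[\heavy]}$. Matching these two $\varepsilon^3$ coefficients is what produces~\eqref{eqelmomrel2}.

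The remaining step is to discard the terms that vanish by isotropy of the Maxwellians $\feo$ and $\fHio$. Corollary~\ref{th:Jieuiso} immediately kills $\Jieu(\fHio[i]\phidi,\feo)$ (this is why it is already struck in the definition of $\Jipu$), so the whole heavy-side contribution reduces to $\ppscalh{\invH[\nH+\nu],\Jipuhat[\heavy]}$; likewise the electron terms $\Jejo(\feo,\fHio[j]\phidi[j])$ drop out because $\feo$ is isotropic, leaving only the two surviving electron contributions of the statement. The corollaries~\ref{th:Jiediso}, \ref{th:prodenteio} and~\ref{th:Jeju} are used, exactly as in Lemma~\ref{th:lemelmom}, to certify that the lower-order ($\varepsilon^1$ and $\varepsilon^2$) coefficients of~\eqref{eqcrossedcol2} vanish identically, so that no stray lower-order piece contaminates the $\varepsilon^3$ balance. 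I expect the main obstacle to be purely organizational: the honest work is the bookkeeping needed to verify that the order-$\varepsilon^3$ coefficient groups \emph{exactly} into the composite operators $\Jezhat$ and $\Jipuhat[\heavy]$ as defined in Section~\ref{sec:sec4}, with every canceled term accounted for, rather than any genuine analytical difficulty.
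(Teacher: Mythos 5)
Your proposal is correct and follows essentially the same route as the paper: the lemma is obtained by substituting the Chapman--Enskog expansions into the exact crossed momentum conservation identity~\eqref{eqcrossedcol2}, collecting the relevant power of $\varepsilon$, and discarding the terms killed by the isotropy corollaries~\ref{th:Jieuiso}--\ref{th:Jeidiso}, so that the surviving terms group exactly into $\Jipuhat[\heavy]$, $\Jejo(\feo\phide,\fHio[j])$, and $\Jezhat$. The only discrepancy is cosmetic: the paper labels the relevant order as $\epsilon^2$ while you count it as $\varepsilon^3$ (a bookkeeping difference tied to the explicit $\varepsilon$ prefactor on the electron term in~\eqref{eqcrossedcol2}), which does not affect the validity of your argument.
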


\begin{proof}
Equation~\eqref{eqelmomrel2}  is derived from eq.~\eqref{eqcrossedcol2} at order $\epsilon^2$ based on corollaries \ref{th:Jieuiso}-\ref{th:Jeidiso}.
\end{proof}

\subsection{Order $\epsilon$: heavy-particle Navier-Stokes equations}

We derive  Navier-Stokes equations based on the heavy-particle Boltzmann equation~\eqref{eqbolbolh} at order $\epsilon$. First, we introduce  the diffusion velocity and mean velocity of species $i\in\lourd$, 
\begin{equation}
\Vi=\frac{1}{\ni}\int\Ci\fHio\phii \d\Ci, \qquad 
\vi=\vitesse+\tfrac{\varepsilon}{\Mh}\Vi,
\qquad i\in\lourd,\label{eqVi}
\end{equation}
the heavy-particle viscous tensor, 
\begin{equation}
\visqueux[\heavy]=\sumi[j]\int\mi[j]\Ci[j]\ptens\Ci[j]\fHio[j]\phii[j] \d\Ci[j],
\label{eqPii}
\end{equation}
the second-order electron mean velocity,
\begin{equation}
\vde = \Mh\vitesse + \Ve+\varepsilon\Vde,
\end{equation}
the heavy-particle heat flux, 
\begin{equation}
\heati[\heavy]=\sumi[j]\int\frac{1}{2}\mi[j]\Ci[j]\dscal\Ci[j]\Ci[j]\fHio[j]\phii[j] \d\Ci[j],
\label{eqqi}
\end{equation}
the heavy-particle conduction current density in the mean heavy-particle velocity frame, the heavy-particle conduction current density in the inertial frame, the second-order electron conduction current density in the inertial frame, and the total current density,
\begin{equation}
\JJi[\heavy]=\sumi[j]\ni[j]\qi[j]\Vi[j], \quad \ji[\heavy]=\sumi[j]\ni[j]\qi[j]\vi[j],\quad
\jde = \ne\qe\vde, \quad \courantel= \ji[\heavy] + \frac{1}{\Mh}\jde.
\end{equation}
Furthermore, we define the energy exchanged between heavy particles and electrons reads at order $\varepsilon$
\begin{multline}
\deltaEhu=\ppscalh{\Jieu[\heavy](\fHo\phii[\heavy],\feo\phie),\invH[\nH+4]}+\ppscalh{\Jied[\heavy](\fHo,\feo\phie),\invH[\nH+4]}\\
+ \ppscalh{\Jied[\heavy](\fHo\phii[\heavy],\feo),\invH[\nH+4]}.\label{eqeqeq}
\end{multline}
The first term can be calculated by means of theorem~\ref{th:thmJie} 
\begin{equation}
\ppscalh{\Jieu[\heavy](\fHo\phii[\heavy],\feo\phie),\invH[\nH+4]}=
\sumi[j]\ni[j]\Vi[j]\dscal\Aie[j] ,
\end{equation}
and the two other terms will vanish. Then, we establish the following lemma used in the  derivation of the heavy-particle Navier-Stokes equations.

\begin{lemma}\label{th:lem4.3}
 Considering $\feo$ given by eq.~\eqref{eqbolte}, $\fHio$,  $i\in\lourd$, by eq.~\eqref{eqbolti},  $\phie$ by eqs.~\eqref{eqepsilon-1}-\eqref{eqconsepsilon-1},  $\phii$, $i\in\lourd$, by eqs.~\eqref{eqiepsilon-1}-\eqref{eqiconsepsilon-1},  and $\phide$ by eqs.~\eqref{eqepsilono}-\eqref{eqconsepsilono}, the mass exchanged at order $\varepsilon$ between heavy particles and electrons vanishes, $i.e.$,
  \begin{equation}\label{eqmass}
\ppscalh{\Jipuhat[\heavy],\invH[l]}=0,\quad
l\in\{1,\ldots,\nH\}.
\end{equation}
\end{lemma}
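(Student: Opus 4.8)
The plan is to reduce the identity to the fact that the crossed operators $\Jieu$ and $\Jied$ conserve the number of each heavy species, and therefore integrate to zero over the heavy-particle velocity. Since the mass invariant is $\invH[j]=(\mi\delta_{ij})_{i\in\lourd}$ for $j\in\lourd$, the partial scalar product in~\eqref{eqmass} collapses onto a single component,
\[
\ppscalh{\Jipuhat[\heavy],\invH[j]}=\mi[j]\int\Jipuhat[j]\,\d\Ci[j],
\]
so it suffices to prove that $\int\Jipuhat[j]\,\d\Ci[j]=0$ for every $j\in\lourd$.

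I would then recall the explicit form of this component, namely the four surviving terms
\[
\Jipuhat[j]=\Jieu(\fHio[j]\phii[j],\feo\phie)+\Jieu(\fHio[j],\feo\phide)+\Jied(\fHio[j]\phii[j],\feo)+\Jied(\fHio[j],\feo\phie),
\]
the fifth term $\Jiet(\fHio[j],\feo)$ having already been removed by Corollary~\ref{th:Jietiso} because $\feo$ is isotropic. The key observation, read directly off Theorem~\ref{th:thmJie}, is that every term of both $\Jieu$ and $\Jied$ is a velocity-space divergence acting on the heavy factor: the whole of $\Jieu$ is $-\tfrac{1}{\mi}\partial_{\Ci}\fHi(\Ci)$ contracted with a $\Ci$-independent vector, while each of the three pieces of $\Jied$ is $\partial_{\Ci}$ or $\partial_{\Ci\Ci}^2$ applied to the product of $\fHi$ with a polynomial in $\Ci$, contracted against a $\Ci$-independent tensor obtained from the integration over $\adimg$.

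The conclusion then follows by integrating term by term over $\Ci$: the integral over the whole velocity space of a gradient, or of a double gradient, of a function decaying like a Maxwellian times a polynomial vanishes by the divergence theorem, all boundary contributions at $|\Ci|\to\infty$ being zero. Summing the four contributions yields $\int\Jipuhat[j]\,\d\Ci[j]=0$, which is the desired identity.

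Finally I would point out the conceptual reading that makes the computation transparent and provides an independent check: the full crossed operator satisfies $\int\Jie(\fHi,\fe)\,\d\Ci=0$ for arbitrary $\fHi$, $\fe$ and every value of $\varepsilon$, since elastic electron-impact collisions preserve the number of heavy particles $j$; this is exactly the mass part of relation~\eqref{eqcrossedcol1}. Matching powers of $\varepsilon$ in the expansion of Theorem~\ref{th:thmJie} then forces $\int\Jieu\,\d\Ci=0$ and $\int\Jied\,\d\Ci=0$ separately. There is no genuine analytical obstacle here; the only points demanding care are the bookkeeping that confirms every summand of $\Jipuhat[j]$ is accounted for, and the verification that the isotropy of $\feo$ legitimately discards the $\Jiet$ term.
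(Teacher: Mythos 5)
Your proof is correct, and your closing ``conceptual reading'' is in fact exactly the paper's proof: the paper disposes of Lemma~\ref{th:lem4.3} in one line by invoking eq.~\eqref{eqcrossedcol1} at order $\varepsilon^3$, i.e.\ the fact that $\ppscalh{\invH[i],\Jhe}=0$ holds for arbitrary distribution functions and all $\varepsilon$, so each coefficient of the $\varepsilon$-expansion of $\Jie$ must integrate to zero separately, and the non-cancelled order-$\varepsilon^3$ terms are precisely those collected in $\Jipuhat$. What you add in the main body is a genuinely different, more explicit verification: you read off from Theorem~\ref{th:thmJie} that $\Jieu$ and every term of $\Jied$ is a $\Ci$-divergence of a rapidly decaying field (the $\adimg$-integrals being $\Ci$-independent tensors), so each of the four surviving summands of $\Jipuhat[j]$ integrates to zero over $\Ci[j]$ directly. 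This buys transparency and serves as a consistency check on the expansion formulas~\eqref{Jieu}--\eqref{Jied}, at the cost of depending on the explicit form of the expanded operators, whereas the paper's argument is structural and would survive any rewriting of those formulas. Your bookkeeping is accurate (the four terms are exactly those of $\Jipuhat$, and the $\Jiet$ term is indeed discarded by Corollary~\ref{th:Jietiso} since $\feo$ is isotropic); the only implicit hypothesis in the divergence route, namely sufficient decay of $\fHio[j]$, $\fHio[j]\phii[j]$ and their derivatives at large $|\Ci[j]|$, is standard at the formal level of this paper.
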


\begin{proof}
Equation~\eqref{eqmass}  is readily derived from eq.~\eqref{eqcrossedcol1} at order $\varepsilon^3$.\end{proof}

\begin{proposition}
If $\phidi[\heavy]$ is a solution to  eq.~\eqref{eqbol2} at order $\epsilon^1$, $i.e.$
\begin{equation}
\fHio\boltFi(\phidi[\heavy])
=-\Diu(\fHio\!,\phii)+\sumi[j]\Jij(\fHio\phii,\fHio[j]\phii[j])+\Jipuhat,\quad i\in\lourd,\label{eqiepsilon1}
\end{equation}
where $\feo$ is given by eq.~\eqref{eqbolte}, $\fHio$, $i\in\lourd$, by eq.~\eqref{eqbolti}, $\phie$ by eqs.~\eqref{eqepsilon-1}-\eqref{eqconsepsilon-1},  $\phii$, $i\in\lourd$, by eqs.~\eqref{eqiepsilon-1}-\eqref{eqiconsepsilon-1},  and $\phide$ by eqs.~\eqref{eqepsilono}-\eqref{eqconsepsilono}, and if $\fHo\phidi[\heavy]=(\fHio\phidi)_{i\in\lourd}$  satisfies the constraints 
\begin{equation}
\ppscalh{\fo_\heavy\phidi[\heavy],\invH[l]}=0,~l\in\{1,\ldots,\nh+4\}, \label{eqiconsepsilon1}
\end{equation}
then, the first-order conservation equations of heavy-particle mass, momentum, and energy  read
\begin{equation}
\dt\rhoi + \dx\dscal(\rhoi\vitesse+\tfrac{\varepsilon}{\Mh}\rhoi\Vi) = 0,
\quad i\in\lourd,\label{eqNS1}
\end{equation}
\begin{multline}
\dt(\rhoH\vitesse)+\dx\dscal(\rhoH\vitesse\ptens\vitesse+\tfrac{1}{\Mh^2} p\identite)=
-\tfrac{\varepsilon}{\Mh^2}\dx\dscal(\visqueux[\heavy]+\visqueux[\elec])+\tfrac{1}{\Mh^2}nq\E \\+[\delta_{b0}\courantel_0+\delta_{b1}\courantel]\pvect\B, \label{eqNS2}
\end{multline}
\begin{multline}
\dt(\rhoH\energiei[\heavy])+\dx\dscal(\rhoH\energiei[\heavy]\vitesse)=
-(\pri[\heavy]\identite+\varepsilon\visqueux[\heavy])\pmat\dx\vitesse-\tfrac{\varepsilon}{\Mh}\dx\dscal\heati[\heavy] +\tfrac{\varepsilon}{\Mh}\JJi[\heavy]\dscal\Ep\\
+\deltaEho+\varepsilon\deltaEhu.\label{eqNS3}
\end{multline}
\end{proposition}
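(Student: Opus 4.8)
The plan is to follow the Euler-level derivation, applying Fredholm's alternative~\cite{grad} to the linearized equation~\eqref{eqiepsilon1}. Its solvability condition is that the right-hand side be orthogonal to $\ker\boltFi[\heavy]=\invspace_\heavy$ (property~\ref{th:propboltFi}). The pure heavy-particle bilinear contribution $\sumi[j]\Jij(\fHio\phii,\fHio[j]\phii[j])$ is orthogonal to every $\invH[l]$ by the symmetrization behind property~\ref{th:corcolinv}, so the condition reduces to
\begin{equation*}
\ppscalh{\Diu[\heavy],\invH[l]} = \ppscalh{\Jipuhat[\heavy],\invH[l]}, \quad l\in\{1,\ldots,\nh+4\}.
\end{equation*}
The conservation laws~\eqref{eqNS1}--\eqref{eqNS3} then follow by adding $\varepsilon$ times this condition to the Euler equations~\eqref{eqeuler1}--\eqref{eqeuler3}, so that the first-order moments of $\phii[\heavy]$ enter as $\ordre(\varepsilon)$ transport fluxes.

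I would treat the three families of invariants separately, in each case integrating the streaming contribution $\ppscalh{\Diu[\heavy],\invH[l]}$ by parts to expose the divergence form. For the mass invariants ($l\in\{1,\ldots,\nH\}$) lemma~\ref{th:lem4.3} gives $\ppscalh{\Jipuhat[\heavy],\invH[l]}=0$, and the streaming term produces the diffusive flux $\dx\dscal(\tfrac{\varepsilon}{\Mh}\rhoi\Vi)$ through $\int\Ci\fHio\phii\,\d\Ci=\ni\Vi$, yielding~\eqref{eqNS1}. For the energy invariant ($l=\nH+4$) the cross term is by definition $\deltaEhu$ of~\eqref{eqeqeq}; its leading piece is evaluated with theorem~\ref{th:thmJie} while the two remaining contributions vanish, and the by-parts manipulation of the streaming operator delivers the heat flux $\heati[\heavy]$, the viscous heating $\varepsilon\visqueux[\heavy]\pmat\dx\vitesse$, and the work term $\tfrac{\varepsilon}{\Mh}\JJi[\heavy]\dscal\Ep$, giving~\eqref{eqNS3}.

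The momentum balance ($l=\nH+\nu$) is the delicate case and I expect it to be the main obstacle. Integration by parts of $\ppscalh{\Diu[\heavy],\invH[\nH+\nu]}$ furnishes the convective flux, the pressure gradient, the heavy-particle viscous stress $\visqueux[\heavy]$, and magnetic contributions weighted by $\delta_{b0}$ and $\delta_{b1}$. The cross-collision momentum exchange $\ppscalh{\Jipuhat[\heavy],\invH[\nH+\nu]}$ cannot be dropped; instead it is eliminated through lemma~\ref{th:lemelmom1}, which trades it against the electron-side projection $\sumi[j]\ppscale{\Jejo(\feo\phide,\fHio[j]),C_{\elec\nu}}+\ppscale{\Jezhat,C_{\elec\nu}}$, itself furnished by the first-order electron momentum relation of proposition~\ref{th:propelmom1} in terms of $\visqueux[\elec]$ and the current densities $\delta_{b0}\je+\delta_{b1}\Jde$. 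The crux is the bookkeeping coupling the hyperbolic (heavy) and parabolic (electron) scalings: one must verify that the electron viscous and current contributions assemble exactly into $-\tfrac{\varepsilon}{\Mh^2}\dx\dscal(\visqueux[\heavy]+\visqueux[\elec])$ and $[\delta_{b0}\courantel_0+\delta_{b1}\courantel]\pvect\B$, after using heavy-particle mass conservation to simplify the material-derivative terms exactly as in the Euler proof. Tracking the powers of $\varepsilon$ and the Kronecker factors $\delta_{b\cdot}$ consistently through this elimination is where the essential work lies.
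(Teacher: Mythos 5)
Your proposal is correct and follows essentially the same route as the paper: the solvability condition of eq.~\eqref{eqiepsilon1} is combined (after multiplication by $\varepsilon$) with the zero-order condition underlying the Euler equations, the left-hand side is integrated by parts, and the right-hand side is simplified using lemma~\ref{th:lem4.3} for mass, the definition of $\deltaEhu$ together with theorem~\ref{th:thmJie} for energy, and lemma~\ref{th:lemelmom1} with proposition~\ref{th:propelmom1} for momentum. Your identification of the momentum balance as the delicate step, where the cross-collision exchange is traded against the electron-side projection to produce $\visqueux[\elec]$ and the current terms, is exactly the bookkeeping the paper relies on.
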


\begin{proof}
The Chapman-Enskog method allows for the following conservation equations to be derived
\begin{equation*}
\ppscalh{\Diz[\heavy],\invH[l]}+\varepsilon\ppscalh{\Diu[\heavy],\invH[l]}=\ppscalh{\Jizhat[\heavy],\invH[l]}+\varepsilon\ppscalh{\Jipuhat[\heavy],\invH[l]},
\end{equation*}
$l\in\{1,\ldots,\nh+4\}$. Integrating by parts the left-hand-side and simplifying the right-hand-side based on the proof of heavy-particle Euler eqs.~\eqref{eqeuler1}-\eqref{eqeuler3}, proposition~\ref{th:propelmom1}, lemma~\ref{th:lemelmom1}, and lemma~\ref{th:lem4.3}, one obtains eqs.~\eqref{eqNS1}-\eqref{eqNS3}.\\
\end{proof}

\begin{remark}
 When one single type of heavy particles is considered, the first-order energy exchange term, heavy-particle diffusion velocities, and conduction currrent degenerate, 
 $\deltaEhu=0$, $\Vi=0$, $i\in\lourd$, $\JJi[\heavy]=0$, the total current  is simplified as well, $\courantel=n q \vitesse+\ne\qe\Ve/\Mh$.
 Therefore, we retrieve the formalism of Degond and Lucquin.
In such a case, the Navier-Stokes system can be coupled to the system of drift-diffusion equations for the electrons obtained at order $\varepsilon^0$ in the previous section. Since 
no energy exchange occurs at order $\varepsilon^1$, there is no need to resolve 
the electrons at order $\varepsilon^1$ to obtain a conservative model 
which insures the positivity of the entropy production.
However, this oversimplified case hides the details of the complex interaction between the electron and the heavy-particle mixture which is exhibited by the previous system of conservation eqs.~\eqref{eqNS1}-\eqref{eqNS3}. For a multicomponent mixture of heavy particles, thus, we have to 
extend the model obtained so far for the electron one order further, as done in the following section.
\end{remark}

\subsection{Order $\epsilon$: first-order electron drift-diffusion equations
\label{sec:secsub2}}

We derive first-order electron drift-diffusion  equations based on the electron Boltzmann equation~\eqref{eqbolbole} at order $\epsilon^{1}$.

Then, we  define the second-order electron heat flux
\begin{equation}\label{eqdefqde}
\heatde=\int\frac{1}{2}\Ce\dscal\Ce\Ce\feo\phide \d\Ce.
\end{equation}
The energy exchanged between electrons and heavy particles at order $\varepsilon$  is caculated by means of eq.~\eqref{eqcrossedcol3} at order $\varepsilon^3$
\begin{equation}
\deltaEeu+\deltaEhu=0,
\end{equation}
where $\deltaEhu$ is given by eq.~\eqref{eqeqeq}. Moreover, we establish the following lemma used in the  derivation of the first-order electron drift-diffusion equations.

\begin{lemma}\label{th:lem4.4}
 Considering $\feo$ given by eq.~\eqref{eqbolte}, $\fHio$,  $i\in\lourd$, by eq.~\eqref{eqbolti},  $\phie$ by eqs.~\eqref{eqepsilon-1}-\eqref{eqconsepsilon-1},  $\phii$, $i\in\lourd$, by eqs.~\eqref{eqiepsilon-1}-\eqref{eqiconsepsilon-1},  and $\phide$ by eqs.~\eqref{eqepsilono}-\eqref{eqconsepsilono}, the mass exchanged at order $\varepsilon$ between electrons and heavy particles vanishes, $i.e.$,
\begin{equation}\label{eqmass2}
\ppscale{\Jepuhat,\inve[1]}=0.
\end{equation}
\end{lemma}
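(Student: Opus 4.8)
The plan is to obtain eq.~\eqref{eqmass2} as the order-$\varepsilon^3$ content of the \emph{exact} electron-mass conservation relation for the crossed collisions, namely the first identity in eq.~\eqref{eqcrossedcol1}, $\sumi[j]\ppscale{\inve[1],\Jej}=0$. The crucial point, which must be used carefully, is that this identity is merely a reformulation of the structural orthogonality of Property~\ref{th:corcolinv}; it therefore holds for \emph{arbitrary} distribution functions and thus \emph{identically in }$\varepsilon$ once the Chapman-Enskog ansatz \eqref{chapenske}--\eqref{chapenski} is inserted. First I would expand $\sumi[j]\Jej$ in powers of $\varepsilon$, using simultaneously the operator expansion of Theorem~\ref{th:theoJei} and the expansions of $\fe$ and $\fHi[j]$, and collect the coefficient of $\varepsilon^3$. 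Since the identity vanishes at every order, so does its $\varepsilon^3$ coefficient.

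The key step is to recognize which terms constitute this $\varepsilon^3$ coefficient. Matching the combination of operator orders with distribution orders against the list of electron collision operators (i.e.\ checking that $\Jemd,\Jemu,\Jez,\Jepu$ are indeed the coefficients of $\varepsilon^0,\varepsilon^1,\varepsilon^2,\varepsilon^3$ in $\Je$), one sees that the $\varepsilon^3$ coefficient of $\sumi[j]\Jej$ is precisely the crossed part of $\Jepu$, which by its very definition splits as $\Jepuhat+\sumi[j]\Jejo(\feo\phite,\fHio[j])+\sumi[j]\Jejo(\feo,\fHio[j]\phiti[j])$. Projecting the order-$\varepsilon^3$ identity onto $\inve[1]$ then gives $\ppscale{\inve[1],\Jepuhat}+\sumi[j]\ppscale{\inve[1],\Jejo(\feo\phite,\fHio[j])}+\sumi[j]\ppscale{\inve[1],\Jejo(\feo,\fHio[j]\phiti[j])}=0$.

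It remains to discard the two pure $\Jejo$ sums. The term $\Jejo(\feo,\fHio[j]\phiti[j])$ vanishes identically, its first argument $\feo$ being isotropic (Corollary~\ref{th:prodenteio}), which is why it is already struck through in the definition of $\Jepu$. Each $\ppscale{\inve[1],\Jejo(\feo\phite,\fHio[j])}$ vanishes by the orthogonality Property~\ref{th:corJeo}, which yields $\ppscale{\inve[l],\Jejo}=0$ for $l\in\{1,2\}$ and \emph{any} electron argument (the proof rests only on intertwining $\ovec$ with $\Ce/|\Ce|$, under which $\inve[1]$ is invariant). What survives is exactly $\ppscale{\Jepuhat,\inve[1]}=0$, i.e.\ eq.~\eqref{eqmass2}.

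The only genuine difficulty here is bookkeeping: one must collect the $\varepsilon^3$ terms without omission, correctly pairing the operator order from Theorem~\ref{th:theoJei} with the perturbation order from the expansion, and then cleanly separate $\Jepuhat$ from the pure $\Jejo$ pieces. Since this separation has already been performed in the definition of $\Jepu$ and $\Jepuhat$, no fresh computation is required, and the conceptual content reduces entirely to the two orthogonality statements---Property~\ref{th:corcolinv} for conservation and Property~\ref{th:corJeo} for the isotropy-induced cancellation. This parallels the proof of lemma~\ref{th:lem4.3}, which extracts heavy-particle mass conservation from the second identity in eq.~\eqref{eqcrossedcol1} at the very same order $\varepsilon^3$, so the argument can be stated just as tersely.
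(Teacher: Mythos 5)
Your proposal is correct and takes essentially the same route as the paper, whose entire proof is that eq.~\eqref{eqmass2} follows from the first identity of eq.~\eqref{eqcrossedcol1} at order $\varepsilon^3$. You simply make explicit the bookkeeping the paper leaves implicit: identifying the $\varepsilon^3$ coefficient of the crossed collision terms with $\Jepuhat$ plus the pure $\Jejo$ pieces, and discarding the latter via Property~\ref{th:corJeo} and the isotropy of $\feo$.
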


\begin{proof}
Equation~\eqref{eqmass2}  is readily derived from eq.~\eqref{eqcrossedcol1} at order $\varepsilon^3$.
\end{proof}

\begin{proposition}
If  $\phite$ is a solution to eq.~\eqref{eqbolbole} at order $\epsilon^{1}$, $i.e.$
\begin{multline}
\feo\boltFe\bigl(\phite\bigr)
+\delta_{b1}\qe\dCe\bigl(\feo\phite\bigr)
\dscal\Ce\pvect\B= 
 -\Depuhat(\feo,\phie,\phide)
+\Jee(\feo\phide,\feo\phie) \\
+ \Jee(\feo\phie,\feo\phide)+\Jepuhat,
\end{multline}
where $\feo$ is given by eq.~\eqref{eqbolte}, $\fHio$, $i\in\lourd$, by eq.~\eqref{eqbolti}, $\phie$ by eqs.~\eqref{eqepsilon-1}-\eqref{eqconsepsilon-1},  $\phi_i$, $i\in\lourd$, by eqs.~\eqref{eqiepsilon-1}-\eqref{eqiconsepsilon-1},  $\phide$ by eqs.~\eqref{eqepsilono}-\eqref{eqconsepsilono}, and $\phi_i^2$, $i\in\lourd$, by eqs.~\eqref{eqiepsilon1}-\eqref{eqiconsepsilon1}, and if $\feo\phite$ satisfies the constraints
\begin{equation}
\ppscale{\feo\phite,\inve[l]}=0,\quad l\in\{1,2\},
\end{equation}
then, the first-order conservation equations of electron mass and energy read
\begin{equation}
\dt\rhoe + \dx\dscal\Bigl[\rhoe\bigl(\vitesse+\tfrac{1}{\Mh}(\Ve+\varepsilon\Vde)\bigr)\Bigr] = 0,\label{eqdrift21}
\end{equation}
\begin{multline}\label{eqdrift22}
\dt(\rhoe\energiee)+\dx\dscal\left(\rhoe\energiee\vitesse\right)=-\pre\dx\dscal\vitesse-\tfrac{1}{\Mh}\dx\dscal\left(\heate+\varepsilon\heatde\right)\\
+\tfrac{1}{\Mh}\left(\JJe+\varepsilon\Jde\right)\dscal\Ep
+\delta_{b0}{\varepsilon}{\Mh}\JJe\dscal\vitesse\pvect\B+\deltaEeo+\varepsilon\deltaEeu.
\end{multline}
\end{proposition}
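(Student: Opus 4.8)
The plan is to reproduce, one order higher, the route used for the order-$\epsilon^0$ electron drift-diffusion equations, combining the solvability conditions of the two successive orders. First I would write Fredholm's alternative~\cite{grad} for the linearized equation at order $\epsilon^1$: projecting onto the electron collisional invariants $\inve[l]$, $l\in\{1,2\}$, the term $\ppscale{\feo\boltFe(\phite),\inve[l]}$ vanishes since $\boltFe$ is self-adjoint for $\ppscale{\cdot,\cdot}$ with kernel $\invspace_\elec$ (property~\ref{th:kerIe}); the magnetic term projects to zero after one integration by parts because $\dCe\dscal(\Ce\pvect\B)=0$ and $(\Ce\pvect\B)\dscal\Ce=0$; and the two electron-electron contributions $\Jee(\feo\phide,\feo\phie)$ and $\Jee(\feo\phie,\feo\phide)$ are orthogonal to $\inve[l]$ by eq.~\eqref{eqcolinvee}. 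What remains is exactly $\ppscale{\Depuhat,\inve[l]}=\ppscale{\Jepuhat,\inve[l]}$, in strict analogy with the order-$\epsilon^0$ identity $\ppscale{\Dezhat,\inve[l]}=\ppscale{\Jezhat,\inve[l]}$.

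I would then add $\varepsilon$ times the latter relation to the order-$\epsilon^0$ one, just as in the heavy-particle Navier-Stokes derivation, obtaining
\begin{equation*}
\ppscale{\Dezhat,\inve[l]}+\varepsilon\ppscale{\Depuhat,\inve[l]}=\ppscale{\Jezhat,\inve[l]}+\varepsilon\ppscale{\Jepuhat,\inve[l]},\qquad l\in\{1,2\}.
\end{equation*}
The right-hand side is dispatched by the preliminary results: for $l=1$ both mass-exchange terms vanish, the new one by lemma~\ref{th:lem4.4}, so that no source appears in the mass balance; for $l=2$ they recombine into the zeroth- and first-order energy exchanges $\deltaEeo+\varepsilon\deltaEeu$, as in the order-$\epsilon^0$ derivation and by the relation $\deltaEeu+\deltaEhu=0$.

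The bulk of the work is the integration by parts of the left-hand side. Projecting the streaming operators onto $\inve[1]=1$ collects $\dt\rhoe$ together with the divergences of the convective flux $\rhoe\vitesse$ and of the diffusion fluxes $\tfrac{1}{\Mh}\rhoe\Ve$ and $\tfrac{\varepsilon}{\Mh}\rhoe\Vde$, using the definitions~\eqref{defVe} and \eqref{eqdefVede}, which yields eq.~\eqref{eqdrift21}. Projecting onto $\inve[2]=\tfrac{1}{2}\Ce\dscal\Ce$ produces the electron internal-energy balance: the convective transport, the pressure work $-\pre\dx\dscal\vitesse$, the heat-flux divergence $-\tfrac{1}{\Mh}\dx\dscal(\heate+\varepsilon\heatde)$ obtained from the moments~\eqref{defheate} and \eqref{eqdefqde}, and the electric power $\tfrac{1}{\Mh}(\JJe+\varepsilon\Jde)\dscal\Ep$, together with the residual magnetic contribution $\delta_{b0}\varepsilon\Mh\JJe\dscal\vitesse\pvect\B$, which gives eq.~\eqref{eqdrift22}.

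The main obstacle is the bookkeeping in this last step. Unlike the order-$\epsilon^0$ case, $\Depuhat$ carries the full set of magnetic terms weighted by $\delta_{b(-2)}$, $\delta_{b(-1)}$, and $\delta_{b0}$, and one must check carefully that all of them integrate to zero against the energy invariant except the single surviving term $\delta_{b0}\varepsilon\Mh\JJe\dscal\vitesse\pvect\B$. One must likewise verify that the inertial term $-\Mh\tfrac{\D\vitesse}{\D\temps}\dscal\dCe\feo$ and the strain-rate term recombine with their order-$\epsilon^0$ counterparts so that no spurious contribution survives in the electron energy equation; this recombination is precisely what pins down the transport form of eq.~\eqref{eqdrift22}.
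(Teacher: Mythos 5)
Your proposal is correct and follows essentially the same route as the paper: the paper's proof likewise writes the combined solvability condition $\ppscale{\Dezhat,\inve[l]}+\varepsilon\ppscale{\Depuhat,\inve[l]}=\ppscale{\Jezhat,\inve[l]}+\varepsilon\ppscale{\Jepuhat,\inve[l]}$, integrates the left-hand side by parts, and simplifies the right-hand side with lemma~\ref{th:lem4.4}. You even spell out the Fredholm step and the magnetic-term bookkeeping that the paper leaves implicit; the only point the paper adds is the remark that the perturbation of the mean electron velocity at order $\varepsilon$ brings no contribution to the conservation equations.
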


\begin{proof}
The Chapman-Enskog method allows for the following conservation equations to be derived
\begin{eqnarray*}
&&\ppscale{\Dezhat,\inve[l]}+\varepsilon\ppscale{\Depuhat,\inve[l]}=\ppscale{\Jezhat,\inve[l]}+\varepsilon\ppscale{\Jepuhat,\inve[l]},\quad l\in\{1,2\}.
\end{eqnarray*}
Integrating by parts the left-hand-side and simplifying the right-hand-side based on lemma~\ref{th:lem4.4}, one obtains eqs.~\eqref{eqdrift21}-\eqref{eqdrift22}. The added terms associated with the perturbation of the mean electron velocity at order $\varepsilon^{1}$ do not bring in any contribution to the conservation equations.
\end{proof}

Before switching to Section~\ref{sec:sectransport} in order to evaluate the expression of the transport fluxes, we briefly come back to question of the influence of the choice of the initial frame of reference.

\subsection{About the necessity of conducting the expansion in the $\vitesse$ frame}\label{sec:secsubjustif}

As mentioned earlier, the mean heavy-particle velocity frame is not commonly adopted in the literature to conduct the Chapman-Enskog expansion. 
We have already underlined that the natural choice of the 
hydrodynamic velocity frame is not appropriate insofar as the 
global hydrodynamic velocity $\speed$ depends on the parameter $\epsilon$.
 Besides, the choice of the inertial frame
provides a vanishing mean velocity of the electrons. 
In fact Degond and Lucquin~\cite{degond2} and  Lucquin~\cite{lucquin1,lucquin2}
reach such a conclusion. However, since the expansion of the collision operators
in terms of $\varepsilon$ depends on the initial choice of referential
(see remark \ref{th:rem1})
and since the choice of the inertial frame prevents some terms from vanishing
(such as $\Jeju({\feo},{\fHio[j]})$), 
we will first show that these authors compensate the presence of non-zero terms
in the integro-differential equations
by the help of the $\decall$ velocity on the electron
introduced in Section~\ref{sec:secsub3}. This is affordable for 
the resolution of $\phie$, as proved in the following.
We then investigate if such an approach can be extended 
to the resolution of $\phide$.

Let us review the Chapman-Enskog expansion in a general frame. Considering a frame moving with the velocity $\velocity$, the peculiar velocities are given by
\begin{equation}\label{eqpaculiarben}
\Ce^\velocity = \ce - \epsilon\Mh\velocity,\qquad
\Ci^\velocity = \ci - \Mh\velocity,\quad i\in\lourd.
\end{equation}
The space of scalar electron collisional invariants $\invspace_\elec^\velocity$ is spanned by the following elements
\begin{equation*}
\left\{
\vcenter{\halign{
$#\hfil$&$\;\,=\; #\hfil$
\cr
\inve[\velocity,1]&1,
\cr\noalign{\vskip4pt}
\inve[\velocity,2]&\tfrac{1}{2}\Ce^\velocity\dscal\Ce^\velocity,
\cr}} \right.
\end{equation*}
the space of scalar heavy-particle collisional invariants $\invspace_\heavy^\velocity$ by
\begin{equation*}
\left\{
\vcenter{\halign{
$#\hfil$&$\;\,=\; \bigl(#\bigr)_{i\in\lourd},\hfil$&$\qquad#,\hfil$
\cr
\invH[\velocity,j ]&\mi\delta_{ij}&j\in\lourd
\cr\noalign{\vskip4pt}
\invH[\velocity,\nh+\nu]&\mi\Cinorme[i\nu]^\velocity&\nu\in\{1,2,3\}
\cr\noalign{\vskip4pt}
\invH[\velocity,\nh+4]&\tfrac{1}{2}\mi\Ci^\velocity\dscal\Ci^\velocity
\cr}} \right.
\end{equation*}
and the macroscopic properties are expressed  as partial scalar products of the distribution functions and the collisional invariants
\begin{equation*}
\left\{
\vcenter{\halign{
$\ppscale{\fe^\velocity,#}{}\hfil$&$\;\,=\; #,\hfil$
\cr
\inve[\velocity,1]&\rhoe
\cr\noalign{\vskip4pt}
\inve[\velocity,2]&\frac{3}{2} \ne\tempe
\cr}} \right.
\end{equation*}
and 
\begin{equation*}
\left\{
\vcenter{\halign{
$\ppscalh{\fH^\velocity,#}{}\hfil$&$\;\,=\; #\hfil$&$\qquad#,\hfil$
\cr
\invH[\velocity,i]&\rhoi,&i\in\lourd
\cr\noalign{\vskip4pt}
\invH[\velocity,\nh+\nu]&\rhoi[\heavy]\Mh(\vitessenu-\velocitynu),&\nu\in\{1,2,3\}
\cr\noalign{\vskip4pt}
\invH[\velocity,\nh+4]&\frac{3}{2}\nH\temph + \frac{1}{2}\Mh^2\rhoi[\heavy]
(\velocity-\vitesse)\dscal(\velocity-\vitesse).
\cr}} \right.
\end{equation*}
Similarly to the low Mach number approximation for neutral gases, we decouple for the electrons the thermal energy from the mixture kinetic energy.

Then, we reword the Chapman-Enskog expansion of Section~\ref{sec:sec4} in the 
$\velocity$ frame. First, let us now formulate two propositions from the begining of this section.

\begin{proposition}[Order $\epsilon^{-2}$: electron thermalization]
The zero-order electron distribution function $\fe^{\velocity 0}$, 
solution to eq.~\eqref{eqbolbole} at order $\epsilon^{-2}$, 
$i.e.$, $\Demd(\fe^{\velocity 0})=\Jemd$,
that satisfies the scalar constraints $\ppscale{\fe^{\velocity 0},\inve[l]} = \ppscale{\fe^\velocity,\inve[l]}$, $l\in\{1,2\}$,
is a Maxwell-Boltzmann distribution function at the electron temperature
\begin{equation}\label{eqbolteben}
\fe^{\velocity 0} = \ne \left(\frac{1}{2\pi\tempe}\right)^{3/2} 
\exp\left( - \frac{1}{2\tempe}\Ce^\velocity\dscal\Ce^\velocity\right).
\end{equation}
\end{proposition}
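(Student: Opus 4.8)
The plan is to reproduce the proof of Proposition~\ref{th:propoelec} essentially verbatim, carrying out every step with the peculiar velocities $\Ce^\velocity$ in place of $\Ce$. The only substantive thing that must be checked is that passing from the $\vitesse$ frame to an arbitrary $\velocity$ frame does not alter the structure of the collision operators, since the entire argument relies on the corollaries of Section~\ref{sec:secpremi}. First I would observe that the relative velocity is frame-independent: from eq.~\eqref{eqpaculiarben} the electron peculiar velocity is shifted by $\epsilon\Mh(\velocity-\vitesse)$ while the heavy-particle one is shifted by $\Mh(\velocity-\vitesse)$, so that $\Ce^\velocity-\epsilon\Ci^\velocity=\Ce-\epsilon\Ci$. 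Since the crossed operators depend on the frame only through this combination, the expansions of Theorems~\ref{th:thmJie} and~\ref{th:theoJei}, and in particular the leading crossed operator and Corollary~\ref{th:prodenteio}, hold unchanged with $\Ce$ replaced by $\Ce^\velocity$.

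With this invariance established, I would multiply the order-$\epsilon^{-2}$ equation $\Demd(\fe^{\velocity 0})=\Jemd$ by $\ln\bigl[(2\pi)^{3/2}n^0\fe^{\velocity 0}/\partitione\bigr]$ and integrate over $\d\Ce^\velocity$. The magnetic streaming contribution $\delta_{b1}\qe(\Ce^\velocity\pvect\B)\dscal\dCe\fe^{\velocity 0}$ is eliminated by the same integration by parts as in Proposition~\ref{th:propoelec}, the magnetic force being divergence-free in velocity space. What remains is the entropy balance $\prodentee^0+\sum_{j\in\lourd}\prodenteio[j]=0$, now read in the $\velocity$ frame.

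Then I would invoke non-negativity: the classical $H$-theorem estimate gives $\prodentee^0\geq0$, while Corollary~\ref{th:prodenteio} (applied in the $\velocity$ frame) gives $\prodenteio[j]\geq0$ for each $j\in\lourd$. Since their sum vanishes, each term is zero separately. The condition $\prodentee^0=0$ forces $\ln\fe^{\velocity 0}$ to be a summational invariant of the self-collision operator $\Jee$, hence an affine combination of $1$, $\Ce^\velocity$, and $\tfrac12\Ce^\velocity\dscal\Ce^\velocity$; the conditions $\prodenteio[j]=0$ then force $\fe^{\velocity 0}$ to be isotropic in $\Ce^\velocity$ (Corollary~\ref{th:prodenteio}), which removes the drift term and places $\ln\fe^{\velocity 0}$ in $\invspace_\elec^\velocity$. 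Finally, the two macroscopic constraints $\ppscale{\fe^{\velocity 0},\inve[\velocity,l]}=\ppscale{\fe^\velocity,\inve[\velocity,l]}$, $l\in\{1,2\}$, fix the normalization $\ne$ and the temperature $\tempe$, yielding eq.~\eqref{eqbolteben}.

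The main obstacle is conceptual rather than computational: one must be certain that working in an arbitrary frame $\velocity$ does not spoil the separate signs of the two collisional entropy-production terms, on which the whole conclusion rests. The invariance $\Ce^\velocity-\epsilon\Ci^\velocity=\Ce-\epsilon\Ci$ is exactly what secures this, since it shows the crossed operator and hence Corollary~\ref{th:prodenteio} are insensitive to the choice of frame. Once that is granted, the derivation is word-for-word the $\vitesse$ case, which is why this proposition can legitimately be stated for a general $\velocity$.
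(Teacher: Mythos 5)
Your proposal is correct and matches the paper's own proof, which simply states that the argument is identical to that of Proposition~\ref{th:propoelec}; you reproduce that entropy-production argument with $\Ce^\velocity$ in place of $\Ce$. Your additional check that $\Ce^\velocity-\epsilon\Ci^\velocity=\ce-\epsilon\ci$ is frame-independent, so that the expansions of the crossed operators and Corollary~\ref{th:prodenteio} carry over to the $\velocity$ frame, is exactly the point the paper leaves implicit and is the right justification.
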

\begin{proof}
The proof is identical to the one of proposition~\ref{th:propoelec}.
\end{proof}
\begin{proposition}[Order $\epsilon^{-1}$: heavy particle thermalization]
Considering $\fe^{\velocity 0}$ given by eq.~\eqref{eqbolteben},
the zero-order family of heavy-particle distribution functions $\fHi[\heavy]^{\velocity 0}$
solution to eq.~\eqref{eqbolbolh} at order~$\epsilon^{-1}$, 
$i.e.$, $\Jimu=0$, $i\in\lourd$,
that satisfies the scalar constraints $\ppscalh{\fH^{\velocity 0},\invH[l]} = \ppscalh{\fH^\velocity,\invH[l]}$, $l\in\{1,\ldots,\nh{+}4\}$,
is a family of Maxwell-Boltzmann distribution functions at the heavy-particle temperature
\begin{equation}\label{eqboltiben}
\fHi^{\velocity 0}= \ni \left(\frac{\mi}{2\pi\temph}\right)^{3/2} 
\exp\Bigl( 
- \frac{\mi}{2\temph}
\bigl[\Ci^\velocity{-}\Mh(\vitesse{-}\velocity)\bigr]
\dscal
\bigl[\Ci^\velocity{-}\Mh(\vitesse{-}\velocity)\bigr]
\Bigr),\quad i\in\lourd.
\end{equation}
\end{proposition}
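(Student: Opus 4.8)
The plan is to mirror the proof of proposition~\ref{th:propoheavy}, adapting each step to the moving $\velocity$ frame. First I would reduce the order-$\epsilon^{-1}$ equation $\Jimu=0$ to a pure heavy-particle collisional equilibrium. The only frame-sensitive ingredient is the vanishing of the cross term $\Jieu(\fHi[i]^{\velocity 0},\fe^{\velocity 0})$: since $\fe^{\velocity 0}$ given by eq.~\eqref{eqbolteben} is an isotropic function of $\Ce^\velocity$, the integrand of $\Jieu$ is odd in its integration variable, exactly as in the oddness argument of corollary~\ref{th:Jieuiso}. That argument never uses the particular value of the referential velocity, only the isotropy of the electron distribution in its own velocity argument, so it transfers verbatim with $\Ce$ replaced by $\Ce^\velocity$. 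Consequently the equation collapses to $\sumi[j]\Jij(\fHi[i]^{\velocity 0},\fHi[j]^{\velocity 0})=0$, $i\in\lourd$.

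The second step is the classical equilibrium characterization: the only families solving $\sumi[j]\Jij(\fHi[i]^{\velocity 0},\fHi[j]^{\velocity 0})=0$ are Maxwell--Boltzmann distributions sharing a common temperature $T$ and a common mean velocity $w$, so that $\fHi^{\velocity 0}=\ni(\mi/2\pi T)^{3/2}\exp\!\bigl(-\tfrac{\mi}{2T}(\Ci^\velocity-w)\dscal(\Ci^\velocity-w)\bigr)$. This is the same classical algebra~\cite{chapman} invoked in proposition~\ref{th:propoheavy}; the shift $w$ now appears because the heavy-particle mean velocity need not coincide with the referential velocity $\velocity$.

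The third step fixes $T$ and $w$ from the macroscopic constraints $\ppscalh{\fH^{\velocity 0},\invH[l]}=\ppscalh{\fH^\velocity,\invH[l]}$. The mass constraints recover the number densities $\ni$. Taking the first moment, $\sumi[j]\int\mi[j]\Ci[j]^\velocity\fHi[j]^{\velocity 0}\,\d\Ci[j]^\velocity=\rhoi[\heavy]w$, and matching it with the momentum datum $\rhoi[\heavy]\Mh(\vitesse-\velocity)$ forces $w=\Mh(\vitesse-\velocity)$. For the energy, the second moment of a shifted Maxwellian splits as $\tfrac{3}{2}\nH T+\tfrac{1}{2}\rhoi[\heavy]w\dscal w$; substituting $w=\Mh(\vitesse-\velocity)$ makes the kinetic part equal to the extra term $\tfrac{1}{2}\Mh^2\rhoi[\heavy](\velocity-\vitesse)\dscal(\velocity-\vitesse)$ already carried by the energy datum, so the two cancel and leave $T=\temph$. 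Inserting $w$ and $T$ then yields eq.~\eqref{eqboltiben}.

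I expect the only genuinely delicate point to be this last cancellation. It works precisely because the macroscopic property list in the $\velocity$ frame was arranged to decouple the heavy-particle thermal energy from the bulk kinetic energy (the low-Mach-number-type splitting noted just before the propositions); checking that the kinetic contribution of the shifted Maxwellian matches the offset term is where the bookkeeping must be done with care. The reduction in the first step, by contrast, is routine once one notices that the oddness argument of corollary~\ref{th:Jieuiso} is frame-independent.
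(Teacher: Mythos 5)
Your proof is correct, but it handles the one frame-sensitive step by a genuinely different route than the paper. The paper does not first reduce the order-$\epsilon^{-1}$ equation to a pure heavy--heavy equilibrium: it runs the H-theorem directly on the full equation $\Jimu=0$, multiplying by $\ln\bigl[(2\pi)^{3/2}n^0\fHi^{\velocity 0}/(\mi^3\partitioni[\heavy])\bigr]$ and splitting the resulting entropy balance into heavy--heavy contributions $\prodent_{ij}^0$ and cross contributions $\prodent_{i\elec}^0$; the cross contributions are then annihilated by integrating by parts the factor $\int\partial_{\Ci}\fHi^{\velocity 0}\ln[\cdots]\,\d\Ci$, i.e., without ever deciding whether $\Jieu(\fHi^{\velocity 0},\fe^{\velocity 0})$ itself vanishes. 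You instead kill $\Jieu(\fHi^{\velocity 0},\fe^{\velocity 0})$ outright by transporting the oddness argument of corollary~\ref{th:Jieuiso} to the $\velocity$ frame. That transport is legitimate --- the collision kinematics~\eqref{vitaprescollision1} are unchanged under a common shift of the peculiar velocities, so expression~\eqref{Jieu} holds with $\Ce,\Ci$ replaced by $\Ce^\velocity,\Ci^\velocity$, and $\fe^{\velocity 0}$ is isotropic in $\Ce^\velocity$ in every frame --- but it is precisely the kind of frame-dependent verification that remark~\ref{th:rem1} warns about, and the paper's integration-by-parts step sidesteps it entirely (it would work even for a non-isotropic electron distribution). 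What your route buys is a cleaner reduction to the textbook heavy--heavy equilibrium problem and, in your third step, an explicit constraint bookkeeping that the paper compresses into ``we conclude as usual'': the momentum datum forces the drift to be $\Mh(\vitesse-\velocity)$, and the bulk kinetic energy $\tfrac{1}{2}\Mh^2\rhoi[\heavy](\vitesse-\velocity)\dscal(\vitesse-\velocity)$ of the shifted Maxwellian cancels exactly against the offset carried by the energy datum, leaving the temperature equal to $\temph$. Both arguments are sound and reach the same conclusion.
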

\begin{proof}
Multiplying the equation $\Jimu=0$ by $\ln\left[{(2\pi)^{3/2}n^0}\fHi^{\velocity 0}{(\mi^3\partitioni[\heavy])}\right]$, where the heavy-particle translational partition function reads $\partitioni[\heavy]=({2\pi m_h^0\boltz T^0}/{\planck^2})^{3/2}$, integrating over $\d\Ci$, and summing over the heavy particles  yields the entropy production 
\begin{equation*}
\sum_{i,j\in\lourd}\prodent_{ij}^0+\sumi\prodent_{i \elec}^0=0,
\end{equation*}
where the partial entropy production terms read 
\begin{eqnarray*}
\prodent_{ij}^0&=&\int \seceff |\Ci-\Ci[j]| ~\Omega(\fHi^{\velocity 0\prime}\fHi[j]^{\velocity 0\prime},\fHi^{\velocity 0}\fHi[j]^{\velocity 0})~\d\Ci\d\Ci[j]\d\ovec,\quad i,j\in\lourd,\\
\prodent_{i \elec}^0&=&  \tfrac{1}{\mi}\int\partial_{\Ci}\fHi^{\velocity 0} \ln\left[\tfrac{(2\pi)^{3/2}n^0\fHi^{\velocity 0}}{\mi^3\partitioni[\heavy]}\right]\d\Ci \dscal \int  \Qijl i\elec 1 (|\adimg|^2)
\;|\adimg| \adimg ~
\fe^{\velocity 0} ~\d\adimg,~ i\in\lourd.
\end{eqnarray*}
Integrating by parts, the terms $\prodent_{i \elec}^0$, $ i\in\lourd$, vanish and we conclude as usual.
\end{proof}

At this step, two properties appear: the electron thermalization takes place in any velocity frame, whereas the zero-order heavy particle distribution functions do not depend on the selected frame. Indeed, we clearly have $\fHi^{\velocity 0}=\fHio$, $i\in\lourd$, for all velocity $\velocity$.

Considering then the Boltzmann equation at order $\epsilon^{-1}$,
the electron first perturbation $\phie^\velocity$ satisfies the linearized Boltzmann equation
\begin{multline}\label{eqepsilon-1ben}
\boltFe(\phie^\velocity)
+\delta_{b1}\qe\dCe[\velocity](\phie^\velocity)\dscal\Ce\pvect\B=
-\frac{1}{{\feo}^\velocity}\Demuhat(\fe^{\velocity 0}) \\
+ \sumi \ni \frac{\Mh}{\tempe} \Qijl \elec i 1(|\Ce^\velocity|^2)
|\Ce^\velocity| (\vitesse-\velocity)\dscal\Ce^\velocity ,
\end{multline}
with the constraints 
\begin{equation}\label{eqconsepsilon-1ben}
\ppscale{\fe^{\velocity 0}\phie^\velocity,\inve[\velocity,l]} = 0, \qquad l\in\{1,2\}.
\end{equation}
The right-member of eq.~\eqref{eqepsilon-1ben} satisfies the constraint to be orthogonal to the collisional invariants, that is the solvability condition.
Moreover, in order to avoid treating the newly introduced term in the integro-differential equation, one can use the absence of momentum constraints
on the electron distribution function and introduce a velocity shift
$\decall = \vitesse-\velocity$ and 
notice that~\cite{lucquin1} 
\begin{equation} \label{eqeqepsilon-1Ceben}
\boltFe(\Ce^\velocity)= -
\sumi \ni  \Qijl \elec i 1 (|\Ce^\velocity|^2) |\Ce^\velocity|\Ce^\velocity ,
\end{equation}
we thus obtain that the conduction of the Chapman-Enskog expansion 
in the $\velocity$ frame is equivalent 
to that in the $\vitesse$ frame
with
\begin{equation*}
\phie^\velocity = \phie + \frac{\Mh}{\tempe} (\vitesse-\velocity)\dscal\Ce^\velocity.
\end{equation*}
As already mentioned in Section \ref{sec:secsub3}, 
the electron velocity $\ve$ can be splitted 
into two parts at the same order of the multiscale expansion
$\ve = \Mh\velocity + {\Ve}^\velocity +\ordre (\varepsilon)$, with 
${\Ve}^\velocity =\Ve + \Mh(\vitesse-\velocity)$. 
We have thus  provided a nice interpretation of the algebra proposed in Lucquin \cite{lucquin1} where the use
of $\decall=\vitesse$ allows to eliminate the presence of the term
$\sum_{j\in\lourd} \Jeju(\fe^{\velocity  0},\fHi[j]^{\velocity 0})$ in the integro-differential equation for $\phie^\velocity$ obtained when working in the inertial frame $\velocity=0$.

As a conclusion, it amounts to ``coming back'' into the mean heavy-particle velocity referential.
Let us underline at this point, that the set of equations obtained for the heavy-particle Euler equations coupled to the  zero-order electron drift-diffusion equations is identical to the set obtained in Lucquin~\cite{lucquin1}.
While still equivalent at this order of the expansion to our study
and yielding the same macroscopic equations, it leads to an artificial complexity. 
This is  a first step in the justification of the choices made in Section~\ref{sec:secdiman} 
in terms of the referential and associated simplified algebra. 
However, at order $\epsilon$, which yields 
heavy-particle Navier-Stokes equations coupled to first-order electron drift-diffusion equations, we realize that 
such a compensation used through the velocity shift $\decall$ has a nasty influence on the structure of the expansion 
at the next order (see Lemma \ref{th:uniquefe2})
and hence will make the resolution of $\phide$ difficult. Concerning the heavy-particle Boltzmann equation at order $\epsilon^0$, the first-order perturbation functions $\phii^\velocity$, $i\in\lourd$, satisfy too eq.~\eqref{eqiepsilon-1}, that implies that $\phii^\velocity=\phii$, $i\in\lourd$.

Finally, the second-order electron perturbation $\phie^{\velocity 2}$ satisfies
\begin{multline}
\label{eqepsilonoben}
\boltFe(\phie^{\velocity 2})+\delta_{b1}\qe\dCe(\phie^{\velocity 2})\dscal\Ce\pvect\B=
-\frac{1}{\fe^{\velocity 0}}\Dezhat(\fe^{\velocity 0},\phie^\velocity) \\
+\frac{1}{{\feo}^\velocity}
\Jee(\fe^{\velocity 0}\phie^\velocity,\fe^{\velocity 0}\phie^\velocity)
+\frac{1}{\fe^{\velocity 0}} \sumi[j] \Bigl[
\Jeju(\fe^{\velocity 0}\phie^\velocity,\fHi[j]^{\velocity 0})
+ \Jejd(\fe^{ \velocity 0},\fHi[j]^{\velocity 0}) \Bigr],
\end{multline}
with the constraints
\begin{equation}
\ppscale{\feo\phie^{\velocity 2},\inve[\velocity,l]}=0,\quad l\in\{1,2\}.\label{eqconsepsilonoben}
\end{equation}
In the general case, the right-member of eq.~\eqref{eqepsilonoben} is not orthogonal to the collisional invariants, and the solvability condition is not satisfied. More precisely, the two first terms $\Dezhat(\fe^{\velocity 0},\phie^\velocity)/\fe^{\velocity 0}$
and 
$\Jee(\fe^{\velocity 0}\phie^\velocity,\fe^{\velocity 0}\phie^\velocity)/\fe^{\velocity 0}$
satisfy the constraints for all velocity $\velocity$, whereas
a lengthly calculation yields that the projection of the last term over the collisional invariants $\inve[\velocity,l]$, $l\in\{1,2\}$, does not vanish, except for $\velocity=\vitesse$. 

As a conclusion on this matter, if the velocity $\velocity$ is not equal to $\vitesse$, that is to say if the selected frame is not the mean-heavy particle velocity frame, the structure of the expansion of the collision operators in terms of $\varepsilon$  prevents from solving for $\phie^{\velocity 2}$. Moreover, 
the ``trick'' used in Degond and Lucquin at previous order for $\phie$ 
also brings in additional difficulties at subsequent order through the residual 
terms associated with $\decall$. 
The choice of the mean heavy-particle velocity frame not only yields a simplified algebra---the parity properties of the distribution functions are in accordance to the expression of the multiscale collisional operators--- but is mandatory in order 
to obtain 
the well-posed second-order problem for electron distribution, second-order electron diffusion velocity, and heat flux being crucial to balance the global energy equation. This statement fully justifies the choices made in this contribution.

\section{Transport coefficients}
\label{sec:sectransport}

In this section, we investigate the electron and  heavy-particle perturbation functions in order to obtain the expressions of the transport fluxes. We only treat the case $b=1$ corresponding to a strong magnetization (inducing anisotropic transport coefficients), the case $b=0$ corresponding to a weak magnetization will be investigated in a forthcoming publication.

\subsection{Extra notations for anisotropy}

We introduce some extra notations in order to conveniently express the solution to the Boltzmann equation in the presence of a strong magnetic field. First, we define a unitary vector for the magnetic field $\Bnorme=\B/|\B|$ and also three direction matrices
\begin{equation*}
\Mpa = \Bnorme\ptens\Bnorme,
\quad
\Mpe = \identite - \Bnorme\ptens\Bnorme,
\quad
\Mt = \begin{pmatrix}
0 & -\normBnorme[3] & \normBnorme[2] \\
\normBnorme[3] & 0 & -\normBnorme[1] \\
-\normBnorme[2] & \normBnorme[1] & 0
\end{pmatrix},
\end{equation*}
so that we have for any vector $\X$ in three dimensions
\begin{equation*}
\Xpa=\Mpa\X = \X\dscal\Bnorme\;\Bnorme,
\qquad
\Xpe=\Mpe\X = \X - \X\dscal\Bnorme\;\Bnorme,
\qquad
\Xt=\Mt\X = \Bnorme\pvect\X.
\end{equation*}
In the $(\X,\Bnorme)$ plane, the vector $\Xpa$ is the component of $\X$ parallel to the magnetic field and $\Xpe$ its component perpendicular to the magnetic field. Thus, we have $\X=\Xpa+\Xpe$. The vector $\Xt$ lies in the direction transverse to the $(\X,\Bnorme)$ plane.
The three vectors $\Xpa$, $\Xpe$, and $\Xt$ are then mutually orthogonal. We will show that the transport properties are anisotropic.  In the weak magnetization limit, the transport properties are identical in the parallel and perpendicular directions and vanish  in the transverse direction.

\subsection{First-order electron perturbation function}
The first-order perturbation function $\phie$ is a solution to  eq.~\eqref{eqepsilon-1} 
\begin{equation}\label{eqphie}
\boltFe(\phie)+\qe\dCe(\phie)\dscal\Ce\pvect\B= \psie,
\end{equation}
and satisfies the constraints~\eqref{eqconsepsilon-1}, where $\psie$ is given by the expression $\psie =-\Demuhat(\feo)/\feo$ and $\feo$  by eq.~\eqref{eqbolte}. After some algebra based on the expression of $\feo$, the quantity $\psie$ is transformed into
\begin{equation}\label{psie}
\psie=-\pre\;\psieDe\dscal\de-\psielambda\dscal\dx\Bigl(\frac{1}{\tempe}\Bigr),
\end{equation}
where the electron diffusion driving force $\de$ is defined by the relation
\begin{equation}
\de=\frac{1}{\pre}\dx \pre -\frac{\ne\qe}{\pre}\Ep,
\end{equation}
and with
\begin{equation}\label{defpsiemu}
\psieDe=\frac{1}{\Mh\pre}\Ce,
\qquad
\psielambda=\frac{1}{\Mh}\Bigl(\frac{5}{2}\tempe-\frac{1}{2}\Ce\dscal\Ce\Bigr)\Ce.
\end{equation}
The right-hand-side of eq.~\eqref{eqphie} does not depend on the heavy-particle driving forces. Therefore, the first-order electron perturbation function is decoupled from the heavy-particles.

The existence and uniqueness of a solution to eq.~\eqref{psie} is given in the following proposition.

\begin{proposition}
\label{th:existencephie}
The scalar function $\phie$ given by
\begin{multline}\label{expphie}
\phie = -\pre \Re\Bigl[
\Mpa \VarphieDeun + (\Mpe+\i\Mt) \VarphieDede
\Bigr] \dscal \de\\
-  \Re\Bigl[
\Mpa \Varphielambdaun + (\Mpe+\i\Mt) \Varphielambdade
\Bigr] \dscal \dx\Bigl(\frac{1}{\tempe}\Bigr),
\end{multline}
is the solution to eq.~\eqref{eqphie} under the constraints~\eqref{eqconsepsilon-1}, where
the vectorial functions $\VarphieDeun$, $\VarphieDede$, $\Varphielambdaun$, and $\Varphielambdade$ are the solutions of the problem
\begin{gather}
\boltFe(\Varphiemuun)=\psiemu\label{eqvarphie1},\\
\bigl(\boltFe+\i|\B|\boltFev\bigr)\bigl(\Varphiemude\bigr)=\psiemu,\label{eqvarphie2}
\end{gather}
where $\boltFev(\boldsymbol{u})=\qe\boldsymbol{u}$, under the constraints
\begin{align}
\ppscale{\feo\Varphiemuun,\inve[l]}&=0,&& l\in\{1,2\},\label{eqconstvarphie1}\\
\ppscale{\feo\Varphiemude,\inve[l]}&=0,&& l\in\{1,2\},\label{eqconstvarphie2}
\end{align}
with $\mu\in\{\symbolDkl_{\elec},\lambdahat_\elec\}$.
\end{proposition}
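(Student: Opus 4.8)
The plan is to exploit the linearity of eq.~\eqref{eqphie} in the two driving forces and the vectorial ($\propto\Ce$) structure of the source terms \eqref{defpsiemu}. Writing the target as $\phie = -\pre\,\boldsymbol{\varphi}^{\symbolDkl_\elec}\dscal\de - \boldsymbol{\varphi}^{\lambdahat_\elec}\dscal\dx(\tfrac{1}{\tempe})$, it is enough, for each $\mu\in\{\symbolDkl_\elec,\lambdahat_\elec\}$, to build a vectorial function $\boldsymbol{\varphi}^\mu=\Re[\Mpa\Varphiemuun+(\Mpe+\i\Mt)\Varphiemude]$ such that $\Phi^\mu:=\boldsymbol{\varphi}^\mu\dscal\boldsymbol{d}$ (with $\boldsymbol{d}$ the associated real driving force) solves $\mathcal{L}(\Phi^\mu):=\boltFe(\Phi^\mu)+\qe(\Ce\pvect\B)\dscal\dCe\Phi^\mu=\psiemu\dscal\boldsymbol{d}$ together with the scalar constraints. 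Since $\boltFe$ is isotropic it preserves the space of vector-type functions $\chi(|\Ce|)\Ce$, to which both $\psieDe$ and $\psielambda$ belong; the auxiliary solutions $\Varphiemuun,\Varphiemude$ will therefore be sought in that space, which is exactly what makes the algebraic reduction below valid.

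The core of the argument is the reduction of the magnetic streaming term. For a vector-type $\boldsymbol{\varphi}=\chi(|\Ce|)\Ce$ and a constant vector $\boldsymbol{a}$, using $(\Ce\pvect\B)\dscal\Ce=0$ and $\Ce\pvect\B=-|\B|\Mt\Ce$, I would establish the key identity $\qe(\Ce\pvect\B)\dscal\dCe(\boldsymbol{\varphi}\dscal\boldsymbol{a})=-\qe|\B|(\Mt\boldsymbol{\varphi})\dscal\boldsymbol{a}$. Splitting along the field via $\identite=\Mpa+\Mpe$, the parallel piece $\Mpa\boldsymbol{\varphi}$ sees no magnetic term (because $\Mt\Mpa=0$), so the parallel projection obeys the field-free equation $\boltFe(\Varphiemuun)=\psiemu$, i.e.\ \eqref{eqvarphie1}. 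For the perpendicular and transverse parts I use the circularly polarized combination: contracting $(\Mpe+\i\Mt)\boldsymbol{v}$ with $\boldsymbol{d}$ gives $\boldsymbol{v}\dscal\boldsymbol{e}$ with $\boldsymbol{e}=(\Mpe-\i\Mt)\boldsymbol{d}$, and a short computation using $\Mt\Mpe=\Mt$ and $\Mt^2=-\Mpe$ yields the eigenrelation $\Mt\boldsymbol{e}=\i\boldsymbol{e}$. The key identity then converts the gyration operator into the algebraic factor $\i|\B|\qe$, so the perpendicular/transverse projection obeys precisely $(\boltFe+\i|\B|\boltFev)(\Varphiemude)=\psiemu$, i.e.\ \eqref{eqvarphie2}, where $\boltFev(g)=\qe g$. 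Reassembling with $\Mpa+\Mpe=\identite$ recovers the full source $\psiemu\dscal\boldsymbol{d}$, which is exactly the combination \eqref{expphie}.

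Existence and uniqueness of the auxiliary functions follow from the Fredholm alternative. As in the proof of Property~\ref{th:kerIe}, $\boltFe$ is symmetric and positive semi-definite for the weighted form $\ppscale{\feo\,\cdot\,,\cdot}$, with kernel $\invspace_\elec$; since every $\psiemu$ is odd in $\Ce$ it is orthogonal to the even invariants $\inve[1],\inve[2]$, so \eqref{eqvarphie1} is solvable and constraint \eqref{eqconstvarphie1} selects the unique representative orthogonal to $\invspace_\elec$. For the complex problem, the sesquilinear form attached to $\boltFe+\i|\B|\boltFev$ has nonnegative real part $\ppscale{\feo\boltFe(g),g}$ and imaginary part $|\B|\qe\int\feo|g|^2\,\d\Ce$, which cannot both vanish for $g\neq0$; the operator is thus injective, and being a multiple of the identity plus a compact (scattering) perturbation it is invertible. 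Projecting \eqref{eqvarphie2} onto $\inve[l]$ and using the symmetry of $\boltFe$ together with $\boltFe(\inve[l])=0$ shows $\ppscale{\feo\Varphiemude,\inve[l]}=0$ automatically, so \eqref{eqconstvarphie2} is consistent. Finally, because $\boldsymbol{d}$ is real and $\mathcal{L}$ is a real operator, $\mathcal{L}$ commutes with $\Re[\,\cdot\,]$, which confirms that \eqref{expphie} solves \eqref{eqphie}, and the constraints \eqref{eqconsepsilon-1} descend from \eqref{eqconstvarphie1}--\eqref{eqconstvarphie2}.

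The main obstacle I anticipate is this algebraic collapse of the first-order velocity-derivative magnetic operator into two decoupled integral problems: establishing the identity for $\qe(\Ce\pvect\B)\dscal\dCe$ on vector-type functions and the eigenrelation $\Mt\boldsymbol{e}=\i\boldsymbol{e}$, and—crucially—checking that the solutions remain in the vector-type space on which that reduction holds. The functional-analytic part (coercivity of the complex operator and the Fredholm structure of $\boltFe$) is routine once the correct weighted Hilbert-space setting is fixed.
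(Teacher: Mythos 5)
Your proof is correct and follows essentially the same route as the paper: both reduce eq.~\eqref{eqphie} by linearity and isotropy to the real parallel problem~\eqref{eqvarphie1} and the complex perpendicular/transverse problem~\eqref{eqvarphie2}, the only difference being that you verify the recombined form directly through the projector algebra ($\Mt\Mpa=0$, $\Mt^2=-\Mpe$, and the eigenrelation for the circularly polarized combination) whereas the paper derives it by positing the three-term ansatz $\phiemuun\Ce+\phiemude\,\Ce\pvect\B+\phiemutr\,\Ce\dscal\B\,\B$ and then introducing the complex unknowns. Your existence/uniqueness and vector-type-invariance remarks correspond to the discussion the paper places in the paragraph immediately following its proof, so nothing essential is missing.
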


\begin{proof}
By linearity and isotropy of the linearized Boltzmann operator $\boltFe$,
the development~\eqref{psie} of $\psie$ can be followed through for $\phie$ as well
\begin{equation*}
\phie=-\pre\PhieDe\dscal\de-\Phielambda\dscal\dx\Bigl(\frac{1}{\tempe}\Bigr).
\end{equation*}
The functions $\Phiemu$, $\mu\in\{\symbolDkl_{\elec},\lambdahat_\elec\}$, are now vectorial and satisfy the equations
\begin{equation}\label{eqPhiemu}
\boltFe(\Phiemu)+\qe\Ce\pvect\B\dscal\dCe\Phiemu=\psiemu,
\end{equation}
and the scalar constraints
\begin{equation}\label{constPhiemu}
\ppscale{\feo\Phiemu,\inve[l]}=0,\quad l\in\{1,2\}.
\end{equation}
We seek a solution $\Phiemu$ in the form
\begin{equation*}
\Phiemu=\phiemuun\Ce+\phiemude\Ce\pvect\B+\phiemutr\Ce\dscal\B\;\B,
\end{equation*}
where $\phiemuun$, $\phiemude$ and $\phiemutr$ are scalar functions of $\Ce\dscal\Ce$, $(\Ce\dscal\B)^2$ and $\B\dscal\B$, since $\Phiemu$ must be invariant under a change of coordinates.
Substituting this expansion in \eqref{eqPhiemu}, and using isotropy, eq.~\eqref{eqPhiemu} splits into three separate coupled equations
\begin{gather}
\boltFe(\phiemuun\Ce)-\qe\B\dscal\B\phiemude\Ce=\psiemu,\label{eqphie1}\\
\boltFe(\phiemude\Ce\pvect\B)+\qe\phiemuun\Ce\pvect\B=0,\label{eqphie2}\\
\boltFe(\phiemutr\Ce\dscal\B~\B)+\qe\Ce\dscal\B\phiemude\B=0.\label{eqphie3}
\end{gather}
Further simplification is now obtained if, instead of three real quantities
$\phiemuun$, $\phiemude$ and $\phiemutr$,
we introduce one real and one complex unknown defined by 
\begin{equation*}
\varphiemuun=\phiemuun+\B\dscal\B\phiemutr,
\qquad
\varphiemude=\phiemuun+\i|\B|\phiemude.
\end{equation*}
Upon introducing $\Varphiemuun=\varphiemuun\Ce$ and $\Varphiemude=\varphiemude\Ce$, eqs.~\eqref{eqphie1}, \eqref{eqphie2}, and \eqref{eqphie3} can be conveniently rewritten in terms of these new functions
\begin{gather}
\boltFe(\Varphiemuun)=\psiemu\tag{\ref{eqvarphie1}},\\
\bigl(\boltFe+\i|\B|\boltFev\bigr)\bigl(\Varphiemude\bigr)=\psiemu,\tag{\ref{eqvarphie2}}
\end{gather}
Furthermore,  the constraints~\eqref{constPhiemu} are easily rewritten in the form 
\begin{gather}
\ppscale{\feo\Varphiemuun,\inve[l]}=0,\quad l\in\{1,2\},\tag{\ref{eqconstvarphie1}}\\
\ppscale{\feo\Varphiemude,\inve[l]}=0,\quad l\in\{1,2\}.\tag{\ref{eqconstvarphie2}}
\end{gather}
Moreover, expression~\eqref{expphie} for $\phie$ is immediately obtained using the recombination formula
\begin{equation*}
\Phiemu=\Mpa\Varphiemuun+\Mpe \real\bigl(\Varphiemude\bigr)
-\Mt \imaginary\bigl(\Varphiemude\bigr).
\end{equation*}
\end{proof}

The structure of the integral equation~\eqref{eqvarphie1} under the constraints~\eqref{eqconstvarphie1} is classical and the structure of equation~\eqref{eqvarphie2} under the constraints~\eqref{eqconstvarphie2} is similar in a complex framework. More specifically, the operator 
$\boltFe+\i|\B|\boltFev$ and the associated bilinear form
$\textbf{a}(\textbf{u},\textbf{v}) = \ppscale{\textbf{u},(\boltFe+\i|\B|\boltFev)\textbf{v}}$, defined on the proper Hilbert space of complex isotropic squared integrable functions associated with the scalar product $[\cdot,\cdot]$, are such that 
$|\textbf{a}(\textbf{u},\textbf{u})|\geq [\textbf{u},\textbf{u}]$, which yields existence and uniqueness thanks to the constraints. Moreover, from the isotropy of the operator $\boltFe$, the expressions $\varphiemuun$ and $\varphiemude$ cannot be functions of $(\Ce\dscal\B)^2$ as shown in~\cite{ferziger}.

We further introduce the electron bracket operators $\crochete{\cdot,\cdot}$ and $\dblparenthesee{\cdot,\cdot}$ associated with the two operators $\boltFe$ and $\boltFev$. For any $\xi_\elec$ and $\zeta_\elec$, we define
\begin{equation*}
\crochete{\xi_\elec,\zeta_\elec}=\ppscale{\feo\xi_\elec,\boltFe(\zeta_\elec)},
\qquad
\dblparenthesee{\xi_\elec,\zeta_\elec} = |\B|\;\ppscale{\feo\xi_\elec,\boltFev(\zeta_\elec)}.
\end{equation*}
These bracket operators develop into
\begin{align*}
\crochete{\xi_\elec,\zeta_\elec}=\;& \tfrac{1}{2}\sumi[j]
\ni[j]\int\seceff[\elec j] (|\Ce|^2,\ovec\dscal\evec) |\Ce| \feo(|\Ce|\evec)\\
&\null\qquad\qquad
[\xi_\elec(|\Ce|\evec)-\xi_{\elec}(|\Ce|\ovec) ]
\pcont[\overline{{\zeta}_\elec(|\Ce|\evec)-{\zeta}_{\elec }(|\Ce|\ovec) }]
\; \d\ovec\,\d\evec\,\d|\Ce|\\
&\null+\tfrac{1}{4} \int \seceff[\elec\elec 1] |\Ce-\Ceu| \feo\feuo \\
&\null\qquad\qquad
(\xi_\elec+\xi_{\elec 1}-\xi_{\elec}^{\prime}-\xi_{\elec 1}^{\prime})
\pcont(\overline{{\zeta}_\elec+{\zeta}_{\elec 1}-{\zeta}_{\elec }^{\prime}-{\zeta}_{\elec 1}^{\prime}})
 \;\d\ovec\d\Ce\d\Ceu,
\intertext{and}
\dblparenthesee{\xi_\elec,\zeta_\elec} =\;& 
|\B|\qe\int\;\feo\xi_\elec\pcont\bar{\zeta}_\elec\;\d\Ce.
\end{align*}
The bracket operator $\crochete{\cdot,\cdot}$ is hermitian
$\crochete{\xi_\elec,\zeta_\elec} = \crochetebar{\zeta_\elec,\xi_\elec}$, positive semi-definite $\crochete{\xi_\elec,\xi_\elec}\geq0$, and its kernel is spanned by the collisional invariants, $i.e.$, $\crochete{\xi_\elec,\xi_\elec}=0$ implies that $\xi_\elec$ is a (tensorial) collisional invariant, or in other words,  all its tensorial components are in the space 
$\invspace_\elec$.
The bracket operator $\dblparenthesee{\cdot,\cdot}$ is hermitian
$\dblparenthesee{\xi_\elec,\zeta_\elec} = \dblparentheseebar{\zeta_\elec,\xi_\elec}$
and negative definite $\dblparenthesee{\xi_\elec,\xi_\elec}<0$ if $\xi_\elec\neq0$.
\begin{remark}In the limit case $\B$ tends to zero, expression~\eqref{expphie} for the first-order electron perturbation function reduces to an isotropic form. We prove indeed that, for $\mu\in\{\symbolDkl_{\elec},\lambdahat_\elec\}$, $\Varphiemuun$ does not depend on the magnetic field and that $\Varphiemude$ converges to $\Varphiemuun$ for a vanishing magnetic field.\end{remark}

The expression of the electron diffusion velocity is given in the following proposition.

\begin{proposition}
\label{th:expressionVe}
The electron diffusion velocity $\Ve$ reads
\begin{multline}\label{expVe}
\Ve=-\bigl(\Deepa\depa+\Deepe\depe+\Deet\det\bigr)\\
-\bigl(\thetaepa\glogTepa+\thetaepe\glogTepe+\thetaet\glogTet\bigr),
\end{multline}
where the diffusion coefficients and thermal diffusion coefficients are given by
\begin{equation}\label{expDeethetae}
\vcenter{\halign{
$#\hfil$&$\;\,= \hfil #$&$\hfil#,\hfil$&$\qquad #\hfil$&$\;\,= \hfil #$&$\hfil#\hfil$
\cr
\Deepa&\frac{1}{3}{\pre\tempe  \Mh} &\crochete{\VarphieDeun,\VarphieDeun}
&
\thetaepa&-\frac{1}{3}{ \Mh}&\crochete{\VarphieDeun,\Varphielambdaun} ,
\cr\noalign{\vskip4pt}
\Deepe&\frac{1}{3}{\pre\tempe \Mh}  &\crochete{\VarphieDede,\VarphieDede}
&
\thetaepe&-\frac{1}{3}{ \Mh}&\crochete{\VarphieDede,\Varphielambdade},
\cr\noalign{\vskip4pt}
\Deet&-\frac{1}{3}{\pre\tempe \Mh} &\dblparenthesee{\VarphieDede,\VarphieDede}
&
\thetaet&\frac{1}{3}{ \Mh}&\dblparenthesee{\VarphieDede,\Varphielambdade}.
\cr}}
\end{equation}
Note that the previous expressions are real, in particular for $\thetaepe$ and $\thetaet$, although functions $\VarphieDede$ and $\Varphielambdade$ are complex.
\end{proposition}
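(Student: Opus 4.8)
The plan is to insert the explicit perturbation \eqref{expphie} into the definition \eqref{defVe}, $\Ve=\tfrac{1}{\ne}\int\Ce\feo\phie\d\Ce$, which gives $\Ve=-\tfrac{\pre}{\ne}\bigl(\int\feo\,\Ce\ptens\PhieDe\,\d\Ce\bigr)\dscal\de-\tfrac{1}{\ne}\bigl(\int\feo\,\Ce\ptens\Phielambda\,\d\Ce\bigr)\dscal\dx\bigl(\tfrac{1}{\tempe}\bigr)$. Rather than track three real scalar unknowns, I would keep the vectorial functions via the recombination formula $\Phiemu=\Mpa\Varphiemuun+\Mpe\real(\Varphiemude)-\Mt\imaginary(\Varphiemude)$ and the ansatz $\Varphiemuun=\varphiemuun\Ce$, $\Varphiemude=\varphiemude\Ce$, invoking the isotropy property recorded after Proposition~\ref{th:existencephie} that $\varphiemuun$ and $\varphiemude$ depend on $\Ce$ only through $|\Ce|^2$.

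With this ansatz each tensorial moment $\int\feo\,\Ce\ptens\Phiemu\,\d\Ce$ collapses under the angular integration. Using $\int\feo\,\Ce\ptens\Ce\,g(|\Ce|^2)\d\Ce=\tfrac{1}{3}\identite\int\feo|\Ce|^2 g\,\d\Ce$ together with the symmetry of $\Mpa,\Mpe$, the antisymmetry $\Mt^{\mathsf T}=-\Mt$, and $\identite\Mpa=\Mpa$, $\identite\Mpe=\Mpe$, the parallel, perpendicular, and transverse pieces of $\Phiemu$ produce respectively $\tfrac{1}{3}\Mpa$, $\tfrac{1}{3}\Mpe$, $\tfrac{1}{3}\Mt$, each multiplied by a radial integral $\int\feo|\Ce|^2\varphiemuun\d\Ce$, or by a real/imaginary part of $\int\feo|\Ce|^2\varphiemude\d\Ce$. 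These are diagonal in the $(\Mpa,\Mpe,\Mt)$ decomposition, so no directions mix; contracting with the driving forces and using $\Mpa\de=\depa$, $\Mpe\de=\depe$, $\Mt\de=\det$ reproduces the splitting of \eqref{expVe}, the thermal part arising in the same way with $\dx\bigl(\tfrac{1}{\tempe}\bigr)$ supplying the components $\glogTepa,\glogTepe,\glogTet$.

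Next I would identify the radial integrals with the bracket operators by feeding the integral equations back in, writing $|\Ce|^2=\Ce\dscal\Ce=\Mh\pre\,\psieDe\dscal\Ce$. From \eqref{eqvarphie1}, $\psieDe=\boltFe(\VarphieDeun)$ turns the parallel radial integral into $\Mh\pre\crochete{\VarphieDeun,\VarphieDeun}$, while \eqref{eqvarphie2}, $\psieDe=(\boltFe+\i|\B|\boltFev)(\VarphieDede)$, combined with the conjugation in the second slot of $\ppscale{\cdot,\cdot}$, turns the remaining one into $\Mh\pre\bigl(\crochete{\VarphieDede,\VarphieDede}-\i\dblparenthesee{\VarphieDede,\VarphieDede}\bigr)$; its real and imaginary parts give $\Deepa,\Deepe,\Deet$ exactly as in \eqref{expDeethetae}. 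The thermal coefficients follow identically after replacing $\psieDe$ by $\psielambda$, the mixed radial integral being expressed through $\crochete{\VarphieDede,\Varphielambdade}$ and $\dblparenthesee{\VarphieDede,\Varphielambdade}$ by means of the hermiticity $\crochete{\xi,\zeta}=\crochetebar{\zeta,\xi}$, with all prefactors reconciled using $\pre/\ne=\tempe$.

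The only genuinely delicate point, which I expect to be the main obstacle, is the reality of the off-diagonal coefficients $\thetaepe$ and $\thetaet$, since $\crochete{\VarphieDede,\Varphielambdade}$ and $\dblparenthesee{\VarphieDede,\Varphielambdade}$ are a priori complex. I would settle it by splitting the complex solutions, $\VarphieDede=A_R+\i A_I$ and $\Varphielambdade=B_R+\i B_I$; because $\psieDe$ and $\psielambda$ are real, these obey the coupled real systems $\boltFe A_R-|\B|\boltFev A_I=\psieDe$, $\boltFe A_I+|\B|\boltFev A_R=0$, and likewise for $B_R,B_I$. Denoting $(\xi,\zeta)=\int\feo\,\xi\dscal\zeta\,\d\Ce$ and using that $\boltFe$ and $\boltFev$ are self-adjoint for this real form, a short manipulation gives $\imaginary\crochete{\VarphieDede,\Varphielambdade}=(A_I,\boltFe B_R)-(A_R,\boltFe B_I)$, both terms collapsing to $-|\B|(\boltFev A_R,B_R)$ and hence cancelling; the identical computation with $\boltFev$ in place of $\boltFe$ annihilates $\imaginary\dblparenthesee{\VarphieDede,\Varphielambdade}$. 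The diagonal coefficients are already real by hermiticity, so once these two cancellations are established every coefficient in \eqref{expDeethetae} is real and the rest is routine bookkeeping of the angular integrals.
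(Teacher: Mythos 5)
Your proposal is correct and follows essentially the same route as the paper: substitute the expansion of $\phie$ into the moment defining $\Ve$, use isotropy to reduce the angular integrals to the $(\Mpa,\Mpe,\Mt)$ decomposition, feed eqs.~\eqref{eqvarphie1}--\eqref{eqvarphie2} back in to convert the radial integrals into brackets, and establish the reality of $\thetaepe$, $\thetaet$ from the imaginary part of the complex integral equation together with the self-adjointness of $\boltFe$ and $\boltFev$ (the paper phrases this as four bracket identities, which is the same cancellation you obtain with $A_R,A_I,B_R,B_I$). The only cosmetic caveat is that killing $\imaginary\dblparenthesee{\VarphieDede,\Varphielambdade}$ is not literally the "identical computation" — there one uses $\boltFe A_I=-|\B|\boltFev A_R$ to trade $A_R,B_R$ for $\boltFe A_I,\boltFe B_I$ rather than the reverse — but the argument goes through.
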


\begin{proof}
Using definition~\eqref{defVe} of the diffusion velocity $\Ve$ and expression~\eqref{defpsiemu} of $\psieDe$ yields
\begin{equation*}
\Ve = \tempe\Mh\ppscale{\psieDe,\feo\phie}.
\end{equation*}
Further substituting expansion~\eqref{expphie} into the latter equation, and using isotropy, we obtain
expression~\eqref{expVe} for the diffusion velocity $\Ve$, where the transport coefficients are defined by
$\Deepa = \frac{1}{3} \pre\tempe\Mh \ppscale{\feo\varphieDeun,\psieDe}$,
$\thetaepa = -\frac{1}{3} \Mh \ppscale{\feo\varphielambdaun,\psieDe}$,
$\Deepe+\i\Deet = \frac{1}{3} \pre\tempe\Mh \ppscale{\feo\varphieDede,\psieDe}$,
$\thetaepe+\i\thetaet = -\frac{1}{3} \Mh \ppscale{\feo\varphielambdade,\psieDe}$.
Eqs.~\eqref{eqvarphie1} and \eqref{eqvarphie2} for $\mu=\symbolDkl_{\elec}$ classically yields~\cite{ferziger,graille1}
\begin{gather*}
\Deepa = \tfrac{1}{3}{\pre\tempe  \Mh} \; \crochete{\VarphieDeun,\VarphieDeun},\\
\Deepe+\i\Deet = \tfrac{1}{3}{\pre\tempe  \Mh} 
\Bigl(\crochete{\VarphieDede,\VarphieDede}-\i\dblparenthesee{\VarphieDede,\VarphieDede}\Bigr),\\
\thetaepa = -\tfrac{1}{3}{\Mh} \; \crochete{\Varphielambdaun,\VarphieDeun},\\
\thetaepe+\i\thetaet = -\tfrac{1}{3}{\Mh} 
\Bigl(\crochete{\Varphielambdade,\VarphieDede}-\i\dblparenthesee{\Varphielambdade,\VarphieDede}\Bigr).
\end{gather*}
As the bracket operators $\crochete{\cdot,\cdot}$ and $\dblparenthesee{\cdot,\cdot}$ are hermitian, we immediately conclude for expressions of $\Deepa$, $\Deepe$, $\Deet$ and $\thetaepa$. Concerning $\thetaepe$ and $\thetaet$, we use the imaginary part of eq.~\eqref{eqvarphie2} for $\mu\in\{\symbolDkl_{\elec},\lambdahat_\elec\}$, so that
\begin{equation*}
\Im\left[
\bigl(\boltFe+\i|\B|\boltFev\bigr)\bigl(\Varphiemude
\bigr)\right] = 0,\qquad 
\mu\in\{\symbolDkl_{\elec},\lambdahat_\elec\}.
\end{equation*}
Taking the scalar product of the previous equation with $\Varphiemude$, $\mu\in\{\symbolDkl_{\elec},\lambdahat_\elec\}$ yields the four following relations
\begin{equation*}
\vcenter{\halign{
$\crochete{\hfil \,#\,\hfil}$&$\;+\;\dblparenthesee{\hfil\,#\,\hfil}$&$ \;=\; 0#\hfil$
\cr
\Re\VarphieDede,\Im\Varphielambdade&\Re\VarphieDede,\Re\Varphielambdade&,
\cr\noalign{\vskip4pt}
\Im\VarphieDede,\Im\Varphielambdade&\Im\VarphieDede,\Re\Varphielambdade&,
\cr\noalign{\vskip4pt}
\Im\VarphieDede,\Im\Varphielambdade&\Re\VarphieDede,\Im\Varphielambdade&,
\cr\noalign{\vskip4pt}
\Im\VarphieDede,\Re\Varphielambdade&\Re\VarphieDede,\Re\Varphielambdade&.
\cr}}
\end{equation*}
Then, a direct calculation implies that
\begin{equation*}
\vcenter{\halign{
$#\crochete{\VarphieDede,\Varphielambdade}\hfil$&$\;=\;#,\hfil$&
$\qquad#\dblparenthesee{\VarphieDede,\Varphielambdade}\hfil$&$\;=\;#,\hfil$
\cr
\Re & -\tfrac{3}{\Mh}\thetaepe & \Re & \tfrac{3}{\Mh}\thetaet
\cr\noalign{\vskip4pt}
\Im&0&\Im&0
\cr}}
\end{equation*}
so that 
$\thetaepe = -\tfrac{1}{3}\Mh \crochete{\VarphieDede,\Varphielambdade}$ and
$\thetaet = \tfrac{1}{3}\Mh\dblparenthesee{\VarphieDede,\Varphielambdade}$.
\end{proof}

An alternative form of the diffusion velocity is given by
\begin{multline}
\Ve=-\Deepa\left(\depa+\chiepa\glogTepa\right)
-\Deepe \left( \depe + \chiepe\glogTepe + \chiet\glogTet \right) \\
-\Deet     \left( \det + \chiepe\glogTet - \chiet\glogTepe \right),
\label{eqVealt}
\end{multline}
where the real thermal diffusion ratios $\chiepa$, $\chiepe$, $\chiet$ are defined by the relations
\begin{equation}
\thetaepa=\Deepa\chiepa, \quad 
\thetaepe + \i \thetaet = (\Deepe+\i\Deet)(\chiepe+\i\chiet).\label{eqthetae}
\end{equation}
Then, the viscous tensor is calculated in the following proposition.

\begin{proposition}
The electron viscous tensor vanishes, $i.e.$,
\begin{equation}
\visqueux[\elec]=0.
\end{equation}
\end{proposition}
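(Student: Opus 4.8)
The plan is to prove the vanishing of $\visqueux[\elec]$ by a parity argument in the velocity variable. First I would recall, from the proof of Proposition~\ref{th:existencephie}, that the first-order electron perturbation admits the decomposition $\phie=-\pre\PhieDe\dscal\de-\Phielambda\dscal\dx(1/\tempe)$, in which the driving forces $\de$ and $\dx(1/\tempe)$ are independent of $\Ce$, and each vectorial function $\Phiemu$, $\mu\in\{\symbolDkl_{\elec},\lambdahat_\elec\}$, has the explicit form $\Phiemu=\phiemuun\Ce+\phiemude\,\Ce\pvect\B+\phiemutr\,(\Ce\dscal\B)\B$. The crucial structural fact is that the scalar coefficients $\phiemuun$, $\phiemude$, $\phiemutr$ depend on $\Ce$ only through the invariants $\Ce\dscal\Ce$ and $(\Ce\dscal\B)^2$ (together with $\B\dscal\B$), all of which are even in $\Ce$, whereas the three accompanying vectors $\Ce$, $\Ce\pvect\B$, and $(\Ce\dscal\B)\B$ are each linear, hence odd, in $\Ce$.

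From this I would deduce that every $\Phiemu$ is an odd vector function of the electron peculiar velocity, $\Phiemu(-\Ce)=-\Phiemu(\Ce)$, so that $\phie$, being a contraction of such odd vectors against the fixed driving forces, is itself odd in $\Ce$. Turning to the definition~\eqref{eqPie}, $\visqueux[\elec]=\int\Ce\ptens\Ce\feo\phie\;\d\Ce$, I observe that the tensor $\Ce\ptens\Ce$ is even in $\Ce$ and that the Maxwell-Boltzmann distribution $\feo$ of eq.~\eqref{eqbolte} is even in $\Ce$ as well; hence the whole integrand $\Ce\ptens\Ce\,\feo\,\phie$ is odd. Integrating an odd function over the full velocity space, which is symmetric under the reflection $\Ce\mapsto-\Ce$, yields zero, and therefore $\visqueux[\elec]=0$.

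I expect no genuine obstacle here, the only delicate point being the parity claim for the scalar functions $\phiemuun$, $\phiemude$, $\phiemutr$; this is precisely the content established in Proposition~\ref{th:existencephie}, where the isotropy of the linearized operator $\boltFe$ forces these functions to be scalar functions of the rotational invariants listed above and thus even in $\Ce$. Equivalently, one may argue that $\phie$ lives in the $\ell=1$ angular sector in velocity space while $\Ce\ptens\Ce$ carries only $\ell=0$ and $\ell=2$ content, so that their product integrated against the isotropic weight $\feo$ has no $\ell=0$ component and the moment vanishes. Either viewpoint completes the proof.
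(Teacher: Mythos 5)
Your proof is correct and follows essentially the same route as the paper: the paper's own proof simply states that the result follows ``readily'' from the definition~\eqref{eqPie} and the expansion~\eqref{expphie}, and the parity argument you spell out (each $\Phiemu$ is an odd vector function of $\Ce$ since its scalar coefficients depend only on the even invariants $\Ce\dscal\Ce$ and $(\Ce\dscal\B)^2$, while $\Ce\ptens\Ce\,\feo$ is even) is precisely the omitted detail. Your alternative angular-sector remark is a valid equivalent formulation, but it is not a genuinely different method.
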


\begin{proof}
Using definition~\eqref{eqPie} of the stress tensor and expression~\eqref{expphie} of $\phie$, one readily obtains that $\visqueux[\elec]=0$.
\end{proof}

The electron heat flux is given in the following proposition.
\begin{proposition}
\label{th:expressionqe}
The electron heat flux $\heate$ reads
\begin{multline} \label{expheate}
\heate=-\left[\lambdaepaprim\gTepa+\lambdaepeprim\gTepe+\lambdaetprim\gTet\right]\\
-\pre\left(\thetaepa\depa+\thetaepe\depe+\thetaet\det\right)+\rhoe\enthalpiee\Ve
\end{multline}
where the partial thermal conductivities are given by
\begin{equation}\label{explambdaeprim}
\vcenter{\halign{
$#\hfil$&$\;\,= \hfil #$&$\hfil#\hfil$
\cr
\lambdaepaprim&\frac{1}{3\tempe^2}{\Mh} &\crochete{\Varphielambdaun,\Varphielambdaun},
\cr\noalign{\vskip4pt}
\lambdaepeprim&\frac{1}{3\tempe^2}{\Mh} &\crochete{\Varphielambdade,\Varphielambdade},
\cr\noalign{\vskip4pt}
\lambdaetprim&-\frac{1}{3\tempe^2}{\Mh} &\dblparenthesee{\Varphielambdade,\Varphielambdade}.
\cr}}
\end{equation}
\end{proposition}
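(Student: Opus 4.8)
The plan is to mirror the proof of Proposition~\ref{th:expressionVe}, exploiting the algebraic relation between the integrand $\tfrac12\,\Ce\dscal\Ce\,\Ce$ of definition~\eqref{defheate} and the forcing term $\psielambda$ of~\eqref{defpsiemu}. First I would rewrite that integrand using $\tfrac12(\Ce\dscal\Ce)\Ce=\tfrac52\tempe\Ce-\Mh\psielambda$, so that the heat flux splits into a convective and a diffusive part,
\begin{equation*}
\heate=\tfrac52\tempe\int\Ce\feo\phie\,\d\Ce-\Mh\ppscale{\psielambda,\feo\phie}.
\end{equation*}
The first integral equals $\tfrac52\tempe\ne\Ve$ by definition~\eqref{defVe}; since $\rhoe\enthalpiee=\tfrac52\ne\tempe$ (from $\rhoe\energiee=\tfrac32\ne\tempe$ and $\pre=\ne\tempe$), this term is exactly $\rhoe\enthalpiee\Ve$, which is the last term of~\eqref{expheate}.

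For the diffusive part I would substitute expansion~\eqref{expphie} of $\phie$ into $-\Mh\ppscale{\psielambda,\feo\phie}$. Using the isotropy of $\boltFe$ and the direction-matrix decomposition along $\Mpa$, $\Mpe$, $\Mt$ exactly as in Proposition~\ref{th:expressionVe}, this scalar product separates into a contribution proportional to $\de$ and one proportional to $\dx(1/\tempe)$. For the temperature-gradient contribution I would use $\boltFe(\Varphielambdaun)=\psielambda$ and $(\boltFe+\i|\B|\boltFev)(\Varphielambdade)=\psielambda$ from~\eqref{eqvarphie1}--\eqref{eqvarphie2} with $\mu=\lambdahat_\elec$ to convert $\ppscale{\feo\Varphielambda,\psielambda}$ into the self-brackets $\crochete{\Varphielambdaun,\Varphielambdaun}$, $\crochete{\Varphielambdade,\Varphielambdade}$ and $\dblparenthesee{\Varphielambdade,\Varphielambdade}$; writing $\dx(1/\tempe)=-\tempe^{-2}\dx\tempe$ then produces the factor $\tempe^{-2}$ and the sign of definitions~\eqref{explambdaeprim}, yielding the bracketed leading term $-[\lambdaepaprim\gTepa+\lambdaepeprim\gTepe+\lambdaetprim\gTet]$.

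The $\de$-contribution must reproduce the middle term $-\pre(\thetaepa\depa+\thetaepe\depe+\thetaet\det)$. Here the key observation is that, applying $\boltFe(\Varphielambdaun)=\psielambda$, the coefficient of $\de$ is governed by the crossed bracket $\crochete{\VarphieDeun,\Varphielambdaun}$ (and its complex perpendicular and transverse analogues), which is precisely the bracket defining the thermal-diffusion coefficients $\thetaepa$, $\thetaepe$, $\thetaet$ in~\eqref{expDeethetae}. This coincidence is an Onsager-type reciprocity: it holds because $\crochete{\cdot,\cdot}$ is Hermitian, so $\crochete{\VarphieDeun,\Varphielambdaun}=\crochetebar{\Varphielambdaun,\VarphieDeun}$, matching the coefficient appearing in the diffusion velocity~\eqref{expVe}.

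The main obstacle is the anisotropic, complex bookkeeping for the perpendicular and transverse directions: one must split the complex functions $\VarphieDede$ and $\Varphielambdade$ into real and imaginary parts, take the imaginary part of~\eqref{eqvarphie2} and pair it with $\Varphielambdade$ exactly as in Proposition~\ref{th:expressionVe}, and use the hermiticity of $\crochete{\cdot,\cdot}$ together with the definite sign structure of $\dblparenthesee{\cdot,\cdot}$ to verify that the complex combinations recombine into the real coefficients $\thetaepe$, $\thetaet$, $\lambdaepeprim$, $\lambdaetprim$. Once this recombination is carried out, collecting the three contributions gives exactly~\eqref{expheate}.
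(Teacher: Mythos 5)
Your proposal is correct and follows essentially the same route as the paper: split $\tfrac12(\Ce\dscal\Ce)\Ce$ into the convective part $\rhoe\enthalpiee\Ve$ plus $-\Mh\ppscale{\psielambda,\feo\phie}$, substitute the expansion~\eqref{expphie}, and convert the resulting scalar products into brackets via eqs.~\eqref{eqvarphie1}--\eqref{eqvarphie2} and the hermiticity of $\crochete{\cdot,\cdot}$ and $\dblparenthesee{\cdot,\cdot}$. The only difference is that you make explicit the identity $\tfrac12(\Ce\dscal\Ce)\Ce=\tfrac52\tempe\Ce-\Mh\psielambda$ and the Onsager-type reciprocity identifying the cross-coefficients with the $\thetaepa$, $\thetaepe$, $\thetaet$ of Proposition~\ref{th:expressionVe}, both of which the paper leaves implicit.
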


\begin{proof}
Using definition~\eqref{defheate} of the heat flux $\heate$ and expression~\eqref{defpsiemu} of $\psielambda$ yields
\begin{equation*}
\heate = \rhoe\enthalpiee\Ve - \Mh\ppscale{\psielambda,\feo\phie}.
\end{equation*}
Further substituting expansion~\eqref{expphie} into the latter equation, and using isotropy, we obtain
expression~\eqref{expheate} for the heat flux $\heate$ where the transport coefficients 
$\thetaepa$, $\thetaepe$, $\thetaet$ are given in eq.~\eqref{expDeethetae} and 
the partial thermal conductivities
$\lambdaepaprim$, $\lambdaepeprim$, $\lambdaetprim$
are defined by
$\lambdaepaprim = \frac{1}{3\tempe^2}\Mh \ppscale{\feo\varphielambdaun,\psielambda}$,
$\lambdaepeprim+\i\lambdaetprim = 
\frac{1}{3\tempe^2}\Mh \ppscale{\feo\varphielambdade,\psielambda}$.
Eqs.~\eqref{eqvarphie1} and \eqref{eqvarphie2} for $\mu=\lambda_{\elec}$ classically yields~\cite{ferziger,graille1}
\begin{gather*}
\lambdaepaprim = \tfrac{1}{3\tempe^2}{\Mh} \; 
\crochete{\Varphielambdaun,\Varphielambdaun},\\
\lambdaepeprim+\i\lambdaetprim = \tfrac{1}{3\tempe^2}{\Mh} 
\Bigl(
\crochete{\Varphielambdade,\Varphielambdade}-
\i\dblparenthesee{\Varphielambdade,\Varphielambdade}
\Bigr).
\end{gather*}
As the bracket operators $\crochete{\cdot,\cdot}$ and $\dblparenthesee{\cdot,\cdot}$ are hermitian, we immediately conclude for the expressions of $\lambdaepaprim$, $\lambdaepeprim$, and $\lambdaetprim$.
\end{proof}

Using the thermal diffusion ratios defined in eq.~\eqref{eqthetae}, the electron heat flux is rewritten
\begin{multline} 
\heate=-\left[\lambdaepa\gTepa+\lambdaepe\gTepe+\lambdaet\gTet\right]\\
+\pre\left(\chiepa\Vepa+\chiepe\Vepe+\chiet\Vet\right)+\rhoe\enthalpiee\Ve,\label{eqqealt}
\end{multline}
where the thermal conductivities $\lambdaepa$, $\lambdaepe$, $\lambdaet$ are real quantities given by
\begin{gather*}
\lambdaepa=\lambdaepaprim-\frac{\pre}{\tempe}\Deepa\chiepa\chiepa,\\
\lambdaepe+\i\lambdaet
=\lambdaepeprim+\i\lambdaetprim
-\frac{\pre}{\tempe}(\Deepe+\i\Deet)(\chiepe+\i\chiet)(\chiepe+\i\chiet).
\end{gather*}

The positivity properties associated with the heat flux and the diffusion velocities can 
be written with the help of the mass-energy transport matrices
\begin{equation}\label{eqmatmat}
\Aepa = \begin{pmatrix}
\frac{\tempe}{\pre} \lambdaepaprim & \thetaepa \\
\thetaepa & \Deepa
\end{pmatrix},\quad
\Aepe = \begin{pmatrix}
\frac{\tempe}{\pre} \lambdaepeprim & \thetaepe \\
\thetaepe & \Deepe
\end{pmatrix},\quad
\Aet = \begin{pmatrix}
\frac{\tempe}{\pre} \lambdaetprim & \thetaet \\
\thetaet & \Deet
\end{pmatrix}.
\end{equation}

\begin{proposition}\label{th:positivitefluxepapet}
Considering any two-dimensional  real vectors $\Xpa$, $\Xpe$, and $\Xt$, the two following inequalities are satisfied
\begin{gather}
\pscal{\Aepa\Xpa,\Xpa}\geq 0, \label{posApa}\\
\pscal{\Aepe\Xpe,\Xpe} + \pscal{\Aepe\Xt,\Xt} + \pscal{\Aet\Xpe,\Xt} - \pscal{\Aet\Xt,\Xpe} \geq 0.
\label{posApeAt}
\end{gather}
\end{proposition}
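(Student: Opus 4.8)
The plan is to prove both inequalities by exhibiting each quadratic form as a strictly positive multiple of the manifestly nonnegative bracket $\crochete{\cdot,\cdot}$ evaluated on a single linear combination of the auxiliary functions $\Varphielambdaun,\VarphieDeun$ (resp. $\Varphielambdade,\VarphieDede$). The only ingredients I need are that $\Mh$, $\pre$, $\tempe$ are strictly positive, that $\crochete{\cdot,\cdot}$ is hermitian positive semi-definite while $\dblparenthesee{\cdot,\cdot}$ is hermitian negative definite, and that the cross coefficients $\thetaepe,\thetaet$ are real, as established in Proposition~\ref{th:expressionVe}.

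For the parallel inequality~\eqref{posApa} I would insert the definitions~\eqref{expDeethetae} and~\eqref{explambdaeprim} of $\Deepa$, $\thetaepa$, $\lambdaepaprim$ into $\pscal{\Aepa\Xpa,\Xpa}$ with $\Xpa=(x_1,x_2)$. Introducing the rescaled real functions $u_1=\Varphielambdaun/\sqrt{\pre\tempe}$ and $u_2=\sqrt{\pre\tempe}\,\VarphieDeun$, the three entries collapse and, using bilinearity and symmetry of $\crochete{\cdot,\cdot}$, one gets
\begin{equation*}
\pscal{\Aepa\Xpa,\Xpa}=\tfrac{1}{3}\Mh\,\crochete{x_1u_1-x_2u_2,\,x_1u_1-x_2u_2}\geq0 .
\end{equation*}
Everything here is real, the positive semi-definiteness of $\crochete{\cdot,\cdot}$ and $\Mh>0$ closing the argument with no complex bookkeeping.

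The perpendicular--transverse inequality~\eqref{posApeAt} is the substantive one, and I would exploit the complex structure that the magnetic field forces on the problem. Writing $\Xpe$ and $\Xt$ as the real and imaginary parts of a single two-component complex vector $Z=\Xpe+\i\Xt$ with components $Z_1,Z_2$, a short computation using that $\Aepe$ and $\Aet$ are real and symmetric shows that the left-hand side of~\eqref{posApeAt} equals $\Re F(Z)$ with $F(Z)=\sum_{k,l}\overline{Z_k}\,(\Aepe+\i\Aet)_{kl}\,Z_l$. I would then read off $\Aepe+\i\Aet$ from~\eqref{expDeethetae} and~\eqref{explambdaeprim}: each entry is $\tfrac{1}{3}\Mh$ times $\crochete{\cdot,\cdot}-\i\dblparenthesee{\cdot,\cdot}$ applied to the rescaled complex functions $\psi_1=\Varphielambdade/\sqrt{\pre\tempe}$ and $\psi_2=\sqrt{\pre\tempe}\,\VarphieDede$. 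The reality of $\thetaepe,\thetaet$ is used precisely here, as it forces the two off-diagonal entries to coincide and lets the $2\times2$ form contract into the single bracket
\begin{equation*}
F(Z)=\tfrac{1}{3}\Mh\Bigl(\crochete{\Psi,\Psi}-\i\,\dblparenthesee{\Psi,\Psi}\Bigr),\qquad \Psi=Z_1\psi_1-Z_2\psi_2 .
\end{equation*}

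Finally I would take the real part. Since $\Psi$ is one complex-valued function, $\crochete{\Psi,\Psi}\geq0$ and $\dblparenthesee{\Psi,\Psi}\leq0$ are both real, so $-\i\,\dblparenthesee{\Psi,\Psi}$ is purely imaginary and drops out, leaving $\Re F(Z)=\tfrac{1}{3}\Mh\,\crochete{\Psi,\Psi}\geq0$, which is exactly~\eqref{posApeAt}. I expect the main obstacle to be the middle step: checking that the antisymmetric combination $\pscal{\Aet\Xpe,\Xt}-\pscal{\Aet\Xt,\Xpe}$ is reproduced with the correct sign by the imaginary cross-terms of $F(Z)$, and that sesquilinearity together with the established reality of the cross-brackets indeed fuses the perpendicular and transverse blocks into the single form above. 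It is worth noting that, because $\Aet$ is symmetric, those transverse terms in fact cancel identically and the contribution of $\dblparenthesee{\cdot,\cdot}$ is confined to the discarded imaginary part; the complex formulation nonetheless gives the cleanest and most robust route, in line with the treatment of magnetized transport in~\cite{graille1}.
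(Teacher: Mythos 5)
Your proof is correct and follows essentially the same route as the paper: both arguments reduce each quadratic form to a nonnegative multiple of $\crochete{Y,Y}$ for a single (complex, in the perpendicular--transverse case) linear combination of $\Varphielambdaun,\VarphieDeun$ (resp. $\Varphielambdade,\VarphieDede$), your $\Psi=Z_1\psi_1-Z_2\psi_2$ being exactly the paper's $\Yde$ up to rescaling, with the reality of the cross-brackets and the hermitian positivity of $\crochete{\cdot,\cdot}$ doing the work. Your extra detour through $F(Z)$ and its real part, together with the observation that the $\Aet$ terms cancel by symmetry, is consistent with the paper's direct identity and adds nothing that needs correcting.
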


\begin{proof} Introducing
$\Xpa = (\Xpaun,\Xpade)$, $\Xpe=(\Xpeun,\Xpede)$,  and $\Xt=(\Xtun,\Xtde)$, expressions \eqref{expDeethetae} and \eqref{explambdaeprim} for transport coefficients yield 
\begin{gather*}
\pscal{\Aepa\Xpa,\Xpa} = \tfrac{\pre\tempe\Mh}{3}
\crochete{\Yun,\Yun},\\
\pscal{\Aepe\Xpe,\Xpe} + \pscal{\Aepe\Xt,\Xt} + \pscal{\Aet\Xpe,\Xt} - \pscal{\Aet\Xt,\Xpe} 
= \tfrac{\pre\tempe\Mh}{3}
\crochete{\Yde,\Yde},
\end{gather*}
with
\begin{align*}
\Yun &= \Xpade\VarphieDeun - \tfrac{1}{\pre\tempe}\Xpaun\Varphielambdaun,\\
\Yde &= (\Xpede+\i\Xtde) \VarphieDede
- \tfrac{1}{\pre\tempe} (\Xpeun+\i\Xtun) \Varphielambdade.
\end{align*}
Inequalities \eqref{posApa} and \eqref{posApeAt} are then obtained thanks to the positivity of the bracket operator $\crochete{\cdot,\cdot}$.
\end{proof}

\begin{remark}
In the limit case $\B$ tends to zero, the behavior of the transport coefficients can be investigated. We formally prove that the matrix $\Aepa$ does not depend on the magnetic field, that $\Aepe$ converges to $\Aepa$, and that $\Aet$ vanishes. So that we obtain in the limit case the same contribution as with zero magnetic field ($b=0$) for the electron diffusion velocities and heat flux~\cite{magin1}.
\end{remark}

\subsection{First-order heavy-particle perturbation function}
\label{sec:coefftransportH}

The first-order perturbation function $\phiH = {(\phii)}_{i\in\lourd}$  is solution to eq.~\eqref{eqiepsilon-1}, $i.e.$,
\begin{equation*}
\boltFi(\phi_\heavy)=\psii+\frac{1}{\fHio}\Jizhat,\quad i\in\lourd,
\end{equation*}
and satisfies the constraints~\eqref{eqiconsepsilon-1}, where $\psii=-\Diz(\fHio)/\fHio$, $i\in\lourd$. After some lengthy calculation based on the expression~\eqref{eqbolti} of $\fHo$,  the Euler eqs.~\eqref{eqeuler1}, \eqref{eqeuler3}, and \eqref{eqeuler2bis}, theorem~\ref{th:thmJie}, and corollary~\ref{th:Jiediso}, one obtains 
\begin{equation}
\boltFi(\phi_\heavy)=-\psiieta\pmat\dx\vitesse-\prh\sumi[j]\psiiDj ij \dscal\dihat[j]-\psiilambda\dscal\dx\left(\frac{1}{\temph}\right)-\psiipres(\tempe-\temph),\label{eqphiphi}
\end{equation}
where
\begin{equation}\label{eqPsii}
\left\{
\vcenter{\halign{
$#\hfil$&$\;\,=\; #,\hfil$&$\quad#\hfil$
\cr
\psiieta&\frac{\mi}{\temph}\bigl(\Ci\ptens\Ci-\frac{1}{3}\Ci\dscal\Ci\identite\bigr)
&i\in\lourd,
\cr\noalign{\vskip4pt}
\psiiDj ij&\frac{1}{\Mh\pri}\bigl(\delta_{ij}-\frac{\rhoi}{\rhoH}\bigr)\Ci
&i,j\in\lourd,
\cr\noalign{\vskip4pt}
\psiilambda&\frac{1}{\Mh}\bigl(\frac{5}{2}\temph-\frac{1}{2}\mi\Ci\dscal\Ci\bigr)\Ci
&i\in\lourd,
\cr\noalign{\vskip4pt}
\psiipres&
\frac{2}{\temph^2}
\bigl(\frac{\colfreqiez}{3\mi}-\sumi[j]\frac{\ni[j]\colfreqiez[j]}{\nH\mi[j]}\bigr)
\bigl(\frac{3}{2}\temph-\frac{1}{2}\mi\Ci\dscal\Ci\bigr)
&i\in\lourd.
\cr}} \right.
\end{equation}
Quantity $\pri=\ni\temph$ stands for the partial pressure of species $i\in\lourd$. A linearly independent family of diffusion driving forces is also introduced
\begin{equation}
\dihat=\frac{1}{\prh}\dx \pri -\frac{\ni\qi}{\prh}\Ep-\frac{\ni\Mh}{\prh}\Aie,\quad i\in\lourd.
\end{equation}
The average electron forces acting on the heavy particles  belong to the category of the diffusion driving forces and allows for a coupling between the heavy particles and electrons. Expression of $\phie$ given in eq.~\eqref{expphie}  and definition~\eqref{eqdefFie} implies that  $\Aie$, $i\in\lourd$,  is proportional to the electron diffusion driving force and the electron temperature gradient. Thus, the heavy-particle transport fluxes  to be derived are also expected to be proportional to the electron forces.

The existence and uniqueness of a solution to eq.~\eqref{eqphiphi} is then established in the following proposition.

\begin{proposition}
 The scalar functions family $\phiH = {(\phii)}_{i\in\lourd}$, given by
 \begin{equation}
\phii=-\Phiieta\pmat\dx\vitesse-\prh\sumi[j]\PhiiDj ij \dscal\dihat[j]-\Phiilambda\dscal\dx\left(\frac{1}{\temph}\right)-\Phiipres(\tempe-\temph),\quad i\in\lourd,
\label{eqphiiexp}
\end{equation}
is the solution to eq.~\eqref{eqphiphi} under the constraints~\eqref{eqiconsepsilon-1},
where the tensorial functions family $\PhiHeta = {(\Phiieta)}_{i\in\lourd}$, the vectorial functions families 
$\PhiHDj = {(\PhiiDj{i}{j})}_{i\in\lourd}$, $j\in\lourd$, and
$\PhiHlambda = {(\Phiilambda)}_{i\in\lourd}$, and the scalar functions family 
$\PhiHpres = {(\Phiipres)}_{i\in\lourd}$ are the solutions of the problems
\begin{equation}\label{eqPhiHmu}
\boltFi(\PhiHmu)=\psihmu,\quad i\in\lourd,
\end{equation}
under the scalar constraints
\begin{equation}\label{constPhiHmu}
\ppscalh{\fHo\PhiHmu,\invH[l]}=0,\quad l\in\{1,\ldots,\nh+4\},
\end{equation}
with $\mu\in\{\eta,(\symbolDkl_j)_{j\in\lourd},\lambdahat_\heavy,\Thetah\}$.
\end{proposition}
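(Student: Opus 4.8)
The plan is to solve the integral equation~\eqref{eqphiphi} by Fredholm's alternative, taking advantage of the linearity and isotropy of the heavy-particle linearized collision operator $\boltFi[\heavy]$. By Property~\ref{th:propboltFi}, this operator is self-adjoint and positive semi-definite for the weighted scalar product $\ppscalh{\fHo\,\cdot\,,\cdot}$, with kernel equal to the space of collisional invariants $\invspace_\heavy$; its range is therefore the orthogonal complement of $\invspace_\heavy$. Contrary to the electron equation~\eqref{eqphie}, equation~\eqref{eqphiphi} carries no magnetic term, so the problem has the classical Chapman-Enskog structure and its solutions are isotropic. Since the right-hand side of~\eqref{eqphiphi} is a linear combination of the four independent sources $\psihmu$, $\mu\in\{\eta,(\symbolDkl_j)_{j\in\lourd},\lambdahat_\heavy,\Thetah\}$, weighted by the mutually independent macroscopic driving forces $\dx\vitesse$, $\dihat[j]$, $\dx(1/\temph)$, and $\tempe-\temph$, it suffices to solve $\boltFi(\PhiHmu)=\psihmu$ for each $\mu$ separately and then superpose.

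First I would check the solvability condition $\psihmu\perp\invspace_\heavy$ for every tensorial type, recalling from~\eqref{eqPsii} the explicit sources. The tensorial source $\psiieta$ is a traceless symmetric rank-two tensor, hence orthogonal to all scalar and vectorial invariants: against the isotropic Maxwellian $\fHio$ its integral can only be a multiple of the identity, which the traceless projection annihilates. The vectorial sources $\psiiDj ij$ and $\psiilambda$ are odd in $\Ci$, so they are automatically orthogonal to the scalar invariants; their orthogonality to the momentum invariants $\invH[\nh+\nu]$ is built into the coefficients---the factor $(\delta_{ij}-\rhoi/\rhoH)$ forces the species sum to cancel for $\psiiDj ij$, while the coefficient $\tfrac{5}{2}$ in $\psiilambda$ is precisely the one killing the corresponding Maxwellian velocity moment. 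The scalar source $\psiipres$ is orthogonal to every species-mass invariant because $\tfrac{3}{2}\temph-\tfrac{1}{2}\mi\Ci\dscal\Ci$ integrates to zero against $\fHio$, and orthogonal to the total energy invariant thanks to the subtraction of the mass-weighted average of the collision frequencies, which is exactly what makes the net energy projection vanish.

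Next, the rotational invariance of $\boltFi[\heavy]$ guarantees that each solution inherits the tensorial nature of its source: $\Phiieta$ is a traceless symmetric tensor, $\PhiiDj ij$ and $\Phiilambda$ are vectors proportional to $\Ci$, and $\Phiipres$ is a scalar function of $\Ci\dscal\Ci$. Within each tensorial subspace the solution is determined only up to the corresponding component of $\invspace_\heavy$, and the scalar constraints~\eqref{constPhiHmu} fix that remaining freedom, yielding existence and uniqueness of each $\PhiHmu$. By linearity, the family $\phii$ defined by~\eqref{eqphiiexp} then solves~\eqref{eqphiphi}; and since every $\PhiHmu$ obeys~\eqref{constPhiHmu} while the driving forces are mere scalar, vectorial, or tensorial coefficients, the assembled perturbation satisfies the constraints~\eqref{eqiconsepsilon-1}.

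The hard part will be the verification of the solvability conditions in the scalar sector, in particular the energy orthogonality of $\psiipres$. This rests on a delicate cancellation between the collisional contribution coming from $\Jied[\heavy](\fHo,\feo)$---computed through Corollary~\ref{th:Jiediso}---and the energy-exchange term $\deltaEho$ that enters the source through the heavy-particle Euler energy equation~\eqref{eqeuler3}. The consistency of these two contributions, which is itself ensured by the fact that the Euler equations were obtained as the order-$\epsilon^0$ solvability conditions, is what ultimately makes the whole right-hand side of~\eqref{eqphiphi} orthogonal to $\invspace_\heavy$ and the integral equation well posed.
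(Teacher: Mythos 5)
Your proof is correct and follows essentially the same route as the paper's: decompose $\phii$ along the macroscopic forces by linearity and isotropy of $\boltFi[\heavy]$, reduce to the tensorial sub-problems \eqref{eqPhiHmu}--\eqref{constPhiHmu}, and obtain existence and uniqueness from the kernel characterization in Property~\ref{th:propboltFi} together with the scalar constraints. The only difference is that you spell out the orthogonality of each source $\psihmu$ to $\invspace_\heavy$, which the paper leaves implicit (it is guaranteed by construction, since the right-hand side of \eqref{eqphiphi} was obtained by enforcing the Euler equations as the Fredholm solvability conditions); this is added care rather than a divergence of method.
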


\begin{proof}
By linearity and isotropy of the linearized Boltzmann operator $\boltFi$, the development of $\psii$ can be followed through for $\phii$ as well
\begin{equation*}
\phii=-\Phiieta\pmat\dx\vitesse-\prh\sumi[j]\PhiiDj ij \dscal\dihat[j]-\Phiilambda\dscal\dx\left(\frac{1}{\temph}\right)-\Phiipres(\tempe-\temph),\quad i\in\lourd,
\end{equation*}
where the functions families $\PhiHmu$, for $\mu\in\{\eta,(\symbolDkl_j)_{j\in\lourd},\lambdahat_\heavy,\Thetah\}$ satisfy eq.~\eqref{eqPhiHmu} under the scalar constraints \eqref{constPhiHmu}.
We seek a solution in the form
\begin{align*}
\Phiimu&=\Phiimuun\Ci, & i\in\lourd,
&\quad \mu\in\{(\symbolDkl_j)_{j\in\lourd},\lambdahat_\heavy\},\\
\Phiieta&=\PhiHetaun\left(\Ci\ptens\Ci-\tfrac{1}{3}\Ci\dscal\Ci\identite\right),&i\in\lourd.
\end{align*}
Quantities $\Phiimuun$,  $\mu\in\{\eta,(\symbolDkl_j)_{j\in\lourd},\lambdahat_\heavy\}$, and $\Phiipres$ are scalar functions of $\Ci\dscal\Ci$, for $i\in\lourd$,
since $\PhiHmu$,  $\mu\in\{\eta,(\symbolDkl_j)_{j\in\lourd},\lambdahat_\heavy,\Thetah\}$ must be invariant under a change of coordinates. 
Uniqueness of the solution is readily proved based on the linearity property of the operator $\boltFi[\heavy]$, its kernel given in property~\ref{th:propboltFi}, and the constraints~\eqref{eqiconsepsilon-1} satisfied by $\phi_\heavy$.
 \end{proof}
 
We further introduce the heavy-particle bracket operator $\crocheth{\cdot,\cdot}$ associated with the operator $\boltFi[\heavy]$. For any $\xi_\heavy,\zeta_\heavy$, we define
\begin{equation*}
\crocheth{\xi_\heavy,\zeta_\heavy}=\ppscalh{\fHo\xi_\heavy,\boltFi[\heavy](\zeta_\heavy)}.
\end{equation*}
The bracket operator develops into
\begin{equation*}
\crocheth{\xi_\heavy,\zeta_\heavy}=
\tfrac{1}{4}\sum_{i,j\in\lourd}\int\fHio\fHio[j] 
(\xi_i+\xi_j-\xi_i^{\prime}-\xi_j^{\prime})\pcont(\overline{{\zeta}_i+{\zeta}_j-{\zeta}_i^{\prime}-{\zeta}_j^{\prime}})
|\Ci-\Ci[j]|\sigma_{ij} \d\ovec \d\Ci \d\Ci[j].
\end{equation*}
The bracket operator $\crocheth{\cdot,\cdot}$ is hermitian
$\crocheth{\xi_\heavy,\zeta_\heavy} = \crochethbar{\zeta_\heavy,\xi_\heavy}$, positive semi-definite $\crocheth{\xi_\heavy,\xi_\heavy}\geq0$, and its kernel is spanned by the collisional invariants, $i.e.$, $\crocheth{\xi_\heavy,\xi_\heavy}=0$ implies that $\xi_\heavy$ is a (tensorial) collisional invariant, or in other words, that all its tensorial components are in the space 
$\invspace_\heavy$.
The expression of the heavy-particles diffusion velocities is given in the following proposition.

\begin{proposition}
The diffusion velocity of species $i\in\lourd$ reads
\begin{equation}
\Vi=-\sumi[j]\Dij\dihat[j]-\thetahi\glogTh,\label{eqtranspoVi}
\end{equation}
where the diffusion coefficients and thermal diffusion coefficients are given by
\begin{equation} \label{eqdefDijthetahi}
\vcenter{\halign{
$#\hfil$&$\;\,=\; #,\hfil$&$\quad#\hfil$
\cr
\Dij
&\frac{1}{3}{\prh \temph  \Mh}\crocheth{\PhiiDj \heavy i,\PhiiDj \heavy j }
&i,j\in\lourd,
\cr\noalign{\vskip4pt}
\thetahi
&-\frac{1}{3}{\Mh}\crocheth{\PhiiDj \heavy i,\Phiilambda[\heavy]}
&i\in\lourd.
\cr}}
\end{equation}
\end{proposition}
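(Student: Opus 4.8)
The plan is to reproduce the argument of Proposition~\ref{th:expressionVe}, now using the heavy-particle scalar product $\ppscalh{\cdot,\cdot}$ and bracket $\crocheth{\cdot,\cdot}$. First I would insert the expansion~\eqref{eqphiiexp} of $\phii$ into the definition~\eqref{eqVi}, writing $\ni\Vi=\int\Ci\fHio\phii\,\d\Ci$ as a sum of four driving-force contributions. Since $\Phiieta$ is an even tensorial function of $\Ci$ and $\Phiipres$ an even scalar function of $\Ci\dscal\Ci$, the integrands $\Ci\,\Phiieta$ and $\Ci\,\Phiipres$ are odd and vanish against the isotropic $\fHio$, so only the diffusion and thermal-gradient terms survive. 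As recalled in the existence proof, $\PhiiDj ij$ and $\Phiilambda$ are vectorial and proportional to $\Ci$; isotropy then reduces each remaining integral through $\int\Ci\ptens\Ci\,g(\Ci\dscal\Ci)\,\fHio\,\d\Ci=\tfrac13\identite\int\Ci\dscal\Ci\,g\,\fHio\,\d\Ci$, giving $\Vi=-\tfrac{\prh}{3\ni}\sumi[j]\bigl(\int\Ci\dscal\PhiiDj ij\,\fHio\,\d\Ci\bigr)\dihat[j]-\tfrac{1}{3\ni}\bigl(\int\Ci\dscal\Phiilambda\,\fHio\,\d\Ci\bigr)\dx\bigl(\tfrac{1}{\temph}\bigr)$.

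The core step is to turn each scalar integral into a symmetric bracket. Here I would use that the diffusion sources~\eqref{eqPsii} are proportional to $\Ci$, namely $\psiiDj kj=\tfrac{1}{\Mh\pri[k]}\bigl(\delta_{kj}-\rhoi[k]/\rhoH\bigr)\Ci[k]$, to recognise $\int\Ci\dscal\PhiiDj ij\,\fHio\,\d\Ci$, up to the factor $\Mh\ni\temph$, as the scalar product $\ppscalh{\fHo\,\PhiiDj{\heavy}{j},(\psiiDj{k}{i})_{k\in\lourd}}$. Expanding this scalar product produces the wanted diagonal term together with a spurious contribution $\tfrac{1}{\temph\rhoH}\sumi[k]\mi[k]\int\Ci[k]\dscal\PhiiDj{k}{j}\,\fHio[k]\,\d\Ci[k]$ coming from the $\rhoi[k]/\rhoH$ factor. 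I expect this term to be the genuinely delicate point: it vanishes precisely because of the momentum constraints~\eqref{constPhiHmu}, $\ppscalh{\fHo\,\PhiiDj{\heavy}{j},\invH[\nh+\nu]}=0$, which by isotropy state that $\sumi[k]\mi[k]\int\Ci[k]\dscal\PhiiDj{k}{j}\,\fHio[k]\,\d\Ci[k]=0$. The identical argument, with $\PhiiDj{\heavy}{j}$ replaced by $\Phiilambda[\heavy]$, disposes of the thermal-gradient integral. This cancellation is what reconciles the asymmetric-looking definition of $\Vi$ with the manifestly symmetric brackets.

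Finally, I would use eq.~\eqref{eqPhiHmu}, so that $\boltFi[\heavy](\PhiiDj{\heavy}{i})$ and $\boltFi[\heavy](\Phiilambda[\heavy])$ equal the corresponding source families, together with the definition $\crocheth{\xi_\heavy,\zeta_\heavy}=\ppscalh{\fHo\xi_\heavy,\boltFi[\heavy](\zeta_\heavy)}$, to obtain $\int\Ci\dscal\PhiiDj ij\,\fHio\,\d\Ci=\Mh\ni\temph\,\crocheth{\PhiiDj{\heavy}{i},\PhiiDj{\heavy}{j}}$ and $\int\Ci\dscal\Phiilambda\,\fHio\,\d\Ci=\Mh\ni\temph\,\crocheth{\PhiiDj{\heavy}{i},\Phiilambda[\heavy]}$, where the hermitian symmetry $\crocheth{\xi_\heavy,\zeta_\heavy}=\crochethbar{\zeta_\heavy,\xi_\heavy}$ has been used to order the arguments as stated (and which incidentally gives the Onsager reciprocity of the matrix $\Dij$). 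Substituting these into the reduced expression for $\Vi$, using $\pri[i]=\ni\temph$ and $\dx\bigl(\tfrac{1}{\temph}\bigr)=-\tfrac{1}{\temph}\glogTh$, and collecting prefactors yields exactly $\Vi=-\sumi[j]\Dij\dihat[j]-\thetahi\glogTh$ with $\Dij=\tfrac13\prh\temph\Mh\,\crocheth{\PhiiDj{\heavy}{i},\PhiiDj{\heavy}{j}}$ and $\thetahi=-\tfrac13\Mh\,\crocheth{\PhiiDj{\heavy}{i},\Phiilambda[\heavy]}$, as claimed.
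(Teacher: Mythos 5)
Your proposal is correct and follows essentially the same route as the paper: the paper first uses the structure of $\psiiDj{\heavy}{i}$ together with the momentum constraint to write $\Vi=\temph\Mh\ppscalh{\psiiDj{\heavy}{i},\fHo\phii[\heavy]}$ and then substitutes the expansion of $\phii[\heavy]$, whereas you substitute first and convert each surviving moment into a bracket afterwards — the same ingredients (parity, isotropy, the momentum constraints, the defining equations $\boltFi[\heavy](\PhiHmu)=\psihmu$, and the hermitian symmetry of $\crocheth{\cdot,\cdot}$) in a different order. Your identification of the cancellation of the $\rhoi/\rhoH$ term via the constraints~\eqref{constPhiHmu} is exactly the point the paper leaves implicit.
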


\begin{proof}
Using definition~\eqref{eqVi} of the diffusion velocity and expression~\eqref{eqPsii} of $\psiiDj \heavy i$, $i\in\lourd$, yields
\begin{equation*}
\Vi=\temph\Mh\ppscalh{\psiiDj \heavy i,\fHo\phii[\heavy]},\quad i\in\lourd.
\end{equation*}
Further substituting expansion~\eqref{eqphiiexp} into the latter equation, we obtain expression~\eqref{eqtranspoVi} of the diffusion velocities.
\end{proof}
In particular, the heavy-particle diffusion velocities are thus proportional to the electron driving force and electron temperature gradient through the $\Aie$ contribution to $\di$, $i\in\lourd$.  Kolesnikov~\cite{anatoliy2} has already introduced electron heavy-particle diffusion coefficients and thermal diffusion coefficients and ratios  to couple the heavy-particle diffusion velocities to the electron forces. Therefore, we propose to refer to this phenomenom as the Kolesnikov effect for the heavy particles.

>From the properties of the bracket operator, we infer that the diffusion matrix $\Dij[]$ is symmetric. Moreover,  an alternative form of the diffusion velocities is given by
\begin{equation}
\Vi=-\sumi[j]\Dij\left(\dihat[j]+\chihi[j]\glogTh\right),\quad i\in\lourd,\label{eqtranspoVialt}
\end{equation}
where the thermal diffusion ratios are defined from the relations
\begin{equation}\label{eqchih}
\left\{ \begin{aligned}
&\sumi[j]\Dij\chihi[j]=\thetahi,\quad i\in\lourd,\\ 
&\sumi[j]\chihi[j]=0.
\end{aligned}
\right.
\end{equation}
Then, we introduce the tensor $$\tensS=\left[\dx \vitesse +(\dx \vitesse)^T\right]-\frac{2}{3} \dx\dscal\vitesse\;\identite,$$
in order to express the viscous tensor in the following proposition.

\begin{proposition}
The heavy-particle viscous tensor reads
\begin{equation}
\visqueux[\heavy]=-\etah\tensS,\label{eqtranspoPi}
\end{equation}
where the shear viscosity is given by
\begin{equation}
\etah=\frac{\temph}{10}\crocheth{\Phiieta[\heavy],\Phiieta[\heavy]}.
\end{equation}
\end{proposition}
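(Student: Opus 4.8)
The plan is to follow the same pattern as the proofs of the diffusion-velocity and electron-viscous-tensor propositions: start from the definition~\eqref{eqPii} of the heavy-particle viscous tensor and insert the expansion~\eqref{eqphiiexp} of the perturbation functions $\phii$. The first step is to argue that, among the four terms of $\phii$, only the one carrying $\Phiieta$ survives the integration against the even, symmetric rank-two weight $\mi\Ci\ptens\Ci$. The diffusion and thermal-diffusion contributions, built on the vectorial functions $\PhiiDj$ and $\Phiilambda$, are odd in $\Ci$, so their product with $\Ci\ptens\Ci$ integrates to zero. The pressure contribution $\Phiipres$ is an isotropic scalar function of $\Ci\dscal\Ci$, hence $\sumi\int\mi\Ci\ptens\Ci\fHio\Phiipres\d\Ci$ is proportional to $\identite$; its trace equals $\sumi\int\mi\Ci\dscal\Ci\fHio\Phiipres\d\Ci$, which vanishes by the energy constraint ($l=\nh{+}4$) in~\eqref{constPhiHmu}, so this term contributes nothing either.

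Next I would rewrite the weight with the help of $\psiieta$ in~\eqref{eqPsii}, namely $\mi\Ci\ptens\Ci=\temph\,\psiieta+\tfrac{\mi}{3}\Ci\dscal\Ci\,\identite$. The isotropic piece produces $\tfrac{1}{3}\identite\sumi\int\mi\Ci\dscal\Ci\,\fHio(\Phiieta\pmat\dx\vitesse)\,\d\Ci$, and since $\Phiieta$ is a traceless symmetric tensor, this angular integral against the isotropic $\fHio$ vanishes. Thus only the $\psiieta$ piece remains and
\begin{equation*}
\visqueux[\heavy]=-\temph\sumi\int\psiieta\,(\Phiieta\pmat\dx\vitesse)\,\fHio\,\d\Ci.
\end{equation*}

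I would then invoke isotropy of $\fHio$ together with the traceless symmetric structure of $\psiieta$ and $\Phiieta$ to reduce the rank-four integral $\sumi\int(\psiieta)_{ab}(\Phiieta)_{cd}\fHio\,\d\Ci$ to a scalar multiple $\kappa$ of the symmetric traceless projector $\tfrac12(\delta_{ac}\delta_{bd}+\delta_{ad}\delta_{bc})-\tfrac13\delta_{ab}\delta_{cd}$. Contracting with $\dx\vitesse$ turns the expression into $-\tfrac12\temph\kappa\,\tensS$, which establishes the tensorial form~\eqref{eqtranspoPi} with $\etah=\tfrac12\temph\kappa$. The scalar $\kappa$ is fixed by the full contraction: the trace of the projector equals $5$ in three dimensions, so $5\kappa=\sumi\int\psiieta\pcont\Phiieta\,\fHio\,\d\Ci$. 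Finally, using $\boltFi(\Phiieta)=\psiieta$ from~\eqref{eqPhiHmu} and the bracket definition $\crocheth{\PhiHeta,\PhiHeta}=\ppscalh{\fHo\PhiHeta,\boltFi[\heavy](\PhiHeta)}$ identifies this full contraction with $\crocheth{\PhiHeta,\PhiHeta}$, and the factors $\tfrac{\temph}{2}\cdot\tfrac15$ yield $\etah=\tfrac{\temph}{10}\crocheth{\PhiHeta,\PhiHeta}$.

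The main obstacle is the bookkeeping of the tensorial isotropy reduction and the numerical constants. One must check carefully that the $-\tfrac13\Ci\dscal\Ci\,\identite$ subtraction present in both $\psiieta$ and $\Phiieta$ makes the relevant fourth-rank integral a pure traceless symmetric projector, and that the trace of this projector under a symmetric traceless contraction is exactly $5$; this is precisely what produces the coefficient $\tfrac{1}{10}$ rather than another constant. By contrast, the parity and isotropy arguments discarding the diffusion and thermal-diffusion terms are routine, though the energy-constraint argument killing the pressure term is essential and should be stated explicitly.
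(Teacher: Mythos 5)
Your proof is correct and follows essentially the same route as the paper's (much terser) argument: rewrite the weight in terms of $\psiieta[\heavy]$, discard the odd and isotropic contributions of $\phi_\heavy$ using parity, isotropy and the scalar constraints, and reduce the remaining rank-four isotropic integral to the traceless symmetric projector, whose trace $5$ produces the factor $\tfrac{1}{10}$. The details you supply, in particular the role of the energy constraint in killing the $\Phiipres$ term and the bookkeeping of the projector contraction, are exactly what the paper leaves implicit.
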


\begin{proof}
Using definition~\eqref{eqPii} of the viscous tensor and expression~\eqref{eqPsii} of $\psiieta[\heavy]$  yields
\begin{equation*}
\visqueux[\heavy]=\temph\ppscalh{\psiieta[\heavy],\fHo\phi_\heavy}.
\end{equation*}
Further substituting expansion~\eqref{eqphiiexp} into the latter equation, we obtain expression~\eqref{eqtranspoPi} of the viscous tensor.
\end{proof}

The expression of the heavy-particle heat flux is given in the following proposition.

\begin{proposition}
The heavy-particle heat flux reads
\begin{equation}
\label{eqtranspoqh} 
\heati[\heavy]=-\lambdaHprim\dx\temph-\prh\sumi[j]\thetahi[j]\dihat[j]+\sumi[j]\rhoi[j]\enthalpiei[j]\Vi[j],
\end{equation}
where the partial thermal conductivity is given by
\begin{equation} \label{eqdeflambdaHprim}
\lambdaHprim=\frac{1}{3\temph^2}{\Mh}\crocheth{\Phiilambda[\heavy],\Phiilambda[\heavy]}.
\end{equation}
\end{proposition}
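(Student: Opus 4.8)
The plan is to follow the template already established for the heavy-particle diffusion velocities and viscous tensor, that is, to rewrite the defining velocity moment~\eqref{eqqi} of $\heati[\heavy]$ as a partial scalar product against the invariant $\psiilambda[\heavy]$ and then inject the expansion~\eqref{eqphiiexp}. First I would use the algebraic identity read directly from~\eqref{eqPsii},
\begin{equation*}
\tfrac{1}{2}\mi\Ci\dscal\Ci\,\Ci=\tfrac{5}{2}\temph\Ci-\Mh\psiilambda,\quad i\in\lourd,
\end{equation*}
to split the integrand of $\heati[\heavy]$ into a part proportional to $\Ci$ and a part proportional to $\psiilambda$. The first part produces $\tfrac{5}{2}\temph\sumi[j]\ni[j]\Vi[j]$ upon recognizing $\int\Ci[j]\fHio[j]\phii[j]\d\Ci[j]=\ni[j]\Vi[j]$ from~\eqref{eqVi}, and the monatomic translational identity $\rhoi\enthalpiei=\tfrac{5}{2}\ni\temph$ recasts this as the convective term $\sumi[j]\rhoi[j]\enthalpiei[j]\Vi[j]$. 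At this stage the claim reduces to
\begin{equation*}
\heati[\heavy]=\sumi[j]\rhoi[j]\enthalpiei[j]\Vi[j]-\Mh\ppscalh{\psiilambda[\heavy],\fHo\phi_\heavy},
\end{equation*}
the heavy-particle counterpart of the formula used in Proposition~\ref{th:expressionqe}.

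Next I would substitute~\eqref{eqphiiexp} into the residual scalar product. Since $\psiilambda$ is odd in $\Ci$, a parity and isotropy argument identical to the one for the electron heat flux kills the tensorial $\Phiieta$ term (even in $\Ci$ after double contraction with $\dx\vitesse$) and the scalar $\Phiipres$ term, leaving only the $\PhiiDj$ and $\Phiilambda$ contributions. For these I would insert the representations $\Phiilambda[j]=\varphi_j\Ci[j]$ and $\PhiiDj jk=\varphi_{jk}\Ci[j]$ and apply the isotropic averaging rule $\int g(|\Ci|^2)\,\Ci\ptens\Ci\,\d\Ci=\tfrac{1}{3}\bigl(\int g(|\Ci|^2)\,|\Ci|^2\d\Ci\bigr)\identite$, which is precisely where the factor $\tfrac{1}{3}$ of~\eqref{eqdeflambdaHprim} is generated. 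The $\Phiilambda$ piece then becomes $\tfrac{\Mh}{3}\crocheth{\Phiilambda[\heavy],\Phiilambda[\heavy]}\dx(1/\temph)$, and after using $\boltFi[\heavy](\Phiilambda[\heavy])=\psiilambda[\heavy]$ from~\eqref{eqPhiHmu} to identify the bracket and $\dx(1/\temph)=-\temph^{-2}\dx\temph$, it collapses to $-\lambdaHprim\dx\temph$.

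For the diffusion piece I would use the symmetry of the scalar product together with $\boltFi[\heavy](\Phiilambda[\heavy])=\psiilambda[\heavy]$ to write the isotropically reduced contribution as $\tfrac{\Mh\prh}{3}\sumi[j]\crocheth{\PhiiDj \heavy j,\Phiilambda[\heavy]}\dihat[j]$, and then invoke the definition $\thetahi=-\tfrac{1}{3}\Mh\crocheth{\PhiiDj \heavy i,\Phiilambda[\heavy]}$ from~\eqref{eqdefDijthetahi} to obtain $-\prh\sumi[j]\thetahi[j]\dihat[j]$. Collecting the three pieces yields~\eqref{eqtranspoqh} and~\eqref{eqdeflambdaHprim}. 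I expect the only non-mechanical points to be the careful bookkeeping of the $\tfrac{1}{3}$ isotropy factor and the systematic use of the relation $\boltFi[\heavy](\PhiHmu)=\psihmu$ to convert scalar products against $\psiilambda[\heavy]$ into the bracket $\crocheth{\cdot,\cdot}$; the remaining manipulations are the same parity and linearity steps already used for $\Ve$, $\visqueux[\heavy]$, and $\heate$.
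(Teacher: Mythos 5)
Your proposal is correct and follows essentially the same route as the paper: rewrite $\heati[\heavy]$ via the identity $\tfrac{1}{2}\mi\Ci\dscal\Ci\,\Ci=\tfrac{5}{2}\temph\Ci-\Mh\psiilambda$ to obtain $\heati[\heavy]=-\Mh\ppscalh{\psiilambda[\heavy],\fHo\phi_\heavy}+\tfrac{5}{2}\temph\sumi[j]\ni[j]\Vi[j]$, then substitute the expansion~\eqref{eqphiiexp}. You merely spell out the parity, isotropy, and bracket-identification steps that the paper leaves implicit, and all of those details check out.
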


\begin{proof}
Using definition~\eqref{eqqi} of the heavy-particle heat flux and expression~\eqref{eqPsii} of 
$\psiilambda$ yields
\begin{equation*}
\heati[\heavy]=-\Mh\ppscalh{\psiilambda[\heavy],\fHo\phi_\heavy}+\frac{5}{2}\temph\sumi[j]\ni[j]\Vi[j].
\end{equation*}
Further substituting expansion~\eqref{eqphiiexp} into the latter equation, we obtain expression~\eqref{eqtranspoqh} of the heat flux.
\end{proof}

Using the thermal diffusion ratios defined in eq.~\eqref{eqchih}, the heavy-particle heat flux is rewritten
\begin{equation}
\label{eqtranspoqh2} 
\heati[\heavy]=
-\lambdaH\dx\temph+\prh\sumi[j]\chihi[j]\Vi[j]+\sumi[j]\rhoi[j]\enthalpiei[j]\Vi[j],
\end{equation}
where the thermal conductivity is given by
\begin{equation}
\lambdaH=\lambdaHprim-\nH\sumi[j]\thetahi[j]\chihi[j].
\end{equation}

The positivity properties associated with the heat flux and the diffusion velocities can 
be written with the help of the mass-energy transport matrix
\begin{equation*}
\Ah = \begin{pmatrix}
\frac{\temph}{\prh} \lambdaHprim & [(\thetahi)_{i\in\lourd}]^T\\
(\thetahi)_{i\in\lourd}& (\Dij)_{i,j\in\lourd}
\end{pmatrix}.
\end{equation*}

\begin{proposition}\label{th:positivitefluxh}
The heavy particles mass-energy transport matrix $\Ah$ is symmetric, positive semi-definite,  and its kernel is onedimensional and spanned by the vector
$[0,(\rhoi)_{i\in\lourd}]^T$.
\end{proposition}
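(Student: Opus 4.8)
The plan is to reduce every assertion about $\Ah$ to a property of the single heavy-particle bracket $\crocheth{\cdot,\cdot}$, in complete analogy with the electron computation of Proposition~\ref{th:positivitefluxepapet}. Symmetry is immediate: the bracket is hermitian and the response functions $\Phiilambda[\heavy]$, $\PhiiDj{\heavy}{j}$ are real, so $\crocheth{\PhiiDj{\heavy}{i},\PhiiDj{\heavy}{j}}=\crocheth{\PhiiDj{\heavy}{j},\PhiiDj{\heavy}{i}}$; hence the diffusion block $(\Dij)$ is symmetric and the two off-diagonal blocks $(\thetahi)$ and $(\thetahi)^T$ are mutual transposes, while the scalar entry is trivially symmetric.

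For positive semi-definiteness, given a real vector $\xi=(\xi_0,(\xi_i)_{i\in\lourd})$ I would assemble the single function family
\begin{equation*}
Y_\xi = \sqrt{\tfrac{\Mh}{3\prh\temph}}\,\xi_0\,\Phiilambda[\heavy]
- \sqrt{\tfrac{\prh\temph\Mh}{3}}\,\sumi[j]\xi_j\,\PhiiDj{\heavy}{j},
\end{equation*}
and verify, using the bilinearity of $\crocheth{\cdot,\cdot}$ together with the definitions \eqref{eqdeflambdaHprim} and \eqref{eqdefDijthetahi}, that $\pscal{\Ah\xi,\xi}=\crocheth{Y_\xi,Y_\xi}$. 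The normalising constants are chosen precisely so that the coefficients of $\xi_0^2$, $\xi_0\xi_i$ and $\xi_i\xi_j$ match those of $\frac{\temph}{\prh}\lambdaHprim$, $\thetahi$ and $\Dij$ respectively. Since $\crocheth{\cdot,\cdot}$ is positive semi-definite, this yields $\pscal{\Ah\xi,\xi}\ge 0$.

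For the kernel I would argue as follows. As $\Ah$ is symmetric and positive semi-definite, $\xi\in\ker\Ah$ iff $\pscal{\Ah\xi,\xi}=0$, i.e. iff $\crocheth{Y_\xi,Y_\xi}=0$. By the kernel property of the bracket recalled above, this forces every tensorial component of $Y_\xi$ to be a collisional invariant, i.e. $Y_\xi\in\invspace_\heavy$. But each $\Phiilambda[\heavy]$ and $\PhiiDj{\heavy}{j}$ satisfies the constraints~\eqref{constPhiHmu}, so $Y_\xi$ is orthogonal to $\invspace_\heavy$ for $\ppscalh{\fHo\,\cdot\,,\cdot}$; consequently $\ppscalh{\fHo Y_\xi,Y_\xi}=0$, which forces $Y_\xi=0$ since $\fHio>0$. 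It remains to identify $\{\xi:Y_\xi=0\}$: applying $\boltFi[\heavy]$ and using~\eqref{eqPhiHmu} gives $\boltFi[\heavy](Y_\xi)=\Psi_\xi$, where $\Psi_\xi$ is the \emph{same} linear combination of the source terms $\psiilambda$ and $\psiiDj{i}{j}$; since these sources lie in $\invspace_\heavy^\perp$ and $\boltFi[\heavy]$ is injective there, $Y_\xi=0$ iff $\Psi_\xi=0$. Writing the $i$-th component of $\Psi_\xi$ with the explicit expressions~\eqref{eqPsii}, factoring out $\Ci$, and separating the part proportional to $\Ci\dscal\Ci$ from the constant part produces two scalar conditions: the $\Ci\dscal\Ci$ term forces $\xi_0=0$, and the constant term then forces $\xi_i=(\rhoi/\rhoH)\sumi[j]\xi_j$ for every $i$, i.e. $(\xi_i)_{i\in\lourd}$ proportional to $(\rhoi)_{i\in\lourd}$. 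Hence $\ker\Ah$ is one-dimensional and spanned by $[0,(\rhoi)_{i\in\lourd}]^T$, which in particular re-proves that this vector is annihilated (indeed $\Psi_{[0,(\rhoi)]}$ vanishes componentwise, since $\rhoi-\frac{\rhoi}{\rhoH}\rhoH=0$).

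The main obstacle is the final step: the cleanest part of the argument is the passage $\crocheth{Y_\xi,Y_\xi}=0\Rightarrow Y_\xi=0$ via the constraints, but turning $Y_\xi=0$ into an explicit condition on $\xi$ requires the invertibility of $\boltFi[\heavy]$ on $\invspace_\heavy^\perp$ to transfer the problem to the sources, followed by the velocity-parity bookkeeping that decouples $\xi_0$ (carried by the $\Ci\dscal\Ci$ dependence of $\psiilambda$) from $(\xi_i)$ (carried by the purely $\propto\Ci$ dependence of $\psiiDj{i}{j}$). Everything else is a direct transcription of the bracket formalism already developed for the electrons.
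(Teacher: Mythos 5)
Your proof is correct and follows essentially the same route as the paper: write $\pscal{\Ah\xi,\xi}$ as a single bracket $\crocheth{Y_\xi,Y_\xi}$ (your square-root normalisation is just the paper's prefactor $\tfrac{1}{3}\prh\temph\Mh$ redistributed), use positivity of the bracket, use the constraints~\eqref{constPhiHmu} to rule out nonzero collisional invariants, and identify the kernel by transferring the linear dependence from the $\Phi$'s to the right members $\psiilambda$, $\psiiDj{i}{j}$. The only difference is that you make explicit the rank computation (the $\Ci\dscal\Ci$ term forcing $\xi_0=0$ and the relation $\sumi[j]\rhoi[j]\psiiDj{i}{j}=0$) which the paper states in one line as a rank equality.
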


\begin{proof} 
We consider a vector $\X$ written in the form 
$\X = [\XhT,(\Xhi)_{i\in\lourd}]^T$ and
introduce the family $\Yh = {(\Yhi)}_{i\in\lourd}$ given by
\begin{equation*}
\Yhi = \sumi[j] \Xhi[j] \PhiiDj{i}{j} - \frac{1}{\prh\temph}\XhT \Phiilambda.
\end{equation*}
Expressions \eqref{eqdefDijthetahi} and \eqref{eqdeflambdaHprim} for transport coefficients yield 
\begin{equation*}
\pscal{\Ah\X,\X} = \tfrac{1}{3}\prh\temph\Mh \; \crocheth{\Yh,\Yh}.
\end{equation*}
The positivity is then obtained thanks to the positivity of the heavy-particle bracket operator $\crocheth{\cdot,\cdot}$. 
Moreover, using the scalar constraints \eqref{constPhiHmu}, 
that imply that $\Yh$ is orthogonal to the collisional invariants,
the quantity $\pscal{\Ah\X,\X}$ vanishes if $\Yh$ is a collisional invariant,
consequently if $\Yh=0$. Finally, the linear rank of the family 
$(\Phiilambda,\PhiiDj{i}{1},\ldots,\PhiiDj{i}{\nh})$ is exactly $\nh$ because it is the rank of the corresponding right member $(\psiilambda,\psiiDj{i}{1},\ldots,\psiiDj{i}{\nh})$. We then conclude that $\Yh=0$ if and only if $\X$ lies in the space spanned by the vector
$[0,(\rhoi)_{i\in\lourd}]^T$.
\end{proof}

\subsection{Second-order electron perturbation function}

The second-order perturbation function $\phide$ is a solution to  eq.~\eqref{eqepsilono}, $i.e.$,
\begin{equation}\label{eqphiedeux}
\boltFe(\phide)+\qe\dCe(\phidi[\elec])\dscal\Ce\pvect\B=
\psied,
\end{equation}
and satisfies the constraints~\eqref{eqconsepsilono}, where 
\begin{equation*}
\psied=\frac{1}{\feo}\Bigl(-\Dezhat(\feo,\phie)+\Jee(\feo\phie,\feo\phie)+\Jezhat\Bigr).
\end{equation*}
Introducing second-order heavy-particle diffusion driving forces $\did = - \ni\Vi$, $i\in\lourd$, 
one obtains after some lengthy calculation 
\begin{equation*}\label{eqsimplifiedphide} 
\psied=-\psieeta\pmat\dx\vitesse
-\pre \sumi \psieDj\dscal\did - \psiedtilde,
\end{equation*}
where $\psiedtilde$ is a scalar function of $\Ce\dscal\Ce$, and
\begin{equation}
\left\{
\vcenter{\halign{
$#\hfil$&$\;\,=\; #\hfil$
\cr
\psieeta&
\frac{1}{\tempe}(\Ce\ptens\Ce-\frac{1}{3}\Ce\dscal\Ce\identite),
\cr\noalign{\vskip4pt}
\psieDj & \frac{1}{\pre\tempe}\Qijl \elec i 1 (|\Ce|^2) |\Ce|\Ce,\quad i \in\lourd.
\cr}} \right.
\end{equation}
The coupling  of the electrons with the heavy particles occurs in the integral equation for the second-order perturbation function through the $\did$ forces, $i\in\lourd$. Thus, the second-order electron transport fluxes  to be derived are also expected to be proportional to the heavy-particle forces.

The complete resolution of eq.~\eqref{eqphiedeux} is not necessary since we only need to express the second-order transport fluxes $\Vde$ and $\heatde$ in terms of bracket operators. Consequently, we only have to examine the contribution of the two vectorial terms $\psieDe$ and $\psieDj$, $i\in\lourd$.

\begin{proposition}
The scalar function $\phie$ given by
\begin{equation}\label{eqphide}
\phide = - \Phieeta\pmat\dx\vitesse -  \pre \sumi  
\Re\Bigl[
\Mpa \VarphieDjun[i] + (\Mpe{+}\i\Mt) \VarphieDjde[i]
\Bigr] \dscal \did - \phidetilde,
\end{equation}
is the solution to eq.~\eqref{eqphiedeux} under the constraints~\eqref{eqconsepsilono}.
The vectorial functions $\VarphieDjun[i]$, $\VarphieDjde[i]$, $i\in\lourd$, are the solutions of the problems
\begin{gather}
\boltFe(\VarphieDjun[i])=\psieDj[i]\label{eqvarphied1},\\
\bigl(\boltFe+\i|\B|\boltFev\bigr)\bigl(\VarphieDjde[i]\bigr)=\psieDj[i],\label{eqvarphied2}
\end{gather}
under the constraints
\begin{align}
\ppscale{\feo\VarphieDjun[i],\inve[l]}&=0,&& l\in\{1,2\},\label{eqconstvarphied1}\\
\ppscale{\feo\VarphieDjde[i],\inve[l]}&=0,&& l\in\{1,2\}.\label{eqconstvarphied2}
\end{align}
The tensorial function $\Phieeta$ verifies
\begin{equation*}
\boltFe(\Phieeta)+\qe\dCe(\Phieeta)\dscal\Ce\pvect\B = \psieeta,
\end{equation*}
and the function $\phidetilde$ is a scalar function of $\Ce\dscal\Ce$ and  $(\Ce\dscal\B)^2$.
\end{proposition}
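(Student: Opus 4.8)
The plan is to mirror, almost verbatim, the argument of the proof of Proposition~\ref{th:existencephie}, exploiting the linearity and isotropy of $\boltFe$. Before decomposing, I would settle the solvability of eq.~\eqref{eqphiedeux} via the Fredholm alternative: by property~\ref{th:kerIe} the kernel of $\boltFe$ is $\invspace_\elec$, spanned by $\inve[1]$ and $\inve[2]$; the magnetic term $\qe\dCe(\phidi[\elec])\dscal\Ce\pvect\B$ is orthogonal to $\invspace_\elec$ for the scalar product $\ppscale{\cdot,\cdot}$; and the required orthogonality of the right-hand side $\psied$ is nothing but the order-$\epsilon^0$ electron conservation equations~\eqref{eqdrift1}--\eqref{eqdrift2}, which were themselves obtained as this very solvability condition. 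Hence a solution exists, and uniqueness is enforced by the constraints~\eqref{eqconsepsilono}.

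Next I would use linearity and isotropy to transfer the decomposition $\psied=-\psieeta\pmat\dx\vitesse-\pre\sumi\psieDj\dscal\did-\psiedtilde$ to the unknown, producing the form~\eqref{eqphide} with a tensorial part $\Phieeta$, one vectorial part per heavy species $i\in\lourd$, and a scalar remainder $\phidetilde$. Since only the fluxes $\Vde$ and $\heatde$ are needed, I would keep the tensorial and scalar parts implicit: $\phidetilde$ can depend only on $\Ce\dscal\Ce$ and $(\Ce\dscal\B)^2$ by invariance under the orthogonal transformations fixing $\B$, while $\Phieeta$ is the unique solution of its tensorial equation under the constraints.

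For each vectorial contribution I would repeat the reduction performed for $\Phiemu$ in~\eqref{eqPhiemu}. One seeks a coordinate-invariant solution $\Phi^{(i)}$ as a combination of $\Ce$, $\Ce\pvect\B$, and $(\Ce\dscal\B)\B$; substitution into $\boltFe(\Phi^{(i)})+\qe\Ce\pvect\B\dscal\dCe\Phi^{(i)}=\psieDj[i]$ and use of isotropy splits it into three coupled scalar equations. Introducing one real and one complex scalar unknown decouples these into the two problems~\eqref{eqvarphied1}--\eqref{eqvarphied2} for $\VarphieDjun[i]$ and $\VarphieDjde[i]$, whose existence and uniqueness under~\eqref{eqconstvarphied1}--\eqref{eqconstvarphied2} follow from the coercivity of $\boltFe$ and the accretivity estimate $|\textbf{a}(\textbf{u},\textbf{u})|\geq[\textbf{u},\textbf{u}]$ for $\boltFe+\i|\B|\boltFev$ established after Proposition~\ref{th:existencephie}. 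The recombination formula $\Phi^{(i)}=\Mpa\VarphieDjun[i]+\Mpe\real(\VarphieDjde[i])-\Mt\imaginary(\VarphieDjde[i])$ then reproduces exactly the vectorial term of~\eqref{eqphide}.

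The main obstacle is not the linear-algebraic reduction, which is mechanical and identical to the first-order case, but the bookkeeping ``lengthy calculation'' that yields the clean decomposition of $\psied$ in the first place: one must verify that the streaming and quadratic contributions packaged into $\psied$---in particular $\Dezhat(\feo,\phie)$, $\Jee(\feo\phie,\feo\phie)$, and $\Jezhat$---combine so that the scalar remainder $\psiedtilde$ is genuinely orthogonal to $\inve[1]$ and $\inve[2]$ and so that the vectorial part collapses onto the single response direction $\psieDj[i]$ per species. This orthogonality is precisely what the order-$\epsilon^0$ conservation laws guarantee, so the delicate point is careful algebra rather than any new analytic difficulty.
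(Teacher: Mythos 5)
Your proposal is correct and follows essentially the same route as the paper, whose proof simply observes that the argument of Proposition~\ref{th:existencephie} carries over verbatim because eqs.~\eqref{eqphie} and \eqref{eqphiedeux} differ only in their right-hand sides. Your added remarks on the Fredholm alternative and on the decomposition of $\psied$ are consistent with how the paper established solvability at order $\epsilon^0$, so they expand rather than alter the argument.
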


\begin{proof}
The proof of this proposition is identical to the one of proposition~\ref{th:existencephie} since eqs.~\eqref{eqphie} and \eqref{eqphiedeux} for $\phie$ and $\phide$ only differ with their second members.
\end{proof}

The expressions of the second-order electron diffusion velocity and heat flux and of the average electron force are given in the following proposition.
\begin{proposition}
\label{th:expressionVede}
The second-order electron diffusion velocity $\Vde$ is given by
\begin{equation}\label{expVde}
\Vde=-\sumi \bigl(\Dejpa[i]\didpa+\Dejpe[i]\didpe+\Dejt[i]\didt\bigr).
\end{equation}
The second-order electron heat flux $\heatde$ reads
\begin{equation} \label{expheatde}
\heatde=- \pre \sumi
\left(\thetaeipa\didpa+\thetaeipe\didpe+\thetaeit\didt\right)
+\rhoe\enthalpiee\Vde.
\end{equation}
The average electron force $\Aie$ acting on heavy particles $i\in\lourd$ is given by
\begin{multline}
\Aie = -\frac{\pre}{\Mh} \bigl(
\Dejpa[i]\depa+\Dejpe[i]\depe+\Dejt[i]\det \bigr) \\
-\frac{\pre}{\Mh} \bigl(
\thetaeipa\glogTepa+\thetaeipe\glogTepe+\thetaeit\glogTet
\bigr).
\end{multline}
The diffusion coefficients and thermal diffusion coefficients read
\begin{equation}\label{expDei}
\vcenter{\halign{
$#\hfil$&$\;=\hfil#$&$\hfil#,\hfil$&$\quad#\hfil$&$\;=\hfil#$&$\hfil#,\hfil$&$\quad i\in\lourd#$
\cr
\Dejpa[i]&\frac{1}{3}{\pre\tempe  \Mh} &\crochete{\VarphieDeun,\VarphieDjun[i]}&
\thetaeipa&-\frac{1}{3}{\Mh} &\crochete{\VarphieDjun[i],\Varphielambdaun}
&,
\cr\noalign{\vskip4pt}
\Dejpe[i]&\frac{1}{3}{\pre\tempe \Mh}  &\crochete{\VarphieDede,\VarphieDjde[i]}&
\thetaeipe&-\frac{1}{3}{\Mh} &\crochete{\VarphieDjde[i],\Varphielambdade}
&,
\cr\noalign{\vskip4pt}
\Dejt[i]&-\frac{1}{3}{\pre\tempe \Mh} &\dblparenthesee{\VarphieDede,\VarphieDjde[i]}&
\thetaeit&\frac{1}{3}{\Mh} &\dblparenthesee{\VarphieDjde[i],\Varphielambdade}
&.
\cr}}
\end{equation}
Note that the previous expressions are real, in particular for $\Dejpe[i]$, $\Dejt[i]$, $\thetaeipe$, and $\thetaeit$, $i\in\lourd$, although functions $\Varphielambdade$, $\VarphieDede$ and $\VarphieDjde[i]$ are complex.
\end{proposition}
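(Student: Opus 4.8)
The plan is to reproduce verbatim the argument used for Propositions~\ref{th:expressionVe} and~\ref{th:expressionqe}, since eq.~\eqref{eqphiedeux} for $\phide$ has exactly the same operator structure as eq.~\eqref{eqphie} for $\phie$ and differs only through its right-hand side. First I would recast the three transport quantities as scalar products against the appropriate source functions. From definition~\eqref{eqdefVede} and the form of $\psieDe$ in~\eqref{defpsiemu} one writes $\Vde=\tempe\Mh\,\ppscale{\psieDe,\feo\phide}$; from definition~\eqref{eqdefqde} and the form of $\psielambda$ one writes $\heatde=\rhoe\enthalpiee\Vde-\Mh\,\ppscale{\psielambda,\feo\phide}$; and from definition~\eqref{eqdefFie} together with the identity $\Qijl i\elec 1=\Qijl \elec i 1$ and the form of $\psieDj[i]$ one writes $\Aie=\pre\tempe\,\ppscale{\psieDj[i],\feo\phie}$.

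Next I would substitute the expansion~\eqref{eqphide} of $\phide$ into the first two scalar products and the expansion~\eqref{expphie} of $\phie$ into the third. By isotropy of $\boltFe$, the tensorial term $\Phieeta$ and the scalar term $\phidetilde$ in~\eqref{eqphide} cannot contribute to a vectorial quantity, so only the terms proportional to the heavy-particle forces $\did$, $i\in\lourd$, survive in $\Vde$ and $\heatde$, whereas only the $\de$- and $\dx(1/\tempe)$-terms of~\eqref{expphie} survive in $\Aie$. This produces~\eqref{expVde}, \eqref{expheatde}, and the force formula as linear combinations of the respective driving forces, with coefficients assembled from scalar products such as $\ppscale{\feo\VarphieDeun,\psieDj[i]}$.

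The core step is then to turn these scalar products into bracket operators using the integral equations~\eqref{eqvarphie1}--\eqref{eqvarphie2} and~\eqref{eqvarphied1}--\eqref{eqvarphied2}. For the parallel part one uses $\boltFe(\VarphieDeun)=\psieDe$ and $\boltFe(\VarphieDjun[i])=\psieDj[i]$, so that $\ppscale{\feo\VarphieDeun,\psieDj[i]}=\crochete{\VarphieDeun,\VarphieDjun[i]}$; hermiticity of $\crochete{\cdot,\cdot}$ and reality of the ``un'' functions make this symmetric and real, yielding $\Dejpa[i]$ and $\thetaeipa$ as in~\eqref{expDei}. Crucially, the \emph{same} bracket $\crochete{\VarphieDeun,\VarphieDjun[i]}$ that multiplies $\didpa$ in $\Vde$ reappears as the coefficient of $\depa$ in $\Aie$: this reciprocity is what forces the force coefficients to coincide with the flux coefficients and is the structural heart of the Kolesnikov coupling. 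For the perpendicular and transverse parts I would invoke the complex equation~\eqref{eqvarphied2} together with the recombination formula of the proof of Proposition~\ref{th:existencephie}, which splits each scalar product into a $\crochete{\cdot,\cdot}$-contribution and a $\dblparenthesee{\cdot,\cdot}$-contribution.

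The main obstacle, and the only genuinely delicate verification, is the reality of the coefficients $\Dejpe[i]$, $\Dejt[i]$, $\thetaeipe$, and $\thetaeit$, since $\VarphieDjde[i]$ and $\Varphielambdade$ are complex. I would treat this exactly as in the closing lines of Proposition~\ref{th:expressionVe}: taking the imaginary part of~\eqref{eqvarphied2} gives $\Im\bigl[(\boltFe+\i|\B|\boltFev)(\VarphieDjde[i])\bigr]=0$, and pairing this relation with $\VarphieDjde[i]$ and with $\Varphielambdade$ in the scalar product, while exploiting the hermiticity of both brackets, forces the imaginary parts of the relevant cross-brackets to vanish. This gives the real identifications $\thetaeipe=-\tfrac{1}{3}\Mh\,\crochete{\VarphieDjde[i],\Varphielambdade}$ and $\thetaeit=\tfrac{1}{3}\Mh\,\dblparenthesee{\VarphieDjde[i],\Varphielambdade}$, and the analogous real forms for $\Dejpe[i]$ and $\Dejt[i]$, completing the identification of all coefficients in~\eqref{expDei}.
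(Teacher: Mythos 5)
Your proposal is correct and follows essentially the same route as the paper, whose proof of this proposition simply states that the definitions~\eqref{eqdefVede}, \eqref{eqdefqde}, and \eqref{eqdefFie} combined with ``the same proof as that of proposition~\ref{th:expressionVe}'' yield the conclusion; you have merely written out in full the steps the paper leaves implicit (recasting the fluxes as scalar products, invoking isotropy to discard the $\Phieeta$ and $\phidetilde$ contributions, converting to brackets via~\eqref{eqvarphie1}--\eqref{eqvarphie2} and~\eqref{eqvarphied1}--\eqref{eqvarphied2}, and repeating the imaginary-part argument for the reality of the perpendicular and transverse coefficients). The reciprocity you highlight between the coefficients of $\didpa$ in $\Vde$ and of $\depa$ in $\Aie$ is a correct and worthwhile observation, consistent with the paper's expressions.
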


\begin{proof}
Using definition \eqref{eqdefVede} (respectively \eqref{eqdefqde} and \eqref{eqdefFie}) of the second-order diffusion velocity $\Vde$ (respectively the second-order electron heat flux $\heatde$ and average electron force $\Aie$, $i\in\lourd$), the same proof as that of proposition~\ref{th:expressionVe} yields to conclude.
\end{proof}
\begin{remark}
The term $\Phieeta\pmat\dx\vitesse$
of eq.~\eqref{eqphide} contributes to a second-order electron momentum relation not  investigated here.
 \end{remark}
\begin{remark}
To the authors's knowledge, it is the first time that such second-order transport coefficients are rigorously derived from a multiscale analysis.  The second-order electron diffusion velocity and heat flux are thus proportional to the heavy-particle diffusion velocities. That is the Kolesnikov effect for the electrons. However, it is important to mention that the second-order electron transport fluxes shall not be confused with the Burnett transport fluxes appearing in second-order macroscopic equations~\cite{ferziger}.
 \end{remark}

\section{Conservation equations}\label{sec:secentropy}

 We review the  heavy-particle Navier-Stokes eqs.~\eqref{eqNS1}-\eqref{eqNS3} and electron drift-diffusion eqs.~\eqref{eqdrift21} and \eqref{eqdrift22}. We also derive a total energy equation and an entropy equation. Then, we introduce a conservative  formulation of the system of equations.
  
 \subsection{Mass}
 
 The species mass conservation equations read
\begin{gather}
\dt\rhoe + \dx\dscal\Bigl[
\rhoe\bigl(\vitesse+\tfrac{1}{\Mh}(\Ve+\varepsilon\Vde)\bigr)
\Bigr] =  0,\label{eqsum1}\\
\dt\rhoi + \dx\dscal\Bigl[
\rhoi\bigl(\vitesse+\tfrac{\varepsilon}{\Mh}\Vi\bigr)
\Bigr] = 0,\quad i\in\lourd.\label{eqsum2}
\end{gather}
Summing eq.~\eqref{eqsum2} over $i\in\lourd$ and using the constraint 
$\sum_{j\in\lourd}\rhoi[j]\Vi[j]=0$ given in eq.~\eqref{eqiconsepsilon-1}, a heavy-particle mass  conservation equation is obtained
\begin{equation}\label{coucou}
\dt\rhoH + \dx\dscal(\rhoH\vitesse) = 0.
\end{equation}
The heavy-particle mass is conserved in the mean heavy-particle velocity referential. Then, adding the electron drift eq.~\eqref{eqsum1} to eq.~\eqref{coucou} and using
eq.~\eqref{eqrelation}, $i.e.$,
\begin{equation*}
\rho\speed  = \rhoi[\heavy]\vitesse + \varepsilon^2\rhoe \bigl( \vitesse + \tfrac{1}{\Mh}(\Ve + \varepsilon\Vde)\bigr),
\end{equation*}
 a  conservation equation of global mass $\rho=\rhoi[\heavy]+\varepsilon^2\rhoe$ is also established
 \begin{equation}
\dt\rho + \dx\dscal(\rho\speed) = 0.
\end{equation}
The global mass is conserved in the hydrodynamic referential, 
although the transport fluxes are calculated in the mean heavy-particle velocity referential. 
It is the only place where the difference between the global hydrodynamic velocity and 
the mean heavy-particle velocity, of the order of $\varepsilon^2$, plays an essential role.
It is another evidence of the coherence of our formalism compared to other approaches found in the
literature.

 \subsection{Momentum}
  The momentum conservation is expressed by 
\begin{equation}
\dt(\rhoH\vitesse)+\dx\dscal(\rhoH\vitesse\ptens\vitesse+\tfrac{1}{\Mh^2} p\identite)=
-\tfrac{\varepsilon}{\Mh^2}\dx\dscal\visqueux[\heavy]+\tfrac{1}{\Mh^2}nq\E 
+[\delta_{b0}\courantel_0+\delta_{b1}\courantel]\pvect\B. \label{eqsum3}
\end{equation}
A flow kinetic energy is obtained by projecting the previous equation onto the mean heavy-particle velocity
\begin{multline}
\dt(\tfrac{1}{2}\rhoH|\vitesse|^2)+\dx\dscal\Bigl[
\vitesse\bigl(\tfrac{1}{2}\rhoH|\vitesse|^2+ \tfrac{1}{\Mh^2} {p}\bigr)\Bigr]
=\tfrac{1}{\Mh^2}p\;\dx\dscal\vitesse-\tfrac{\varepsilon}{\Mh^2}\vitesse\dscal\dx\dscal\visqueux[\heavy]\\
+\tfrac{1}{\Mh^2}nq\E\dscal\vitesse+\vitesse\dscal(\delta_{b0}\courantel_0+\delta_{b1}\courantel)\pvect\B
 \label{eqkinenergy}.
\end{multline}

\subsection{Energy} 

The electron energy equation reads
\begin{multline}\label{eqsum4}
\dt(\rhoe\energiee)+\dx\dscal\left(\rhoe\energiee\vitesse\right)=-\pre\dx\dscal\vitesse-\tfrac{1}{\Mh}\dx\dscal\left(\heate+\varepsilon\heatde\right)\\
+\tfrac{1}{\Mh}\left(\JJe+\varepsilon\Jde\right)\dscal\Ep
+\delta_{b0}{\varepsilon}{\Mh}\JJe\dscal\vitesse\pvect\B+\deltaEeo+\varepsilon\deltaEeu,
\end{multline}
and the heavy-particle energy equation reads
\begin{multline}\label{eqsum5}
\dt(\rhoH\energiei[\heavy])+\dx\dscal(\rhoH\energiei[\heavy]\vitesse)=
-(\pri[\heavy]\identite+\varepsilon\visqueux[\heavy])\pmat\dx\vitesse-\tfrac{\varepsilon}{\Mh}\dx\dscal\heati[\heavy] +\tfrac{\varepsilon}{\Mh}\JJi[\heavy]\dscal\Ep\\
+\deltaEho+\varepsilon\deltaEhu.
\end{multline}
So that a global energy equation is derived by summing eqs.~\eqref{eqsum4} and \eqref{eqsum5}
\begin{multline}
\dt(\rho\energiei[])+\dx\dscal\left(\rho\energiei[]\vitesse\right)=
-(p\identite+\varepsilon\visqueux[\heavy])\pmat\dx\vitesse
-\tfrac{1}{\Mh}\dx\dscal\fluxchaleur\\
+\tfrac{1}{\Mh}\left(\JJe+\varepsilon\Jde+\varepsilon\JJi[\heavy]\right)\dscal\Ep
+\delta_{b0}{\varepsilon}{\Mh}\JJe\dscal\vitesse\pvect\B\label{eqsum5bis},
\end{multline}
where quantity  $\fluxchaleur=\heate+\varepsilon\heatde+\varepsilon\heati[\heavy]$ is  the total heat flux.
Finally, a total energy equation is derived by adding eq.~\eqref{eqkinenergy}
\begin{equation}\label{eqsum5ter}
\dt(\energie)+\dx\dscal\left(\enthalpie\vitesse\right)=-\varepsilon\dx\dscal(\visqueux[\heavy]\dscal\vitesse)-\tfrac{1}{\Mh}\dx\dscal\fluxchaleur+\courantel\dscal\E,
\end{equation}
where quantity $\energie=\rho e+\Mh^2\rhoH\tfrac{1}{2}|\vitesse|^2$ stands for  the total energy and $\enthalpie=\rho\energie+{p} $, the total enthalpy. The term $\courantel\dscal\E$ of eq.~\eqref{eqsum5ter} represents the power developed by the electromagnetic field. It has the form prescribed by Poynting's theorem. Hence, the first principle of thermodynamics is satisfied.

\subsection{Electron and heavy-particle entropy equations}

In addition to the thermal energy, we introduce other relevant thermodynamic functions. First, the species Gibbs free energy is defined by the relations
\begin{equation}
\rhoe\gibbse=\ne\tempe\ln\left(\frac{\ne n^0}{\tempe^{3/2}\partitione}\right),\quad\rhoi\gibbsi=\ni\temph\ln\left[\frac{\ni n^0}{\left(\mi\temph\right)^{3/2}\partitioni[\heavy]}\right],\quad i\in\lourd,
\end{equation}
where the translational partition functions  read
\begin{equation}
\partitione=\left(\frac{2\pi \me^0\boltz T^0}{\planck^2}\right)^{3/2},\quad
\partitioni[\heavy]=\left(\frac{2\pi m_h^0\boltz T^0}{\planck^2}\right)^{3/2}.
\end{equation}
Then, the species enthalpy is given by 
\begin{equation}
\rhoe\enthalpiee=\frac{5}{2}\ne\tempe,\quad
\rhoi\enthalpiei=\frac{5}{2}\ni\temph,\quad i\in\lourd.
\end{equation}
Finally, the species entropy is introduced as
\begin{equation}
\entropiee=\frac{\enthalpiee-\gibbse}{\tempe},\quad
\entropiei=\frac{\enthalpiei-\gibbsi}{\temph},\quad i\in\lourd.\label{eqthermos}
\end{equation}
Therefore, the mixture entropy reads $\rho\entropie=\sum_{j\in\espece}\rhoi[j]\entropiei[j]$. The thermodynamic functions exhibit a wider range of validity than in classical thermodynamics, introduced for stationary homogeneous equilibrium states~\cite{giovangigli1}.  Indeed,  they are interpreted in the framework of kinetic theory by establishing a relation between the thermodynamic entropy  and the kinetic entropy. The latter quantity is  based upon the distribution functions
\begin{multline}\label{eqkinentropy}
\kinentropy=\sumi[j]\int \fHi[j]\left\{1-\ln\left[\frac{(2\pi)^{3/2}n^0}{\mi[j]^3\partitioni[\heavy]}\fHi[j]\right]\right\}\d\Ci[j]\\
+\int \fe\left\{1-\ln\left[\frac{(2\pi)^{3/2}n^0}{\partitione}\fe\right]\right\}\d\Ce.
\end{multline}

\begin{proposition}
The kinetic entropy and the thermodynamic entropy are asymptotically equal at order $\varepsilon^2$, $i.e.$,
\begin{equation}
\kinentropy=\rho\entropie+\ordre(\epsilon^2),\label{eqkinkin}
\end{equation}
provided that the distribution functions follow the Enskog expansion given in 
eqs. \eqref{conste} and \eqref{constH}.
\end{proposition}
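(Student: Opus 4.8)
The plan is to insert the Chapman--Enskog expansions \eqref{chapenske} and \eqref{chapenski} into the kinetic entropy \eqref{eqkinentropy} and to expand the integrand $f\bigl(1-\ln(\kappa f)\bigr)$ in powers of $\varepsilon$, where $\kappa$ denotes the relevant normalisation constant, namely $\kappa=(2\pi)^{3/2}n^0/\partitione$ for the electrons and $\kappa=(2\pi)^{3/2}n^0/(\mi^3\partitioni[\heavy])$ for the heavy species $i\in\lourd$. Writing a generic distribution as $f=f^0(1+\varepsilon\phi_1+\varepsilon^2\phi_2)+\ordre(\varepsilon^3)$ and using $\ln(1+\varepsilon\phi_1+\varepsilon^2\phi_2)=\varepsilon\phi_1+\varepsilon^2(\phi_2-\tfrac12\phi_1^2)+\ordre(\varepsilon^3)$, one obtains with $L^0=\ln(\kappa f^0)$ the pointwise expansion
\begin{equation*}
f\bigl(1-\ln(\kappa f)\bigr)=f^0(1-L^0)-\varepsilon\,f^0\phi_1 L^0-\varepsilon^2 f^0\bigl(\phi_2 L^0+\tfrac12\phi_1^2\bigr)+\ordre(\varepsilon^3).
\end{equation*}
I would apply this to the electron term with $f^0=\feo$, $\phi_1=\phie$, and to each heavy term with $f^0=\fHio$, $\phi_1=\phii$, then integrate over the peculiar velocities and collect orders of $\varepsilon$.

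At order $\varepsilon^0$ I would carry out the elementary Gaussian integrals. For the electrons, $L^0_\elec=\ln\bigl(n^0\ne/(\partitione\tempe^{3/2})\bigr)-\Ce\dscal\Ce/(2\tempe)$, and a direct computation using $\int\feo\,\d\Ce=\ne$ and $\int\feo\,\Ce\dscal\Ce\,\d\Ce=3\ne\tempe$ gives $\int\feo(1-L^0_\elec)\,\d\Ce=\tfrac52\ne-\ne\ln\bigl(n^0\ne/(\partitione\tempe^{3/2})\bigr)$, which is exactly $\rhoe\entropiee$ as defined in \eqref{eqthermos} through $\rhoe\enthalpiee=\tfrac52\ne\tempe$ and the Gibbs free energy $\rhoe\gibbse$. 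The analogous computation for each heavy species reproduces $\rhoi\entropiei$, and summing yields the $\varepsilon^0$ term $\rhoe\entropiee+\sumi[j]\rhoi[j]\entropiei[j]=\rho\entropie$. Here I would stress that, by the constraints \eqref{conste} and \eqref{constH}, the macroscopic fields $\ne,\tempe,\ni,\temph$ entering $\rho\entropie$ are precisely those carried by the zeroth-order distributions, so this identification is \emph{exact} and not merely asymptotic.

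The crux is the order-$\varepsilon^1$ contribution, which equals $-\ppscale{\feo\phie,L^0_\elec}-\ppscalh{\fHo\phi_\heavy,(L^0_{i})_{i\in\lourd}}$. The point is that $L^0$ is a collisional invariant: $L^0_\elec$ is an affine combination of $\inve[1]=1$ and $\inve[2]=\tfrac12\Ce\dscal\Ce$, hence lies in $\invspace_\elec$, while the family $(L^0_{i})_{i\in\lourd}$, with $L^0_{i}=\ln\bigl(\ldots\bigr)-\mi\Ci\dscal\Ci/(2\temph)$, lies in $\invspace_\heavy$. The orthogonality constraints \eqref{conste}--\eqref{constH}, equivalently $\ppscale{\feo\phie,\inve[l]}=0$ and $\ppscalh{\fHo\phi_\heavy,\invH[l]}=0$, then force both brackets to vanish. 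Consequently the $\varepsilon^1$ term is identically zero, while the $\varepsilon^2$ coefficient $-\int f^0\bigl(\phi_2 L^0+\tfrac12\phi_1^2\bigr)$ and all higher terms are absorbed into the remainder, which establishes \eqref{eqkinkin}.

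No deep obstacle is expected; the argument is a controlled expansion and the effort is mainly bookkeeping. The two points requiring care are, first, recognising $\ln f^0$ as an element of the appropriate invariant space and matching it to the exact constraint set; and second, the observation that for the electrons only the mass and energy constraints are needed, since $\feo$ is isotropic in the $\vitesse$ frame and carries no momentum component. This is consistent with the absence of an electron momentum constraint noted after \eqref{eqcontelec}, and it is precisely why conducting the expansion in the mean heavy-particle frame keeps the computation clean.
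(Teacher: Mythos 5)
Your proposal is correct and follows essentially the same route as the paper: substitute the Enskog expansions into the kinetic entropy, identify the $\varepsilon^0$ term with $\rho\entropie$ via the Maxwell--Boltzmann forms, and kill the $\varepsilon^1$ term because $\ln(\kappa f^0)$ is an affine combination of the mass and energy collisional invariants so the constraints \eqref{eqconsepsilon-1} and \eqref{eqiconsepsilon-1} apply. Your version merely spells out the Gaussian integrals and the invariant-space argument that the paper leaves implicit (and, incidentally, gets the sign of the first-order term right where the paper's displayed expression has a harmless sign slip).
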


\begin{proof}
Using definition~\eqref{eqkinentropy} and expansions~\eqref{chapenske} and \eqref{chapenski}, the kinetic entropy is found to be
\begin{multline*}
\sumi[j]\int \fHio[j]\left\{1-\ln\left[\frac{(2\pi)^{3/2}n^0}{\mi[j]^3\partitioni[\heavy]}\fHio[j]\right]\right\}\d\Ci[j]+\int \feo\left\{1-\ln\left[\frac{(2\pi)^{3/2}n^0}{\partitione}\feo\right]\right\}\d\Ce\\
+\varepsilon\sumi[j]\int \fHio[j]\phii[j]\ln\left[\frac{(2\pi)^{3/2}n^0}{\mi[j]^3\partitioni[\heavy]}\fHio[j]\right]\d\Ci[j]
+\varepsilon \int \feo\phie\ln\left[\frac{(2\pi)^{3/2}n^0}{\partitione}\feo\right]\d\Ce\\
+\ordre(\epsilon^2).
\end{multline*}
The first-order term vanishes seeing the constraints~\eqref{eqconsepsilon-1}
and \eqref{eqiconsepsilon-1}.  Then, using expressions~\eqref{eqbolte} and \eqref{eqbolti} and definition~\eqref{eqthermos}, eq.~\eqref{eqkinkin} is readily obtained.
\end{proof}

Consequently, a first-order conservation equation of thermodynamic entropy can be used  instead of a conservation equation of kinetic entropy to ensure that the second principle of thermodynamics is satisfied. First, we introduce the heavy-particle entropy $\rhoi[\heavy] \entropiei[\heavy]=\sum_{j\in\lourd}\rhoi[j] \entropiei[j]$ and derive the entropy equations.

\begin{proposition}
The electron and heavy-particle entropy equations associated with the macroscopic conservation equations~\eqref{eqsum1}-\eqref{eqsum5} read
\begin{gather}
\label{eqentroe} 
\dt(\rhoe \entropiee)+\dx\dscal\left(\rhoe\entropiee\vitesse\right)
+\dx\dscal (\entropyfoe{+}\varepsilon\entropyfue)=
\entropysoe{+}\varepsilon\entropysue,\\
\label{eqentroh}
\dt(\rhoi[\heavy] \entropiei[\heavy])
+\dx\dscal\left(\rhoi[\heavy]\entropiei[\heavy]\vitesse\right)
+\varepsilon\dx\dscal \entropyfuh=
\entropysoh+\varepsilon\entropysuh,
\end{gather}
where the electron and heavy-particle entropy fluxes are given by
\begin{gather}
\label{eqentropyflux1}
\entropyfoe=\frac{1}{\Mh\tempe}(\heate-\rhoe\gibbse\Ve),
\qquad
\entropyfue=\frac{1}{\Mh\tempe}(\heatde-\rhoe\gibbse\Vde),\\
\label{eqentropyflux2}
\entropyfuh=\frac{1}{\Mh\temph}\Bigl(\heati[\heavy]-\sumi[j]\rhoi[j]\gibbsi[j]\Vi[j]\Bigr),
\end{gather}
and the electron and heavy-particle entropy production rates by
\begin{align}
\label{eqentropyprodeo}
\entropysoe&= \frac{1}{\tempe}\deltaEeo
-\frac{\pre}{\Mh\tempe} \de\dscal\Ve
-\frac{1}{\Mh\tempe}\glogTe\dscal(\heate-\rhoe\enthalpiee\Ve),\\
\label{eqentropyprodeu}
\entropysue&= \frac{1}{\tempe}\deltaEeu
-\frac{\pre}{\Mh\tempe} \de\dscal\Vde
-\frac{1}{\Mh\tempe}\glogTe\dscal(\heatde-\rhoe\enthalpiee\Vde),\\
\label{eqentropyprodho}
\entropysoh&= \frac{1}{\temph}\deltaEho,\\
\entropysuh&= \frac{1}{\temph}\deltaEhu 
-\frac{1}{\temph}\visqueux[\heavy]\pmat\dx\vitesse
-\frac{\prh}{\Mh\temph} \sumi 
\frac{1}{\prh} \bigl(\dx \pri -\ni\qi\Ep\bigr)
\dscal\Vi \nonumber\\
\label{eqentropyprodhu}
& \qquad\qquad\qquad\qquad\qquad\qquad
-\frac{1}{\Mh\temph}\glogTh\dscal\Bigl(\heati[\heavy]-\sumi\rhoi\enthalpiei\Vi\Bigr).
\end{align}
\end{proposition}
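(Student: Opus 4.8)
The plan is to obtain each entropy equation as a thermodynamic combination of the species mass and energy balances, tracking the two orders $\varepsilon^0$ and $\varepsilon^1$, exactly as in the neutral-gas theory. First I would record the Gibbs differential identities satisfied by the two entropy densities. Using the explicit forms of $\gibbse$ and $\enthalpiee$ together with the perfect-gas relations $\pre=\ne\tempe$, $\rhoe\energiee=\tfrac32\ne\tempe$, $\rhoe\enthalpiee=\tfrac52\ne\tempe$, a direct computation gives
\begin{equation*}
\d(\rhoe\entropiee)=\frac{1}{\tempe}\,\d(\rhoe\energiee)-\frac{\gibbse}{\tempe}\,\d\rhoe,
\end{equation*}
and, with the common heavy temperature $\temph$,
\begin{equation*}
\d(\rhoi[\heavy]\entropiei[\heavy])=\frac{1}{\temph}\,\d(\rhoi[\heavy]\energiei[\heavy])-\sumi[j]\frac{\gibbsi[j]}{\temph}\,\d\rhoi[j].
\end{equation*}
Since these hold for arbitrary differentials, they apply to $\dt$, to $\dx$, and hence to the material derivative $\tfrac{\D}{\D t}=\dt+\vitesse\dscal\dx$.

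Second, I would substitute the conservation laws. Writing each balance in the generic form $\tfrac{\D\Psi}{\D t}=R_\Psi-\Psi\,\dx\dscal\vitesse$ for a density $\Psi$ with source $R_\Psi$, and feeding eqs.~\eqref{eqsum4} and \eqref{eqsum1} into the electron Gibbs identity, the electron entropy balance becomes
\begin{equation*}
\dt(\rhoe\entropiee)+\dx\dscal(\rhoe\entropiee\vitesse)=\frac{1}{\tempe}E_\elec-\frac{\gibbse}{\tempe}M_\elec+\frac{\pre}{\tempe}\dx\dscal\vitesse,
\end{equation*}
where $E_\elec$, $M_\elec$ are the right members of the energy and mass equations; the coefficient of $\dx\dscal\vitesse$ is $\pre/\tempe$ precisely because $\rhoe\entropiee+\rhoe\gibbse/\tempe=\rhoe\enthalpiee/\tempe$ and $\rhoe(\enthalpiee-\energiee)=\pre$. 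This dilatation term cancels the $-\pre\,\dx\dscal\vitesse$ hidden in $E_\elec$.

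Third, I would integrate by parts the surviving divergences. The terms $-\tfrac{1}{\Mh\tempe}\dx\dscal(\heate+\varepsilon\heatde)$ and $\tfrac{\gibbse}{\Mh\tempe}\dx\dscal[\rhoe(\Ve+\varepsilon\Vde)]$ produce exactly $-\dx\dscal(\entropyfoe+\varepsilon\entropyfue)$ plus the gradient remainders $\heate\dscal\dx(\tfrac{1}{\Mh\tempe})$ and $-\rhoe\Ve\dscal\dx(\tfrac{\gibbse}{\Mh\tempe})$ and their $\varepsilon$ analogues. The crucial ingredient is then the Gibbs--Duhem relation, which the explicit $\gibbse$ yields as
\begin{equation*}
\rhoe\,\dx\Bigl(\frac{\gibbse}{\tempe}\Bigr)=\ne\Bigl(\de+\frac{\ne\qe}{\pre}\Ep-\tfrac52\glogTe\Bigr),
\end{equation*}
with $\de$ the electron driving force. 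Substituting this, the field work $\tfrac{1}{\Mh\tempe}\JJe\dscal\Ep$ cancels the $\tfrac{\ne\qe}{\pre}\Ep$ piece thanks to $\JJe=\ne\qe\Ve$, the pressure piece assembles $-\tfrac{\pre}{\Mh\tempe}\de\dscal\Ve$, and the temperature pieces combine with the heat flux into $-\tfrac{1}{\Mh\tempe}\glogTe\dscal(\heate-\rhoe\enthalpiee\Ve)$ through $\rhoe\enthalpiee=\tfrac52\ne\tempe$; adding $\deltaEeo/\tempe$ reproduces $\entropysoe$, and the identical manipulation at order $\varepsilon$ (with $\heatde$, $\Vde$, $\Jde$, $\deltaEeu$) gives $\entropysue$.

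For the heavy particles the argument runs identically, summing the species mass balances eq.~\eqref{eqsum2}, using the multi-species Gibbs identity and the constraint $\sumi[j]\rhoi[j]\Vi[j]=0$; the viscous term $-\tfrac{\varepsilon}{\temph}\visqueux[\heavy]\pmat\dx\vitesse$ feeds directly into $\entropysuh$, while at order $\varepsilon^0$ all diffusive fluxes vanish, leaving only $\entropysoh=\deltaEho/\temph$ and no $\varepsilon^0$ flux. The main obstacle, and the only genuinely delicate point, is the bookkeeping of the electric-field and pressure-gradient terms: one must verify that the conduction-current work exactly cancels the field part of each Gibbs--Duhem remainder and that the residue regroups into the reduced driving forces $\tfrac{1}{\prh}(\dx\pri-\ni\qi\Ep)$ and the enthalpy-corrected heat fluxes, the $\Aie$ coupling being carried inside $\Vi$ rather than appearing explicitly, while the magnetic contributions are absorbed into $\Ep$. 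Everything else is routine product- and chain-rule algebra once the two Gibbs identities are in hand.
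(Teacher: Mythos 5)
Your proposal is correct and follows essentially the same route as the paper: a Gibbs identity expressing $\d(\rho s)$ in terms of $\d(\rho e)$ and $\d\rho$ (the paper writes it as two differential relations for $\gibbse/\tempe$ and $\gibbsi/\temph$, which is equivalent to your material-derivative form with the $\ne\,\dx\dscal\vitesse$ dilatation term), followed by substitution of the conservation laws and a Gibbs--Duhem regrouping of $\dx(\gibbs/T)$ into driving forces, field-work cancellations, and enthalpy-corrected heat fluxes. The bookkeeping you flag as the delicate point (conduction-current work cancelling the $\Ep$ piece, the $\Aie$ contribution remaining implicit inside $\Vi$) is exactly what the paper's ``readily obtain'' conceals, and your treatment of it is right.
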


\begin{proof}
Based on the relations
\begin{eqnarray*}
\rhoe\d\left(\frac{\gibbse}{\tempe}\right)=\d\ne-\frac{3\ne}{2\tempe}\d\tempe,\quad\rhoi\d\left(\frac{\gibbsi}{\temph}\right)=\d\ni-\frac{3\ni}{2\temph}\d\temph,
\quad i\in\lourd,
\end{eqnarray*}
and definition~\eqref{eqthermos}, one obtains
\begin{multline*}
\dt(\rhoe \entropiee)+\dx\dscal\left(\rhoe\entropiee\vitesse\right)=\frac{1}{\tempe}\left[\dt(\rhoe \energiee)+\dx\dscal\left(\rhoe\energiee \vitesse\right)\right]+\ne\dx\dscal\vitesse\\
-\left[\dt\rhoe+\dx\dscal\left(\rhoe\vitesse\right)\right]\frac{\gibbse}{\tempe},
\end{multline*}
\begin{multline*}
\dt(\rhoi[\heavy] \entropiei[\heavy])+\dx\dscal\left(\rhoi[\heavy]\entropiei[\heavy]\vitesse\right)=\frac{1}{\temph}\left[\dt(\rhoi[\heavy] \energiei[\heavy] )+\dx\dscal\left(\rhoi[\heavy] \energiei[\heavy] \vitesse\right)\right]
+\nH\dx\dscal\vitesse\\
-\sumi[j]\left[\dt\rhoi[j]  +\dx\dscal\left(\rhoi[j] \vitesse\right)\right]\frac{\gibbsi[j]}{\temph}.
\end{multline*}
Then, using eqs.~\eqref{eqsum1}, \eqref{eqsum2}, \eqref{eqsum4}, \eqref{eqsum5}, and the relations
\begin{eqnarray*}
\d\left(\frac{\gibbse}{\tempe}\right)=-\frac{\enthalpiee}{\tempe^2}\d\tempe+\frac{1}{\pre}\d\pre,\quad\d\left(\frac{\gibbsi}{\temph}\right)=-\frac{\enthalpiei}{\temph^2}\d\temph+\frac{1}{\mi\pri}\d\pri,
\quad i\in\lourd,
\end{eqnarray*}
we readily obtain eqs.~\eqref{eqentroe} and \eqref{eqentroh}, with  the entropy fluxes given in eqs.~\eqref{eqentropyflux1} and \eqref{eqentropyflux2} and the entropy production rates given in eqs.~\eqref{eqentropyprodeo}-\eqref{eqentropyprodhu}.
\end{proof}

Adding eqs.~\eqref{eqentroe} and \eqref{eqentroh}, a global entropy equation is found
\begin{equation}
\dt(\rho \entropie)+\dx\dscal\left(\rho\entropie\vitesse\right)+\dx\dscal \entropyf=\Upsilon,
\end{equation}
where the global entropy flux is given by
\begin{equation}
\entropyf=\entropyfo_\elec+\varepsilon\entropyfu_\elec+\varepsilon\entropyfu_\heavy,
\end{equation}
and the global entropy production rate by
\begin{equation}\label{eqsum6}
\prodent=\entropysoe+\varepsilon\entropysue+\entropysoh+\varepsilon\entropysuh.
\end{equation}

\begin{proposition}\label{th:verycrazy}
Defining 
$\Xh = (\glogTh,\ditilde[1],\ldots,\ditilde[\nh])^T$,
$\Xepa = (\glogTepa,\depa)^T$, and $\Xepe = (\glogTepe,\depe)^T$,
where
\begin{equation*}
\ditilde = \dihat - \ni \frac{\pre}{\prh}\frac{\temph}{\tempe}
\bigl(\Dejt[i]\det + \thetaeit\glogTet\bigr),
\end{equation*}
the global entropy production rate $\prodent$ defined in eq.~\eqref{eqsum6} can be rewritten in the following form
\begin{multline}\label{eqprodentexpand}
\prodent = \frac{(\tempe-\temph)^2}{\tempe\temph}\sumi[j]
\frac{\ni[j]}{\mi[j]}\colfreqiez[j] + \etah \tensS\pmat\tensS
+ \epsilon \frac{\prh}{\Mh\temph} \pscal{\Ah\Xh,\Xh} \\
+ \frac{\pre}{\Mh\tempe} \pscal{\Aepa\Xepa,\Xepa} 
+ \frac{\pre}{\Mh\tempe} \pscal{\Aehpe\Xepe,\Xepe} ,
\end{multline}
where the matrix $\Aehpe$ 
\begin{equation*}
\Aehpe = \begin{pmatrix}
\frac{\tempe}{\pre} \lambdaepeprim & \thetaepe \\
\thetaepe & \Deepe
\end{pmatrix}
-\epsilon \frac{\pre}{\prh}\frac{\temph}{\tempe}
\sumi[i,j] \Dij \ni\ni[j]
\begin{pmatrix}
\thetaeit\thetaeit[j] & \Dejt[i]\thetaeit[j] \\
\Dejt\thetaeit & \Dejt[i]\Dejt
\end{pmatrix},
\end{equation*}
is a perturbation of the mass-energy transport matrix $\Aepe$ defined in eq.~\eqref{eqmatmat}.
In particular, the global entropy production rate is nonnegative provided that $\epsilon$ is small enough and the collision frequencies $\colfreqiez$, $i\in\lourd$, are nonnegative.
\end{proposition}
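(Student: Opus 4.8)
The plan is to begin from the definition \eqref{eqsum6} of $\prodent$, insert the four production rates \eqref{eqentropyprodeo}--\eqref{eqentropyprodhu}, and then substitute every constitutive flux relation so that $\prodent$ becomes an explicit quadratic form in the driving forces. First I would dispose of the scalar contributions. The zeroth-order energy exchanges combine, using $\deltaEeo=-\deltaEho$, into $\bigl(\tfrac{1}{\temph}-\tfrac{1}{\tempe}\bigr)\deltaEho=\tfrac{\tempe-\temph}{\tempe\temph}\deltaEho$, and feeding in the closed form \eqref{eqdeltaEho} of $\deltaEho$ produces exactly the first term $\tfrac{(\tempe-\temph)^2}{\tempe\temph}\sumi[j]\tfrac{\ni[j]}{\mi[j]}\colfreqiez[j]$ of \eqref{eqprodentexpand}. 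The viscous term $-\tfrac{\varepsilon}{\temph}\visqueux[\heavy]\pmat\dx\vitesse$ is treated with \eqref{eqtranspoPi} together with the fact that $\tensS$ is symmetric and traceless, so $\visqueux[\heavy]\pmat\dx\vitesse$ collapses to a nonnegative multiple of $-\etah\,\tensS\pmat\tensS$, giving the second term.

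Next I would treat the diffusion and heat contributions block by block. For the leading electron block I substitute \eqref{expVe} for $\Ve$ and \eqref{expheate} for the reduced heat flux $\heate-\rhoe\enthalpiee\Ve$ into $\entropysoe$, project each force onto the directions $\Mpa,\Mpe,\Mt$, and collect. After replacing temperature gradients by $\glogTe$, the parallel part collapses into $\tfrac{\pre}{\Mh\tempe}\pscal{\Aepa\Xepa,\Xepa}$, precisely because $\Aepa$ in \eqref{eqmatmat} is built from $\lambdaepaprim,\thetaepa,\Deepa$. I expect the transverse coefficients $\Deet,\thetaet,\lambdaetprim$ to cancel out of the leading perpendicular part through scalar triple-product identities (terms such as $\depe\dscal\det$ vanish and the surviving $\thetaet$ pieces coming from $\Ve$ and from the heat flux are opposite), leaving $\tfrac{\pre}{\Mh\tempe}\pscal{\Aepe\Xepe,\Xepe}$. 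The heavy block, fed by \eqref{eqtranspoVi} and \eqref{eqtranspoqh}, yields $\tfrac{\prh}{\Mh\temph}\pscal{\Ah\Xh,\Xh}$, just as in the proof of Proposition~\ref{th:positivitefluxh}.

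The hard part is the order-$\varepsilon$ Kolesnikov coupling, where three distinct cross contributions meet. The second-order electron fluxes $\Vde,\heatde$ entering $\entropysue$ are, by \eqref{expVde}--\eqref{expheatde}, linear in the heavy forces $\did=-\ni\Vi$; the first-order exchange $\varepsilon\,\tfrac{\tempe-\temph}{\tempe\temph}\deltaEhu$ equals, by \eqref{eqeqeq}, $\sumi[j]\ni[j]\Vi[j]\dscal\Aie[j]$; and the average electron force $\Aie$ is buried inside the heavy driving force $\dihat$. I would collect all three and use the reciprocity (hermitian and symmetric) structure of the brackets $\crochete{\cdot,\cdot}$ and $\crocheth{\cdot,\cdot}$ to complete the square: the transverse electron pieces get absorbed into the shifted heavy force $\ditilde$ defined in the statement, while the residual quadratic-in-$\Aie$ self-coupling is absorbed into the order-$\varepsilon$ perturbation that turns $\Aepe$ into $\Aehpe$. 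Deciding which cross term lands in the shift and which in the matrix perturbation, and checking the signs against the antisymmetry of the transverse coefficients, is the delicate bookkeeping that dominates the argument.

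Finally, nonnegativity follows term by term from the decomposition \eqref{eqprodentexpand}. The first term is nonnegative since $\tempe,\temph,\mi[j]>0$ and $\colfreqiez[j]\ge0$ by hypothesis; the viscous term since $\etah\ge0$ and $\tensS\pmat\tensS\ge0$; the form $\pscal{\Ah\Xh,\Xh}\ge0$ by Proposition~\ref{th:positivitefluxh}; and $\pscal{\Aepa\Xepa,\Xepa}\ge0$ by \eqref{posApa}. For the perpendicular form I would note that \eqref{posApeAt} with vanishing transverse argument already gives $\pscal{\Aepe\Xepe,\Xepe}\ge0$, and that the strict positivity of the bracket $\crochete{\cdot,\cdot}$ off the collisional invariants makes $\Aepe$ positive definite; since $\Aehpe=\Aepe+\ordre(\epsilon)$, it remains positive definite for $\epsilon$ small enough, which is exactly the smallness hypothesis of the statement.
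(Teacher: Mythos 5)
Your proposal is correct and follows the same overall strategy as the paper: the paper likewise obtains the decomposition~\eqref{eqprodentexpand} by the ``lengthy calculation'' you outline (substituting all constitutive relations, exploiting the cancellation of the transverse cross terms by antisymmetry, and absorbing the Kolesnikov couplings into the shifted forces $\ditilde$ and the perturbed matrix $\Aehpe$), and then invokes the positivity of the collision frequencies, the viscosity, and Propositions~\ref{th:positivitefluxepapet} and \ref{th:positivitefluxh} exactly as you do. The one place where you diverge is the final term: the paper does not argue by generic matrix perturbation but exhibits the explicit bracket identity $\pscal{\Aehpe\Xepe,\Xepe} = \crochete{\Y,\Y} - \epsilon\,\crocheth{\YZ,\YZ}$ with $\Y = \depe\ptens\VarphieDede - \glogTepe\ptens\Varphielambdade$ and $\YZ$ linear in $\Y$ through the bounded form $\dblparenthesee{\Y,\cdot}$, so the negative correction is manifestly controlled by the leading bracket and vanishes with it. Your route instead requires the strict positive definiteness of $\Aepe$, which you assert but do not verify; it does hold, because $\VarphieDede$ and $\Varphielambdade$ are linearly independent (their right members $\psieDe$ and $\psielambda$ are, and the operator $\boltFe+\i|\B|\boltFev$ is injective on the orthogonal complement of $\invspace_\elec$), so your perturbation argument closes, but you should state this independence explicitly. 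The paper's identity buys a structurally transparent bound on the admissible $\epsilon$; your version is more elementary but hides the mechanism by which the $O(\epsilon)$ correction is dominated.
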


\begin{proof}
Expression \eqref{eqprodentexpand} is obtained after some lengthly calculation based on the expressions of the diffusion velocities $\Ve$, $\Vde$, $\Vi$, $i\in\lourd$, heat fluxes $\heate$, $\heatde$, $\heati[\heavy]$, viscous stress tensor $\visqueux[\heavy]$, energy exchange terms $\deltaEeo$, $\deltaEeu$, $\deltaEho$, $\deltaEhu$, and average forces $\Aie$, $i\in\lourd$, given in Section~\ref{sec:sectransport}.

The positivity of the collision frequencies $\colfreqiez$, $i\in\lourd$, (respectively the viscosity $\etah$) immediately yields the positivity of the first term
$(\tempe-\temph)^2/(\tempe\temph)\sum_{j\in\lourd}{\ni[j]}{\colfreqiez[j]}/{\mi[j]}$ (respectively the second term $\etah \tensS\pmat\tensS$).
Moreover, propositions~\ref{th:positivitefluxepapet} and \ref{th:positivitefluxh} ensure that both the following terms 
$\frac{\prh}{\Mh\temph} \pscal{\Ah\Xh,\Xh}$ 
and
$\frac{\pre}{\Mh\tempe} \pscal{\Aepa\Xepa,\Xepa}$
are nonnegative.
Finally, introducing 
$\Y = \depe\ptens\VarphieDede - \glogTepe\ptens\Varphielambdade$,
the last term is expanded as
\begin{equation} \label{eqprodentropycomplicated}
\pscal{\Aehpe\Xepe,\Xepe} = \crochete{\Y,\Y}
- \epsilon \crocheth{\YZ,\YZ},
\end{equation}
with
\begin{equation*}
\YZ_i = \tfrac{1}{3} \pre\temph\Mh \sumi[j] \ni[j]
\dblparenthesee{\Y,\VarphieDjde} \ptens \PhiiDj{i}{j},
\qquad i\in\lourd.
\end{equation*}
We conclude after noticing that
the classical term $\crochete{\Y,\Y}$ is nonnegative and vanishes if and only if $\Y=0$ thanks to the scalar constraints \eqref{eqconstvarphie1} and \eqref{eqconstvarphie2}.
\end{proof}

In the whole general case, we are not able to write the entropy production rate as a sum of nonnegative contributions independently of the value of $\epsilon$. However, it is the case for vanishing magnetic field, the last term $\pscal{\Aehpe\Xepe,\Xepe}$ given in \eqref{eqprodentropycomplicated} being nonnegative as soon as the magnetic field disappears---the matrix $\Aehpe$ reduces then to $\Aepe$---and that for any value of $\epsilon$.

The nonnegativity of the global entropy production rate implies that the second principle of thermodynamics is satisfied. This statement could be equivalently formulated  by means of a H-Theorem. Besides, the electron and heavy-particle temperatures must be equal when an equilibrium state is reached. Provided that the collision frequency $\colfreqiez$, $i\in\lourd$, is positive, the quasi-equilibrium states described by the Maxwell-Boltzmann distribution functions given in eqs.~\eqref{eqbolte} and \eqref{eqbolti} create some nonnegative entropy expressed by the term $(\tempe-\temph)^2/(\tempe\temph)\sum_{j\in\lourd}{\ni[j]}{\colfreqiez[j]}/{\mi[j]}$. The latter term vanishes when the electron and heavy-particle temperatures are identical.

\subsection{Plasma magnetization}

We recall that the intensity of the magnetic field is expressed by means of the $b$ parameter used to define the scaling of the nondimensional electron Hall parameter $q^0 B^0 t_\elec^0/\me^0=\varepsilon^{1-b}$. Three categories of plasmas are reviewed in Table~\ref{tab3}. A value of $b<0$ corresponds to unmagnetized plasmas, $b=0$, weakly magnetized plasmas, and $b=1$, strongly magnetized plasmas.

\begin{table}[ht]
\caption{Magnetic field influence.\label{tab3}}
{\small\begin{tabular}{@{}cll@{}} \toprule
$b$&{\bf Conservation equations}&{\bf Transport properties}\\
&&\\
$<0$&$-$&$-$\\
&&\\
0&Bulk magnetic force&  Electron bulk magnetic driving force\\
&Electron magnetic force&\\
&&\\
1&Bulk magnetic force&Electron bulk magnetic driving force\\
&Electron magnetic force& Heavy-particle bulk magnetic driving forces\\
&Heavy-particle magnetic force&Anisotropic electron transport properties\\
 \botrule
\end{tabular}}
\end{table}

\subsection{Mathematical structure}
The system of mass, momentum, total energy, and entropy eqs.~\eqref{eqsum1}, \eqref{eqsum2}, \eqref{eqsum3}, \eqref{eqsum5ter}, and \eqref{eqsum6} is conservative from a fluid standpoint in the variables
$$\U=[\rhoe,~(\rhoi)_{i\in \lourd}, ~\rhoi[\heavy]\vitesse, ~\energie,~\rho s]^T,$$
that reads
\begin{equation}\label{eqsyscons}
\dt\U+\dx\dscal\Fc+\dx\dscal\Fd=\Fs,
\end{equation}
with the convective fluxes
$$\Fc=[\rhoe\vitesse,~(\rhoi)_{i\in\lourd}\vitesse,~\rhoH\vitesse\ptens\vitesse+\frac{1}{\Mh^2} p\identite,~\enthalpie\vitesse,~\rho\entropie\vitesse]^T,$$
the diffusive fluxes 
$$\Fd=[\frac{\rhoe}{\Mh}(\Ve+\varepsilon\Vde),\frac{\varepsilon}{\Mh}(\rhoi\Vi)_{i\in\lourd},~\frac{\varepsilon}{\Mh^2}\visqueux[\heavy],~\frac{\varepsilon}{\Mh^2}\visqueux[\heavy]\dscal\vitesse+\frac{1}{\Mh}\fluxchaleur,~\entropyf]^T,$$
and the source terms
$$\Fs=[0,0,\frac{nq}{\Mh^2}\E \\
+(\delta_{b0}\courantel_0+\delta_{b1}\courantel)\pvect\B,~\courantel\dscal\E,~\Upsilon]^T.$$

Then, we extract a purely convective system from eq.~\eqref{eqsyscons}
\begin{equation}\label{eqsysconcon}
\dt\U+\dx\dscal\Fc=\Fs',
\end{equation}
where the convective source terms are given by
$$
\Fs'=[0,~0,~\frac{nq}{\Mh^2}\E 
+(\delta_{b0}+\delta_{b1})\courantel'\pvect\B,~\courantel'\dscal\E,~\Upsilon']^T,
$$
with the current $\courantel'=n q\vitesse$,
and the entropy production rate
$$\Upsilon'=\frac{(\tempe-\temph)^2}{\tempe\temph}\sum_{j\in\lourd}\frac{\ni[j]}{\mi[j]}{\colfreqiez[j]}.$$
The purely convective system  given in eq.~\eqref{eqsysconcon} is rewritten in a quasi-linear form
\begin{eqnarray}\label{eqsysql}
\dt\W+\Aw\dscal \dx\W&=&\Fs_\W',
\end{eqnarray}
by means of the variables
$$\W=[\rhoe,~(\rhoi)_{i\in \lourd}, ~\vitesse, ~\pre,~\prh]^T,$$
the source terms
$$\Fs_\W'=[0,~0,~\frac{nq}{\Mh^2\rho_\heavy}\E 
+\frac{1}{\rho_\heavy}(\delta_{b0}+\delta_{b1})\courantel'\pvect\B,~\tfrac{2}{3}\deltaEeo,~\tfrac{2}{3}\deltaEho]^T,$$
and the Jacobian matrices 
\begin{equation}\label{eqjaco}
\Anu= \begin{pmatrix}
\vhnu & 0& \rhoe\enu^T&0&0\\ 
  0 & \vhnu(\delta_{ij})_{i,j\in\lourd}& (\rhoi)_{i\in\lourd}\enu^T&0&0\\
0& 0& \vhnu\identite&\frac{1}{\Mh^2\rhoi[\heavy]}\enu&\frac{1}{\Mh^2\rhoi[\heavy]}\enu\\
0& 0& \gam\pre\enu^T&\vhnu&0\\  
0& 0& \gam\prh\enu^T&0&\vhnu\\  
\end{pmatrix},\quad\nu\in\{1,2,3\},
\end{equation}
where the specific heat ratio reads $\gam=5/3$ and symbol $\enu$ stands for the unit vector in the $\nu$ direction. 

For any direction defined by the unit vector $\no$, the matrix $\no\dscal \Aw$ is shown to be diagonalizable with real eigenvalues and a complete set of eigenvectors. There are two nonlinear acoustic fields with the eigenvalues $\vitesse\dscal \no\pm c$, where the sound speed is given by $c^2=p/(\rhoi[\heavy]\Mh^2)$, and linearly degenerate fields with the eigenvalue $\vitesse\dscal \no$ of multiplicity $\ns+3$.
Thus,  the macroscopic system of conservation equations 
derived from kinetic theory in the proposed mixed hyperbolic-parabolic scaling  
has a hyperbolic structure, as far as the convective part of the system is concerned. 
Such a property is far from being obvious since the obtained sound speed involves the electron pressure
and seeing that the rigourous derivation of the momentum equation of the heavy particles
involves many ingredients throughout the paper.

\section{Conclusions}
In the present contribution, we have derived from kinetic theory a unified fluid model for multicomponent plasmas by accounting for the electromagnetic field influence,  neglecting the particle internal energy and the reactive collisions. Given the strong disparity of mass between the electrons and heavy particles,  such as molecules, atoms, and ions, we have conducted a dimensional analysis of the Boltzmann equation  following Petit and Darrozes~\cite{petit} and introduced a scaling based on the $\varepsilon$ paramter, or square root of the ratio of the electron mass to a characteristic heavy-particle mass. The multiscale analysis occurs at three levels: in the kinetic equations, the collisional invariants, and the collision operators. The Boltzmann equation has been expressed in the mean heavy-particle velocity referential to allow for the resolubility of the first- and second-order electron perturbation  function equations, as opposed to the inertial referential chosen by Degond and Lucquin~\cite{degond1,degond2}. Then, the resolubility of the electron and heavy-particle perturbation functions has been classically based on the identification of the kernel of the linearized collision operators and space of  scalar collisional invariants of both types of species. The system has been examined at successive orders of approximation by means of a  generalized Chapman-Enskog method. The micro- and macroscopic equations derived at each order are reviewed in Table~\ref{tab12}. 
Depending on the type of  species, the quasi-equilibrium solutions are Maxwell-Boltzmann velocity distribution functions at the electron temperature or the heavy-particule temperature, therefore, allowing for thermal nonequilibrium to occur.
At order $\varepsilon^1$, the set of macroscopic conservation equations of mass, momentum, and energy comprises multicomponent Navier-Stokes equations for the heavy particles, which 
follow a hyperbolic scaling, and  first-order drift-diffusion equations 
for the electrons, which follow a parabolic scaling. The expressions of the transport fluxes have also been derived: first- and second-order diffusion velocity and heat flux for the electrons, and first-order diffusion velocities, heat flux, and viscous tensor for the heavy particles. The transport coefficients have been written in terms of bracket operators; both electron and heavy-particle transport coefficients  exhibit anisotropy, provided that the magnetic field is strong. 
  We have also proposed a complete description of the Kolesnikov effect, $i.e.$, the crossed contributions to the mass and energy transport fluxes coupling the electrons and heavy particles. 
This effect, appearing in multicomponent plasmas, is essential to obtain a positive entropy production. Besides, it contains, as  degenerate case, the single heavy-species plasmas considered by Degond and Lucquin for which the Kolesnikov effect is not present.
The properties of electron and heavy-particle mass-energy transport matrices have been established by using the mathematical structure of the bracket operators. In particular, the properties of symmetry and positivity implies that the second principle of thermodynamics is satisfied, as shown by deriving an entropy equation. Moreover, the first principle of thermodynamic was also verified by deriving a total energy equation.
Finally, the system of equations was found to be conservative and the purely 
convective system hyperbolic, thus leading to a well defined structure.

The proposed formalism remains valid for  collision operators of Fokker-Planck-Landau type. These operators can be used to model the charged particle interaction, instead of  Bolztmann operators associated with a Coulomb potential screened at the Debye distance. Besides, the explicit expression of the diffusion coefficients, thermal diffusion coefficients,  viscosity,  and partial thermal conductivities can be obtained by means of a variational procedure to solve the integral equations (Galerkin spectral method~\cite{chapman}). The expressions of the thermal conductivity, thermal diffusion ratios, and Stefan-Maxwell equations for the diffusion velocities can be derived by means of a Goldstein expansion of the perturbation function, as proposed by Kolesnikov and Tirskiy~\cite{anatoliy}.  Finally, the mathematical structure of the transport matrices obtained by the variational procedure can readily be used to build efficient transport algorithms, as already shown by Ern and Giovangigli~\cite{ern} for neutral gases, or Magin and Degrez~\cite{magin2} for unmagnetized plasmas.
 
\section*{Acknowledgment}   
The authors would like to thank Dr. Anne Bourdon for helpful discussion. The authors acknowledge support from two scientific departments of CNRS (French Center for Scientific Research): MPPU (Math\'ematiques, Physique, Plan\`ete et Univers) and ST2I (Sciences et Technologies de l'Information et de l'Ing\'enierie), through a 2007-2008 PEPS project entitled: ``\textit{Analyse et simulation de probl\`emes multi-\'echelles : applications aux plasmas froids, \`a la combustion et aux  \'ecoulements diphasiques}".


\begin{thebibliography}{00}

\bibitem {balescu}
R. Balescu, {Transport processes in plasmas},  Elsevier, Amsterdam (1988).

\bibitem{bgl}
C. Bardos, F. Golse and D. Levermore, Fluid dynamic limits of kinetic equations. {I}. {F}ormal  derivations, {\it Journal of Statistical Physics} {\bf 63(1-2)} (1991) 323.

\bibitem{gerjan}
J. Bareilles, G. J. M. Hagelaar, L. Garrigues, C. Boniface, J. P. Boeuf, and  N. Gascon,
Critical assessment of a two-dimensional hybrid Hall thruster model: comparisons with experiments, {\it Physics of Plasmas}  {\bf 11(6)} (2004) 3035.

\bibitem{Bazelyan}
E. M. Bazelyan and Y. P. Raizer, {\it Lightning physics and lightning protection}, Institute of Physics Publishing (2000).

\bibitem{bird}
G. A. Bird, {\it Molecular gas dynamics and the direct simulation of gas flows}, Oxford University Press, New York, 1994.

\bibitem{bobrova}
N. A. Bobrova, E. Lazzaro, and P. V. Sasorov, Magnetodydrodynamic two-temperature equations for multicomponent plasma, {\it Physics of Plasmas} {\bf 12} (2005) 022105.

\bibitem{born}
M. Born and J. R. Oppenheimer, On the quantum theory of molecules, {\it Annalen der Physik} {\bf 84} (1927) 457, in German.

\bibitem{anne}
A. Bourdon, V. P. Pasko, N. Y. Liu, S. C\'elestin, P. S\'egur, and E. Marode, Efficient models for photoionization produced by non-thermal gas {\it Plasma sources Science and Technology} {\bf 16} (2007) 656.

\bibitem{iain}
I. D. Boyd, Numerical simulation of Hall thruster plasma plumes in space, {\it IEEE Transactions on Plasma Science} {\bf 34(2)} (2006) 2140.

\bibitem{braginskii}
S. I. Braginskii, Transport properties in a plasma, {\it Sov. Phys. JETP} {\bf 6} (1958) 358.

\bibitem{brun}
R. Brun, {\it Introduction to reactive gas dynamics}, Cepadu\`{e}s, Toulouse (2006), in French. 

\bibitem{capitelli}
M. Capitelli, C. M. Ferreira, B. F. Gordiets, and A. I. Osipov,
{\it Plasma Kinetics in Atmospheric Gases}, Springer, New York (2000).

\bibitem{cercignani1}
C. Cercignani, R. Illner, and M. Pulvirenti, {\it The mathematical theory of dilute gases}, Springer-Verlag, New York (1994).

\bibitem{chapman}
S. Chapman and T. G. Cowling, {\it The mathematical theory of non-uniform gases}, Cambridge University Press, London (1939).

\bibitem{choquet}
I. Choquet and B. Lucquin-Desreux, Hydrodynamic limit for an arc discharge at atmospheric pressure, {\it Journal of Statistical Physics} {\bf 119(1/2)} (2005) 197.

\bibitem{chmieleski}
R. M. Chmieleski and J. H. Ferziger, {\it Physics of Fluids} {\bf 10} (1967) 2520.

\bibitem{coquel}
F. Coquel and C. Marmignon, Numerical methods for weakly ionized gas,  {\it Astrophysics and Space Science} {\bf 260} (1998) 15.

\bibitem{daybelge}
U. Daybelge, Unified theory of partially ionized nonisothermal plasmas. {Journal of  Applied Physics} {\bf 4(5)} (1970) 2130.

\bibitem{degond1}
P. Degond and B. Lucquin-Desreux, The asymptotics of collision operators for two species of particles of disparate masses,  {\it Math. Models Methods Appl. Sci.} {\bf 6(3)} (1996) 405.

\bibitem{degond2}
P. Degond and B. Lucquin-Desreux, Transport coefficients of plasmas and disparate mass binary gases,  {\it Transport Theory and Statistical Physics} {\bf 25(6)} (1996) 595.

\bibitem{degroot}
S. R. de Groot and P. Mazur, {\it Non-equilibrium Thermodynamics}, North Holland Publishing Company, Amsterdam (1962).

\bibitem{delcroix}
J.-L. Delcroix and A. Bers, {\it Physics of plasmas}, Inter\'Editions, Paris (1994), in French.

\bibitem{devoto}
R. S. Devoto, Transport properties of ionized monoatomic gases, {\it Physics of Fluids} {\bf 9} (1966) 1230.

\bibitem{ern}
A. Ern and V. Giovangigli, {\it Multicomponent transport algorithms}, Springer-Verlag, Berlin (1994).

\bibitem{ferziger}
J. H. Ferziger and H. G. Kaper, {\it Mathematical theory of transport processes in gases}, North Holland Publishing Company, Amsterdam (1972).

\bibitem{giovangigli1}
V. Giovangigli, {\it Multicomponent flow modeling}, Birkh\"auser, Boston (1999).

\bibitem{graille1}
V. Giovangigli and B. Graille, Kinetic theory of partially ionized reactive gas mixtures, {\it Physica A}  {\bf 327} (2003) 313.

\bibitem{massot}
V. Giovangigli and M. Massot, Asymptotic stability of equilibrium states for multicomponent reacting flows, {\it Math. Models Methods Appl. Sci.} {\bf 2} (1998) 251.

\bibitem{mimmo}
D. Giordano and M. Capitelli, Nonuniqueness of the two-temperature Saha equation and related considerations, {\it Physical Review E} {\bf 65} (2001) 016401.

\bibitem{grad}
H. Grad, Principles of kinetic theory of gases, in {\it Handbuch der Physik}, vol. XII, Springer-Verlag, Berlin (1958) 205.

\bibitem{goudon1}
T. Goudon, P.-E. Jabin and A. Vasseur,
 Hydrodynamic limit for the {V}lasov-{N}avier-{S}tokes equations. {I}:  {L}ight particles regime,
{\it Indiana University Mathematics Journal}, {\bf 53} (2005) 1495.

\bibitem{goudon2}
T. Goudon, P.-E. Jabin and A. Vasseur,
Hydrodynamic limit for the {V}lasov-{N}avier-{S}tokes
              equations. {II}: {F}ine particles regime,           
{\it Indiana University Mathematics Journal}, {\bf 53} (2005) 1517.

\bibitem{hirschfelder}
J. O. Hirschfelder, C. F.  Curtiss, and R. B. Bird, {\it Molecular theory of gases and liquids}, Wiley, New York (1954).

\bibitem{anatoliy2}
A. F. Kolesnikov, {\it The equations of motion of a multicomponent partially ionized two-temperature mixture of gases in an electromagnetic field with transport coefficients in higher approximations}, Technical Report {\bf 1556}, Institute of Mechanics,
Moscow State University, Moscow (1974), in Russian.

\bibitem{anatoliy}
A. F. Kolesnikov and G. A. Tirskiy, Equations of hydrodynamics for partially ionized multicomponent mixtures of gases, employing higher approximations of transport coefficients, {Fluid Mech.-Sov. Res.} {\bf 13} (1984) 70.

\bibitem{lucquin1}
B. Lucquin-Desreux, Fluid limit for magnetized plasmas, {\it Transport Theory and Statistical Physics} {\bf 27(2)} (1998) 99.

\bibitem{lucquin2}
B. Lucquin-Desreux, Diffusion of electrons by multicharged ions, {\it Math. Models Methods Appl. Sci.} {\bf 10(3)} (2000) 409.

\bibitem{maccourt1}
F. R. McCourt, J. J. Beenakker, W. E. K\"ohler, and I. Ku{\v s\v c}er, {\it Non equilibrium phenomena in polyatomic gases, Volume I: Dilute gases}, Clarenton Press, Oxford (1990).

\bibitem{magin1}
T. E. Magin and G. Degrez, Transport properties of partially ionized and unmagnetized plasmas, {\it Phys. Rev. E} {\bf 70} (2004) 046412.

\bibitem{magin2}
T. E. Magin and G. Degrez, Transport algorithms for partially ionized and unmagnetized plasmas, {\it J. Comp. Physics} {\bf 198} (2004) 424.

\bibitem{mitchner}
M. Mitchner and C. H. Kruger, {\it Partially ionized gases}, Wiley, New York (1973).

\bibitem{nagnibeda}
E. A. Nagnibeda and E. A. Kustova, {\it Kinetic theory of transport and relaxation processes in nonequilibrium reacting flows}, Saint Petersburg University Press, Saint Petersburg (2003), in Russian.

\bibitem{sergey}
S. V. Pancheshnyi, D. A. Lacoste, A. Bourdon, and C. O. Laux,  Ignition of propane-air mixtures by a repetitively pulsed nanosecond discharge, {\it IEEE Transactions on Plasma Science} {\bf 34(6)} (2006)2478.

\bibitem{park}
C. Park, {\it Nonequilibrium hypersonic aerothermodynamics}, Wiley, New York (1990).

\bibitem{pasko}
V. P. Pasko, Red sprite discharges in the atmosphere at high altitude: the
  molecular physics and the similarity with laboratory discharges, {\it Plasma Sources Sci. Technol.} {\bf 16}(2007) 13.
  
\bibitem{petit}
J.-P. Petit and J.-S. Darrozes, A new formulation of the movement equations of an ionized gas in collision-dominated regime, {\it Journal de M\'ecanique} {\bf 14(4)} (1975) 745, in French.

\bibitem{raizer}
Y. P. Raizer, {\it Gas Discharge Physics}, Springer-Verlag, New York (1991).

\bibitem{gabitta}
G. S. R. Sarma, Physico-chemical modelling in hypersonic flow simulation, {\it Progress in Aerospace Sciences} {\bf 36(3-4)} (2000) 281.

\bibitem{schnack}
D. D. Schnack, D. C. Barnes, D. P. Brennan, C. C. Hegna, E. Held, C. C. Kim, S. E. Kruger, A. Y. Pankin, and C. R. Sovinec, Computational modeling of fully ionized magnetized plasmas
using the fluid approximation, {\it Physics of plasmas} {\bf 13} (2006) 058103.

\bibitem{starikovskaia}
S. Starikovskaia, Plasma assisted ignition and combustion, {\it Journal of Physics D: Applied Physics}, {\bf 39} (2006) 265.

\bibitem{sutton}
G.W. Sutton and H. Sherman, {\it Engineering magnetohydrodynamics}, Mc-Graw-Hill, New York (1965).

\bibitem{tirsky}
G. A. Tirsky, Up-to-date gasdynamic models of hypersonic aerodynamic and heat transfer with real gas properties, {\it Annu. Rev. Fluid Mech.} {\bf 25} (1993) 181.

\bibitem{vanveldhuizen}
E. M. van Veldhuizen, editor, {\it Electrical Discharges for Environmental Purposes: Fundamentals
  and Applications}, Nova Science, New York (2000).

\bibitem{yamada}
M. Yamada, Progress in understanding magnetic reconnection in laboratory and space
astrophysical plasmas, {\it Physics of Plasmas} {\bf 14} (2007) 058102.

\bibitem{zhdanov}
V. M. Zhdanov, {\it Transport processes in multicomponent plasma},  Taylor and Francis, London (2002).

\end{thebibliography}
\end{document}